\newcommand{\todo}[1]{}
\newcommand{\condscale}{1}
\newcommand{\condwidth}{0.97\textwidth}
\title{Metric Reasoning About $\lambda$-Terms: The General Case\\ (Long Version)
}
\author{Rapha\"elle Crubill\'e \and Ugo Dal Lago}
\newcommand{\qed}{}
\newtheorem{theorem}{Theorem}
\newtheorem{proposition}{Proposition}
\newtheorem{lemma}{Lemma}
\newtheorem{definition}{Definition}
\newtheorem{example}{Example}
\newtheorem{corollary}{Corollary}
\newenvironment{proof}{\begin{trivlist}
       \item[\hskip \labelsep {\bfseries Proof.}]}{\hfill $\Box$ \end{trivlist}}
\newtheorem{fact}{Fact}
\newcommand{\hide}[1]{}
\newcommand{\bnf}{::=}
\newcommand{\midd}{\; \; \mbox{\Large{$\mid$}}\;\;}
\newcommand{\dirac}[1]{{\{{#1}^1\}} }
\newcommand{\diracpar}[2]{{\{{#1}^{#2}\}} }
\newcommand{\fillc}[2]{#1[#2]}
\newcommand{\subst}[3]{#1\{#2/#3\}}
\newcommand{\distrpar}[3]{\{{#1}^{#2}\}_{#3}}
\newenvironment{varitemize}
{
\begin{list}{{\labelitemi}}
{\setlength{\itemsep}{0pt}
 \setlength{\topsep}{0pt}
 \setlength{\parsep}{0pt}
 \setlength{\partopsep}{0pt}
 \setlength{\leftmargin}{15pt}
 \setlength{\rightmargin}{0pt}
 \setlength{\itemindent}{0pt}
 \setlength{\labelsep}{5pt}
 \setlength{\labelwidth}{10pt}
}}
{
 \end{list} 
}
\newcounter{numberone}
\newcounter{numbertwo}
\newcommand{\BB}{\mathbb{B}}
\newcommand{\NN}{\mathbb{N}}
\newcommand{\DD}{\mathbb{D}}
\newcommand{\RR}{\mathbb{R}}
\newcommand{\QQ}{\mathbb{Q}}
\newcommand{\ctsinfd}{\mathbf{ C \times \Delta(\tuples)}}
\newcommand{\stateonetd}{h}
\newcommand{\statetwotd}{k}
\newcommand{\statethreetd}{l}
\newcommand{\forget}[1]{\textbf{F}(#1)}
\newcommand{\distrs}[1]{\text{Distr}(#1)}
\newcommand{\distrone}{\mathcal{D}}
\newcommand{\distrtwo}{\mathcal {E}}
\newcommand{\distrthree}{\mathcal {F}}
\newcommand{\sumdistr}[1]{{\sum\nolimits_{#1}}}
\newcommand{\lfin}{\stackrel{f}{\leq}}
\newcommand{\types}{\mathscr{A}}
\newcommand{\tms}{\mathscr{T}}
\newcommand{\pair}[2]{\langle#1,#2\rangle}
\newcommand{\varone}{x}
\newcommand{\vartwo}{y}
\newcommand{\varthree}{z}
\newcommand{\vars}{\mathscr{X}}
\newcommand{\supp}{\textsf{S}}
\newcommand{\sem}[1]{\llbracket #1 \rrbracket}
\newcommand{\semct}[1]{{\llbracket #1 \rrbracket }^{\ctsinfd}}
\newcommand{\sssp}[2]{#1 \Rightarrow #2}
\newcommand{\ssspn}[3]{#1 \stackrel{#3}{\Rightarrow} #2}
\newcommand{\derivone}{{\Updelta}}
\newcommand{\derivtwo}{\Upxi}
\newcommand{\wtrcct}[2]{{#1} {\Rightarrow}_{\ctsinfd} #2}
\newcommand{\wtrcctnp}{{\Rightarrow}_{\ctsinfd}}
\newcommand{\osct}[2]{{#1}\rightarrow_{\ctsinfd} {#2} }
\newcommand{\osctnp}{\rightarrow_{\ctsinfd}}
\newcommand{\osctl}[3]{{#1}{\stackrel{#2}\rightarrow}_{\ctsinfd} {#3} }
\newcommand{\osctplus}[2]{{#1}\rightarrow_{\ctsinfd}^{+} {#2} }
\newcommand{\osctrefl}[2]{{#1}\rightarrow_{\ctsinfd}^{\star} {#2} }
\newcommand{\unary}{\text{Unary}^{\ctsinfd}}
\newcommand{\unaryd}[1]{\overline{#1}^{\text{U}}}
\newcommand{\relone}{\mathrel{R}}
\newcommand{\contsone}{\mathscr{C}}
\newcommand{\contsonecbv}{\mathscr{C}_{\text{cbv}}}
\newcommand{\contsonecbn}{\mathscr{C}_{\text{cbn}}}
\newcommand{\contextst}{\mathscr{C}^{\mathbf{T}}}
\newcommand{\contone}{C}
\newcommand{\conttwo}{D}
\newcommand{\contthree}{F}
\newcommand{\contfour}{G}
\newcommand{\stateone}{t}
\newcommand{\statetwo}{s}
\newcommand{\actone}{a}
\newcommand{\traces}{\mathscr{T}}
\newcommand{\traceone}{\mathsf{T} }
\newcommand{\tracetwo}{\mathsf{S}}
\newcommand{\hole}{[\cdot]}
\newcommand{\valset}{\mathscr{V}}
\newcommand{\bang}[1]{!{#1}}
\newcommand{\abstrexp}[2]{\lambda {\bang #1}. #2}
\newcommand{\abstr}[2]{\lambda #1. #2}
\newcommand{\psum}[2]{#1 \,\oplus\, #2}
\newcommand{\paror}[3]{\, \left([#1 \parallel  #2 ] \rightarrowtail #3\right)\,}
\newcommand{\fv}[1]{FV(#1)}
\newcommand{\termone}{M}
\newcommand{\termtwo}{N}
\newcommand{\termthree}{L}
\newcommand{\termfour}{P}
\newcommand{\valone}{V}
\newcommand{\valtwo}{W}
\newcommand{\contevone}{E}
\newcommand{\contevtwo}{F}
\newcommand{\onestepr}[2]{#1 \rightarrow #2}
\newcommand{\redonestepr}[2]{#1 \hookrightarrow #2}
\newcommand{\redonestep}{\hookrightarrow}
\newcommand{\LOPLUS}{\Lambda_\oplus}
\newcommand{\LBANG}{\Lambda_\oplus^{!}}
\newcommand{\LBANGT}{\Lambda_\oplus^{!,\downarrow}}
\newcommand{\LBANGOR}{\Lambda_\oplus^{!,\parallel}}
\newcommand{\typone}{\sigma}
\newcommand{\typtwo}{\tau}
\newcommand{\typthree}{\gamma}
\newcommand{\typfour}{\eta}
\newcommand{\typfive}{\iota}
\newcommand{\tarr}[2]{#1 \multimap #2}
\newcommand{\rectype}[2]{\mu {#1}. #2}
\newcommand{\vartypone}{\alpha}
\newcommand{\equaltypes}{=^{\types}}
\newcommand{\substy}[3]{#1[#2 \rightarrow #3]}
\newcommand{\wfjt}[3]{#1 \vdash #2 : #3}
\newcommand{\contbangone}{! \Gamma}
\newcommand{\contlinone}{\Delta}
\newcommand{\contlintwo}{\Theta}
\newcommand{\contlinthree}{ \Xi}
\newcommand{\judgone}{\kappa}
\newcommand{\vvalidjudgss}{\mathscr{J}^{\valset}}
\newcommand{\vjcbn}{\mathscr{J}^{\valset}_{\text{cbn}}}
\newcommand{\vjcbv}{\mathscr{J}^{\valset}_{\text{cbv}}}
\newcommand{\validjudgst}{\mathscr{J}}
\newcommand{\judgsubs}{\mathscr{J}^{\text{subst}}}
\newcommand{\appl}[3]{{#1}(#2,#3)}
\newcommand{\obs}[1]{\text{Obs}(#1)}
\newcommand{\tuples}{ \mathscr{U} }
\newcommand{\tuplone}{K}
\newcommand{\tupltwo}{H}
\newcommand{\tuplonea}[1]{\left[#1\right]}
\newcommand{\typtuplone}{A}
\newcommand{\typtupltwo}{B}
\newcommand{\typtuplthree}{C}
\newcommand{\typtuples}{\mathbf T}
\newcommand{\remove}[1]{\text{remove}(#1)}
\newcommand{\add}[1]{\text{add}(#1)}
\newcommand{\nforms}[1]{NF(#1)}
\newcommand{\mchain}{\mathscr M}
\newcommand{\mcstates}{\mathcal S}
\newcommand{\mclabels}{\mathcal A}
\newcommand{\mcprob}{\mathcal P}
\newcommand{\flmone}{\mathscr{F}}
\newcommand{\fstone}{\hat{\mathcal S}}
\newcommand{\flabels}{\hat{\mathcal A}}
\newcommand{\lmcbang}{\mathscr{M}_\oplus^{\bang{}}}
\newcommand{\statesexp}{\mathcal{S}_{\lmcbang}}
\newcommand{\labelsexp}{\mathcal{A}_{\lmcbang}}
\newcommand{\probexp}{\mathcal{P}_{\lmcbang}}
\newcommand{\ltsone}{\mathscr{L}}
\newcommand{\ltsact}[3]{{#1}{\stackrel{#2}{\rightarrow}}{#3} }
\newcommand{\ltsactf}[4]{{#1}{\stackrel{#2}{\rightarrow}}_{#4}{#3} }
\newcommand{\ltsbang}{\mathscr{L}_\oplus^{\bang{}}}
\newcommand{\ltsf}[1]{\mathscr{L}_{#1}}
\newcommand{\ltsstates}{\mathcal{S}_{\ltsbang}}
\newcommand{\ltslabels}{\mathcal{A}_{\ltsbang}}
\newcommand{\ltsfst}[1]{\mathcal{S}_{#1}}
\newcommand{\ltsflabels}[1]{\mathcal{A}_{ #1}}
\newcommand{\ltstrans}{\ltsact{}{\cdot}{}}
\newcommand{\ltsftrans}[1]{\ltsact{}{\cdot}{_{#1}}}
\newcommand{\stateterm}[2]{\hat{s}_{#1}(#2)}
\newcommand{\statetermcbn}[1]{\hat{s}^{\text{cbn}}(#1)}
\newcommand{\tupletermcbn}[1]{\hat{U}^{\text{cbn}}(#1)}
\newcommand{\weight}[1]{w(#1)}
\newcommand{\lmcbangcbn}{\mathscr{M}_\oplus^{\text{cbn}}}
\newcommand{\statesexpcbn}{\mathcal S_{\lmcbangcbn}}
\newcommand{\labelsexpcbn}{\mathcal A_{\lmcbangcbn}}
\newcommand{\typzero}{\typtuplone^{0}}
\newcommand{\typun}{\typtuplone^{1}}
\newcommand{\flmccbn}{\mathscr{F}^{\text{cbn}}}
\newcommand{\ltscbv}{\mathscr{L}_\oplus^{\text{cbv}}}
\newcommand{\ltsstatescbv}{\mathcal{S}_{\ltscbv}}
\newcommand{\ltslabelscbv}{\mathcal{A}_{\ltscbv}}
\newcommand{\lmcbangcbv}{\mathscr{M}_\oplus^{\text{cbv}}}
\newcommand{\lmccbv}{\mathscr{M}_\oplus^{\text{cbv}}}
\newcommand{\stcbv}{\mathscr{S}_\oplus^{\text{cbv}}}
\newcommand{\actcbv}{\mathscr{A}_\oplus^{\text{cbv}}}
\newcommand{\probcbv}{\mathscr{P}_\oplus^{\text{cbv}}}
\newcommand{\flmccbv}{\mathscr{F}^{\text{cbv}}}
\newcommand{\statesexpcbv}{\mathcal S_{\lmcbangcbv}}
\newcommand{\labelscbv}{\mathcal A_{\lmcbangcbv}}
\newcommand{\labelsappl}{\mathcal A_{@}}
\newcommand{\labelsbangb}{{\mathcal A_{?}}}
\newcommand{\labelsbang}{\mathcal A_{{!}}}
\newcommand{\interpone}{\varphi}
\newcommand{\interptwo}{\psi}
\newcommand{\actappltupl}[2]{ @^{#2}_{#1}}
\newcommand{\actbang}[1]{(!^{#1})}
\newcommand{\evalbang}[1]{({?}^{#1})}
\newcommand{\relate}[3]{#2\,{#1}\,#3}
\newcommand{\lifting}[1]{\widehat{#1}}
\newcommand{\statethree}{u}
\newcommand{\statefour}{v}
\newcommand{\metrtr}[1]{\delta^{b}_{#1}}
\newcommand{\mctxbangor}[1]{\delta_{#1,\bang{}, \parallel}^{c}}
\newcommand{\mcbangor}{\delta_{\bang{}, \parallel}^{c}}
\newcommand{\mctxbang}[1]{\delta^{c}_{#1,!}}
\newcommand{\mctbang}{\delta_{\bang{},\downarrow}^{c}}
\newcommand{\mctxcbn}{\delta_{\text{cbn}}^{c}}
\newcommand{\mctxcbv}{\delta_{\text{cbv}}^{c}}
\newcommand{\mtbang}{\delta_{\bang{}}^t}
\newcommand{\mtbangt}[1]{\delta_{#1,\bang{}}^b}
\newcommand{\mtcbn}{\delta_{\text{cbn}}^b}
\newcommand{\mtcbv}{\delta_{\text{cbv}}^b}
\newcommand{\metrctx}[1]{\delta^{\text{c}}_{#1}}
\newcommand{\embcbn}[1]{\langle #1 \rangle^{\text{cbn}}}
\newcommand{\embcbv}[1]{\langle #1 \rangle^{\text{cbv}}}
\newcommand{\typecbn}{\typone^{\text{cbn}}}
\newcommand{\typecbv}{\typone^{\text{cbv}}}
\newcommand{\typtwocbv}{\typtwo^{\text{cbv}}}
\newcommand{\tracescbn}{\mathscr T _r^{\text{cbn}}}
\newcommand{\emptytr}{\top}
\newcommand{\concat}[2]{{#1 \cdot #2}}
\newcommand{\probt}[2]{\mathsf{PR}_{#1}(#2)}
\newcommand{\probtrcbn}[2]{Pr_{\text{cbn}}(#1,#2)}
\newcommand{\metrcbn}{\delta_{\text{cbn}}}
\newcommand{\metrcbv}{\delta_{\text{cbv}}}
\newcommand{\ttrue}{\textbf{true}}
\newcommand{\ffalse}{\textbf{false}}
\newcommand{\psumindex}[3]{#1 \oplus^{#2} #3}
\newcommand{\orn}[2]{{\bigvee}^{#2}(#1)}
\newcommand{\andn}[2]{{\bigwedge}^{#2}(#1)}
\newcommand{\upto}[2]{#1 \triangleright #2}
\newcommand{\uptor}[2]{#1 \rightarrow^{\triangleright}_{\ctsinfd} #2}
\newcommand{\valct}[1]{\mathscr V(#1)}
\newcommand{\seqone}{\mathit s}
\newcommand{\semctu}[2]{\semct{#1}_{#2}}
\newcommand{\elone}{e} 
\newcommand{\seq}[1]{[#1]}
\newcommand{\isone}{E} 
\newcommand{\istwo}{F} 
\newcommand{\ms}[1]{\boldsymbol{#1}} 
\newcommand{\emptyseq}{\ms {\bar 0}}
\begin{document}

\renewcommand{\textfraction}{0.2}
\setcounter{topnumber}{3}
\renewcommand{\topfraction}{0.9}
\renewcommand{\bottomfraction}{0.8}
\renewcommand{\dbltopfraction}{0.9}

\date{}

\maketitle

\begin{abstract}
  In any setting in which observable properties have a quantitative
  flavour, it is natural to compare computational objects by way of
  \emph{metrics} rather than \emph{equivalences} or \emph{partial
    orders}. This holds, in particular, for probabilistic higher-order
  programs. A natural notion of comparison, then, becomes context
  \emph{distance}, the metric analogue of Morris' context
  \emph{equivalence}. In this paper, we analyze the main properties of
  the context distance in fully-fledged probabilistic
  $\lambda$-calculi, this way going beyond the state of the art, in
  which only affine calculi were considered. We first of all study to
  which extent the context distance trivializes, giving a sufficient
  condition for trivialization. We then characterize context distance
  by way of a coinductively defined, \emph{tuple-based} notion of
  distance in one of those calculi, called $\Lambda^\oplus_!$. We
  finally derive pseudometrics for call-by-name and call-by-value
  probabilistic $\lambda$-calculi, and prove them fully-abstract.
\end{abstract}
\section{Introduction}
Probability theory offers computer science models which enable system
abstraction (at the price of introducing uncertainty), but which can
also be seen as a \emph{a way to compute}, like in randomized
computation. Domains in which probabilistic models play a key role
include machine learning~\cite{pearl1988probabilistic},
robotics~\cite{thrun2002robotic}, and
linguistics~\cite{manning1999foundations}. In cryptography, on the
other hand, having access to a source of uniform randomness is
essential to achieve security, e.g., in the public key
setting~\cite{GoldwasserMicali}. This has stimulated the development
of concrete and abstract programming languages, which most often are
extensions of their deterministic siblings. Among the many ways
probabilistic choice can be captured in programming, the simplest one
consists in endowing the language of programs with an operator
modelling the flipping of a fair coin. This renders program evaluation
a probabilistic process, and under mild assumptions the language
becomes universal for probabilistic computation. Particularly fruitful
in this sense has been the line of work on the functional paradigm,
both at a theoretical~\cite{JonesPlotkin,Ramsey,Pfenning} and at a
more practical level~\cite{Church}.

We are still far, however, from a satisfactory understanding of
higher-order probabilistic computation. As an example, little is known
about how much of the classic, beautiful, theory underlying the
$\lambda$-calculus~\cite{Barendregt84} can be lifted to probabilistic
$\lambda$-calculi, although the latter have been known from forty
years now \cite{SahebDjahromi}. Until the beginning of this decade,
indeed, most investigations were directed towards domain theory, which
has been proved to be much more involved in presence of probabilistic
choice than in a deterministic scenario~\cite{JungTix1998}. In the
last ten years, however, some promising results have appeared. As an
example, both quantitative semantics and applicative bisimilarity have
been shown to coincide with context equivalence for certain kinds of
probabilistic
$\lambda$-calculi~\cite{EhrhardTassonPagani2014POPL,CrubilleDalLago2014ESOP}. This
not only provides us with new proof methodologies for program
equivalence, but also sheds new light on the very nature of probabilistic
higher-order computation.  As an example, recent results tell us that
program equivalence in presence of probabilistic choice lies somehow
in between determinism and non-determinism
\cite{CrubilleDalLago2014ESOP}.

But are equivalences the most proper way to compare terms? Actually,
this really depends on what the underlying observable is. If
observables are boolean, then equivalences (and preorders) are indeed
natural choices: two programs are dubbed equivalent if they give rise
to the same observable (of which there are just two!) in any
context. If, on the other hand, the observable is an element of a
metric space, which happens for example when we observe (the
probability of) convergence in a probabilistic setting, one may wonder
whether replacing equivalences with metrics makes sense.  This is a
question that has recently been given a positive answer in the
\emph{affine} setting~\cite{CrubilleDalLagoLICS2015}, i.e., in a
$\lambda$-calculus in which copying is simply not available. More
specifically, a notion of context distance has been shown to model
differences between terms satisfactorily, and has also been shown to
be characterized by notions of trace metrics, and to be approximated
from below by behavioural metrics.

Affine $\lambda$-calculi are very poor in terms of the computations
they are able to model. Measuring the distance between terms in
presence of copying, however, is bound to be problematic.  On the one
hand, allowing contexts to copy their argument has the potential risk
of \emph{trivialising} the underlying metric.  On the other hand, finding
handier characterizations of the obtained notion of metric in the
style of behavioural or trace metrics is inherently hard. A
more thorough discussion on these issues can be found in
Section~\ref{sect:mti} below.

In this paper, we attack the problem of analyzing the distance between
$\lambda$-terms in its full generality. More specifically, the
contributions of this paper are fourfold:
\begin{varitemize}
\item
  First of all, we define a linear probabilistic $\lambda$-calculus,
  called $\LBANGOR$, in which copying and a nonstandard construct,
  namely Plotkin's parallel disjunction, are both available. A very
  liberal type system prevents deadlocks, but nevertheless leaves the
  expressive power of the calculus very high. This choice has been
  motivated by our will to put ourselves in the most general setting,
  so as to be able to talk about different fragments.  The calculus is
  endowed with a notion of context distance, in Morris' style.  This
  is covered in Section~\ref{sect:syntax} below.
\item
  We study trivialization of the obtained notion(s) of metric for
  different fragments of $\LBANGOR$, showing that both parallel
  disjunction and strong normalization give us precisely the kind of
  discriminating power we need to arbitrarily amplify distances, while
  in the most natural fragment, namely $\LBANG$, trivialization does
  \emph{not} hold. This is the subject of Section~\ref{sect:trivialisation}.
\item
  In Section \ref{sect:metrtupl}, we prove that context distance can
  be characterized as a coinductively defined distance on a
  labelled Markov chain of \emph{tuples}.  The way (tuples of) terms
  interact with their environment makes proofs of soundness laborious
  and different from their affine counterparts from
  \cite{CrubilleDalLagoLICS2015}. An up-to-context notion of
  bisimulation is proved to be sound, and to be quite useful when
  evaluating the distance between concrete programs.
\item
  Finally, we show that the results from Section \ref{sect:metrtupl} can be lifted
  back to ordinary probabilistic $\lambda$-calculi from the
  literature~\cite{DalLagoZorzi,CrubilleDalLago2014ESOP}. Both when
  call-by-name evaluation and call-by-value are considered, our
  framework can be naturally adapted, and helps in facilitating
  concrete proofs. This is in Section~\ref{sect:nonlinear}.
\end{varitemize}

\section{Metrics and Trivialisation, Informally}\label{sect:mti}
The easiest way to render the pure $\lambda$-calculus a universal
probabilistic computation model~\cite{DalLagoZorzi} consists in
endowing it with a binary construct $\oplus$ for probabilistic
choice. The term $\termone\oplus\termtwo$ evolves as either $\termone$
or $\termtwo$, each with probability $\frac{1}{2}$. The obtained
calculus can be given meaning by an operational semantics which puts
terms in correspondence with \emph{distributions of} values. The
natural notion of observation, at least in an untyped setting like the
one we will consider in this paper, is thus the \emph{probability of
  convergence} of the observed term $\termone$, which will be denoted
as $\sumdistr{\sem{\termone}}$. One could then define a notion of
\emph{context equivalence} following Morris' pattern, and stipulate
that two terms $\termone$ and $\termtwo$ should be equivalent whenever
they terminate with \emph{exactly} the same probability when put in
\emph{any} context:
$$
\termone\equiv\termtwo\;\Leftrightarrow\;\forall\contone.\;\sumdistr{\sem{\contone[\termone]}}=\sumdistr{\sem{\contone[\termtwo]}}.
$$
The anatomy of the obtained notion of equivalence has been recently
studied extensively, the by-products of this study being powerful
techniques for it in the style of bisimilarity and logical relations
\cite{DalLagoSangiorgiAlberti2014,CrubilleDalLago2014ESOP,BizjakBirkedal}.

As observed by various authors (see, e.g., \cite{MardareDoctoral} for
a nice account), probabilistic programs and processes are
naturally compared by \emph{metrics} rather than \emph{equivalences}:
the latter do not give any quantitative information about \emph{how
  different} two non-equivalent programs are. Given that the
underlying notion of observation is inherently quantitative, on the
other hand, generalizing context equivalence to a
\emph{pseudometric} turns out to be relatively simple:
$$
\delta(\termone,\termtwo)=\sup_{\contone}\left|\sumdistr{\sem{\contone[\termone]}}-\sumdistr{\sem{\contone[\termtwo]}}\right|.
$$ 
Observe that the obtained notion of context \emph{distance} between two terms
is a real number between $0$ and $1$, which is minimal precisely when
the considered terms are context equivalent. It is the least
discriminating pseudometric which is non-expansive and adequate, and
as such it provides some quite precise information about how far the
two argument programs are, observationally. A similar notion has
recently been studied by the authors~\cite{CrubilleDalLagoLICS2015},
but only in a purely affine setting.

Let us now consider two prototypical examples of non-equivalent terms,
namely $I=\lambda\varone.\varone$ (the identity) and $\Omega$ (the
always-divergent term). The context distance $\delta^c(I,\Omega)$
between them is maximal: when applied, e.g., to the trivial context
$[\cdot]$, they converge with probability $1$ and $0$, respectively. A
term which is conceptually ``in the middle'' of them is
$\termone=I\oplus\Omega$. Indeed, in a purely affine
$\lambda$-calculus,
$\delta^c(I,\termone)=\delta^c(\termone,\Omega)=\frac{1}{2}$.

If we render the three terms duplicable (by putting them in the scope
of a $!$-operator), however, the situation becomes much more
complicated. Consider the terms $!I$ and $!(I\oplus\Omega)$. One can
easily define a family of contexts $\{C_n\}_{n\in\NN}$ such that the probability
of convergence of $C_n[!I]$ and $C_n[!(I\oplus\Omega)]$ tend to $1$
and $0$ (respectively) when $n$ tends to infinity. It suffices to take
$C_n$ as
$
(\lambda!\varone.\underbrace{\varone\ldots\varone}_{\mbox{$n$ times}})\hole.
$
Allowing contexts to have the capability to duplicate their
argument seems to mean that they can arbitrarily \emph{amplify} distances.
Indeed, the argument above also works when $(I\oplus\Omega)$ is
replaced by any term which behaves as $\Omega$ with probability $\varepsilon$
and as $I$ with probability $1-\varepsilon$, provided of course $\varepsilon >0$.
But how about $!\Omega$ and $!(I\oplus\Omega)$? Are they at maximal
distance, i.e.  is it that $\delta^c(!\Omega,!(I\oplus\Omega))=1$?
Apparently, this is \emph{not} the case. The previously defined contexts $C_n$ cannot
amplify the ``linear'' distance between the two terms above, namely
$\frac{1}{2}$, up to $1$. But what is the distance between 
$!\Omega$ and $!(I\oplus\Omega)$, then? Evaluating it is hard, since you need to consider
all contexts, which do not have a nice structure. In Section \ref{sect:metrtupl},
we will introduce a different, better behaved, notion of distance, this
way being able to prove that, indeed,
$\delta^c(!\Omega,!(I\oplus\Omega))=\frac{1}{2}$. 

All this hints at even more difficult examples, like the one in which
$\termone_\varepsilon = \bang {(\Omega \oplus^\varepsilon I)}$, where
$\oplus^\varepsilon$ is the natural generalization of $\oplus$ to a
possibly unfair coin flip, and one is interested in evaluating
$\delta^c(\termone_{\varepsilon},\termone_{\mu})$. In that case, we
can easily see that the "linear" distance between them is $\lvert
\varepsilon - \mu\rvert$. In some cases, it is possible to amplify it:
the most natural way is again to consider the contexts $C_n$ defined
above. Indeed, we see that the probabilities of convergence of
$\fillc{C_n}\termone$ and $\fillc{C_n}\termtwo$ are $\varepsilon^n$
and $\mu^n$, respectively. It follows that
$\delta^c(\termone_\varepsilon, \termone_\mu)\geq\sup_{n \in
  \NN}{\lvert \varepsilon^n - \mu^n \rvert}$. For some $\varepsilon$
and $\mu$ (for example if $\varepsilon + \mu >1$), the context
distance can be greater than $|\varepsilon-\mu|$. But there is no easy
way to know \emph{how far} amplification can lead us. The terms
$\termone_\varepsilon$ and $\termone_\mu$ will be running examples in
the course of this paper. Despite their simplicity, evaluating the
distance between them is quite challenging.

We are also going to consider the case where contexts can evaluate
terms \emph{in parallel}, converging if and only if at least one of
the copies converges. This behaviour is not expressible in the usual
$\lambda$-calculus, but is captured by well-known constructs and in
particular by Plotkin's parallel disjunction \cite{Plotkin77}. In
Section \ref{sect:trivialisation} below, we prove that all this is not
accidental: the presence of parallel disjunction turns a
non-trivialising metric into a trivialising one. The proof of it, by
the way, relies on building certain amplifying contexts which are then
shown to be universal using tools from functional analysis.
\section{A Linear Probabilistic $\lambda$-Calculus}\label{sect:syntax}
In this section, we present the syntax and operational semantics of
our language $\LBANGOR$, on which we will later define metrics.
$\LBANGOR$ is a probabilistic and linear $\lambda$-calculus, designed
not only to allow copying, but to have a better control on it. It is
based on a probabilistic variation of the calculus defined in
\cite{Simpson2005}, whose main feature is to never reduce inside
exponential boxes. As we will see in Section~\ref{sect:nonlinear}, the
calculus is capable of encoding both call-by-value and call-by-name
fully-fledged probabilistic $\lambda$-calculi. We add a parallel
disjunction construct to the calculus, being inspired by Plotkin's
parallel disjunction \cite{Plotkin77}. Noticeably, it has been
recently shown \cite{VignudelliCrubilleDalLagoSangiorgi} that adding
parallel disjunction to a (non-linear) $\lambda$-calculus increases
the expressive power of contexts to the point of enabling coincidence
between the contextual preorder and applicative similarity. The choice
of studying a very general calculus is motivated by our desire to be
as general as possible. This being said, many of our results hold only
in \emph{absence} of parallel disjunction.
\begin{definition}
We assume a countable set of variables $\vars$.  The set of \emph{terms} of
$\LBANGOR$ (denoted $\tms$) is defined by the following grammar:
$$
\termone \in \tms \bnf\;\varone \midd \termone \termone  \midd
\abstr \varone \termone  \midd \abstrexp \varone \termone  \midd \bang
\termone \midd\psum \termone \termone \midd \paror \termone \termone \termone,
$$
where $\varone \in \vars$. The fragment of
$\LBANGOR$ without the $\paror{\cdot}{\cdot}{\cdot}$ construct will
be indicated as $\LBANG$. \emph{Values}
are those terms derived from the following grammar:
$$
\valone \in \valset \bnf \abstr \varone \termone \midd \abstrexp
\varone \termone \midd \bang \termone.
$$
\end{definition}
As already mentioned, $\psum \termone \termtwo$ can evolve to either
$\termone$ or $\termtwo$, each with probability $\frac{1}{2}$. The
term $\bang \termone$ is a duplicable version of $\termone$, often
called an \emph{(exponential) box}. We have two distinct abstraction
operators: $\abstr\varone \termone$ is a \emph{linear} abstraction,
while the \emph{non-linear} abstraction $\abstrexp \varone \termone$
requires exponential boxes as arguments. The term $\paror \termone
\termtwo \termthree$ behaves as $\termthree$ if either $\termone$ or
$\termtwo$ converges. Please observe that both abstractions and boxes
are values---our notion of reduction is \emph{weak} and \emph{surface}
\cite{Simpson2005}.

We are now going to define an operational semantics for the closed
terms of $\LBANGOR$ in a way similar to the one developed for a
(non-linear) $\lambda$-calculus in \cite{DalLagoZorzi}. We need to
first define a family of \emph{approximation semantics}, and then to
take the semantics of a term as the least upper bound of all its
approximations. The approximation semantics relation is denoted $\sssp
\termone \distrone$, where $\termone$ is a closed term of $\LBANGOR$,
and $\distrone$ is a \emph{(sub)distribution on values} with finite
support, i.e., a function from $\valset$ to $\RR_{[0,1]}$ which sums
to a real number $\sum_\distrone\leq 1$.  For any distribution
$\distrone$ on a set $X$, we call \emph{support of $\distrone$}, and we
note $\supp (\distrone)$, the set $ \{x \in X \mid \distrone(x)>0
\}$. We say that $\distrone$ is \emph{finite} if $\supp(\distrone)$ is a
finite set.

The rules deriving the approximation semantics relation are given in
Figure \ref{apprsem}, and are based on the notion of an
\emph{evaluation context}, which is an expression generated from the
following grammar:
$$
\contevone \bnf\;\hole \midd \contevone \valone \midd \termone
\contevone \midd \paror \termone \contevone \termtwo \midd 
   \paror\contevone \termone \termtwo.
$$
As usual, $\contevone[\termone]$ stands for the term obtained by
filling the sole occurrence of $\hole$ in $\contevone$ with
$\termone$.  In Figure \ref{apprsem} and elsewhere in this paper, we
indicate the distribution assigning probability $p_i$ to $\valone_i$
for every $i\in\{1,\ldots,n\}$ as
$\{\valone_1^{p_1},\ldots,\valone_n^{p_n}\}$.  Similarly for the
expression $\{\valone_i^{p_i}\}_{i\in I}$, where $I$ is any countable
index set.  Observe how we first define a one-step reduction relation
$\onestepr{\cdot}{\cdot}$ between closed terms and \emph{sequences} of
terms, only later extending it to a small-step reduction relation
$\sssp{\cdot}{\cdot}$ between closed terms and \emph{distributions} on
values.
\begin{figure*}[!h]
\begin{center}
\fbox{
  \begin{minipage}{\condwidth}
    \footnotesize
    \begin{center}
      $$
      \AxiomC{}
      \UnaryInfC{$\redonestepr {\psum\termone \termtwo }{\termone,\termtwo} $}
      \DisplayProof
      \qquad
      \AxiomC{}
      \UnaryInfC{$\redonestepr {(\abstr \varone \termone )\valone}{{\subst\termone \varone \valone}} $}
      \DisplayProof
      \qquad 
      \AxiomC{}
      \UnaryInfC{$\redonestepr {(\abstrexp \varone \termone){\bang\termtwo} }{{\subst\termone \varone \termtwo}} $}
      \DisplayProof
      $$
      \\ \vspace{-15pt}
      $$
      \AxiomC{\strut} 
      \UnaryInfC{$\redonestepr { \paror \valone \termone \termtwo} \termtwo $} \DisplayProof
      \quad 
      \AxiomC{\strut} 
      \UnaryInfC{$\redonestepr { \paror \termone \valone \termtwo} \termtwo $} \DisplayProof
      \quad
      \AxiomC{\strut $\redonestepr \termone {\termtwo_1, \ldots, \termtwo_n}$}
      \UnaryInfC{$\onestepr {\fillc \contevone \termone }{{\fillc \contevone {\termtwo_1}}, \ldots {\fillc \contevone {\termtwo_n}}} $}
      \DisplayProof 
      $$
      \\ \vspace{-10pt} 
      $$
      \AxiomC{\strut}
      \UnaryInfC{$\sssp{\valone}{\{\valone^1\}}$}
      \DisplayProof
      \qquad\quad
      \AxiomC{\strut}
      \UnaryInfC{$\sssp{\termone}{\emptyset}$}
      \DisplayProof
      \qquad\quad
      \AxiomC{\strut $\onestepr{\termone}{\termtwo_1, \ldots, \termtwo_n}$}
      \AxiomC{$(\sssp{\termtwo_i}{\distrone_i})_{1\leq i \leq n}$}
      \BinaryInfC{$\sssp{\termone}{\sum_{1\leq i \leq n} {\frac 1 n} \cdot \distrone_i }$}
      \DisplayProof
      $$
      \vspace{1pt}
    \end{center}
\end{minipage}}

\end{center}
\caption{Approximation Semantics for $\LBANG$}\label{apprsem}
\end{figure*}
A reduction step can be a linear or non-linear $\beta$-reduction, 
or a probabilistic choice.  Moreover, there can be more than
one active redex in any closed term $\termone$, due to the presence
of parallel disjunction.  For any term $\termone$, the set of
sub-distributions $\distrone$ such that $\sssp \termone \distrone$ is
a countable directed set. Since the set of sub-distributions (with
potentially infinite support) is an $\omega$-complete partial order, we
can define the \emph{semantics} of a term $\termone$ as $\sem \termone
= \sup \{\distrone \mid \sssp \termone \distrone \} $. We could also
define alternatively a big-step semantics, again in the same way as
that of the probabilistic $\lambda$-calculus considered in
\cite{DalLagoZorzi}.

Not all irreducible terms are values in $\LBANGOR$, e.g.  $(\abstrexp
\varone \varone) (\abstr \varone \varone)$.  We thus need a
\emph{type-system} which guarantees the absence of deadlocks. Since we
want to be as general as possible, we consider recursive types as
formulated in~\cite{positivetype}, which are expressive enough to type the image
of the embeddings we will study in Section~\ref{sect:nonlinear}. The grammar of
\emph{types} is the following:
$$
\typone \in \types \bnf  \vartypone \midd\rectype\vartypone
\tarr{\typone}{\typone}  \midd\rectype\vartypone\bang{\typone}\midd
\tarr{\typone}{\typone}  \midd\bang{\typone}
$$
Types are defined up to the equality $\equaltypes$,
defined in Figure \ref{typs}. $\substy{\typone}{\vartypone}{\typtwo}$
stands for the type obtained by substituting all free occurrences
of $\vartypone$ by $\typtwo$ in $\typone$.
\begin{figure*}[!h]
\begin{center}
\fbox{
\begin{minipage}{\condwidth}
\begin{center}\footnotesize
\vspace{-10pt}
$$
\AxiomC{$\strut $}  \UnaryInfC{$\rectype\vartypone {\tarr \typone \typtwo}
    \equaltypes {\tarr{\substy
        \typone \vartypone {(\rectype \vartypone {\tarr \typone \typtwo})}}{\substy
        \typtwo \vartypone {(\rectype \vartypone {\tarr \typone \typtwo})}}}$}
 \DisplayProof
$$
\\ \vspace{-10pt}
$$
\AxiomC{$\strut $}  \UnaryInfC{$\rectype\vartypone {\bang \typone}
    \equaltypes {\bang{(\substy
        \typone \vartypone {\rectype \vartypone {\bang \typone}})}}$}
 \DisplayProof 
\qquad\quad
 \AxiomC{ \strut $\typone  \equaltypes \substy \typthree \vartypone
    \typone$} \AxiomC{\strut $\typtwo  \equaltypes \substy \typthree
    \vartypone \typtwo$}  \BinaryInfC{$
    \typone \equaltypes \typtwo $} \DisplayProof
$$
\\ \vspace{-4pt}
\end{center}
\end{minipage}}
\end{center}
\caption{Equality of Types}\label{typs}
\end{figure*}
An \emph{environment} is a set of expressions in the form
$\varone:\typone$ or $\bang\varone:\bang\typone$ in which any variable
occurs at most once. Environments are often indicated with
metavariables like $\contbangone$, which stands for an environment in
which all variables occur as $\bang\varone$, or $\contlinone$ in
which, on the contrary, \emph{only} variables can occur,
i.e. $\contlinone$ is of the form
$\varone_1:\typone_1,\ldots,\varone_n:\typone_n$. \emph{Typing
  judgments} are thus of the form
$\wfjt{\contbangone,\contlinone}{\termone}{\typone}$.
The typing system is given in Figure \ref{judg}. The role of this type system is
\emph{not} to guarantee termination, but rather to guarantee a form of
type soundness:
\begin{lemma}
  If $\wfjt{}{\termone}{\typone}$ and $\sssp{\termone}{\distrone}$,
  then $\wfjt{}{\valone}{\typone}$ for every $\valone$ in the support
  of $\distrone$. Moreover, if $\wfjt{}{\termone}{\typone}$ and
  $\termone$ is irreducible (i.e. $\termone\not\redonestep\termtwo$
  for every $\termtwo$), then $\termone$ is value.
\end{lemma}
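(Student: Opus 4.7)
The plan is to prove the two statements via the standard subject-reduction plus progress recipe, adapted to the probabilistic, linear and non-linear substitution structure of $\LBANGOR$ and to recursive types modulo $\equaltypes$.

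First I would establish two substitution lemmas: a linear one stating that if $\wfjt{\contbangone,\contlinone,\varone:\typone}{\termone}{\typtwo}$ and $\wfjt{\contbangone,\contlinthree}{\valone}{\typone}$ (with $\contlinone$ and $\contlinthree$ disjoint, using only the linear context), then $\wfjt{\contbangone,\contlinone,\contlinthree}{\subst{\termone}{\varone}{\valone}}{\typtwo}$; and a non-linear one stating that if $\wfjt{\contbangone,\bang\varone:\bang\typone,\contlinone}{\termone}{\typtwo}$ and $\wfjt{\contbangone}{\termtwo}{\typone}$, then $\wfjt{\contbangone,\contlinone}{\subst{\termone}{\varone}{\termtwo}}{\typtwo}$. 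Both are proved by straightforward induction on the typing derivation of $\termone$, the only delicate point being that bang-variables may be duplicated or discarded, which is accounted for by the fact that all assumptions used in typing $\termtwo$ live in the $\bang\Gamma$-part.

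Next I would prove subject reduction for the one-step relation $\redonestep$: if $\wfjt{}{\termone}{\typone}$ and $\onestepr{\termone}{\termtwo_1,\ldots,\termtwo_n}$, then $\wfjt{}{\termtwo_i}{\typone}$ for every $i$. The root reduction cases are handled by the substitution lemmas (for the two $\beta$-rules), and are immediate for $\oplus$ and for the two parallel disjunction elimination rules; the key observation in the $\beta$ cases is that arrow and bang types on the function/argument come from applying the equality $\equaltypes$ of Figure \ref{typs} to unfold $\mu$-types, which is compatible with the typing rules. Closure under evaluation contexts is handled by a routine induction on $\contevone$, since $\contevone$ never enters under a binder or a $!$. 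From here, the first part of the lemma follows by induction on the derivation of $\sssp{\termone}{\distrone}$: the value axiom is trivial, the empty-distribution axiom is vacuous, and at the interesting rule I apply subject reduction to each $\termtwo_i$ and then the induction hypothesis to each $\sssp{\termtwo_i}{\distrone_i}$, concluding since the support of $\sum \frac{1}{n}\distrone_i$ is contained in the union of the supports.

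For the second part (progress), I would prove by induction on the typing derivation of $\termone$ that a closed well-typed irreducible term is a value. Abstractions and boxes are values by definition; variables cannot occur at the top level since $\termone$ is closed; the nontrivial cases are $\termone=\termone_1\termone_2$ and $\termone=\paror{\termone_1}{\termone_2}{\termone_3}$. For these I need a canonical forms fact: a closed value of type $\equaltypes \tarr{\typone}{\typtwo}$ must be a linear abstraction $\abstr\varone\termone$, and a closed value of type $\equaltypes \bang\typone$ must be a box $\bang\termone$; this is proved by inspection of the value-introduction rules together with the generation lemma for $\equaltypes$, and is the place where recursive types require some care (one reads through $\mu$-unfoldings using the equations in Figure \ref{typs}). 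Given canonical forms, either $\termone_1$ (or $\termone_2$) is reducible, in which case so is $\termone$ via some evaluation context, or both are values; in the application case the typing of $\termone_1\termone_2$ then forces $\termone_1$ to be an abstraction of the appropriate flavour, yielding a root $\beta$-redex; in the parallel disjunction case, as soon as one of $\termone_1,\termone_2$ is a value the corresponding elimination rule fires. The main obstacle is this canonical forms step: I must ensure that the equations $\equaltypes$ of Figure \ref{typs}, which quotient arrow- and bang-$\mu$-types only with themselves, do not allow a $\mu$-type to be simultaneously equal to an arrow and to a bang type, so that canonical forms are unambiguous; this can be shown by a syntactic analysis of the equational theory, exploiting that the two axioms preserve the outermost constructor shape.
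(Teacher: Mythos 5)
The paper states this lemma without proof, so there is nothing to compare against; your subject-reduction-plus-progress argument is the standard (and essentially the only sensible) way to establish it, and it is correct in outline. Two small points deserve tightening. First, your canonical-forms claim for arrow types is stated too narrowly: a closed value of a type $\equaltypes\tarr{\typone}{\typtwo}$ may be \emph{either} a linear abstraction $\abstr\varone\termone$ \emph{or} a non-linear one $\abstrexp\varone\termone$, and the flavour is not forced by the typing of the application; rather, the case split is on which flavour it is, with the non-linear case using that its domain type is then a $\bang$-type (since environment entries for $\bang$-bound variables have $\bang$-types), so that the argument value is a box and the non-linear $\beta$-rule fires, while in the linear case any value argument suffices. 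Second, your worry about $\equaltypes$ is well-placed and is in fact the only genuinely delicate step: the two unfolding axioms of Figure~\ref{typs} do preserve the outermost constructor, but the third rule (identifying any two fixed points of $\vartypone\mapsto\typthree$) would degenerate without a contractivity/guardedness condition on $\typthree$ (e.g.\ taking $\typthree=\vartypone$ would equate all types); one must invoke the positivity/guardedness discipline of the cited formulation of recursive types to rule this out before the canonical-forms lemma, and hence progress, goes through. With those two repairs your proof is complete.
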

\begin{figure*}
\begin{center}
\fbox{\footnotesize
\begin{minipage}{\condwidth}
\begin{center}
$$
\AxiomC{} 
\UnaryInfC{$\wfjt{\contbangone,\bang\varone : \bang
    \typone}{\varone}{\typone}$}
\DisplayProof 
\qquad\quad 
\AxiomC{}
\UnaryInfC{$\wfjt{\contbangone,\varone : \typone}{\varone}{\typone}$}
\DisplayProof
\qquad\quad
\AxiomC{$\wfjt{\contbangone,\varone:\typone, \contlinone}{\termone}\typtwo$}
\UnaryInfC{$\wfjt{\contbangone, \contlinone}{\abstr{\varone}{\termone}}{\tarr \typone
    \typtwo}$} 
\DisplayProof 
$$
$$
\AxiomC{$\wfjt{\bang\varone:\typone, \contbangone, \contlinone}{\termone}\typtwo$}
\UnaryInfC{$\wfjt{\contbangone, \contlinone}{\abstrexp{\varone}{\termone}}{\tarr
    \typone \typtwo}$} 
\DisplayProof
\quad
\AxiomC{$\wfjt{\contbangone, \contlinone}{\termone}{\tarr \typone
    \typtwo}$}
\AxiomC{$\wfjt{\contbangone,
    \contlintwo}{\termtwo}\typone$} 
\BinaryInfC{$\wfjt{\contbangone,
    \contlinone,\contlintwo}{\termone\termtwo} \typtwo$} \DisplayProof
$$
\\ \vspace{-7pt}
$$
\AxiomC{$\wfjt{\contbangone}{\termone}\typone$}
\UnaryInfC{$\wfjt{\contbangone}{\bang \termone}{\bang \typone}$}
\DisplayProof
\quad
\AxiomC{$\wfjt{\contbangone, \contlinone}{\termone} \typone$}
\AxiomC{$\wfjt{\contbangone, \contlinone}{\termtwo} \typone$}
\BinaryInfC{$\wfjt{\contbangone, \contlinone}{\psum{\termone}{\termtwo}} \typone$}
\DisplayProof  
$$
\\ \vspace{-9pt}
$$
\AxiomC{$\wfjt{\contbangone, \contlinone}{\termone}{\typone}$}
\AxiomC{$\wfjt{\contbangone,\contlintwo}{\termtwo}\typone$} 
\AxiomC{$\wfjt{\contbangone,\contlinthree}{\termthree}\typtwo $}
\TrinaryInfC{$\wfjt{\contbangone,
    \contlinone,\contlintwo, \contlinthree}{\paror\termone\termtwo\termthree} \typtwo$} \DisplayProof
$$
\\ \vspace{-3pt}
\end{center}
\end{minipage}}
\end{center}
\caption{Typing Rules}\label{judg}
\end{figure*}

\begin{example}\label{example0}
  The term $I = \abstr \varone \varone$ can be typed as $\wfjt{}I
  {\tarr {\typone} \typone}$ for every $\typone \in \types$.  We
  define ${\Omega_!}$ to be the term $(\abstrexp \varone {\varone
    {\bang \varone}})(\bang{(\abstrexp \varone{\varone{\bang
        \varone}})})$, which is the counterpart in our linear calculus
  of the prototypical diverging term of the $\lambda$-calculus, namely
  $\Omega=(\lambda x.xx)(\lambda x.xx)$. We can type this divergent
  term with any possible type: indeed, if we take $\typtwo
  \bnf\rectype\vartypone {\tarr{\bang \vartypone}\typone}$, then
  $\typtwo \equaltypes \tarr{\bang \typtwo}{\typone} $ and
  $\wfjt{}{\abstrexp \varone{\varone {\bang \varone}}}{\typone}$.
  Using that, we can see that $\wfjt{}{{\Omega_!}}{\typone}$ for every
  type $\typone$.  We will see in Section \ref{sect:nonlinear} that,
  more generally, there are several ways to turn any pure
  $\lambda$-term $\termone$ into a $\LBANG$ term in such a way as to
  obtain meaningful typing and semantics: $\LBANG$ is actually at
  least as powerful as the usual untyped probabilistic
  $\lambda$-calculus \cite{DalLagoZorzi}.
\end{example}
Termination could in principle be guaranteed if one considers
\emph{strictly positive} types, as we will do in Section
\ref{sect:strictlypositive} below.
Let $\DD$ be the set of dyadic numbers (i.e. those rational numbers in
the form $\frac{n}{2^m}$ (with $n,m\in\NN$ and $n\leq 2^m$).  It is
easy to derive, for every $\varepsilon\in\DD$, a new binary operator
on terms $\psumindex{\cdot}{\varepsilon}{\cdot}$ such that
$\sem{\psumindex{\termone}{\varepsilon}{\termtwo}}=(1-\varepsilon)\sem{\termone}+
\varepsilon\sem{\termtwo}$ for every closed $\termone,\termtwo$. 
It can
be defined, e.g., as follows by induction on $m$:
\begin{align*}
  \psumindex{\termone}{0}{\termtwo}&=\termone\\
  \psumindex{\termone}{1}{\termtwo}&=\termtwo\\
  \psumindex{\termone}{\frac{n}{2^m}}{\termtwo}&=
  \left\{
  \begin{array}{ll}
    \psum{\termone}{(\psumindex{\termone}{\frac{n}{2^{m-1}}}{\termtwo})}&\mbox{if $n\leq 2^{m-1}$}\\
    \psum{\termtwo}{(\psumindex{\termone}{\frac{n-2^{m-1}}{2^{m-1}}}{\termtwo})}&\mbox{if $n>2^{m-1}$}
  \end{array}
  \right.
\end{align*}
\begin{example}\label{example1}
  We define here a family of terms that we use as a running example.  We
  consider terms of the form $\termone_\varepsilon = \bang{(\psumindex
    {\Omega_!} \varepsilon I)}$, for $\varepsilon \in \DD$. It holds that
  $\wfjt {}{\termone_\varepsilon}{\bang{(\tarr \typone \typone)}}$ for
  every $\typone$. $\termone_\varepsilon$ corresponds to a duplicable
  term each copy of which behaves as $I$ with probability
  $\varepsilon$, and does not terminate with probability $1-\varepsilon$.
\end{example}

\subsection{Some Useful Terminology and Notation}

In this paper, we will make heavy use of sequences of terms and
types. It is thus convenient to introduce some terminology and
notation about them. 

A finite (ordered) sequence whose elements are
$\elone_1,\ldots,\elone_n$ will be indicated as
$\ms{\elone}=\seq{\elone_1,\ldots,\elone_n}$, and called an
\emph{$n$-sequence}. Metavariables for sequences are boldface
variations of the metavariables for their elements. Whenever
$\isone=\{i_1,\ldots,i_m\}\subseteq\{1,\ldots,n\}$ and
$i_1<\ldots<i_m$, the sub-sequence
$\seq{\elone_{i_1},\ldots,\elone_{i_m}}$ of an $n$-sequence
$\ms{\elone}$ will be indicated as $\ms{\elone}_\isone$.  If the above
holds, $\isone$ will be called an \emph{$n$-set}. If $\ms \elone$ is an
$n$-sequence, and $\interpone$ is a permutation on $\{1, \ldots,n\}$, we
note $\elone_\interpone$ the $n$-sequence
$\seq{\elone_{\interpone(1)}, \ldots, \elone_{\interpone{(n)}}}$.  We
can turn an $n$-sequence into a $(n+1)$-sequence by adding an
element at the end: this is the role of the semicolor operator.  We denote by
$\seq{\elone^n}$ the $n$-sequence where all components are equal to
$\elone$.

Whenever this does not cause ambiguity, notations like the ones above
will be used in conjunction with syntactic constructions. For example,
if $\ms{\typone}$ is an $n$-sequence of types, then
$\bang{\ms{\typone}}$ stands for the sequence
$[\bang{\typone_1},\ldots,\bang{\typone_n}]$.  As
another example, if $\ms{\typone}$ is an $n$-sequence of types and
$\isone$ is an $n$-set, then $\ms{\varone}_\isone:\ms{\typone}_\isone$
stands for the environment assigning type $\typone_i$ to $\varone_i$
for every $i\in\isone$. As a final example, if $\ms{\termone}$ is an
$n$-sequence of terms and $\ms{\typone}$ is an $n$-sequence of types,
$\wfjt{}{\ms{\termone}}{\ms{\typone}}$ holds iff
$\wfjt{}{\termone_i}{\typone_i}$ is provable for every
$i\in\{1,\ldots,n\}$.
\subsection{Context Distance}\label{sect:contextdistance}
\newcommand{\ctxs}[2]{\mathcal{C}^{#1}_{#2}}
A \emph{context of type $\typone$ for terms of type $\typtwo$} is a
term $\contone$ which can be typed as
$\wfjt{\mathit{hole}\,:\,\typtwo}{\contone}{\typone}$,
where $\mathit{hole}$ is a distinguished
variable. $\ctxs{\typtwo}{\typone}$ collects all such terms. If
$\contone\in\ctxs{\typtwo}{\typone}$ and $\termone$ is a closed term
of type $\typtwo$, then the closed term $\subst \contone {\mathit{hole}}
\termone$ has type $\typone$ and is often indicated as
$\fillc\contone\termone$.

The \emph{context distance}~\cite{CrubilleDalLagoLICS2015} is the
natural quantitative refinement of context equivalence. Intuitively,
it corresponds to the maximum separation that contexts can induce
between two terms. Following \cite{CrubilleDalLagoLICS2015}, we take as
observable the probability of convergence: for any term
$\termone$, we define its \emph{observable} $\obs \termone$ as
$\sumdistr {\sem \termone}$. Then, for any terms $\termone$,
$\termtwo$ such that $\wfjt{}{\termone}{\typone}$ and
$\wfjt{}{\termtwo}{\typone}$, we define:
$$
\appl {\mctxbangor \typone} \termone \termtwo = \sup_{\contone\in\ctxs{\typtwo}{\typone}} {\lvert{\obs
    {\fillc\contone\termone} - \obs{\fillc\contone
      \termtwo}}\rvert}.
$$
Please observe that this distance is a pseudometric, and that
moreover we can recover context equivalence by considering its
\emph{kernel}, that is the set of pairs of terms which are at distance
$0$. The binary operator $\mctxbang\typone$ is defined similarly,
but referring to terms (and contexts) from $\LBANG$.
\begin{example}
  What can we say about
  $\appl{\mctxbangor\typone}{\termone_{\varepsilon}}{\termone_{\mu}}$?
  Not much apparently, since \emph{all} contexts should be considered. Even
  if we put ourselves in the fragment $\LBANG$, the best we can
  do is to conclude that
  $\appl{\mctxbang\typone}{\termone}{\termtwo}\geq\sup_{n\in\NN}\lvert\varepsilon^n-\mu^n\rvert$,
  as explained in Section~\ref{sect:mti}.
\end{example}

\section{On Trivialisation}\label{sect:trivialisation}
As we have already mentioned, there can well be classes
of terms such that the context distance collapses to context equivalence,  due to the
copying abilities of the language. The question of trivialisation can
in fact be seen as a question about the expressive power of contexts:
given two duplicable terms, how much can a context amplify the observable
differences between their behaviours? 

More precisely, we would like to identify \emph{trivialising}
fragments of $\LBANGOR$, that is to say fragments such that for any
pair of duplicable terms, their context distance (with respect to the
fragment) is either 0 or 1. This is not the case in $\LBANG$ (see
Example~\ref{exa:nontrivial} below).

In fact, a sufficient condition to trivialization  is to require the
existence of \emph{amplification contexts}: for every observable type $ \sigma$, for every
$\alpha, \beta \in [0,1]$ distinct, for every $\gamma >0$, we want
to have a context $\contone_\typone^{\alpha,\beta, \gamma}$ 
such that: 
$$ 
\left. \begin{array}{r} 
\wfjt{} {\termone, \termtwo}{\typone}\\
\obs \termone = \alpha\\
\obs \termtwo = \beta 
\end{array}
\right\} \Rightarrow \left\lvert \obs{\fillc {\contone_\typone^{\alpha, \beta, \gamma}} {\bang\termone}} -\obs{\fillc {\contone_\typone^{\alpha, \beta, \gamma}} {\bang\termtwo}} \right\rvert \geq 1-\gamma.
$$
\begin{fact}
Any fragment of $\LBANGOR$ admitting all amplifications contexts
trivializes.
\end{fact}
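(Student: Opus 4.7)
The plan is to prove trivialisation by showing that any pair of duplicable terms $\bang\termone, \bang\termtwo$ whose context distance $d$ is strictly positive must in fact satisfy $d = 1$. Since $d$ is defined as a supremum, $d > 0$ yields a witnessing context $\conttwo$ of some output type $\typtwo$ such that $\alpha := \obs{\fillc\conttwo{\bang\termone}}$ and $\beta := \obs{\fillc\conttwo{\bang\termtwo}}$ are distinct. The closed terms $\fillc\conttwo{\bang\termone}$ and $\fillc\conttwo{\bang\termtwo}$ then have type $\typtwo$ and observables $\alpha, \beta$, so the amplification hypothesis supplies, for each $\gamma > 0$, a context $\contone_\typtwo^{\alpha,\beta,\gamma}$ satisfying
$$\left|\obs{\fillc{\contone_\typtwo^{\alpha,\beta,\gamma}}{\bang{\fillc\conttwo{\bang\termone}}}} - \obs{\fillc{\contone_\typtwo^{\alpha,\beta,\gamma}}{\bang{\fillc\conttwo{\bang\termtwo}}}}\right| \geq 1-\gamma.$$

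To translate this two-stage separation into a lower bound on $d$ itself, I build a single context $\contthree$ whose filling by $\bang\termone$ (respectively $\bang\termtwo$) is operationally equivalent to the left (respectively right) term above. The naive candidate $\fillc{\contone_\typtwo^{\alpha,\beta,\gamma}}{\bang{\fillc\conttwo\hole}}$ fails to type because the $\bang$-rule of Figure~\ref{judg} demands a purely exponential environment and cannot host a linear hole. The remedy is to thread the hole through a non-linear abstraction, setting
$$\contthree \;:=\; \left(\abstrexp\varone{\fillc{\contone_\typtwo^{\alpha,\beta,\gamma}}{\bang{\fillc\conttwo{\bang\varone}}}}\right)\hole,$$
with $\hole : \bang\typone$. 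Inside the inner $\bang$-box the variable $\varone$ appears only through the exponential environment entry $\bang\varone:\bang\typone$ that $\abstrexp$ introduces, so typing goes through; and $\contthree$ remains in the fragment as soon as the fragment is closed under the $\LBANGOR$-constructors that $\contthree$ employs, which is the case for all fragments considered in the paper.

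Plugging either $\bang\termone$ or $\bang\termtwo$ into $\hole$ exposes a unique top-level non-linear $\beta$-redex, so the small-step rules of Figure~\ref{apprsem} give $\sem{\fillc\contthree{\bang\termone}} = \sem{\fillc{\contone_\typtwo^{\alpha,\beta,\gamma}}{\bang{\fillc\conttwo{\bang\termone}}}}$, since every derivation of $\sssp{\fillc\contthree{\bang\termone}}{\distrone}$ for this non-value must factor through its unique reduct, and likewise for $\bang\termtwo$. Combining these two semantic identities with the amplification inequality yields $|\obs{\fillc\contthree{\bang\termone}} - \obs{\fillc\contthree{\bang\termtwo}}| \geq 1 - \gamma$, hence $d \geq 1 - \gamma$; arbitrariness of $\gamma > 0$ then forces $d = 1$. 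The only genuinely delicate step is the typing discipline around the $\bang$-rule, which is precisely what dictates the $\abstrexp$-indirection; everything else is routine assembly of the supplied contexts.
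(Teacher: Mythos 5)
Your argument is correct and is exactly the composition-of-contexts argument that the paper leaves implicit (the Fact is stated without proof): take a separating context witnessing a nonzero distance, feed its output observables $\alpha\neq\beta$ into the assumed amplification contexts, and realise the two-stage composition as a single context of the fragment. The one point that genuinely needs care — that the linear $\mathit{hole}$ cannot sit under the $\bang$ demanded by the amplification hypothesis, forcing the $\abstrexp$-indirection $(\abstrexp\varone{\,\cdots\bang{\fillc\conttwo{\bang\varone}}\cdots})\hole$ — you identify and resolve correctly, modulo the (reasonable, explicitly flagged) assumption that the fragment is closed under the constructors this context uses.
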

\subsection{Strictly Positive Types}\label{sect:strictlypositive}
First, let us consider the case of the fragment $\LBANGT$ of $\LBANG$
obtained by considering strictly positive types, only (in a similar
way to \cite{positivetype}), and by dropping parallel
disjunction. Every term $\termone$ of $\LBANGT$ is terminating
(i.e. $\sum\sem{\termone}=1$), so we need to adapt our notion of
observation: we define the type $\BB = \tarr{\bang
  \vartypone}{\tarr{\bang \vartypone}{\vartypone}}$, which can be seen
as boolean type using a variant of the usual boolean encoding in
$\lambda$-calculi. Our new notion of observation, defined only at type
$\BB$, is $\obs \termone = \sumdistr {\sem {\termone \bang{I}\bang{
      {\Omega_!}}}}$, which corresponds to the probability that
$\termone$ evaluates to $\ttrue$. While this notion of observation
uses the full power of $\LBANG$,
the context distance $\mctbang$ based on it only consider contexts in
$\LBANGT$.
\begin{theorem}\label{theo:sptriv}
$\mctbang$ trivialises.
\end{theorem}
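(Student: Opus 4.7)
I would prove Theorem~\ref{theo:sptriv} by exhibiting amplification contexts for $\LBANGT$ and then invoking the preceding Fact. The overall strategy is a statistical one: at observable type $\BB$, a closed term $\termone$ with $\obs{\termone}=\alpha$ behaves exactly like a Bernoulli($\alpha$) experiment, and duplicability lets the context draw arbitrarily many independent copies. Given distinct targets $\alpha,\beta$ and any tolerance $\gamma>0$, the plan is to design a context that samples $n$ times, approximates the success frequency, and compares it to a fixed threshold strictly between $\alpha$ and $\beta$; by standard concentration, for large enough $n$ the two terms are then pushed close to $0$ and $1$ respectively.

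The first step is to fix a threshold, WLOG $\alpha<\beta$, pick $\tau=(\alpha+\beta)/2$ and set $\epsilon=(\beta-\alpha)/2$. By a Chernoff (or Hoeffding) bound, if $X_1,\dots,X_n$ are i.i.d.\ Bernoulli($p$) then $\Pr[\tfrac{1}{n}\sum X_i > \tau]$ is at most $e^{-2n\epsilon^2}$ when $p=\alpha$ and at least $1-e^{-2n\epsilon^2}$ when $p=\beta$. Choosing $n$ so that $2e^{-2n\epsilon^2}<\gamma$ reduces the task to realising, as a single context of $\LBANGT$, the following routine: take a duplicable boolean $\bang{\termone}$, instantiate it $n$ times to obtain $b_1,\dots,b_n$ of type $\BB$, count the number $k$ of those that evaluate to $\ttrue$ via the boolean observation $b_i\,\bang{I}\,\bang{\Omega_!}$, and finally return $\ttrue$ iff $k>\tau n$.

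The second step is therefore the concrete encoding. Strictly positive recursive types in $\LBANGT$ support Church-style natural numbers $\mathsf{Nat}=\rectype{\vartypone}{\tarr{\bang(\tarr{\vartypone}{\vartypone})}{\tarr{\vartypone}{\vartypone}}}$ and booleans as in $\BB$, together with the primitive recursion needed to encode successor, iteration up to a fixed literal $\overline{n}$, and the comparison $k>\lceil\tau n\rceil$. Since every ingredient is first-order and terminating, and since $n$, $\tau$ and the threshold are constants built into the context, the entire procedure lives in $\LBANGT$. Plugging $\bang{\termone}$ into the $\mathit{hole}$ and replacing each of the $n$ linear copies of the duplicable argument yields a context $\contone_{\typone}^{\alpha,\beta,\gamma}\in\ctxs{\bang{\BB}}{\BB}$; by construction $\obs{\fillc{\contone}{\bang{\termone}}}\leq e^{-2n\epsilon^2}$ and $\obs{\fillc{\contone}{\bang{\termtwo}}}\geq 1-e^{-2n\epsilon^2}$, so the amplification condition $|\obs{\fillc{\contone}{\bang{\termone}}}-\obs{\fillc{\contone}{\bang{\termtwo}}}|\geq 1-\gamma$ holds.

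The main obstacle is the encoding layer: one must verify that natural numbers, iterated application, a counter and threshold comparison can genuinely be written as a strictly positive typed $\LBANGT$-term --- in particular that duplication is only used on $\bang{\termone}$ itself (which is already a box) and on the fixed auxiliary functions, so no linearity constraint is violated. Once this routine bookkeeping is carried out, the Chernoff estimate is entirely elementary and the theorem follows from the Fact.
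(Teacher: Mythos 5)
Your construction is correct and shares the paper's skeleton --- build an amplification context that duplicates the boxed boolean $n$ times, evaluates the copies, counts the $\ttrue$s, and post-processes the count --- but the analytical engine is genuinely different. The paper implements the Bernstein polynomial $P^f_n$ of a \emph{continuous} separating function $f$ with $f(\alpha)=0$, $f(\beta)=1$ (returning a biased coin $\psumindex{\ffalse}{f(k/n)}{\ttrue}$ on count $k$) and invokes uniform convergence from the Stone--Weierstrass theorem; you instead apply a hard threshold $k>\lceil\tau n\rceil$ with $\tau$ strictly between $\alpha$ and $\beta$ and invoke a Hoeffding/Chernoff concentration bound. Your route is more elementary: it needs no probabilistic post-processing, hence no care about $f$ taking dyadic values (which the paper needs so that $\psumindex{\cdot}{f(k/n)}{\cdot}$ is expressible), and the quantitative estimate $2e^{-2n\epsilon^2}<\gamma$ is immediate. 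What the paper's route buys is slightly more: it shows contexts can approximate \emph{any} continuous function of the observable, of which trivialisation is a corollary. Both arguments rest on the same crucial fact, which you use implicitly and should state: totality of $\LBANGT$ is what makes the $n$ samples genuinely i.i.d.\ Bernoulli and the counting terminate.

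One concrete slip to fix: you propose to count the $\ttrue$s ``via the boolean observation $b_i\,\bang{I}\,\bang{\Omega_!}$''. That cannot appear \emph{inside} the context, for two reasons. First, $\Omega_!$ is typed in the paper using a recursive type with a negative occurrence, so it is not a term of $\LBANGT$ and the resulting context would fall outside the fragment over which $\mctbang$ quantifies. Second, even ignoring typing, applying $b_i$ to $\bang I$ and $\bang{\Omega_!}$ makes the whole computation diverge whenever some $b_i$ is $\ffalse$, destroying the count. The observation $\obs{\cdot}$ is only the \emph{external} notion used to read off the final result; internally you must count by total case analysis, e.g.\ applying each $b_i$ (of type $\BB$) to boxed ``increment'' and ``skip'' continuations on a Church-numeral accumulator, exactly as the paper does when it branches on ``the number of $\ttrue$s obtained is exactly $k$''. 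With that repair, and the routine verification that Church numerals, iteration up to the literal $\overline n$, and the comparison against the fixed constant $\lceil\tau n\rceil$ are typable with strictly positive types, your proof goes through.
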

The proof of Theorem~\ref{theo:sptriv} is based on the construction
of amplification contexts. We are going to use Bernstein
constructive proof of the Stone-Weierstrass theorem. Indeed,
Bernstein showed that for every continuous function $f :
[0,1]\rightarrow \RR$, the following sequence of polynomials converges
uniformly towards $f$:

{\footnotesize
$$
P_n^f(x) = \sum\nolimits_{0 \leq k \leq n} f\left(\frac k n\right) \cdot B_k^n(x),
\text{ where }B_k^n(x) = \binom n k \cdot x^k\cdot (1-x)^{n-k}.
$$}

\noindent Let us consider the following continuous function: we fix
$f(\alpha) = 0$, $f(\beta) = 1$, and $f$ defined elsewhere in such a
way that it is continuous, that it has values in $[0,1]$, and that moreover
$f(\QQ)\subseteq\QQ$. We can easily implement $P_n^f$ by a context,
i.e. define $\contone$ such that for every $\termone$,  $\obs{
  \fillc \contone \termone} = P_n^f(\obs \termone) $. In $\LBANGT$, we can indeed copy an argument $n$
times, then evaluate it, and then for every $k$ between $0$ and $n$,
if the number of $\ttrue$s obtained is exactly $k$, return the term
$\psumindex \ffalse {f(\frac k n)} \ttrue$ (that corresponds to a term
returning $\ttrue$ with probability ${f \left(\frac k n\right) }$). 
Please observe that this construction works only because in $\LBANGT$ all
terms converge with probability $1$.
\subsection{Parallel Disjunction}
As we have seen, trivialization can be enforced by restricting the
class of terms, but we can also take the opposite road, namely
increasing the discriminating power of contexts. Indeed, consider the
full language $\LBANGOR$, with the usual notion of observation.

We can first see how parallel disjunction increases the expressive
power of the calculus on a simple example. Consider the following two
terms: $ \termone = \bang {\Omega_!}$ and $\termtwo = \bang {(\psum
  {\Omega_!} I)}$. We will see later that these two terms are the
simplest example of non-trivialization in $\LBANG$: indeed $\appl
{\mctxbang{\bang{(\tarr \typtwo \typtwo)}}}\termone \termtwo = \frac 1
2$, while $ \appl {\mctxbangor{\bang{(\tarr \typtwo
      \typtwo)}}}\termone \termtwo = 1$. In $\LBANGOR$, we are able to
define a family of contexts $(\contone_n)_{n\in\NN}$ as follows:
$$
\contone_n =\left(\abstrexp \varone \paror{\varone} {\paror {\varone}{\ldots}{I}} I \right)\hole.
$$
Essentially, $\contone_n$ makes $n$ copies of its
argument, and then converges towards $I$ if \emph{at least} one of
these copies itself converges. When we apply the context $\contone_n$ to
$\termone$ and $\termtwo$, we can see that the convergence probability
of $\fillc {\contone_n} {\termone}$ is always $0$ independently of
$n$, whereas the convergence probability of $\fillc {\contone_n}
\termtwo$ tends towards $1$ when $n$ tends to infinity.
\begin{theorem}\label{theo:parortrivial}
$\mcbangor$ trivializes.
\end{theorem}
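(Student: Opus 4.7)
By the Fact above, it suffices to build, for each $\alpha<\beta$ in $[0,1]$ and each $\gamma>0$, an amplification context. The plan is to exploit parallel disjunction together with linear applications to realize the specific family of polynomials $f_{k,m}(p):=1-(1-p^k)^m$ of $p=\obs\termone$, and then to tune $k,m$ to achieve arbitrary separation at $\alpha$ and $\beta$. Two basic gadgets are needed. Sequential composition $\termfive\,;\,\termsix := (\abstr{z}{\termsix})\termfive$ (with $z\notin\fv{\termsix}$) converges with probability $\obs\termfive\cdot\obs\termsix$, implementing AND. The parallel disjunction $\paror{\termfive}{\termsix}{I}$ converges with probability $1-(1-\obs\termfive)(1-\obs\termsix)$, implementing OR.

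Using these, I would define
\[
\contone_{k,m} \;=\; \Bigl(\abstrexp{\varone}{\textstyle\bigvee_{j=1}^{m}\bigl(\underbrace{\varone\,;\,\cdots\,;\,\varone}_{k\text{ occurrences}}\bigr)}\Bigr)\hole,
\]
where $\bigvee$ is the iterated $\paror{\cdot}{\cdot}{I}$ and each of the $m$ disjuncts is a fresh sequential block of $k$ copies of $\varone$. The non-linear $\beta$-rule $(\abstrexp{\varone}{M})(\bang\termone)\redonestep M\{\termone/\varone\}$ replaces the $km$ occurrences of $\varone$ by $km$ syntactic copies of $\termone$. A careful induction on the approximation semantics of Figure~\ref{apprsem}---using that distinct syntactic copies of $\termone$ introduce disjoint sets of probabilistic redexes and therefore yield independent Bernoulli$(p)$ trials in the rewriting tree---gives the semantic identity
\[
\obs{\fillc{\contone_{k,m}}{\bang\termone}} \;=\; 1-(1-p^k)^m \;=\; f_{k,m}(p).
\]

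The amplification then follows by an elementary limit argument. If $\alpha=0$, set $k=1$ and take $m$ large, giving $f_{1,m}(0)=0$ and $f_{1,m}(\beta)=1-(1-\beta)^m\to 1$. If $\beta=1$, set $m=1$ and take $k$ large, giving $f_{k,1}(\alpha)=\alpha^k\to 0$ and $f_{k,1}(1)=1$. Otherwise $0<\alpha<\beta<1$, and because $(\beta/\alpha)^k\to\infty$ we may pick $k$ large and then $m$ so that $m\alpha^k\to 0$ while $m\beta^k\to\infty$, forcing $f_{k,m}(\alpha)\to 0$ and $f_{k,m}(\beta)\to 1$. In every case, for sufficiently large $k,m$ the difference $f_{k,m}(\beta)-f_{k,m}(\alpha)$ exceeds $1-\gamma$, so $\contone_{k,m}$ is the desired amplification context. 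The hard part will be establishing the semantic identity: after the non-linear $\beta$-reduction, $km$ copies of $\termone$ sit inside a nest of parallel disjunctions and linear applications, and one must carefully trace the approximation semantics to show that the induced probability measures factor as products over the copies (yielding $p^k$ inside each disjunct) and that the outermost disjunctions correctly assemble these into the tail $1-(1-p^k)^m$. Once this is in hand, the remaining estimates are elementary, and the functional-analytic content of the argument collapses to the uniform limit above.
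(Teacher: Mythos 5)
Your overall strategy is sound and is essentially the De Morgan dual of the paper's: you realize $1-(1-p^k)^m$ as an $m$-ary parallel disjunction of $k$-ary conjunctions, where the paper realizes $(1-(1-p)^m)^n$ as a conjunction of disjunctions. Your tuning of the parameters is correct and in fact more elementary than the paper's: for $0<\alpha<\beta<1$ the choice $m\approx(\alpha\beta)^{-k/2}$ gives $f_{k,m}(\alpha)\le m\alpha^k=(\alpha/\beta)^{k/2}\to 0$ (Bernoulli) and $f_{k,m}(\beta)\ge 1-e^{-m\beta^k}=1-e^{-(\beta/\alpha)^{k/2}}\to 1$, so the bounded convergence theorem machinery that the paper deploys in Lemma~\ref{limit} is avoided entirely. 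This is a genuine simplification of the analytic part of the argument.

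The genuine gap is in your conjunction gadget. The sequencing $\termfive\,;\,\termsix:=(\abstr{z}{\termsix})\termfive$ with $z\notin\fv{\termsix}$ does not typecheck in $\LBANGOR$: the type system has no weakening on linear variables (the linear variable rule forces the linear context to be exactly the singleton $z:\typone$, and every other rule only splits or shares linear contexts, so every linear variable of a derivable judgment is free in the subject term). Hence $\wfjt{\contbangone,z:\typone}{\termsix}{\typtwo}$ with $z\notin\fv{\termsix}$ is not derivable and your $\contone_{k,m}$ is not a context in the sense of Section~\ref{sect:contextdistance}. Switching to the non-linear abstraction does not help, since $(\abstrexp{z}{\termsix})\bang{\termfive}$ fires its redex immediately without evaluating $\termfive$, so no sequencing is achieved. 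The repair is exactly the paper's conjunction $\andn{\varone,\ldots,\varone}{k}=(\abstr{\varthree_1}{\ldots\abstr{\varthree_k}{\abstr{\vartwo}{\vartwo\,\varthree_1\ldots\varthree_k}}})\,\varone\cdots\varone$, which uses each $\varthree_i$ linearly, evaluates all $k$ copies of the argument before producing a value, and therefore has observable $p^k$. With that substitution your construction and limits go through; the independence/product identity you flag as the hard part is asserted at the same level of detail in the paper's own proof.
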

The proof is based on the construction of amplification contexts
$\contone_\typone^{\alpha, \beta, \gamma}$. If $\max(\alpha, \beta) =
1$, we can extend the informal argument from Section
\ref{sect:mti}, by taking contexts that copy an arbitrary number of
times their argument. If $\min (\alpha, \beta) = 0$, we can use the
same idea as in the example above, by taking contexts that do an
arbitrary number of disjunctions.
What remains to be done to obtain the trivialization result is
treating the case in which $0< \alpha, \beta < 1$.  The overall idea
is to somehow mix the contexts we use in the previous
cases. More precisely, we define a family of contexts
$(\contone_n^m)_{n, m \in \NN}$ as follows:
$$\contone_n^m = \abstrexp \vartwo {\left( {\andn{\orn {\vartwo,
        \ldots , \vartwo} m, \ldots,{\orn {\vartwo, \ldots , \vartwo}
        m}}n} \right)}{\hole}$$
where
\begin{align*}
\orn  {\termone_1, \ldots \termone_n} n &= \paror{\termone_1}{\paror {\termone_2} {\ldots} I} I ;\\
\andn {\termone_1\ldots \termone_n} n &= {(\abstr{\varthree_1} \abstr
  {\varthree_2} \ldots \abstr \vartwo (\vartwo \varthree_1 \ldots
  \varthree_n))}{\termone_1 \ldots \termone_n}.
\end{align*}
The term $\orn {\termone_1, \ldots, \termone_n} n$ behaves
as a $n$-ary disjunction: it terminates if \emph{at least one} of the
$\termone_i$ terminates. On the other hand, $\andn{\termone_1, \ldots,
  \termone_n} n$ can be seen as a $n$-ary conjunction: it terminates
if \emph{all} the $\termone_i$ terminates. The contexts $\contone_n^\alpha$
compute $m$-ary conjunctions of $n$-ary disjunction. Now,
let $\iota$ be such that $\alpha < \iota < \beta$. We need
to show that for every $n$, we can choose $m(n,
\iota) \in \NN$ such that:
$$
\lim_{n \rightarrow \infty} \obs{\fillc {\contone_n^{m(n, \iota)}}{\bang\termone}} = 
\begin{cases}1 \text{ if } \obs \termone > \iota; \\ 0 \text{ if } \obs \termone < \iota.\end{cases}.
$$
We can express this problem purely in terms of functional analysis, by
observing that $ \obs{\fillc{\contone_n^m}{\bang \termone}} =
(1-(1-\obs\termone)^m)^n$. We are now going to show the result by applying the
dominated convergence theorem to a well-chosen sequence of functions.

We choose $m(n,
  \gamma) = \lceil r(n, \iota) \rceil$, with
  $$
  r(n,\iota)  = \left(\frac 1 {1-\iota}\right)^n .
  $$
  Now, we see that $\obs{\fillc{\contone_n^{m(n,\iota)}}{\bang
      \termone}}$, is equal to $f_{n, \iota}(\obs \termone)$, where
  $f_{n, \iota}$ is the real function defined by:
  $$
  f_{n,\iota}(x)  = (1-(1-x)^n)^{\lceil r(n, \iota) \rceil} 
  $$
  The result we are trying to show can now be expressed by way of the following lemma:
  \begin{lemma}\label{limit}
    $$
    \lim_{n \rightarrow \infty} f_{n, \iota}(x) = \begin{cases}
      0 \text{ if } x < \iota \\
      1 \text{ if } x > \iota 
    \end{cases} 
    $$
  \end{lemma}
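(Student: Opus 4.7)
My plan is to prove the lemma by direct estimation, splitting into the two cases $x<\iota$ and $x>\iota$. The governing quantity in both cases is the ratio $\rho = (1-x)/(1-\iota)$: by definition of $r(n,\iota)$ we have
$$r(n,\iota)\cdot (1-x)^n = \left(\frac{1-x}{1-\iota}\right)^n = \rho^n,$$
so $\lceil r(n,\iota)\rceil (1-x)^n$ behaves like $\rho^n$ up to the bounded perturbation $(1-x)^n \in [0,1]$. The case $x<\iota$ corresponds to $\rho>1$ and the case $x>\iota$ to $\rho<1$, which will respectively force $f_{n,\iota}(x)\to 0$ and $f_{n,\iota}(x)\to 1$.

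For the first case ($x<\iota$), I would use the elementary bound $(1-y)^k \leq e^{-ky}$, valid for $y\in[0,1]$ and $k\geq 0$, applied with $y=(1-x)^n$ and $k = \lceil r(n,\iota)\rceil$. This yields
$$f_{n,\iota}(x) \;\leq\; \exp\bigl(-\lceil r(n,\iota)\rceil\,(1-x)^n\bigr) \;\leq\; \exp(-\rho^n),$$
and since $\rho>1$ the exponent diverges to $-\infty$, giving $f_{n,\iota}(x)\to 0$.

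For the second case ($x>\iota$), I would use Bernoulli's inequality $(1-y)^k \geq 1-ky$ for $y\in[0,1]$ and $k\geq 1$, again with $y=(1-x)^n$ and $k=\lceil r(n,\iota)\rceil$. This gives
$$f_{n,\iota}(x) \;\geq\; 1 - \lceil r(n,\iota)\rceil\,(1-x)^n \;\geq\; 1 - \rho^n - (1-x)^n.$$
Since $\rho<1$ and $1-x<1$, both correction terms tend to $0$. Combined with the trivial upper bound $f_{n,\iota}(x)\leq 1$, this yields $f_{n,\iota}(x)\to 1$.

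There is no real obstacle here: once the estimate $r(n,\iota)(1-x)^n = \rho^n$ is observed, the rest is a routine application of two textbook inequalities. The only minor care needed is to keep track of the ceiling $\lceil r(n,\iota)\rceil$, but since it differs from $r(n,\iota)$ by at most $1$ and $(1-x)^n \to 0$, it contributes only a vanishing correction, as made explicit by the additional $(1-x)^n$ term in the lower bound above.
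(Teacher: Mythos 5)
Your proof is correct, and it takes a genuinely different and substantially more elementary route than the paper's. The paper proceeds by the change of variables $g_{n,\alpha}(x)=f_{n,1-\alpha}(1-x)$, writes $g_{n,\alpha}$ as an integral of its derivative, and then invokes the bounded convergence theorem twice, after establishing pointwise limits and uniform bounds for $g'_{n,\alpha}$ on $[0,x]$ and $[x,1]$. You instead observe the single identity $r(n,\iota)(1-x)^n=\rho^n$ with $\rho=(1-x)/(1-\iota)$ and sandwich $f_{n,\iota}(x)$ between $\exp(-\rho^n)$ (via $1-y\leq e^{-y}$) and $1-\rho^n-(1-x)^n$ (via Bernoulli's inequality, with the extra $(1-x)^n$ absorbing the ceiling). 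All hypotheses of the two inequalities are satisfied here: $y=(1-x)^n\in[0,1]$, and $\lceil r(n,\iota)\rceil\geq 1$ since $r(n,\iota)=(1-\iota)^{-n}>1$; the term $1-x<1$ in the second case uses $x>\iota>0$, which holds in the context where the lemma is applied. Your argument avoids measure theory entirely, is shorter, and as a bonus the bounds are explicit and monotone in $x$, so they immediately give locally uniform convergence away from $\iota$ — something the paper's dominated-convergence argument also secures but at much greater cost. The only thing the paper's route arguably buys is that it recycles the derivative computation for $g'_{n,\alpha}$, but nothing else in the paper depends on it.
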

  \begin{proof}
    In order to simplify the proof, we show an equivalent result on
    $g_{n,\alpha}(x) = f_{n, 1-\alpha}(1-x) $. Lemma \ref{limit} is equivalent to say that:
    $
    \lim_{n \rightarrow \infty} g_{n, \alpha}(x) = \begin{cases}
      1 \text{ if } x < \alpha, \\
      0 \text{ if } x > \alpha. 
    \end{cases} $ 
    We express $g$ as an integral, since we want to use known results about inversion of integral and limits:
    $$g_{n, \alpha}(x) = \begin{cases}
      \int_0^x {g'_{n, \alpha}(t)dt} + 1  \qquad \text{ if } x < \alpha\\
      - \int_x^1 g'_{n,\alpha}(t) dt  \text{ if } x> \alpha
    \end{cases}$$
    We can now express our goal using $g'$: we have to show that
    $$
    \lim_{n \rightarrow \infty} \int_0^x {g'_{n, \alpha}(t)dt} = 0 \qquad \text{and}\qquad
    \lim_{n \rightarrow \infty}  \int_x^1 g'_{n,\alpha}(t) dt = 0.
    $$
    We first establish a bound on the $g'_{n, \alpha}$:
    $$g'_{n, \alpha}(x) = -\lceil{r(n, 1-\alpha)}\rceil \cdot n \cdot \left(1-x^n \right)^{\lceil r(n, 1-\alpha) \rceil -1} \cdot x^{n-1} $$
    And consequently:
    \begin{equation}{\label{eq1}}
      \mid g'_{n,\alpha}(x)\mid \leq \frac 1 {\alpha^2} \cdot \left(\frac x \alpha \right)^{n-1}\cdot n \cdot (1-x^n)^{(\frac 1 \alpha)^n-1} 
    \end{equation}
    We have to use the following well-known inequality, that we recall here: 
    \begin{equation}\label{expl}
      \lim_{x \rightarrow \infty} x^m \cdot e^{-x} = 0  \text{ for every }m \in \NN .
    \end{equation}
    For every $x<\alpha$, we're going to use the bounded convergence theorem, which is a well-known theorem in real analysis, on $[0, x]$ and on $[x, 1]$ for every $x>\alpha$ . We recall this theorem here:
    \begin{theorem}[Bounded Convergence Theorem]
      If ${k_n}$ is a sequence of uniformly bounded real-valued measurable functions which converges pointwise on a bounded measure space $(S, \Sigma, \mu)$ to a function $k$, then the limit $k$ is an integrable function and:
      $\int_S {\lim_{n \rightarrow \infty} k_n} = \lim_{n \rightarrow \infty} \int_S k_n $. 
    \end{theorem}
    \begin{varitemize}
    \item Let's first take $0 \leq x <\alpha$. Here, we take $(S, \Sigma,\mu)$ as the segment $[0, x]$ with the Lebesgue measure (corresponding to the usual notion of integration on real functions). We obtain using (\ref{eq1}), that for every $t \in [0,x]$, it holds:
      $$\mid g'_{n, \alpha}(t)\mid \leq K \cdot n \cdot \left(\frac t \alpha\right)^{n-1}$$
      where $K$ is a constant. 
      We can already see that $g'_{n,\alpha}$ converge pointwise to the $0$-function on $[0,x]$ (using (\ref{expl})). We're now going to show it's bounded. Using the fact that $t \leq x$, we obtain: 
      \begin{align*}
        \mid g'_{n, \alpha}(t)\mid &\leq K \cdot n \cdot \left(\frac x \alpha\right)^{n-1}\\
        & \leq K \cdot \max_{n\in \NN}{n \cdot \left(\frac x \alpha\right)^{n-1} }
      \end{align*}
      Please observe that the $\max$ is well-defined and $< \infty$ because the sequence $n \cdot \left(\frac x \alpha\right)^{n-1}  $ has a finite limit (by (\ref{expl})). So we have achieved to bound $g'_{n, \alpha}(t)$ by a constant depending neither on $t$ nor on $n$. Consequently, we can apply the bounded convergence theorem, and we obtain : 
      $$\lim_{n \rightarrow \infty} \int_0^x {g'_{n, \alpha}(t)dt} = 0$$
    \item Now we consider the case where $\alpha < x \leq 1$.
      Here, we take $(S, \Sigma,\mu)$ as the segment $[0, x]$ with the Lebesgue measure.
      By using \ref{eq1}, we can see that for every $t \in [x, 1]$:
      \begin{align*}
        \mid g'_{n,\alpha}(t)\mid &\leq {\frac 1 {\alpha^2}} \cdot \left(\frac t \alpha \right)^{n-1}\cdot n \cdot exp \left(\left(\left(\frac 1 \alpha\right)^n -1\right) \cdot (- t^n)\right)\\ 
        & \leq {\frac 1 {\alpha^2}}\cdot \left(\frac t \alpha \right)^{n-1}\cdot n \cdot exp \left(-\left(\frac t \alpha\right)^n \right)\cdot exp(t^n)\\
        & \leq \frac 1 {\alpha^2} \cdot \left(\frac 1 \alpha \right)^{n-1}\cdot n \cdot exp \left(-\left(\frac x \alpha\right)^n \right)\cdot exp(1)\\
      \end{align*}
      Using \ref{expl} (and doing several computation), we can see that for every $a,b> 1$, it holds that:
      \begin{equation}
        \lim_{m \rightarrow \infty}{a^{m}\cdot m \cdot exp(-b^m)} = 0 
      \end{equation}
      By applying this, we obtain:
      \begin{equation}\label{eq2}
        \lim_{n \rightarrow \infty}  \frac 1 {\alpha^2} \cdot \left(\frac 1 \alpha \right)^{n-1}\cdot n \cdot exp \left(-\left(\frac x \alpha\right)^n \right)\cdot exp(1) = 0
      \end{equation}
      Now, please observe that \ref{eq2} gives us both the pointwise limit and the uniform bound (since $g_{n, \alpha}(t)$ is bounded by a sequence that doesn't depends on t, and has a finite limit when n goes towards infinity).
      So we can apply the bounded convergence theorem, and we obtain that:
      $$\lim_{n \rightarrow \infty} \int_x^1 {g'_{n, \alpha}(t)dt} = 0$$
    \end{varitemize}
    This concludes the proof
  \end{proof}

\section{Tuples and Full Abstraction}\label{sect:metrtupl}
This section is structured as follows: first, we define a labelled
Markov chain (LMC) which expresses the semantics of our calculus in an
interactive way, and then we use it to give a coinductively defined
notion of distance on an LTS of \emph{distributions}, which coincides with the context
distance as defined in Section \ref{sect:contextdistance}.  We are not
considering parallel disjunction here: the motivations for that should
be clear from Theorem \ref{theo:parortrivial}.
\subsection{A Labelled Markov Chain over Tuples}
Labelled Markov chains are the probabilistic analogues to labelled
transition systems. Formally, a LMC is a triple $\mchain =
(\mcstates,\mclabels, \mcprob)$, where $\mcstates$ is a countable set
of states, $\mclabels$ is a countable set of \emph{labels}, and
$\mcprob : \mcstates \times \mclabels \rightarrow {\distrs \mcstates}$
is a \emph{transition probability matrix} (where $\distrs X$ is the
set of all distributions over $X$).

Following \cite{DalLagoSangiorgiAlberti2014POPL}, the interactive
behaviour of probabilistic $\lambda$-terms can be represented by an
LMC whose states are the terms of the language, whose actions are
values, and where doing the action $\valone$ starting from a state
$\termone$ corresponds to applying the value $\valone$ to
$\termone$. This approach is bound \emph{not} to work in presence of
pairs when metrics take the place of equivalences, due to the
unsoundness of projective actions. In \cite{CrubilleDalLagoLICS2015},
this observation lead us to introduce a new LMC whose states are
\emph{tuples} of terms, and whose actions include one \emph{splitting}
a pair: $\mcprob(\tuplonea {\pair \termone \termtwo})(
{\text{destruct}}) = \dirac {\tuplonea{\termone, \termtwo}}$. This
turns out to work well in an affine
setting~\cite{CrubilleDalLagoLICS2015}.  We are going to define a LMC
$\lmcbang = (\statesexp, \labelsexp, \probexp)$ which is an extension
of the one from \cite{CrubilleDalLagoLICS2015}, and which is adapted
to a language with copying capabilities. The idea is to treat
exponentials in the spirit of Milner's Law: $\tarr{\bang A} {A \otimes
  \bang A}$.
\subsubsection{States}
\emph{Tuples} are pairs of the form $\tuplone = ( \ms {\termone},{\ms
  \valone})$ where ${\ms \termone}$ and $\ms \valone$ are a sequence
of terms and values, respectively. The set of all such tuples is
indicated as $\tuples$. The first component of a tuple is called its
\emph{exponential part}, while the second one is called its
\emph{linear part}. We write $\wfjt{}{(\ms \termone, \ms
  \valone)}{(\ms \typone, \ms \typtwo)}$ if $\wfjt {}{\ms
  \termone}{\ms \typone}$ and $\wfjt{}{\ms \valone}{\ms \typtwo}$. We
note $\typtuples$ the set of pairs $\typtuplone = (\ms \typone,\ms
\typtwo)$, and we call \emph{tuple types} the elements of
$\typtuples$. Moreover, we say that $(\ms \typone,\ms \typtwo)$ is a
\emph{$(n,m)$ tuple type} if $\ms \typone$ and $\ms \typtwo$ are,
respectively, an $n$-sequence and an $m$-sequence. To any term
$\termone$, we associate a tuple in a natural way: we note $\dot M $
the tuple $(\tuplonea{},\tuplonea{\termone})$, and similarly if
$\typone$ is a type, we indicate the tuple type $(\tuplonea{},
\tuplonea{\typone})$ as $\dot \typone$. Please observe that if
$\wfjt{} \termone \typone$, then it holds that $\wfjt{}{\dot
  \termone}{\dot \typone}$.

A sequence of the form $(\isone, \istwo, \ms{\typone}, \ms{\typtwo},
\termone, \typthree )$ is said to be an \emph{applicative typing
  judgment} when $\ms{\typone}$ and $\ms{\typtwo}$ are, respectively, an
$n$-sequence and an $m$-sequence of types, $\isone$ and $\istwo$ are
respectively an $n$-set and an $m$-set, and moreover it holds that
$\wfjt{\bang {\ms\varone_\isone} : \bang{\ms \typone_\isone}, \ms
  \vartwo_\istwo: \ms \typtwo_\istwo}{\termone}{\typthree}$.
Intuitively, this means that if we have a tuple $\tuplone = (\ms
\termtwo, \ms \valone)$ of type $(\ms \typone, \ms \typtwo)$, we can
replace free variables of $\termone$ by \emph{some} of the terms from
$\tuplone$. More precisely, we can replace variables in linear
position by the $\valone_i$ with $i \in \istwo$, and variables in non
linear position by $\termtwo_j$, with $j \in \isone$. We note as
$\fillc \termone \tuplone$ the closed term of type $\typthree$ that we
obtain this way. We note $\validjudgst$ the set of all applicative
typing judgments. We are specially interested in those judgments
$(\isone, \istwo, \ms{\typone}, \ms{\typtwo}, \termone, \typthree )$
in $\validjudgst$ such that for every tuple $\tuplone$, the resulting
term $\fillc \termone \tuplone$ is a \emph{value}: that is when either
$\termone = \vartwo_i$ for a $i \in \NN$, or $\termone$ is of the form
$\abstr \varthree \termtwo$, $\abstrexp \varthree \termtwo$, or $\bang
\termtwo$. We note $\vvalidjudgss$ the set of those judgments.

We are now in a position to define $\lmcbang$ formally. The
set of its states is indeed defined as
$
\statesexp = \{(\tuplone,
\typtuplone) \mid \tuplone \in \tuples,\, \typtuplone \in \typtuples,
\, \wfjt{\,}{\tuplone}{\typtuplone} \}.
$

\subsubsection{Labels and Transitions}
How do states in $\statesexp$ interact with the environment?  This is
captured by the labels in $\labelsexp$, and the associated probability
matrix. We define $\labelsexp$ as the disjoint union of $\labelsbangb$, $\labelsbang$ and $\labelsappl$, where:
$$
 \labelsbang =\labelsbangb = \{i \mid i \in \NN \}; \qquad\qquad
 \labelsappl =  \{ (\judgone,  i) \mid i \in \NN, \, \judgone \in \vvalidjudgss \}.$$
In order to distinguish actions in $\labelsbang$ and $\labelsbangb$,
we write the action $i\in\NN$ as $\evalbang i$ if it comes from
$\labelsbangb$, and as $\actbang i$ if it comes from $\labelsbang$.
The action $(\judgone,i)\in\labelsappl$ is often indicated as
$\actappltupl \judgone i$.
The probability matrix $\probexp$ is defined formally in
Figure \ref{fig:lmc}. We give below some intuitions about it. The
general idea is that $\lmcbang$ is designed to express every possible
effect that a context can have on tuples.  $\labelsbangb$ and
$\labelsbang$ are designed to model copying capabilities, while $\labelsappl$
corresponds to applicative interactions.

Actions in $\labelsbangb$ take any term of the form $\bang \termone$
from the linear part of the underlying tuple, unbox it and transfer
$\termone$ to the exponential part of the tuple. Please observe that
this action is in fact deterministic: the resulting tuple is uniquely
determined. Labels in $\labelsbang$, on the other hand, model the act
of \emph{copying} terms in the exponential part. We call its elements
\emph{Milner's actions}. More specifically, the action $\actbang i$
takes the $i$-th term in the exponential part of the tuple, makes a
copy of it, evaluates it and adds the result to the linear
part. Please observe that, contrary to $\evalbang i$, this action can
have a probabilistic outcome: the transferred term is evaluated.

Labels in $\labelsappl$ are analogues of the applicative actions from
applicative bisimulation over terms (see,
e.g. \cite{DalLagoSangiorgiAlberti2014POPL}). As such, they model
environments passing arguments to programs. Here, we have to adapt
this idea to our tuple framework: indeed, we can see the tuple as a
collection of programs available to the environment, who is free
to choose \emph{with which} of the programs to interact with
by passing it an argument. This argument, however, could depend on
other components of the tuple, which need to be removed
from it if lying in its linear part. Finally, please observe that all
this should respect types. Labels in $\labelsappl$ are indeed defined
as a pair of an index $i$ corresponding to the position in the tuple
of the term the environment chooses, and an applicative typing judgment, used to
specify the argument.  Please observe that in the definition of
probability matrix for applicative actions, in Figure \ref{fig:lmc},
the condition on $i$ implies that the $i$-th linear component of the
tuple is not used to construct the argument term.

\begin{figure}
\begin{center}
\fbox{
\begin{minipage}{0.95 \textwidth}
\begin{center}
\scalebox{\condscale}{
\begin{tikzpicture}[auto]
\node [draw, rectangle, rounded corners] (M) at (-0.4,-2.5) {\small $
\begin{array}{c}
( ({\ms \termone}, \ms {\valtwo}),  
({\ms \typone}, \ms \typtwo)) \\ \text{ with } {\typtwo_i = \tarr \typthree \typfour}
\end{array}
$};

\node [draw, rectangle, rounded corners] (P) at (7,-2.5) {\small $
\begin{array}{c}
(({\ms \termone}, \ms{\valtwo}_{\{1, \ldots, n\} \setminus {(\istwo\cup\{i \})}};\valone),  \\
 ( {\ms \typone}, \ms{\typtwo}_{\{1, \ldots, n\} \setminus {(\istwo\cup\{ i\})}};\typfour ))\\
\end{array}
$};
\draw[->](M) to node[above]{
$\actappltupl \judgone i$} node[below]{\small $
\begin{array}{c}
\sem {\valtwo_i {(\fillc \termtwo {(\ms \termone, \ms \valtwo)})}} (\valone)\\
\text{ with }\judgone = (\isone, \istwo, \ms{\typone}, \ms{\typtwo}, \termtwo, \typthree ) \text{ and } i \not \in \istwo
\end{array}$
} (P);

\node [draw, rectangle, rounded corners] (Q) at (-0.4,0) {\small $
\begin{array}{c}
((\ms \termone, \ms \valtwo),
(\ms \typone, \ms \typtwo ))\\
\text{ with }\valtwo_i = \bang \termtwo \text{ and } \typtwo_i = \bang \typthree
\end{array}
$};
\node [draw, rectangle, rounded corners] (R) at (6.6,0) {\small $
\begin{array}{c}
(\ms \termone;\termtwo, \ms \valtwo_{\{1, \ldots, n\} \setminus \{i\}}),  \\
(\ms \typone;\typthree,\ms \typtwo_{\{1, \ldots, n\} \setminus \{i\}}) )
\end{array}
$};
\draw[->](Q) to node[above]{\small
$\evalbang i$} node[below]{\small $1$} (R);
\node [draw, rectangle, rounded corners] (S) at (0,-1.25) {\small$
((\ms\termone, \ms \valtwo),  
(\ms \typone, \ms\typtwo))
$};

\node [draw, rectangle, rounded corners] (T) at (7,-1.25) {\small $
\begin{array}{c}
((\ms \termone, \ms \valtwo; \valone),  
(\ms \typone, \ms\typtwo; \typone_i))
\end{array}
$};
\draw[->](S) to node[above]{\small
$\actbang i$} node[below]{\small $\sem {\termone_i}(\valone)$} (T);
\end{tikzpicture}
}
\end{center}
\end{minipage}}
\end{center}
\caption{Definition of $\probexp$}\label{fig:lmc}
\end{figure}

\begin{example}\label{example2}
  We give in Figure~\ref{fig:fragmentLMC} a fragment of $\lmcbang$
  illustrating our definitions on an example. Let $\typtwo$ be
  an element of $\types$. We consider terms of the form
  $\termone_\varepsilon = \bang{(\psumindex {\Omega_!}  \varepsilon
    I)}$, for $\varepsilon\in\DD $ and we look at \emph{some} of the possible
  evolutions in $\lmcbang$ from the associated state $(\dot
  {\termone_\varepsilon}, \dot{\bang{({\tarr \typtwo \typtwo})}}) = (
  \tuplonea{}, \tuplonea{\termone_\varepsilon}),(\tuplonea{},
  \tuplonea{\bang{({\tarr \typtwo \typtwo})}} )$. In
  Figure~\ref{fig:fragmentLMC}, we denote by $\typone$ the type $\tarr
  \typtwo \typtwo$.
\end{example}
\hide{
\begin{figure}
  \begin{center}
    \fbox{
      \begin{minipage}{\condwidth}
        \begin{center}
          \scalebox{\condscale}{
        \begin{tikzpicture}[scale = 0.5]
          \node [draw, rectangle, rounded corners] (s1) at (0,0) {\scriptsize$
            (\dot {\termone_\varepsilon}, \dot{\bang{\typone}})
            $};
          \node [draw, rectangle, rounded corners] (s2) at (0,-3) {\scriptsize $
            \begin{array}{c}
              ( \tuplonea{{\psumindex {\Omega_!} \varepsilon I}}, \tuplonea{}) , \\(\tuplonea{{\typone}}, \tuplonea{} )
\end{array}
            $};
          \node [draw, rectangle, rounded corners] (s3) at (0,-6) {\scriptsize $
            \begin{array}{c}
              ( \tuplonea{{\psumindex {\Omega_!} \varepsilon I}}, \tuplonea{I}),\\(\tuplonea{{ \typone}}, \tuplonea{ \typone} )
            \end{array}
            $};
          \node [draw, rectangle, rounded corners, inner sep=1pt] (s4) at (0,-12) {\scriptsize $
            \begin{array}{c}
              ( \tuplonea{{\psumindex {\Omega_!} \varepsilon I}}, \tuplonea{I,I}) ,\\(\tuplonea{{ \typone}},\\ \tuplonea{ \typone, \typone} )
            \end{array}
            $};
          \node [draw, rectangle, rounded corners, inner sep=1pt] (s5) at (-5,-12) {\scriptsize $
            \begin{array}{c}
              ( \tuplonea{{\psumindex {\Omega_!} \varepsilon I}}, \tuplonea{\valone}) ,\\(\tuplonea{{\typone}}, \tuplonea{ \typtwo} )\\
              \text{with }\wfjt{}{\valone}{\typtwo}
            \end{array}
            $};
          \node [draw, rectangle, rounded corners, inner sep=1pt] (s6) at (6,-12) {\scriptsize $
            \begin{array}{c}
              ( \tuplonea{{\psumindex {\Omega_!} \varepsilon I}},\\ \tuplonea{\subst\valtwo \varone {\psumindex {\Omega_!} \varepsilon I} }) ,\\(\tuplonea{{\typone}}, \tuplonea{ \typtwo} )\\
              \text{with } \\ \wfjt{\bang\varone_1:\bang {\typone}}{\valtwo}{\typtwo}
            \end{array}
            $};
          \draw[->](s1) to node[right]{$1$} node[left]{$\actbang 1 $} (s2);
          \draw[->](s2) to node[right]{$\varepsilon $} node[left]{$\evalbang 1 $} (s3);
          \draw[->](s3) to node[right]{$\varepsilon$} node[left]{$\evalbang 1 $} (s4);
          \draw[->](s3) to node[right]{$1$} node[left]{$
            \actappltupl {( \emptyseq, \emptyseq,\tuplonea{{\typone}}, \tuplonea{ {\typone }} , \valone, \typtwo)
            } 1 
            $} (s5);
          \draw[->](s3) to node[left]{$1$} node[right]{$\actappltupl {(\tuplonea{1}, \emptyseq, \tuplonea{{\typone}}, \tuplonea{ { \typone}}, \valtwo, \typtwo)} 1 $} (s6);
        \end{tikzpicture}
          }
        \end{center}
      \end{minipage}}
  \end{center}

  \caption{A Fragment of $\probexp$}\label{fig:fragmentLMC}
\end{figure}}
\begin{figure}
\begin{center}
    \fbox{
      \begin{minipage}{\condwidth}
        \begin{center}
          \scalebox{\condscale}{
        \begin{tikzpicture}[scale = 0.5]
          \node [draw, rectangle, rounded corners] (s1) at (1,0) {\scriptsize$
            (\dot {\termone_\varepsilon}, \dot{\bang{\typone}})
            $};
          \node [draw, rectangle, rounded corners] (s2) at (7,0) {\scriptsize $
            \begin{array}{c}
              ( \tuplonea{{\psumindex {\Omega_!} \varepsilon I}}, \tuplonea{}) , \\(\tuplonea{{\typone}}, \tuplonea{} )
\end{array}
            $};
          \node [draw, rectangle, rounded corners] (s3) at (14,0) {\scriptsize $
            \begin{array}{c}
              ( \tuplonea{{\psumindex {\Omega_!} \varepsilon I}}, \tuplonea{I}),\\(\tuplonea{{ \typone}}, \tuplonea{ \typone} )
            \end{array}
            $};
          \node [draw, rectangle, rounded corners, inner sep=1pt] (s4) at (24,0) {\scriptsize $
            \begin{array}{c}
              ( \tuplonea{{\psumindex {\Omega_!} \varepsilon I}}, \tuplonea{I,I}) ,\\(\tuplonea{{ \typone}}, \tuplonea{ \typone, \typone} )
            \end{array}
            $};
          \node [draw, rectangle, rounded corners, inner sep=1pt] (s5) at (24,2) {\scriptsize $
            \begin{array}{c}
              ( \tuplonea{{\psumindex {\Omega_!} \varepsilon I}}, \tuplonea{\valone}) ,(\tuplonea{{\typone}}, \tuplonea{ \typtwo} )\\
              \text{with }\wfjt{}{\valone}{\typtwo}
            \end{array}
            $};
          \node [draw, rectangle, rounded corners, inner sep=1pt] (s6) at (24,-2) {\scriptsize $
            \begin{array}{c}
              ( \tuplonea{{\psumindex {\Omega_!} \varepsilon I}}, \tuplonea{\subst\valtwo{\varone_1} {\psumindex {\Omega_!} \varepsilon I} }) ,\\(\tuplonea{{\typone}}, \tuplonea{ \typtwo} )\\
              \text{with }  \wfjt{\bang\varone_1:\bang {\typone}}{\valtwo}{\typtwo}
            \end{array}
            $};
          \draw[->](s1) to node[below]{$1$} node[above]{$\evalbang 1 $} (s2);
          \draw[->](s2) to node[below]{$\varepsilon $} node[above]{$\actbang 1 $} (s3);
          \draw[->](s3) to node[below,near end]{$\varepsilon$} node[above, near end]{$\actbang 1 $} (s4);
          \draw[->](s3) to node[below]{${1}$} node[above, near start, xshift=0.0cm, yshift=0.2cm]{$
            \actappltupl {( \emptyseq, \emptyseq,\tuplonea{{\typone}}, \tuplonea{ {\typone }} , \valone, \typtwo)
            } 1 
            $} (s5);
          \draw[->](s3) to node[above]{$1$} node[below, near start]{$\actappltupl {(\tuplonea{1}, \emptyseq, \tuplonea{{\typone}}, \tuplonea{ { \typone}}, \valtwo, \typtwo)} 1 $} (s6);
        \end{tikzpicture}
          }
        \end{center}
      \end{minipage}}
  \end{center}
 \caption{A Fragment of $\probexp$}\label{fig:fragmentLMC}
\end{figure}
\subsection{Distributions as States}
Now that we have a LMC $\probexp$ modelling interaction between
(tuple of) terms and their environment, we could define notions
of metrics following one of the abstract definitions from
the literature, e.g. by defining the \emph{trace distance}
or the \emph{behavioural distance} between terms. This is,
by the way, the approach followed in \cite{CrubilleDalLagoLICS2015}.
We prefer, however, to first turn $\probexp$ into a transition system
$\ltsbang$ whose states are \emph{distributions} of tuples.  This
supports both a simple, coinductive presentation of the trace
distance, but also up-to techniques, as we will see in Section
\ref{subsect:Upto} below. Both will be very convenient when
evaluating the distance between concrete terms, e.g. our running
example.

It turns out that the usual notion of an LTS is not sufficient
for our purposes, since it lacks a way to \emph{expose} the
observables of each state, i.e., its sum. We thus adopt
the following definition:
\begin{definition}
  A \emph{weighted labelled transition system} (WLTS for short) is a
  quadruple in the form $\ltsone = (\mcstates, \mclabels,
  \ltsact{}{\cdot}{}, w) $ where:
  \begin{varitemize}
  \item
    $\mcstates$ is a set of states and $\mclabels$ is a countable set of actions,
  \item
    $\ltsact{}{\cdot}{}$ is a transition function, such that, for every
    $\stateone \in \mcstates$ and $\actone \in \mclabels$, there exists
    \emph{at most} one $\statetwo \in \mcstates$, such that $\ltsact \stateone
    \actone \statetwo$,
  \item
    $w : \mcstates \rightarrow [0,1] $.
  \end{varitemize}
\end{definition}
Please observe how WLTSs are \emph{deterministic} transition
systems. We define the WLTS $\ltsbang$ by way of the equations in
Figure \ref{wltsexp}. 
\hide{
We are now ready to give the specific WLTS we need,
namely the quadruple $\ltsbang =
(\ltsstates, \ltslabels, \ltstrans,w)$, where:
\begin{varitemize}
\item
  The set of states is
  $\ltsstates=\bigcup_{\typtuplone \in \typtuples}\left( \distrs{\{\tuplone \mid \wfjt
    {\,}{\tuplone}{\typtuplone}\}}\times \{ \typtuplone\}\right)$;
\item
  The set of actions is $\ltslabels = \labelsexp
  \cup\{\typtuplone \mid \typtuplone \in \typtuples \}$;
\item
  The transition function 
  $\ltsact{}{\cdot}{}$ is defined as follows:
  \begin{align*}
    &\ltsact {(\distrone, \typtuplone)}{\typtuplone}{(\distrone,
      \typtuplone)} \text{ for }\typtuplone \in \typtuples;\\
    &\ltsact{(\distrone, \typtuplone)}{\actone}{\sum_{\tuplone \in \tuples}
      \distrone(\tuplone)\cdot \probexp((\tuplone, \typtuplone))(\actone)} \text{ for }\actone \in \labelsexp.
  \end{align*}
\item
  The weight $w$ is such that $w(\distrone, \typtuplone) = \sumdistr
  \distrone$.
\end{varitemize}
}
\begin{figure}
\begin{center}
    \fbox{
      \begin{minipage}{\condwidth}
        \begin{center}
\scalebox{\condscale}{         
$
  \ltsstates=\bigcup_{\typtuplone \in \typtuples}\left( \distrs{\{\tuplone \mid \wfjt{\,}{\tuplone}{\typtuplone}\}}\times \{ \typtuplone\}\right)$}
\scalebox{\condscale}{$
\ltslabels = \labelsexp
\cup\{\typtuplone \mid \typtuplone \in \typtuples \} 
\qquad w(\distrone, \typtuplone) = \sumdistr \distrone
 $
}
\ \\
\scalebox{\condscale}{
$
\ltsact {(\distrone, \typtuplone)}{\typtuplone}{(\distrone,
      \typtuplone)} \text{ for }\typtuplone \in \typtuples \qquad
    \ltsact{(\distrone, \typtuplone)}{\actone}{\sum_{\tuplone \in \tuples}
      \distrone(\tuplone)\cdot \probexp((\tuplone, \typtuplone))(\actone)} \text{ for }\actone \in \labelsexp.
$}
\end{center}

\end{minipage}
}
\end{center}
\caption{The WLTS $\ltsbang =(\ltsstates, \ltslabels, \ltstrans,w) $}\label{wltsexp}
\end{figure}

If $\stateone = (\distrone, \typtuplone)$ is in $\ltsstates$, we say
that $\stateone$ is a $\typtuplone$-state.  It is easy to check that
$\ltsbang$ is nothing more than the natural way to turn $\probexp$
into a deterministic transition system.  We illustrate this idea in
Figure~\ref{fig:fragmentWLTS}, by giving a fragment of $\ltsbang$
corresponding to (part of) the fragment of $\lmcbang$ given in Example
\ref{example2}.

\begin{figure}
  \begin{center}
    \fbox{
      \begin{minipage}{\condwidth}
        \begin{center}
      \begin{tikzpicture}[scale = 0.5]
        \node [draw, rectangle, rounded corners] (s1) at (0,0) {\scriptsize$
          (\dirac{\dot{\termone_\varepsilon}}, \dot{\bang{{\typone}}})
          $};
        \node [draw, rectangle, rounded corners] (s2) at (6,0) {\scriptsize $
\begin{array}{c}
  ( \dirac{(\tuplonea{{\psumindex {\Omega_!} \varepsilon I}}, \tuplonea{})},  \\(\tuplonea{{\typone}}, \tuplonea{} ))
\end{array}
$};
        \node [draw, rectangle, rounded corners] (s3) at (13,0) {\scriptsize $
          \begin{array}{c}
            (\diracpar{(\tuplonea{{\psumindex {\Omega_!} \varepsilon I}}, \tuplonea{I})}{\varepsilon},\\(\tuplonea{{\typone}}, \tuplonea{\typone}))
          \end{array}
          $};
        \node [draw, rectangle, rounded corners, inner sep=1pt] (s4) at (18,2.6) {\scriptsize $
\begin{array}{c}
  ( \diracpar{(\tuplonea{{\psumindex {\Omega_!} \varepsilon I}}, \tuplonea{I,I})}{\varepsilon^2},\\ (\tuplonea{{\typone}}, \tuplonea{\typone, \typone}))
\end{array}
$};
        \node [draw, rectangle, rounded corners, inner sep=1pt] (s5) at (18,-2.6) {\scriptsize $
          \begin{array}{c}
            (\diracpar{(\tuplonea{{\psumindex {\Omega_!} \varepsilon I}}, \tuplonea{\valone})}{\varepsilon},\\(\tuplonea{{\typone}}, \tuplonea{ \typtwo} ))\\
            \text{with }\wfjt{}{\valone}{\typtwo}
          \end{array}
          $};
        \draw[->](s1) to node [below]{$\evalbang 1 $} (s2);
        \draw[->](s2) to node[below]{$\actbang 1 $} (s3);
        \draw[->](s3) to node[above, near start, xshift=-0.1cm, yshift=-0.1cm]{$\actbang 1 $} (s4);
        \draw[->](s3) to node[above, right, xshift=0.0cm, yshift=0.1cm]{$
          \actappltupl {( \emptyseq, \emptyseq,\tuplonea{{\typone}},\tuplonea{ {\typone}} , \valone, \typtwo)
          } 1 
          $} (s5);
      \end{tikzpicture}
      \end{center}
    \end{minipage}}
  \end{center}
  \caption{A Fragment of $\ltsbang$}\label{fig:fragmentWLTS}
  \end{figure}
\subsection{A Coinductively Defined Metric}\label{subsect:tr}
Following Desharnais et al. \cite{DLT2008}, we use a quantitative
notion of bisimulation on $\ltsbang$ to define a distance between
terms. The idea is to stipulate that, for any $\varepsilon \in [0,1]$,
a relation $\relone$ is an $\varepsilon$-bisimulation if it is,
somehow, \emph{$\varepsilon$-close to a bisimulation}. The distance
between two states $\stateone$ and $\statetwo$ is just the smallest
$\varepsilon$ such that $\stateone$ and $\statetwo$ are
$\varepsilon$-bisimilar.  However, while in \cite{DLT2008} the notion
of $\varepsilon$-bisimulation is \emph{local}, we want it to be more
restricted by the \emph{global} deviation we may accept considering
arbitrary sequences of actions.

\begin{definition}
\label{dista}
Let $\ltsone = (\mcstates, \mclabels, \ltsact{}{\cdot}{},w)$ be a WLTS.
Let $\relone$ be a symmetric and reflexive relation on $\mcstates $,
and $\varepsilon \in [0,1]$. $\relone$ is a $\varepsilon$-bisimulation
whenever the following two conditions hold:
\begin{varitemize}
\item if $\relate {\relone} \stateone \statetwo$, and
  $\ltsact \stateone \actone \statethree$, then there exists $\statefour$ such that
  $\ltsact \statetwo \actone \statefour$, and
  it holds that $\relate {\relone} \statethree
  \statefour$. 
\item if $\relate {\relone} {\stateone} {\statetwo}$, then
  $\lvert\weight \stateone - \weight \statetwo\rvert \leq \varepsilon$.
\end{varitemize}
For every $\varepsilon \in [0,1]$, there exists a largest
$\varepsilon$-bisimulation, that we indicate as
$\relone^\varepsilon$. Please observe that it is not an equivalence
relation (since it is not transitive). We can now define a metric on
$\mcstates$ : $\appl {\metrtr \ltsone} \stateone \statetwo =
\inf{\left\{\varepsilon \mid \relate
  {\relone^\varepsilon}{\stateone}{\statetwo}\right\}}$.

\end{definition}
The
  greatest lower bound is in fact reached as a
  $\appl {\metrtr \ltsone} \stateone \statetwo$-bisimulation: that's the sens of the following lemma.
\begin{lemma}\label{largestbisim}
 Let $\stateone, \statetwo \in \ltsstates$. Then
 $\appl{\metrtr \ltsone}\stateone \statetwo \leq \varepsilon$ if and only if
 $\relate{\relone^\varepsilon}\stateone \statetwo$.
\end{lemma}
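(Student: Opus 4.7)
The easy direction is immediate from the definition of infimum: whenever $\relate{\relone^\varepsilon}\stateone\statetwo$, the value $\varepsilon$ lies in the set over which $\appl{\metrtr\ltsone}\stateone\statetwo$ is taken, so the infimum is at most $\varepsilon$.

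For the converse the plan is to show that the infimum is actually attained. First I would record a monotonicity property: if $\delta \leq \delta'$, then every $\delta$-bisimulation is also a $\delta'$-bisimulation, since the transfer condition does not depend on the parameter and the bound $|\weight{\cdot} - \weight{\cdot}| \leq \delta$ only becomes weaker. Hence $\relone^\delta \subseteq \relone^{\delta'}$ whenever $\delta \leq \delta'$. Combined with the hypothesis $\appl{\metrtr\ltsone}\stateone\statetwo \leq \varepsilon$, this yields $\relate{\relone^\delta}\stateone\statetwo$ for every $\delta > \varepsilon$.

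Next, I would define the relation $\mathcal{R} = \bigcap_{\delta > \varepsilon} \relone^\delta$ and argue that it is itself an $\varepsilon$-bisimulation containing $(\stateone,\statetwo)$. Reflexivity and symmetry are trivially preserved by intersection. For the weight bound, if $\relate{\mathcal{R}}{\statethree}{\statefour}$ then $|\weight{\statethree} - \weight{\statefour}| \leq \delta$ for every $\delta > \varepsilon$, hence $|\weight{\statethree} - \weight{\statefour}| \leq \varepsilon$. The only non-routine point is the transfer condition: assuming $\relate{\mathcal{R}}{\statethree}{\statefour}$ and $\ltsact{\statethree}{\actone}{\statethree'}$, for each $\delta > \varepsilon$ we are given some ${\statefour'}_\delta$ with $\ltsact{\statefour}{\actone}{{\statefour'}_\delta}$ and $\relate{\relone^\delta}{\statethree'}{{\statefour'}_\delta}$. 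The crucial observation is that $\ltsbang$ is \emph{deterministic}: there is at most one state reached from $\statefour$ via $\actone$, so all the ${\statefour'}_\delta$ coincide with a single $\statefour'$, and $\relate{\relone^\delta}{\statethree'}{\statefour'}$ for every $\delta > \varepsilon$ gives $\relate{\mathcal{R}}{\statethree'}{\statefour'}$.

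Finally, maximality of $\relone^\varepsilon$ among $\varepsilon$-bisimulations yields $\mathcal{R} \subseteq \relone^\varepsilon$, so in particular $\relate{\relone^\varepsilon}\stateone\statetwo$. The only step whose correctness requires real care is the transfer condition for $\mathcal{R}$, and it goes through cleanly precisely because of the determinism of the WLTS; in a branching setting one would need a compactness argument to synthesise a common successor witness across all $\delta > \varepsilon$.
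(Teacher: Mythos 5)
Your proof is correct and follows essentially the same route as the paper: the paper's one-line argument exhibits the relation $\{(\statethree,\statefour)\mid \appl{\metrtr\ltsone}\statethree\statefour\leq\varepsilon\}$ as an $\varepsilon$-bisimulation, and by the monotonicity you establish this is exactly your $\bigcap_{\delta>\varepsilon}\relone^\delta$. You merely spell out what the paper leaves implicit, in particular the correct observation that determinism of the WLTS is what lets the successor witnesses for all $\delta>\varepsilon$ collapse to a single one in the transfer condition.
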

\begin{proof}
  The right-to-left implication is given by the definition of
  $\metrtr \ltsone$. The other one comes from the fact that the relation
  $\relone$ given by $\relate \relone \stateone \statetwo$ if $\appl
  {\metrtr \ltsone} \stateone \statetwo \leq \varepsilon$ is a
  $\varepsilon$-bisimulation.
\end{proof}
As a corollary, it is easy to see that the infinum in Definition
\ref{dista} is in fact reached.

How can we turn $\metrtr\ltsone$ into a metric on \emph{terms}?  The
idea is to consider the distributions on \emph{tuples} one naturally
gets when evaluating the term.  To every term $\termone$ of type
$\typone$, we define $\stateterm \typone \termone \in \ltsstates$
as 
$ (\distrpar{\dot \valone}{\sem \termone(\valone)}{\valone \in \valset}, \dot \typone).
$
\begin{definition}
  For every terms $\termone$ and $\termtwo$ such that $\wfjt
  {}\termone \typone$ and $\wfjt {}{\termtwo}{\typone}$, we set $\appl
  {\mtbangt \typone} \termone \termtwo = \appl {\metrtr \ltsbang}
  {\stateterm \typone \termone}{\stateterm \typone \termtwo} $.
\end{definition}

\begin{example} \label{extrace1}
Consider again the terms $\termone_\varepsilon$ from Example
\ref{example1}. We fix a type $\typtwo$, and define $\typone = \tarr
\typtwo \typtwo$. As mentioned in Example \ref{example1}, it holds
that $\wfjt{}{\termone_\varepsilon}{\bang \typone}$.  Let now
$\varepsilon, \mu, \alpha$ be in $[0,1]$, and let $\relone$ be any
$\alpha$-bisimulation, such that $\relate \relone{\stateterm{\bang
    \typone}{\termone_{\varepsilon}}}{\stateterm{\bang
    \typone}{\termone_{\mu}}} $. Let $\{\stateone_i\}_{i\in\NN}$,
and $\{\statetwo_i\}_{i\in\NN}$ be families from $\ltsstates$
such that
$ \ltsact {\stateterm{\bang
    \typone}{\termone_{\varepsilon}}} {\evalbang 1} {\ltsact
  {\stateone_0} {\actbang 1} {\ltsact \ldots {\actbang
      1}{\stateone_i \ldots}}}$ and $ \ltsact {\stateterm{\bang
    \typone}{\termone_{\mu}}} {\evalbang 1} {\ltsact
  {\statetwo_0} {\actbang 1} {\ltsact \ldots {\actbang
      1}{\statetwo_i \ldots}}} $.  Since $\relone$ is an $\alpha$-bisimulation,
for every $i$, it holds that $\relate \relone
{\stateone_i}{\statetwo_i}$. Looking at the definition of $\ltsbang$,
it is easy to realise that:
\begin{align*}
\stateone_i &=  \distrpar{(\tuplonea{\psumindex
    {\Omega_!} {\varepsilon} I}, \tuplonea{I,\ldots, I})
  ,(\tuplonea{\typone}, \tuplonea{\typone, \ldots,\typone})}{\varepsilon^i}{i\in\NN};
\\ 
\statetwo_i &=  \distrpar{(\tuplonea{\psumindex {\Omega_!}
    {\mu} I}, \tuplonea{I,\ldots, I}) ,(\tuplonea{\typone},
  \tuplonea{\typone, \ldots,\typone})}{\mu^i}{i\in\NN}.
\end{align*}
By the definition of a $\alpha$-bisimulation, we see that this
implies that $\alpha \geq \lvert \varepsilon^i - \mu^i \rvert$. Since
this reasoning can be done for every $\alpha$ such that
$\termone_{\varepsilon}$ and $\termone_{\mu}$ are $\alpha$-bisimilar,
it means that: $\appl {\mtbangt{\bang
    \typone}}{\termone_{\varepsilon}}{\termone_{\mu}} \geq \sup_{i \in
  \NN} \lvert \varepsilon^i - \mu^i \rvert$. Moreover, if we
consider the special case where $\varepsilon = 0$, we can actually
construct a $\mu$-bisimulation by taking $$\relone = (\stateterm{\bang
  \typone}{\termone_{0}}, \stateterm{\bang
  \typone}{\termone_{\mu}})\cup \{(\stateone_0 ,\statetwo_0 ) \} \cup
\{((0, \typtuplone), (\distrone, \typtuplone) \mid \sumdistr \distrone
\leq \mu \}.$$ We can easily check that $\relone$ is indeed a
$\mu$-bisimulation, which tells us that $\appl {\mtbangt{\bang
    \typone}}{\termone_{0}}{\termone_{\mu}} = \mu$. 
%
\end{example}

In Section \ref{fullabs} below, we will prove that $\mtbangt\typone$
coincides with $\mctxbang\typone$. While, as we will see soon, this
helps a lot when precisely evaluating the distance between terms,
sometime a characterization of $\mtbangt\typone$ by traces is
very effective when deriving lower bounds on the distance between
terms (see, e.g., Section \ref{tracecars}).

\subsection{A Trace Characterization}\label{tracecars}
\newcommand{\admtraces}[1]{\mathcal{A}(#1)} In this section, we will
characterize the metric defined in the previous section in an
\emph{inductive} way, by way of traces. This will be useful when
proving full abstraction. A \emph{trace} is a (possibly empty)
sequence of actions in $\ltslabels$. Formally, the set of traces is
generated by the following grammar:
$$
\traceone \in \traces \bnf \emptytr \midd \concat \actone
\traceone \qquad \text{ where } \actone \in \ltslabels.
$$
If $\traceone \in \traces$, and $\stateone \in \ltsstates$, we write
$\ltsact \stateone \traceone \statetwo$ if $\traceone = \actone_1,
\ldots, \actone_n$ and there exists a sequence of states
$\stateone_0,\ldots,\stateone_n$ where $\stateone_0 = \stateone$,
$\stateone_n = \statetwo$, and moreover for every $1\leq i <n$, it
holds that $\ltsact {\stateone_{i-1}}{\actone_i}{\stateone_i}$. A
trace $\traceone$ is said to be \emph{ admissible } for $\stateone$ if
there exists $\statetwo$ such that $\ltsact \stateone \traceone
\statetwo$. The set of admissible traces for $\stateone$
is indicated as $\admtraces{\stateone}$. The following is
a well-posed definition, because the underlying WLTS $\ltsbang$
is by definition deterministic.
\begin{definition}
  Let be $\stateone \in \ltsstates$, and
  $\traceone\in\admtraces{\stateone}$. Then there is a unique
  $\statethree$ such that $\ltsact {\stateone} \traceone \statethree$,
  and we define the \emph{success probability of $\traceone$ starting from $\stateone$}, that we note $\probt\traceone \stateone$, as $\weight \statethree$.
\end{definition}
We can express the metric $\metrtr\ltsbang$ as the maximum separation
that a trace can induce: 
\begin{proposition}\label{propcartr}
Let $\stateone,\statetwo\in\ltsstates$. Then:
$$
\appl{\metrtr\ltsbang}{\stateone}{\statetwo} =
\left\{
\begin{array}{ll} 
  \sup_{\traceone \in \traces} {\lvert \probt \traceone \stateone - \probt \traceone \statetwo \rvert}&\mbox{if $\admtraces{\stateone}=\admtraces{\statetwo}$};\\
  1&\mbox{otherwise}.
\end{array}
\right.
$$
\end{proposition}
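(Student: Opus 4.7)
The plan is to split the proof into the two cases dictated by the statement and to exploit throughout the determinism of $\ltsbang$, which makes admissibility of a trace a purely structural property rather than a probabilistic one.

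For the case $\admtraces{\stateone} \neq \admtraces{\statetwo}$, I would show that \emph{no} $\varepsilon$-bisimulation whatsoever can relate $\stateone$ and $\statetwo$. Pick $\traceone = \actone_1 \cdots \actone_n$ in the symmetric difference, say admissible for $\stateone$ but not for $\statetwo$. If $\relone$ were an $\varepsilon$-bisimulation containing $(\stateone, \statetwo)$, a straightforward induction on $i$ using the structural matching clause of Definition~\ref{dista} would produce states $\stateone_i, \statetwo_i$ with $\ltsact{\stateone_{i-1}}{\actone_i}{\stateone_i}$, $\ltsact{\statetwo_{i-1}}{\actone_i}{\statetwo_i}$ and $\relate{\relone}{\stateone_i}{\statetwo_i}$; iterating up to $i = n$ would witness $\traceone \in \admtraces{\statetwo}$, a contradiction. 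Hence $\{\varepsilon \mid \relate{\relone^\varepsilon}{\stateone}{\statetwo}\}$ is empty and the defining infimum collapses to the top element $1$ of $[0,1]$.

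For the case $\admtraces{\stateone} = \admtraces{\statetwo}$, set $\varepsilon^\star = \sup_{\traceone} \lvert \probt \traceone \stateone - \probt \traceone \statetwo \rvert$ and prove both inequalities. For $\appl{\metrtr\ltsbang}{\stateone}{\statetwo} \leq \varepsilon^\star$, I would exhibit as an $\varepsilon^\star$-bisimulation the reflexive and symmetric closure of
$$\relone_0 = \{ (\stateone', \statetwo') \mid \exists \traceone.\ \ltsact{\stateone}{\traceone}{\stateone'} \text{ and } \ltsact{\statetwo}{\traceone}{\statetwo'} \}.$$
Determinism of $\ltsbang$ ensures that $\stateone', \statetwo'$ are uniquely determined by the common trace, and the hypothesis $\admtraces{\stateone} = \admtraces{\statetwo}$ lifts to $\admtraces{\stateone'} = \admtraces{\statetwo'}$ (both equal the set of suffixes $\traceone'$ with $\traceone\cdot\traceone' \in \admtraces{\stateone}$), so the structural matching clause of Definition~\ref{dista} holds on $\relone_0$. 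The weight clause then reduces to $\lvert \weight{\stateone'} - \weight{\statetwo'} \rvert = \lvert \probt \traceone \stateone - \probt \traceone \statetwo \rvert \leq \varepsilon^\star$, as required.

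For $\appl{\metrtr\ltsbang}{\stateone}{\statetwo} \geq \varepsilon^\star$, by Lemma~\ref{largestbisim} it suffices to prove that whenever $\relone$ is an $\varepsilon$-bisimulation with $\relate{\relone}{\stateone}{\statetwo}$, the bound $\lvert \probt \traceone \stateone - \probt \traceone \statetwo \rvert \leq \varepsilon$ holds for every admissible $\traceone$. I would argue by induction on the length of $\traceone$: the empty trace reduces directly to the weight condition on $(\stateone, \statetwo)$, while for $\traceone = \actone \cdot \traceone'$ structural matching produces related successors $\stateone', \statetwo'$ reached via $\actone$, to which the induction hypothesis applies, using $\probt{\actone\cdot\traceone'}{\stateone} = \probt{\traceone'}{\stateone'}$ (and analogously for $\statetwo$), which is immediate from determinism. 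The only delicate point in the whole argument is verifying that $\relone_0$ genuinely transports admissibility of traces along transitions, and this is precisely where the determinism of $\ltsbang$, rather than any deeper probabilistic fact, is essential.
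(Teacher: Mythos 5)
Your proposal is correct and follows essentially the same route as the paper's (much terser) proof: one direction by induction along the trace using the fact that $\varepsilon$-bisimilar states have $\varepsilon$-bisimilar successors, the other by exhibiting a concrete relation bounding trace probabilities and checking it is an $\varepsilon^\star$-bisimulation. Your treatment is more detailed than the paper's, in particular in making explicit the degenerate case $\admtraces{\stateone}\neq\admtraces{\statetwo}$ (where the infimum is over the empty set) and in spelling out why the reachable-by-common-trace relation transports admissibility, both of which the paper leaves implicit.
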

\begin{proof}
We show separately the two inequalities:
\begin{varitemize}
\item
   First, if two states $\stateone$ and $\statetwo$ are related by an
   $\varepsilon$-bisimulation, we can see that $\lvert \probt
   \traceone \stateone - \probt \traceone \statetwo \rvert \leq
   \varepsilon$ (since the states obtained after having done every
   action in $\traceone$ are still $\varepsilon$-bisimilar).
 \item
   To show the other implication, it is sufficient to define, for every
   $\varepsilon$, a relation imposing a bound $\varepsilon$ on the success
   probabilities of traces, and to show that it is a
   $\varepsilon$-bisimulation.
\end{varitemize}
\end{proof}
Please observe that Proposition \ref{propcartr} gives an inductive
definition of $\mtbang$, while the notion of $\varepsilon$-bisimulation is defined
coinductively.

We conclude this section by stating a corollary of Proposition
\ref{propcartr} which will be useful in the following.
\begin{corollary}
\label{cor1}
Let be $\termone$ and $\termtwo$ of type $\typone$, such that $\stateterm \typone \termone$ and $\stateterm \typone \termtwo$ are \emph{not} $\varepsilon$-bisimilar. Then there exists a trace $\traceone$ such that 
$\lvert \probt \traceone {\stateterm \typone \termone} - \probt \traceone {\stateterm \typone \termtwo} \rvert
 \geq \varepsilon$.
\end{corollary}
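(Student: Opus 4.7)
The plan is to obtain Corollary \ref{cor1} as an essentially immediate consequence of the two results just established, namely Lemma \ref{largestbisim} and Proposition \ref{propcartr}. The argument rests on three observations which I would assemble in order.

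First, I would invoke Lemma \ref{largestbisim} in its contrapositive form: saying that $\stateterm\typone\termone$ and $\stateterm\typone\termtwo$ are not $\varepsilon$-bisimilar is exactly the same as saying $\appl{\metrtr\ltsbang}{\stateterm\typone\termone}{\stateterm\typone\termtwo} > \varepsilon$ (a strict inequality, because $\leq \varepsilon$ is equivalent to $\varepsilon$-bisimilarity by that lemma). This converts the coinductive hypothesis into a numerical one on the trace metric.

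Second, I would check that the two states lie in the same sub-WLTS, namely $\distrs{\{\tuplone \mid \wfjt{}{\tuplone}{\dot\typone}\}} \times \{\dot\typone\}$, so that they admit exactly the same traces: the type-exposing self-loop is common, and any label in $\labelsexp$ induces a transition out of a $\dot\typone$-state (with the resulting subdistribution possibly of mass zero, but still well defined as a state). Consequently $\admtraces{\stateterm\typone\termone} = \admtraces{\stateterm\typone\termtwo}$, and we are in the first clause of Proposition \ref{propcartr}.

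Third, combining the two previous points, Proposition \ref{propcartr} gives $\sup_{\traceone \in \traces}\lvert \probt\traceone{\stateterm\typone\termone} - \probt\traceone{\stateterm\typone\termtwo}\rvert > \varepsilon$, and the definition of supremum yields a specific trace $\traceone$ for which this absolute difference already exceeds $\varepsilon$, hence in particular satisfies the required $\geq \varepsilon$ bound. No serious obstacle is expected; the only point worth being explicit about is the equality of admissible trace sets, which is what allows us to avoid falling into the otherwise-clause of Proposition \ref{propcartr} and thereby actually exhibit the separating trace.
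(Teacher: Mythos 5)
Your argument is correct and is essentially the paper's own (the paper states Corollary~\ref{cor1} as an immediate consequence of Proposition~\ref{propcartr}, exactly via the chain: not $\varepsilon$-bisimilar $\Rightarrow$ $\metrtr{\ltsbang} > \varepsilon$ by Lemma~\ref{largestbisim} $\Rightarrow$ the supremum over traces exceeds $\varepsilon$ $\Rightarrow$ some trace witnesses a gap $\geq \varepsilon$). The only small imprecision is your parenthetical claim that \emph{every} label in $\labelsexp$ induces a transition out of a $\dot\typone$-state; what actually matters, and what does hold, is that admissibility of a trace is determined by the tuple type alone, so the two states, having the same type $\dot\typone$, have equal sets of admissible traces and the otherwise-clause of Proposition~\ref{propcartr} cannot apply.
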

\subsection{Full Abstraction}\label{fullabs}
In this section, we prove that $\mtbangt\typone$
coincides with $\mctxbang\typone$.
\subsubsection{Soundness}

We first of all show that the metric $\mtbangt\typone$ is \emph{sound} with
respect to $\mctxbang\typone$, i.e. that $\mtbangt\typone$ discriminates at least
as much as $\mctxbang\typone$:

\begin{theorem}[Soundness]
\label{soundness}
For any terms $\termone$ and $\termtwo$ of $\LBANG$, such that $\wfjt
{}\termone \typone$ and $\wfjt {}{\termtwo}{\typone}$, it holds that $
\appl {\mctxbang \typone} \termone \termtwo \leq \appl
      {\mtbangt \typone} \termone \termtwo $.
\end{theorem}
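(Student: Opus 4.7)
The plan is to reduce the inequality to a \emph{context simulation lemma} via the trace characterisation of Proposition~\ref{propcartr}. The key step is to associate to each context $\contone \in \ctxs\typtwo\typone$ a single trace $\traceone_\contone$, depending only on $\contone$, that is admissible from $\stateterm\typone\termthree$ for every closed $\termthree$ of type $\typone$, and that satisfies
$$
\probt{\traceone_\contone}{\stateterm\typone\termthree} \;=\; \obs{\fillc\contone\termthree}.
$$
Given such a trace, Proposition~\ref{propcartr} applied to $\stateterm\typone\termone$ and $\stateterm\typone\termtwo$ (which share the same shape and type, and hence the same set of admissible traces) immediately yields $\lvert \obs{\fillc\contone\termone} - \obs{\fillc\contone\termtwo}\rvert = \lvert\probt{\traceone_\contone}{\stateterm\typone\termone} - \probt{\traceone_\contone}{\stateterm\typone\termtwo}\rvert \leq \appl{\mtbangt\typone}{\termone}{\termtwo}$, and taking the sup over $\contone$ proves the theorem.

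To construct $\traceone_\contone$, I would decompose the interaction between $\contone$ and its hole into a sequence of elementary LMC moves, each mirroring a single ``use'' of the hole's content or of one of its descendants in the tuple: an applicative action $\actappltupl\judgone i$ whenever $\contone$ passes an argument to such a descendant, an unboxing action $\evalbang i$ whenever $\contone$ destructs a $\bang{}$ sitting in the linear part of the tuple, and a Milner action $\actbang i$ whenever $\contone$ duplicates an exponential component. Because $\LBANG$-reduction is surface --- no redex is contracted inside a $\bang{}$ --- every reduction step of $\fillc\contone\termthree$ that actually interacts with the hole falls into exactly one of these three patterns, while the purely ``contextual'' $\beta$-steps of $\contone$ can be silently absorbed into the bookkeeping of the next action. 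The trace would then be built by induction on an interaction skeleton extracted from $\contone$ (equivalently, on a head-reduction sequence of $\fillc\contone\termthree$), with a continuity step to handle the directed family of approximations defining $\sem{\fillc\contone\termthree}$.

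The main obstacle is the duplication case, where the probability $\sem{\termone_i}(\valone)$ appearing in the $\actbang i$ transition of Figure~\ref{fig:lmc} must coincide with the probabilistic contribution that each copy of the duplicated subterm makes to $\obs{\fillc\contone\termthree}$. This is Milner's law $\bang A \cong A \otimes \bang A$ realised operationally: copying a boxed $\termthree$ inside $\contone$ is observationally equivalent to evaluating one fresh instance and retaining the original box, which is precisely what $\actbang i$ encodes. Making this precise means pushing $\sem\cdot$ through $\contone$ step by step and verifying that probabilities rearrange correctly; surface reduction is the crucial property ensuring the rearrangement is sound, since $\contone$ cannot perform any work on a boxed copy until the corresponding unboxing action fires. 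This equational bookkeeping is where the bulk of the technical work is expected to lie.
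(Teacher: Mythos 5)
Your top-level reduction---bound the contextual separation via the trace quantity of Proposition~\ref{propcartr}, which coincides with $\appl{\mtbangt\typone}{\termone}{\termtwo}$---is a legitimate strategy and a genuinely different one from the paper's, which proves soundness by lifting an $\varepsilon$-bisimulation to a relation on tuple-context pairs in $\ctsinfd$ preserved by a dedicated reduction $\osctnp$ (Propositions~\ref{corsound} and~\ref{proplifting}), and uses traces only for completeness (Lemma~\ref{lemc1}, the converse direction: every trace is a context). However, your key lemma is false as stated: in general there is \emph{no single} trace $\traceone_\contone$ with $\probt{\traceone_\contone}{\stateterm\typone\termthree}=\obs{\fillc\contone\termthree}$ for all $\termthree$. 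The reason is that the context carries probabilistic choices of its own, and its distinct probabilistic branches may interact with the hole along \emph{different} action sequences, whereas a trace is one fixed linear sequence. Concretely, take $\contone=(\abstrexp\varone{(\psum{I}{\varone})})\hole$ at type $\bang{(\tarr\typthree\typthree)}$ and $\termthree=\bang{{\Omega_!}}$: then $\obs{\fillc\contone\termthree}=\frac12$, while every trace from $\stateterm{\bang{(\tarr\typthree\typthree)}}{\bang{{\Omega_!}}}$ has success probability $0$ or $1$ (the state is a Dirac of weight $1$; $\evalbang 1$ preserves weight $1$; $\actbang 1$ then yields the empty distribution). So no trace realises this context.

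The fix is to weaken the lemma to: $\obs{\fillc\contone\termthree}=\sum_j q_j\cdot\probt{\traceone_j}{\stateterm\typone\termthree}$ for a sub-convex family $(q_j,\traceone_j)_j$ depending only on $\contone$ (the $q_j$ resolving the context's internal choices); the triangle inequality then still gives $\lvert\obs{\fillc\contone\termone}-\obs{\fillc\contone\termtwo}\rvert\leq\sup_\traceone\lvert\probt{\traceone}{\stateterm\typone\termone}-\probt{\traceone}{\stateterm\typone\termtwo}\rvert$. But be aware that proving even this weakened statement is essentially the entire content of the paper's proof: one must define the interleaved reduction on $\ctsinfd$ (context-only $\tau$-steps, including the context's own $\oplus$, versus the three interaction steps you list) and prove that it computes the right semantics, i.e.\ $\sem{\forget{\statethree}}=\forget{\semct{\statethree}}$ (Lemma~\ref{ctsem}), with a continuity argument because the decomposition is a supremum of approximants and the family $(q_j,\traceone_j)_j$ may be infinite. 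Your proposal correctly locates this as the hard part but does not carry it out, so as it stands the argument has both a false key step and an unproved core.
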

The rest of this section is devoted to the proof of Theorem \ref{soundness}. 
Please remember that our definition of the tuple distance is based on
the notion of $\varepsilon$-bisimulation. Proving the soundness
theorem, thus, requires us to show that for any terms $\termone$ and $\termtwo$
of type $\typone$ such that $\stateterm \typone \termone$ and
$\stateterm \typone \termtwo$ are $\varepsilon$-bisimilar, and for any $\typtwo \in \types,\contone  \in \ctxs \typone \typtwo$, it holds that $\mid\sumdistr
{\sem{\fillc \contone \termone}} - \sumdistr {\sem{\fillc \contone
    \termtwo}}\mid \leq \varepsilon$.

Our proof strategy is based on the fact that we can decompose every
evaluation path of a term in the form $\fillc \contone \termthree$
into \emph{external} reduction steps (that is, steps that do
\emph{not} affect $\termthree$), and \emph{internal} reduction steps
(that is, reduction steps affecting $\termthree$, but which can be
shown to correspond \emph{only} to actions from $\ltsbang$). 
Intuitively, if we reduce in parallel $\fillc \contone \termone$ and
$\fillc \contone \termtwo$, we are going to have steps where only the
context is modified (and the modification does not depend on whether
we are considering the first program or the second), and steps where
the internal part is modified, but these steps cannot induce too much
of a difference between the two programs, since the two internal
terms are $\varepsilon$-bisimilar.

We first of all need to generalize the notion of a context to one
dealing with tuples rather than terms. We in particular need contexts
with multiple holes having types which match those of the tuple (or,
more precisely, the $typtuplone$-state) they are meant to be paired
with. More formally:
\begin{definition}[Tuple Contexts]\label{validcont}
  \emph{Tuple contexts} are triples of the form $(\contone,
  \typtuplone, \typthree )$, where $\contone$ is an open term,
  $\typtuplone = (\ms \typone, \ms \typtwo)$ is a $(n,m)$ tuple type,
  and $\typthree$ is a type such that $\wfjt{\bang{\ms \varone_{\{ 1,
        \ldots,n\}}} : \bang{\ms \typone}, \ms \vartwo_{\{1, \ldots, m
      \}} : {\ms \typtwo}}{\contone}{\typthree} $. We note
  $\contextst$ the set of tuple contexts.  A tuple context $(\contone,
  \typtuplone, \typthree)$ is said to be an \emph{open value} if
  $\contone$ is of one of the following four forms: $\abstr \varone
  \termone$, $\abstrexp \varone \termone$, $\bang \termone$,
  $\vartwo_i$ (where $i \in \NN$).
\end{definition}
We now want to define \emph{when} a tuple context and an
$\typtuplone$-state) can be paired together, and the operational
semantics of such an object, which will be derived from that of
$\LBANG$ terms. This is the purpose of the following definition:
\begin{definition}[Tuple Context Pairs]\label{vct}
  We say that a pair $\statethree = (\contone, \stateone)$ is a
  \emph{tuple context pair} iff $\stateone = (\typtuplone, \distrone)$
  is an $\typtuplone$-state, and $\exists\typthree \in \types,
  \,(\contone, \typtuplone, \typthree)\in \contextst$. We indicate as
  $\ctsinfd$ the set of tuple context pairs.  Moreover, given such a
  $\statethree = (\contone,(\typtuplone, \distrone))$, we define
  $\forget{\statethree}$ as the (potentially infinite) distribution
  over $\tms$ given by:
  $$
  \forget{\statethree} = 
  \distrpar{\subst{\subst \contone {\ms \varone} {\ms \termone}}{\ms \vartwo}{\ms \termtwo}}{\distrone(\ms\termone,\ms\termtwo)}{(\ms \termone, \ms \termtwo) \in \supp (\distrone)}.
  $$
\end{definition}
Giving a notion of context distance for $\typtuplone$-states is now
quite easy and natural, since we know how contexts for such objects
look like. For the sake of being as general as possible, this notion
of a distance is parametric on a set of tuple contexts
$\contsone\subseteq\contextst$.
\begin{definition}\label{def:metr_ctx_tupl}
  Let be $\contsone\subseteq\contextst$ ,  $\typtuplone \in \typtuples$, and  
  $\stateone, \statetwo$ two $\typtuplone$-states. We define:
  $$
  \appl{\metrctx{\contsone}}\stateone \statetwo = \sup_{(\contone, \typtuplone, \typone) \in
            \contsone}\left\lvert
  \sumdistr {\sem{\forget{\contone, \stateone}}}- \sumdistr
            {\sem{\forget{\contone, \statetwo}}} \right\rvert 
  $$
\end{definition}
Unsurprisingly, the context distance between terms equals
$\metrctx{\contextst}$ when applied to $\typtuplone$-states
obtained through $\stateterm \typone\cdot$:
\begin{proposition}
If $\wfjt{}{\termone,
    \termtwo}{\typone}$, then $\appl{\mctxbang{\typone}}{\termone}{\termtwo} = \appl
  {\metrctx{\contextst}} {\stateterm \typone \termone}{\stateterm
    \typone \termtwo}$.
\end{proposition}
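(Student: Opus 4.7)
The plan is to prove both inequalities by exhibiting, for each context on one side, a context on the other side with a matching or dominating contribution.

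For $\appl{\mctxbang{\typone}}{\termone}{\termtwo} \geq \appl{\metrctx{\contextst}}{\stateterm{\typone}{\termone}}{\stateterm{\typone}{\termtwo}}$, I would start from an arbitrary tuple context $(\contone', \dot\typone, \typtwo) \in \contextst$, i.e.\ an open term with $\wfjt{\vartwo_1:\typone}{\contone'}{\typtwo}$, and build the \emph{strictifying} term context $\conttwo = (\abstr{\vartwo_1}{\contone'}) \hole \in \ctxs{\typone}{\typtwo}$. Since $\abstr{\vartwo_1}{\contone'}$ is a value and the argument sits in an evaluation context of shape $\termone \contevone$, reducing $\fillc{\conttwo}{\termone} = (\abstr{\vartwo_1}{\contone'}) \termone$ first evaluates $\termone$ to a value $\valone$ with probability $\sem{\termone}(\valone)$ and then $\beta$-reduces to $\fillc{\contone'}{\valone}$. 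A direct inspection of the approximation semantics yields $\obs{\fillc{\conttwo}{\termone}} = \sum_\valone \sem{\termone}(\valone) \sumdistr{\sem{\fillc{\contone'}{\valone}}} = \sumdistr{\sem{\forget{(\contone', \stateterm{\typone}{\termone})}}}$, and symmetrically for $\termtwo$. So the term context $\conttwo$ realises exactly the value of the tuple context $\contone'$, and taking suprema gives the inequality.

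The reverse inequality is subtler, because term contexts can hide $\termone$ inside a value (e.g.\ $\abstr{\varone}{\vartwo_1}$), so that $\obs{\fillc{\contone}{\termone}} \neq \sum_\valone \sem{\termone}(\valone) \sumdistr{\sem{\fillc{\contone}{\valone}}}$ in general. My strategy is to decompose $\obs{\fillc{\contone}{\termone}}$ via an \emph{open semantics} for $\contone$: because reduction is weak and only substitutes values for variables, and $\vartwo_1$ is linear, reducing $\contone$ while leaving $\vartwo_1$ untouched produces a (sub)distribution supported on (a) open values $V$ where $\vartwo_1$ sits under a binder, with weight $p_V$, and (b) stuck terms of shape $\fillc{E}{\vartwo_1}$ for an evaluation context $E$, with weight $p_E$. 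In case (a) the closed term $\fillc{V}{\termone}$ is still a value contributing $1$; in case (b) the term $\fillc{E}{\termone}$ further reduces by the rule for $E \contevone$, evaluating $\termone$ to $\valone$ with weight $\sem{\termone}(\valone)$ and then continuing as $\fillc{E}{\valone}$, contributing exactly $\alpha_E(\termone) := \sum_\valone \sem{\termone}(\valone) \sumdistr{\sem{\fillc{E}{\valone}}}$, which is the tuple-context contribution of $\fillc{E}{\vartwo_1}$ at $\stateterm{\typone}{\termone}$. Hence $\obs{\fillc{\contone}{\termone}} = p_{\mathrm{val}} + \sum_E p_E \alpha_E(\termone)$, where $p_{\mathrm{val}} = \sum_V p_V$ does not depend on $\termone$.

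The crucial consequence is that $p_{\mathrm{val}}$ cancels in the difference: $\obs{\fillc{\contone}{\termone}} - \obs{\fillc{\contone}{\termtwo}} = \sum_E p_E (\alpha_E(\termone) - \alpha_E(\termtwo))$, so by convexity and $\sum_E p_E \leq 1$ the left-hand side is bounded in absolute value by $\sup_E |\alpha_E(\termone) - \alpha_E(\termtwo)|$. Since each $\fillc{E}{\vartwo_1}$ is a well-formed tuple context in $\contextst$, this sup is in turn dominated by $\appl{\metrctx{\contextst}}{\stateterm{\typone}{\termone}}{\stateterm{\typone}{\termtwo}}$; taking sup over $\contone$ concludes. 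The main obstacle is formalising the open-semantics decomposition: one must set up an approximation semantics for open linear terms, show that its limiting support consists exactly of values-with-hole-under-binder and stuck $\fillc{E}{\vartwo_1}$, and establish the compositionality identity $\sem{\fillc{\contone}{\termone}} = \sum_V p_V \dirac{\fillc{V}{\termone}} + \sum_E p_E \sem{\fillc{E}{\termone}}$ by induction on the approximation rank. This is a substitution-style lemma in the spirit of those underpinning Section~\ref{sect:metrtupl}, and it crucially uses weakness of reduction, linearity of $\vartwo_1$, and the fact that every evaluation context in our grammar exposes its hole so that $\fillc{E}{\termone}$ does trigger the evaluation of $\termone$.
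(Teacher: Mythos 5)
Your argument is correct in substance, but the paper offers no proof of this proposition at all (it is introduced with ``Unsurprisingly\dots''), so there is nothing to compare step by step; what can be said is that both halves of your proof are consistent with machinery the paper deploys elsewhere. Your first direction --- replacing a tuple context $\contone'$ by the strictifying term context $(\abstr{\vartwo_1}{\contone'})\,\hole$, so that the argument is forced into evaluation position and $\obs{\fillc{\conttwo}{\termone}}=\sum_{\valone}\sem{\termone}(\valone)\cdot\obs{\subst{\contone'}{\vartwo_1}{\valone}}$ --- is literally the construction the paper uses inside the proof of Lemma~\ref{lemcompletness1}, and it is sound given the standard compositionality of $\sem{\cdot}$ with respect to evaluation contexts (the approximation-level version of which is the paper's Lemma~\ref{aux3}). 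Your second direction is a hand-rolled, single-hole instance of the external/internal factoring that the paper later formalizes for the soundness theorem via the $\tau$-steps versus labelled steps of $\osctl{}{}{}$ and the identity $\sem{\forget{\statethree}}=\forget{\semct{\statethree}}$ of Lemma~\ref{ctsem}; your observation that the value mass $p_{\mathrm{val}}$ is independent of the filling and cancels in the difference, leaving a subconvex combination of tuple-context gaps, is exactly the right point and is what makes the bound go through even for contexts that freeze their argument under a binder. Two things deserve to be made explicit if you write this out in full: (i) the progress lemma for open terms with one \emph{linear} free variable (every open normal form is either an open value or of shape $\fillc{\contevone}{\vartwo_1}$), where linearity is essential since it forbids $\vartwo_1$ from occurring under a $\bang{}$ and hence from being duplicated or discarded; and (ii) the fact that instantiating $\vartwo_1$ by a \emph{value} can change which redex the deterministic strategy fires next, so the open-reduction prefix need not be a prefix of the actual reduction of $\fillc{\contone}{\termone}$ --- the identity $\obs{\fillc{\contone}{\termone}}=p_{\mathrm{val}}+\sum_{\contevone}p_{\contevone}\,\alpha_{\contevone}(\termone)$ must therefore be justified by the evaluation-context compositionality lemma (which shows that reducing any redex in evaluation position preserves $\sem{\cdot}$), not by a step-by-step simulation. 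Neither point is a gap in the idea, only in the bookkeeping you already flagged as the main obstacle.
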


But why did we introduce $\ctsinfd$? Actually, these pairs allow for a
fine analysis of how tuples behave when put in a context, which in
turn is precisely what we need to prove Theorem~\ref{soundness}. This
analysis, however, is not possible without endowing $\ctsinfd$ itself
with an operational semantics, which is precisely what we are going to
do in the next paragraphs.

\paragraph{Semantics for $\ctsinfd$:\\}
In this paragraph, we define, for every $\stateonetd \in \ctsinfd$,
its semantics $\semct \stateonetd$, which is a (potentially infinite)
distribution over $\{\statetwotd \in \ctsinfd \mid \forget \statetwotd
\in \distrs\valset \}$.

Two relations need to be defined. On on the one hand, we need
a one-step \emph{labelled} transition relation $\osctnp$ which
turn an element of $\ctsinfd$ into a distribution over $\ctsinfd$
by perfoming an action. Intuitively, one step of reduction in $\osctnp$
corresponds to \emph{at most} one step of reduction in $\ltsbang$.
If that step exists, (i.e. if the \emph{term} is reduced) 
then the label is the same, and otherwise (i.e., if only the \emph{context} is reduced), 
the label is just $\tau$. We also need a multi-step approximation 
semantics $\wtrcctnp$ between elements of $\ctsinfd$ and subdistributions
over the same set. The latter is based on the former, and
both are formally defined in Figure \ref{osctl},
where 
\begin{varitemize}
\item
  $\contevone$ is an evaluation context; 
\item
  $\stateone$ is an $(n,m)$-state from $\ltsstates$;
\item
  $\stateonetd$ is a tuple-context pair from $\ctsinfd$;
\item
  For every context $\contone$, $\contone_{\remove{\isone}}$
  stands for the context
  $$
  \contone\{\vartwo_1/\vartwo_{1-\#\{j \mid j \in \isone \wedge j <1\}}\}
  \ldots\{\vartwo_n/\vartwo_{n-\#\{j \mid j \in \isone \wedge j <n\}}\}
  $$
\end{varitemize}

\begin{definition}
  We define a \emph{one-step labelled transition relation for
    $\ctsinfd$}, that we note $\osctl \stateonetd \actone \distrone$
  with $\stateonetd \in \ctsinfd$, $\actone \in \labelsexp \cup \{
  \tau \}$, and $\distrone$ a finite distribution over $\ctsinfd$, as
  specified in Figure~\ref{osctl}.\\ We call \emph{normal form} those
  elements in $\ctsinfd$ that cannot be reduced by $\osctl{}{}{}$, and
  we note $\nforms \ctsinfd$ the set of normal forms.
\end{definition}
Observe that we can actually give a characterization of normal forms:
\begin{equation}\label{carnf}
  \nforms \ctsinfd = \{(\contone, \stateone) \in \ctsinfd \mid
  \contone \text{ is an open value }\}.
\end{equation}
If $\distrone \in \distrs\ctsinfd$, we note $\valct \distrone =
\sum_{\stateonetd \in \nforms \ctsinfd} \distrone(\stateonetd) \cdot
\dirac {\stateonetd}$.

\begin{figure*}
\begin{center}
\fbox{
\begin{minipage}{\condwidth}
\begin{center}

\begin{center}
\scalebox{\condscale}{
 \AxiomC{\strut}
  \UnaryInfC{ $
\osctl{(\fillc \contevone{(\abstr \varthree \termtwo )\termone}, 
\stateone)}{\tau}
{\dirac{(\fillc \contevone {\subst \termtwo \varthree \termone},
\stateone)}}
$} \DisplayProof \quad
 \AxiomC{ \strut}
  \UnaryInfC{$
\osctl{(\fillc \contevone
    {(\abstrexp \varthree \termtwo ){\bang\termone}} , \stateone)}{\tau}{\dirac
{({\fillc \contevone {\subst \termtwo \varthree\termone}},
\stateone)}}$}
  \DisplayProof}
\end{center}
\begin{center}
\scalebox{\condscale}{
$$ \AxiomC{\strut}
 \UnaryInfC{ $
\osctl{(\fillc \contevone
      {\termone \oplus \termtwo} , \stateone)}{\tau}{
\diracpar{(\fillc \contevone \termone, \stateone )}{1/2} + 
\diracpar{(\fillc \contevone \termtwo , \stateone)}{1/2}
}
$} \DisplayProof$$
}
\end{center}
\begin{center}
\scalebox{\condscale}{
$$
\AxiomC{ $\ltsact{\stateone}{\actbang j}{\statetwo}$} 
 \UnaryInfC{ $
\osctl {(\fillc \contevone {\varone_j}, \stateone)}{\actbang j}{
\dirac{ (\fillc \contevone {\vartwo_{m+1}}, \statetwo)}}
 $} \DisplayProof
\qquad\quad
  \AxiomC{ $\ltsact{\stateone}{\evalbang j}{\statetwo} $}
\AxiomC{$\contone = \fillc
            \contevone {(\abstrexp \varthree
              \termtwo)\bang{\varone_{n+1}}}_{\remove{\{j\}}}  $} 
\BinaryInfC{ $
\osctl{(\fillc \contevone
      {(\abstrexp \varthree \termtwo){\vartwo_j}}, \stateone)}{\evalbang j}{\dirac{(\contone,\statetwo)}}$}
  \DisplayProof  
$$
}
\end{center}
\begin{center}
\scalebox{\condscale}{
$$ \AxiomC{ $
\begin{array}{c}
\strut \\ \strut \\
 \ltsact{\stateone}{\actappltupl \judgone j}{\statetwo}
\end{array}$
}
  \AxiomC{  $
\begin{array}{c}
\strut \\
\stateone = (\distrone, \typtuplone) \wedge 
{\typtuplone = \ms \typone, \ms \typtwo} \\ 
\ms{\typtwo}_j = \tarr \typfour \typfive   
\end{array}
$}
 \AxiomC{ $
\begin{array}{c}\strut \\
\judgone = (\{1, \ldots,n\}, \istwo, \ms{\typone}, \ms{\typtwo}, \valone, \typfour ) \in \vvalidjudgss \\ j \not \in \istwo
\end{array}
$} \TrinaryInfC{ $
\osctl{(\fillc \contevone {{\vartwo_j} \valone}, \stateone)}{\actappltupl \judgone j}{\dirac{(
(\fillc \contevone {\vartwo_{m+1}})_{\remove{ \istwo \cup \{j \}}}, \statetwo)}}
$} \DisplayProof$$ }
\end{center}
\begin{center}
\scalebox{\condscale}{
\AxiomC{ $\stateonetd \text{ in normal form for  } \osctl{}{\cdot}{}$}
\UnaryInfC{ $\wtrcct {\stateonetd} {\dirac \stateone} $} 
\DisplayProof}
\scalebox{\condscale}{
 \AxiomC{\strut}
\UnaryInfC{ $\wtrcct {\stateonetd} {0} $} 
\DisplayProof
\qquad
\AxiomC{ $\osctl \stateonetd \actone \distrone$}
\AxiomC{ $\wtrcct \statetwotd {\distrtwo_\statetwotd}$}
\BinaryInfC{ $\wtrcct {\stateonetd} {\sum_{\statetwotd \in \supp( \distrone)} \distrone(\statetwotd)\cdot \distrtwo_\statetwotd} $} 
\DisplayProof }
\vspace{10pt}
\end{center}
\end{center}
\end{minipage}}
\end{center}
\caption{Rules for $\wtrcct{}{}$} \label{osctl}
\end{figure*}

\begin{definition}
We define an \emph{approximation semantics relation for $\ctsinfd$}, that we note $\wtrcct
{\stateonetd}{\distrone}$, where $\stateonetd$ is an element of 
$\ctsinfd$, and $\distrone$ is a finite distribution over
$\nforms\ctsinfd$, as specified in Figure \ref{osctl}.\\
We define a \emph{semantics for $\ctsinfd$}, that we denote $\semct \stateonetd$
for $\stateonetd \in \ctsinfd$, as: $$\semct \stateonetd = \sup\{\distrone \mid \wtrcct \stateonetd \distrone \}.$$
  \end{definition}
  
\paragraph{Relating $\semct{}$ and $\sem{}$:\\}

We first show that this definition can indeed be related
to the usual semantics for terms. This takes the form of the following lemma: 
\begin{lemma}
\label{ctsem}
Let be $\statethree \in \ctsinfd$. Then:
\begin{varitemize}
\item $\{\distrone \mid \wtrcct \statethree \distrone\}$ is a directed
  set. We define $\semct \statethree$ as its least upper bound; 
\item $\forget{\cdot}:\distrs{\ctsinfd} \rightarrow \distrs
 \tms$ is continuous;
\item $\sem {\forget \statethree} = \forget {\semct \statethree}$.
\end{varitemize}
\end{lemma}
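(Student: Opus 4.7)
The plan is to treat the three assertions in order, since each relies on the preceding one. Items (i) and (ii) are essentially structural; item (iii) is the core semantic preservation result, and it is what will justify translating reasoning inside $\ctsinfd$ back to the term calculus $\LBANG$.

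For (i), I would first observe that the one-step relation $\osctl{}{\cdot}{}$ is essentially functional: for each tuple-context pair $\stateonetd$ at most one rule of Figure \ref{osctl} applies, because in $\LBANG$ (no parallel disjunction) evaluation contexts decompose any non-value closed term uniquely, and each transition of $\ltsbang$ is itself determined once its action is fixed. Every derivation $\wtrcct\statethree\distrone$ therefore corresponds to a choice, at each node of one canonical reduction tree rooted at $\statethree$, of whether to stop via the rule $\wtrcct{}{0}$ or to expand. Given two approximations $\distrone_1,\distrone_2$, their pointwise maximum is realised by the joint strategy that expands whenever either does, yielding $\distrone_3 \geq \distrone_1,\distrone_2$ with $\wtrcct\statethree{\distrone_3}$ by a routine induction on derivations. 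For (ii), $\forget{\cdot}$ is a linear pushforward of measures, $\forget{\distrone}(\termone)$ being a sum of the weights $\distrone(\statetwotd)$ over the appropriate fibre, and any such linear operation on subdistributions commutes with directed suprema.

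Item (iii) is the substantive content, and I would prove it by showing both inequalities. The inclusion $\forget{\semct\statethree} \leq \sem{\forget\statethree}$ follows by induction on the derivation of $\wtrcct\statethree\distrone$, establishing the statement $\sssp{\forget\statethree}{\forget\distrone}$ (with $\sssp{}{}$ extended linearly to distributions of terms). The inductive step is a case analysis on $\osctl\stateonetd\actone\distrtwo$ showing that each such step is matched, under $\forget{\cdot}$, by a corresponding term-level reduction of appropriate probability: the $\tau$-rules mirror linear $\beta$, exponential $\beta$, and $\oplus$-reductions one-for-one; the $\actbang j$ rule matches, on the forget side, the substitution of the $j$-th exponential component $\termone_j$ for its variable $\varone_j$ at the hole followed by the evaluation of that copy, whose probabilistic weight is precisely $\sem{\termone_j}$ as demanded by $\probexp$; the $\evalbang j$ and $\actappltupl\judgone j$ rules correspond to the promotion of a $\bang$-value from the linear part to the exponential part and to the passing of an external argument, respectively. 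The reverse inclusion is dual: given $\sssp{\forget\statethree}{\distrtwo}$ one constructs by induction on that derivation an approximation $\distrone$ of $\semct\statethree$ with $\forget\distrone \geq \distrtwo$, using the same case analysis reversed.

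The principal difficulty will be the granularity mismatch between the two systems. Term-level reduction is free to interleave computation inside separate copies of an exponential argument $\termone_j$, whereas $\osctl{}{\cdot}{}$ evaluates each copy atomically via a single $\actbang j$ step that delegates all probabilistic content to $\sem{\termone_j}$. Reconstructing an $\osctl{}{\cdot}{}$ derivation from an arbitrary term-level derivation therefore requires a reshuffling: one must group all the reduction steps belonging to one evaluation of a given copy of $\termone_j$ into a single atomic tuple step, using the mutual independence of distinct copies together with the stability of $\LBANG$ reduction under evaluation contexts. An analogous but simpler consolidation handles the $\evalbang j$ case, where the entire term-level unboxing phase must be collapsed into a single tuple-level transition. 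Once these reshufflings are justified, both inequalities in (iii) follow, and the three claims together give the stated equality $\sem{\forget\statethree} = \forget{\semct\statethree}$.
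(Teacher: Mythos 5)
Your overall architecture coincides with the paper's: (i) by directedness of the derivation strategies, (ii) by interchanging directed suprema with the countable fibre sums defining $\forget{\cdot}$, and (iii) by proving the two inequalities separately, with the reverse one hinging on a consolidation lemma that regroups term-level reduction steps into atomic tuple-level transitions (the paper's Lemma~\ref{aux3}). Two steps as you state them, however, would not go through. First, in the forward direction your inductive invariant $\sssp{\forget\statethree}{\forget\distrone}$ is false in general: the transitions labelled $\actbang j$ and $\actappltupl\judgone j$ already bake the \emph{full} semantics $\sem{\termone_j}(\valone)$ (resp.\ $\sem{\valtwo_i(\cdots)}(\valone)$) of the evaluated component into the target state, so after even one such step $\forget\distrone$ is a limit object and need not be reachable by any finite derivation of $\sssp{}{}$. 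The invariant must be weakened to $\forget\distrone\leq\sem{\forget\statethree}$, and the one-step case analysis then has to be carried out against $\sem{\cdot}$ itself --- this is the paper's Lemma~\ref{aux1}, stating that $\osct\stateonetd\distrtwo$ implies $\sem{\forget\stateonetd}=\sum_{\statetwotd}\distrtwo(\statetwotd)\cdot\sem{\forget\statetwotd}$ --- using continuity of $\sem{\cdot}$ rather than a step-for-step simulation.

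Second, in the reverse direction your induction is on the term-level derivation, but several tuple-level transitions do not advance it at all: $\evalbang j$ leaves $\forget{\cdot}$ literally unchanged (it merely re-splits the program between context and tuple), and $\actbang j$ does so too when the copied exponential component is already a value. Without an argument that only finitely many such administrative steps can occur consecutively before a genuine redex of $\forget\statethree$ is fired, the induction has no decreasing measure; the paper supplies exactly this finiteness argument inside Lemma~\ref{lemauxplus}. Relatedly, your diagnosis of the granularity mismatch as ``interleaving between copies'' is slightly off target: in $\LBANG$ without parallel disjunction the evaluation-context decomposition of a closed non-value is unique, so term reduction is deterministic and there is nothing to de-interleave; the genuine mismatch is that a single $\actbang j$ or applicative step corresponds to a \emph{complete} sub-evaluation, which is what your (correct) consolidation step, and the paper's extraction lemma, address. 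Finally, the reverse inequality is most naturally proved first for the Dirac states in $\unary$ and then transferred to arbitrary elements of $\ctsinfd$ via the paper's Lemma~\ref{aux2}, since $\sssp{}{}$ relates \emph{terms}, not distributions over terms, to value distributions; your sketch applies $\sssp{}{}$ directly to $\forget\statethree$, which is a distribution in general.
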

  \begin{proof}
    The proof of the first point is exactly the same as the one given in \cite{DalLagoZorzi} for the operational semantics of a probabilistic $\lambda$-calculus. The second and third ones are more involved.
    \paragraph{Continuity of $\forget{}$:}
    We have to show that, for any countable directed subset $I$ of $\distrs \ctsinfd$:
$$\sup \{\forget \distrone \mid \distrone \in I\} = \forget{\sup \{\distrone \mid \distrone \in I \}},$$
where we have extended in a natural way the definition of $\forget{\stateonetd}$ to $\forget \distrone$, with $\distrone \in \distrs \ctsinfd$.\\
 
We can first observe that, since $\forget {}$ is monotonous, it holds that:
$\forget{\sup \{\distrone \mid \distrone \in I \}} \geq \sup \{\forget \distrone \mid \distrone \in I\} $.
The other inequality is more involved. Let be $\distrtwo = \sup\{ \distrone \mid \distrone \in I\}$. Since $I$ is a countable directed set, there exists an increasing sequence $(\distrthree_n)_{n \in \NN}$ in $I$ such that $\distrtwo = \sup\{\distrthree_n \mid n \in \NN \} $. For $\termone \in \tms$, we note $\alpha_{\termone,(\contone, \stateone)}  = \sum_{\{\tuplone \mid \fillc \contone \tuplone = \termone\}} \stateone(\tuplone)$. We use it to express $\forget \distrtwo$:
$$\forget \distrtwo (\termone) = \sum_{(\contone, \stateone)\in \ctsinfd} \lim_{n \rightarrow \infty} \distrthree_n(\contone, \stateone) \cdot \alpha_{\termone,(\contone, \stateone) }.$$
Moreover, please observe that we can restrict ourselves to consider distributions over a countable subset of $\ctsinfd$ (which consists in the element of $\ctsinfd$ that can be generated by our language). Since $\distrtwo$ is a distribution on a countable set, we know that for every $\varepsilon$ there exists a finite subset $J_\varepsilon$ of $I$, such that $\sum_{(\contone, \stateone) \not \in J_\varepsilon} \distrtwo(\contone, \stateone) \leq \varepsilon$. So for every $\varepsilon$, it holds that:
$$\forget \distrtwo (\termone) = \sum_{(\contone, \stateone) \in J_\varepsilon} \lim_{n \rightarrow \infty} \distrthree_n(\contone, \stateone) \cdot \alpha_{\termone,(\contone, \stateone) } + \varepsilon.$$
Since we now have to consider only a finite sum, we can exchange sum and limit:
$$\forget \distrtwo (\termone) =  \lim_{n \rightarrow \infty} \sum_{(\contone, \stateone) \in J_\varepsilon} \distrthree_n(\contone, \stateone) \cdot \alpha_{\termone,(\contone, \stateone) } + \varepsilon.$$
And since that's true for every $\varepsilon$, we have the result.

\paragraph{$\sem {\forget \cdot} = \forget {\semct{\cdot}}$:}
We show separately the two inequalities. First, we show that, for every $\stateonetd \in \ctsinfd$, $\forget{\semct \stateonetd} \leq \sem{\forget {\stateonetd}} $.
The proof is based on the fact that Diagram~\eqref{diag:semantics} commutes:
\begin{equation}\label{diag:semantics}
\includegraphics[scale=1]{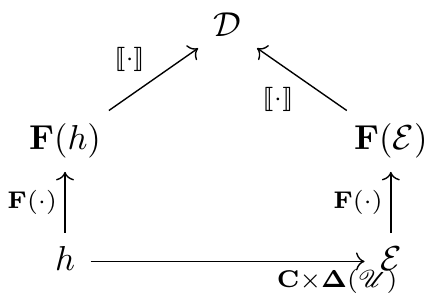}
\end{equation}
The result stated by~\eqref{diag:semantics} is formally given by the following lemma.
\begin{lemma}
\label{aux1}
For every $\stateonetd \in \ctsinfd$, it holds that
if $\osct \stateonetd \distrone$, then $\sem {\forget \stateonetd} = \sum_{\statetwotd \in \ctsinfd} \distrone(\statetwotd)\cdot\sem {\forget \statetwotd}$
\end{lemma}
\begin{proof}
The proof consists in considering every case in the definition of $\osct{}{}$, and use continuity result for $\sem{\cdot}$.
\end{proof}
We are now able to show that $\forget{\semct{\cdot}} \leq \sem{\forget{\cdot}}$:
\begin{proposition}
  \label{dir1}
  For every $\stateonetd \in \ctsinfd$, it holds that
$\forget{\semct{\stateonetd}} \leq \sem{\forget{\stateonetd}}$.
\end{proposition}
\begin{proof}
Observe that it is sufficient to show that
 $\wtrcct \stateonetd \distrone$ implies $\forget \distrone \leq \sem{\forget \stateonetd}$. We prove it by induction on the derivation of $\wtrcct \stateonetd \distrone$. 
\begin{varitemize}
\item If this derivation is  :
\AxiomC{}
\UnaryInfC{$\wtrcct {\stateonetd} {\emptyset} $} 
\DisplayProof , then $\distrone = \emptyset$, and the result holds.
\item If this derivation is 
 \AxiomC{$\stateonetd \text{ in normal form }$}
\UnaryInfC{$\wtrcct {\stateonetd} {\dirac \stateonetd} $} 
\DisplayProof, then $\forget \stateonetd$ is a distribution over values (it can be checked easily by considering the carcaterisation of normal forms given in Equation~\eqref{carnf}), and so $\sem {\forget \stateonetd} = \forget \stateonetd$, and the result holds.
\item If this derivation is:
\AxiomC{$\osct \stateonetd \distrthree$}
\AxiomC{$(\wtrcct \statetwotd {\distrtwo_\statetwotd})_{\statetwotd \in \supp( \distrthree)}$}
\BinaryInfC{$\wtrcct {\stateonetd} {\sum_{\statetwotd \in \supp( \distrthree)} \distrthree(\statetwotd)\cdot \distrtwo_\statetwotd} $} 
\DisplayProof, we apply Lemma \ref{aux1} : $\sem {\forget \stateonetd} = \sum_{\statetwotd \in \supp(\distrthree)} \distrthree(\statetwo)\cdot \sem{\forget \statetwotd} $. Moreover, we can apply the induction hypothesis to $\wtrcct \statetwo {\distrtwo_\statetwo}$ for every $\statetwotd \in \supp \distrthree$, which leads to: $\forget{\distrtwo_\statetwotd}\leq \sem {\forget \statetwotd}$. Using it, we obtain: 
\begin{align*}
\forget \distrone &= \sum_\statetwotd \distrthree(\statetwotd) \cdot \forget {\distrtwo_\statetwotd} \\
&\leq \sum_\statetwotd \distrthree(\statetwotd) \cdot \sem{\forget \statetwotd}\\
& = \sem{\forget \stateonetd}
\end{align*}, which is the result.
\end{varitemize}
\end{proof}
We are now going to show the other inequality, namely that, for any $\stateonetd \in \ctsinfd$, $\sem{\forget \stateonetd} \leq \forget{\semct \stateonetd}$. We first define a particular subset of $\ctsinfd$, namely those $\stateonetd$, such that $\forget \stateonetd$ can be seen as a term (remember that in the general case, it is a distribution over terms).
\begin{definition}
  We define a subset $\unary$ of $\ctsinfd$, as:
  $\unary =\{  (\contone, (\distrone, \typtuplone)) \in \ctsinfd \mid \distrone \text{ is a Dirac distribution}\}$.
  We define an operator $\unaryd{(\cdot)}$:
\begin{align*}
  \unaryd{(\cdot)} :\, &\distrs \ctsinfd \rightarrow \distrs \unary \\
   &\distrone \rightarrow \sum_{\stateonetd = (\contone, (\distrtwo, \typtuplone)) \in \ctsinfd} \distrone(\stateonetd) \cdot \sum_{\tuplone \in {\supp(\distrtwo)}} \distrtwo(\tuplone) \cdot \dirac{(\contone, (\dirac{\tuplone},\typtuplone ))}.
\end{align*}
\end{definition}
Observe that if
$\stateonetd \in \unary$, $\forget{\stateonetd}$ is a Dirac
distribution on terms. By abuse of notation, we'll use $\forget
\stateonetd$ to denote the term $\termone$ such that $\forget
\stateonetd = \dirac \termone$. The operator $\unaryd{(\cdot)}$ preserves $\semct{}$, as formally stated in Lemma~\ref{aux2} below:
\begin{lemma}
\label{aux2}
For every $\distrone \in \distrs \ctsinfd$, it holds that:
$\forget{\semct{\distrone}} = \forget{\semct{\unaryd \distrone}}$.
\end{lemma}
\begin{proof}
It is a consequence of the fact that $\ltsbang$ has the corresponding property: if $\stateone=(\distrone, \typtuplone) \in \ltsstates$, $\ltsact \stateone \actone {(\distrtwo, \typtupltwo)}$, then for every $\tuplone \in \supp (\distrone)$ there exists $\distrthree_\tuplone$ such that $\ltsact {(\dirac \tuplone, \typtuplone)}\actone {(\distrthree_\tuplone, \typtupltwo)}$, and moreover 
$\distrtwo = \sum_{\tuplone \in \supp(\distrone)} \distrone(\tuplone) \cdot \distrthree_\tuplone$.
\end{proof}

We introduce some notations on $\sssp{}{}$: we write $\ssspn \termone \distrone n$ with $n \in \NN$, if there exists a derivation $\derivone : \sssp \termone \distrone$ of size at most $n$. 
Moreover, we write $\ssspn \distrone \distrtwo n$ if there exists a finite set $I$ such that $\distrone = \sum_{i \in I} \alpha_i \cdot \dirac {\termone_i}$, and such that moreover $\ssspn{\termone_i}{\distrtwo_i}{n}$, and $\distrtwo = \sum_{i \in I}\alpha_i \cdot \distrtwo_i$. 
Please observe that this definition imply that $\distrone$ is always a \emph{finite} distribution over terms. We also introduce auxiliary  notations for $\osctl{}{}{}$, as given in Figure~\ref{osctplus}.

 Moreover, if $\distrone$, $\distrtwo$ are distributions over a set, we say that $\distrone$ is a \emph{finite approximation} of $\distrtwo$ if $\distrone$ is finite, and moreover $\distrone \leq \distrtwo$. We denote it by $\distrone \lfin  \distrtwo$.

\begin{figure*}
\begin{center}
\fbox{
\begin{minipage}{0.95 \textwidth}
\small{
$$ 
\AxiomC{}
\UnaryInfC{$\osctrefl\stateonetd {\dirac{\stateonetd}} $}
\DisplayProof \qquad
\AxiomC{$\osct \stateonetd \sum_{i \in I}{\alpha_i \cdot \dirac{\statetwotd_i}} $}
\AxiomC{$\osctrefl {\stateonetd_i} {\distrtwo_i} $}
 \BinaryInfC{$\osctplus \stateonetd { \sum_{i \in I} \alpha_i \cdot \distrtwo_i} $} \DisplayProof
\qquad
\AxiomC{$\osctplus \stateonetd \distrone $}
 \UnaryInfC{$\osctrefl \stateonetd {\distrone} $} \DisplayProof
$$
}
\end{minipage}}
\end{center}
\caption{Definition of $\osctplus{\stateonetd}{\distrone}$} \label{osctplus}
\end{figure*}

We first consider the case where $\stateonetd \in \unary$ is such that $\forget \stateonetd$ is a value. We can see that the following diagram commutes
\begin{equation}
\includegraphics[scale=1]{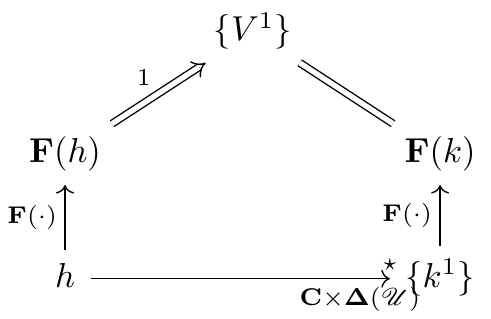}
\end{equation}, as formally stated in Lemma~\ref{lem:unaryval} below.

\begin{lemma}\label{lem:unaryval}
 Let be $\stateonetd \in \unary$, such that $\forget \stateonetd$ is a value. There exists $\statetwotd \in \unary$ such that $\osctrefl \stateonetd {\dirac{\statetwotd}} $ and $\statetwotd \in \nforms \ctsinfd$ verifies $\forget \statetwotd = \forget \stateonetd$.
\end{lemma}
\begin{proof}
The proof consists in showing that, if $\forget \stateonetd$ is a value, then there is a finite sequence $\stateonetd = \statetwotd_0 \ldots, \statetwotd_n$ with $\osct{\statetwotd_i}{\dirac{\statetwotd_{i+1}}}$, every $\statetwotd_i \in \unary$, $\statetwotd_n \in \nforms \ctsinfd$, and moreover $\forget{\statetwotd_i} = \forget{\stateonetd}$.
  \end{proof}

We now consider the case where $\forget \stateonetd$ is not a value: it means that we have to consider $\distrone \in \distrs \tms$, such that $\ssspn {\forget \stateonetd} \distrone n$, with $n>1$. Our aim is to show that $\distrone \leq \semct{\stateonetd}$. In that aim, we show first that Diagram~\ref{diag:variant} below commutes,

\begin{equation}\label{diag:variant}
  \includegraphics[scale=1]{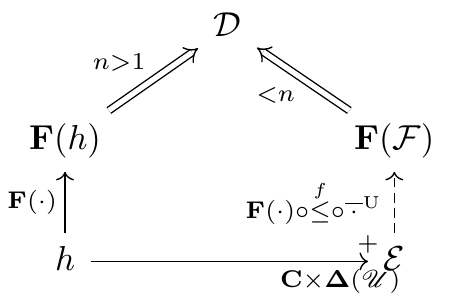}
\end{equation}
as formally stated by the following Lemma:
\begin{lemma}
\label{lemauxplus}
For any $\stateonetd \in \unary$, such that $\forget \stateonetd$ is not a value, there exists $\distrtwo \in \distrs \ctsinfd$ such that $\osctplus \stateonetd \distrtwo$, and for every $\distrone \in \distrs \tms$, and $n\geq 1$ such that $\ssspn {\forget \stateonetd} \distrone n$ ,  there exists $\distrthree$ such that $\distrthree \lfin \unaryd\distrtwo$, and moreover $\ssspn {\forget{\distrthree}}{\distrone} m$ with $m < n$.
  \end{lemma}

 Before doing the proof, we give some intuitions about this diagram:
we first consider these steps of $\osct{}{}$ that reduce more that one redex. It forces us to use not the one step transition $\onestepr{}{}$ to talk about the semantics for terms, but directly the approximation semantics $\sssp {}{}$. Let us consider now the steps of $\osct{}{}$ where no redex is reduced: more precisely, we can characterize them as the ones corresponding to actions passing a term that is already a value from the exponential part to the linear part of the inside tuple. The proof is made possible by the fact that we can show that there can only be a finite number of reduction step $\osct{}{}$ that don't reduce the underlying term, before there is a step reducing at least a redex in it.

\begin{proof}
We are first going to state a lemma about the approximation semantics $\sssp{}{}$, which we'll need later in the proof of Lemma \ref{lemauxplus}.
Indeed, in order to manage the problem of a step of $\osct{}{}$ corresponding to strictly more than one step of $\onestepr{}{}$, we need to transform a derivation $\derivone : \sssp \termone \distrone$ in derivations corresponding to reduction processes done after having completely reduced part of $\termone$. More precisely, 
the Lemma \ref{aux3} below gives us a way to extract of a derivation tree the subtrees corresponding to the evaluation after having reduced all the redex in a subterm in evaluation position. Please observe that it talks only about semantics of term, and express a form of compositionnality in the derivation. It will be used to treat the case of applicative actions, where we reduce many redex in one step of reduction for $\osct{}{}$.
\begin{lemma}
\label{aux3}
Let be $\termone$ a closed term which isn't a value, with $\ssspn{\fillc \contevone \termone}\distrone n$ a valid derivation of the approximation semantics. Then there exists a finite set $I \subseteq \NN$, and for every $i \in I$ a value $\valone_i$, a coefficient $\alpha_i$, and a valid derivation $\derivtwo_i : \ssspn {\fillc \contevone {\valone_i}}{\distrtwo_i} m $ such that  $n < m$,  
$\distrone = \sum_{i \in I} \alpha_i \cdot \distrtwo_i $, and moreover for every $\valone$,  $\sum_{i \mid \valone_i = \valone} \alpha_i \leq \sem \termone(\valone)$ 
\end{lemma}
\begin{proof}
The proof is by induction on $\derivone$.
\end{proof}
Let us now show the first part of Lemma \ref{lemauxplus}.
 Let $\stateonetd \in \unary$ such that $\forget \stateonetd$ is not a value. By definition of $\unary$, $\stateonetd$ is of the form:
$\stateonetd = (\contone,(\dirac{\tuplone}, \typtuplone)) $. We are going to construct the distribution $\distrtwo$ by case analysis on $\contone$ . Please observe that there is at most one possible action $\actone$ such that $\osctl{(\contone, (\dirac \tuplone, \typtuplone))}{\actone}{}$, and moreover it depends only on in $\contone$ and $\typtuplone$.  Consequently, we do a case analysis on the action $\actone$.
\begin{varitemize}
\item If $\actone = \tau$: it means that the reduced redex doesn't affect the inside tuple. 
  We take $\distrtwo$ such that  $\osctl {\stateonetd} \tau {\distrtwo = \sum_{1\leq i \leq n}\frac 1 n \dirac{(\contone_i, {(\tuplone, \typtuplone)})}}$. Observe that $\distrtwo$ is actually a finite distribution. Let be $\distrone \in \distrs \valset$, and $n \geq 1$, such that  $\ssspn {\forget \stateonetd} \distrone n$. 
Since $\forget \stateonetd \not \in \valset$, and $n \geq 1$,  $\ssspn {\forget \stateonetd} \distrone n$ is of the form:
$$
 \AxiomC{$\onestepr{\forget \stateonetd}{\termtwo_1, \ldots, \termtwo_p}$}
  \AxiomC{$(\ssspn{\termtwo_i}{\distrone_i}{m_i})_{1\leq i \leq p}$}
  \BinaryInfC{$\sssp{\forget \stateonetd}{\distrone = \sum {\frac 1 n} \cdot \distrone_i }$}
  \DisplayProof$$
  , where for every $i$, $m_i < n$.
  We take $\distrthree = \distrtwo$, $m = \max \{n_i \mid 1\leq i \leq p \} < n$ and the result is obtained by seeing that $\ssspn{\forget \distrthree = \sum_{1 \leq i \leq p}\frac 1 n \cdot \dirac{\termtwo_i}}{\distrone}{m}$.
 
\item If $\actone = \evalbang j $:  then $\contone = \fillc \contevone {(\abstrexp \varthree \termone){\vartwo_j}}$.

  It means that the only rule than can be used to show $\osct{\stateonetd}{\cdot}$ is:

$$  \AxiomC{\scriptsize $\ltsact{\stateone}{\evalbang j}{\statetwo} $}
 \UnaryInfC{\scriptsize $
\osctl{\stateonetd = (\fillc \contevone
  {(\abstrexp \varthree \termone){\vartwo_j}}, \stateone)}{\evalbang j}{\dirac{\statetwotd}}$}
 \DisplayProof$$
 where $\statetwotd =  ({\fillc
            \contevone {(\abstrexp \varthree
              \termone)\bang{\varone_{n+1}}}}_{\remove{\{j\}}},\statetwo)$. 

 Observe that $\statetwotd \in \{\statethreetd \in \unary \mid
 \osctl \statethreetd \actone \cdot \,  \text{ implies } \actone = \tau \}$. 

 Using the previous case, we know that there exist $\distrtwo_\statetwotd$ verifying the condition of Lemma \ref{lemauxplus} with respect to $\statetwotd$. Now, we are going to show that we can reuse it with respect to $\stateonetd$. First, notice that $\osctplus \stateonetd \distrtwo$ (since $\osctplus{}{}$ is transitive). So we can take $\distrtwo = \distrtwo_\statetwotd$. Let be $n \geq 1$, and $\distrone$ such that $\ssspn{\forget \stateonetd}{\distrone}{n} $. Since $\forget \stateonetd = \forget \statetwotd$, it holds that $\ssspn{\forget \statetwotd}{\distrone}{n}$. We can conclude by using the properties of $\distrtwo_\statetwotd$ with respect to $\statetwotd$.
 
\item  If  $ \actone =\actappltupl \judgone j$: it means that $\contone = \fillc \contevone{\vartwo_j \valone}$, and the rule used to reduce $\stateonetd$ by $\osctl{}{}{}$ is the applicative one. So we can take $\distrtwo$ defined as:
$$ \AxiomC{\scriptsize $
 \ltsact{\stateone}{\actappltupl \judgone j}{\statetwo}$}
  \AxiomC{\scriptsize $
\begin{array}{c}
\stateone = (\distrone, \typtuplone) \wedge 
{\typtuplone = \ms \typone, \ms \typtwo} \\ 
\ms{\typtwo}_j = \tarr \typfour \typfive   
\end{array}
$}
 \AxiomC{\scriptsize $
\begin{array}{c}
\judgone = (\{1, \ldots,n\}, \istwo, \ms{\typone}, \ms{\typtwo}, \valone, \typfour ) \in \vvalidjudgss \\ j \not \in \istwo
\end{array}
$} \TrinaryInfC{\scriptsize $
\osctl{(\fillc \contevone {{\vartwo_j} \valone}, \stateone)}{\actappltupl \judgone j}{\distrtwo = \dirac{(
(\fillc \contevone {\vartwo_{m+1}})_{\remove{ \istwo \cup \{j \}}}, \statetwo)}}
$} \DisplayProof$$
 We see that $\osctplus{\stateonetd}{\distrtwo}$. Let be $\distrone$, and $n \geq 1$, such that $\ssspn {\forget{\stateonetd}} \distrone n$. We need to define $\distrthree$ as specified in Lemma \ref{lemauxplus}. Notice that $\distrtwo \not \in \distrs \unary$, but we are able to express $\unaryd \distrtwo$ as:
 $$\unaryd \distrtwo = \sum_{\valtwo \in \valset} {\sem{\forget{\vartwo_j \valone,\stateone} }(\valtwo)} \cdot \dirac{\statetwotd_\valtwo},$$ where $\forget {\statetwotd_\valtwo} = \fillc{(\forget{\contevone, \stateone})}{\valtwo} $, and moreover $\statetwotd_\valtwo \in \unary$. 
We are going to apply Lemma \ref{aux3} to 
$\fillc{(\forget{\contevone, \stateone})}{\forget{\vartwo_j \valone,\stateone}} $
: there exist a finite set $I$, and a family $(\valtwo_i)_{i \in I}$ of $\valset$, such that 
$\ssspn{\sum_{i \in I} \alpha_i \cdot \dirac{\fillc{(\forget{\contevone, \stateone})}{\valtwo_i} }}{ \distrone} m$,
 with $m<n$, and for every $\valtwo \in \valset$, it holds that $\sum_{i \mid \valtwo_i = \valtwo} \alpha_i \leq \sem{\forget{\vartwo_j \valone,\stateone}}(\valtwo)$. Now we define $\distrthree = \sum \alpha_i \cdot \dirac{\statetwotd_{\valtwo_i}}$, and we can see that $\distrthree \lfin {\unaryd \distrtwo}$ and $\ssspn {\forget \distrthree}{\distrone} m$, and so we have the result.
\item If $\actone =\actbang j $: it means that $\contone = \fillc \contevone {\varone_j}$, and that the rule used is:
$$ 
 \AxiomC{\scriptsize $\ltsact{\stateone}{\actbang j}{\statetwo}$} 
 \UnaryInfC{\scriptsize $
\osctl {(\fillc \contevone {\varone_j}, \stateone)}{\actbang j}{
\dirac{ (\fillc \contevone {\vartwo_{m+1}}, \statetwo)}}
$} \DisplayProof $$
We define $\ms \termone$ and $\ms \termtwo$ such that $\tuplone = (\ms \termone, \ms \termtwo)$.
We can distinguish two cases, depending on $\ms \termone_j$:
\begin{varitemize}
\item If $\ms \termone_j$ is not a value. 
Then we are going to apply Lemma \ref{aux3}. Indeed, $\forget {\contone, \stateone} = \fillc{\forget {\contevone,\stateone}} {\ms \termone_j}$. The proof is essentially the same that for the applicative case.
 \item If $\ms \termone_j$ is a value. Then we can define $\statetwotd \in \unary$ by $\statetwotd = (\fillc \contevone {\vartwo_{m+1}}, \statetwo) $. It is possible that the only action able to reduce $\statetwotd$ is of the form $\actbang k$, However,
 we can iterate the previous reasonning (which intuitively corresponds to do reduction on $\ctsinfd$, but not in the actual term) only a finite number of times (because we can have only a finite number of occurrences of values that are in an evaluation position in $\termone = \forget {\contone, \stateone}$). So after a finite number of steps like that, we arrive to reduce this case to one we have already considered.

\end{varitemize}
\end{varitemize}

\end{proof}

\begin{lemma}\label{propaux}
For every $\stateonetd \in \unary$, if $\sssp {\forget \stateonetd} \distrone$, then $\distrone \leq {\forget{\semct{\stateonetd}}}$.
\end{lemma}
\begin{proof}
The proof is by induction on the size $n$ of the derivation $\sssp{\forget \stateonetd}{\distrone}$:
\begin{itemize}
\item If $n = 0$, then $\distrone = \emptyset$, and the result is true.
\item If $n = 1$, or $\distrone = \emptyset$, or $\distrone = \dirac{\forget \stateonetd}$, and then $\forget \stateonetd$ is a value, and we can apply the second point of Lemma \ref{lemauxplus}.
\item If $n>1$, then it implies that $\forget \stateonetd$ is not a value, and so we can apply the second point of Lemma \ref{lemauxplus}, and apply the induction hypothesis on the strictly smaller derivations it gives us.
\end{itemize}
\end{proof}

As a corollary, we can see that that $\sem{\forget{\stateonetd}} \leq \forget{\semct{\stateonetd}}$ holds in fact for every $\stateonetd \in \ctsinfd$, and not only for those in $\unary$:
\begin{proposition}
  \label{dir2}
  For any $\stateonetd \in \ctsinfd$, it holds that
$\sem{\forget{\stateonetd}} \leq \forget{\semct{\stateonetd}}$
\end{proposition}
\begin{proof}
First, we can see that Lemma \ref{propaux} imply that
$\sem{\forget{\stateonetd}} \leq \forget{\semct{\stateonetd}}$ for
every $\stateonetd \in \unary$. Lemma \ref{aux2} allows to generalize
it to the whole if $\ctsinfd$.
\end{proof}

We have now only to sum up Proposition~\ref{dir1} and \ref{dir2} to obtain the correspondence between $\semct{\cdot}$ and  $\sem{\cdot}$.
\end{proof}

\paragraph{Equivalence relations on $\ctsinfd$:}

Before proceeding, we need to understand how any reflexive and symmetric
relation
on $\ctsinfd$ can be turned into a relation on \emph{distributions} on
$\ctsinfd$.  If $\relone$ is a reflexive and symmetric relation on $\ctsinfd$, we
lift it to distributions over $\ctsinfd$ by stipulating that
$\distrone \relone \distrtwo$ whenever there exists a countable set
$I$, a family $(p_i)_{i \in I}$ of positive reals of sum smaller than
1, and families $(\stateonetd_i)_{i \in I}, (\statetwotd_i)_{i \in I}$
in $\ctsinfd$, such that $\distrone =\distrpar{\stateonetd_i}{p_i}{i
  \in I}$, $\distrtwo = \distrpar{\statetwotd_i}{p_i}{i \in I}$, and
moreover $\stateonetd_i \relone \statetwotd_i$ for every $i \in I$.

We now want to precisely capture \emph{when} a relation on
$\ctsinfd$ can used to evaluate the distance between tuple-context
pairs. 
\begin{definition}\label{osctlpr}
Let $\relone$ be a reflexive and symmetric relation on $\ctsinfd$. 
\begin{varitemize}
\item 
  We say that $\relone$ is \emph{ preserved by $\osctnp$}
  if, for any $\stateonetd, \statetwotd \in \ctsinfd$ such that
  $\stateonetd \relone \statetwotd$, if $\osctl \stateonetd \actone
  \distrone$, then there exists $\distrtwo$ such that $\osctl
  \statetwotd \actone \distrtwo$, and $\distrone \relone \distrtwo$.
\item 
  We say that $\relone$ is \emph{$\varepsilon$-bounding} if
  $\stateonetd \relone \statetwotd$, implies $\lvert
  \sumdistr{\forget{\stateonetd}}-
  \sumdistr{\forget{\statetwotd}}\rvert \leq \varepsilon$.
\item
  Let $\contsone$ be a set of tuple contexts, and $\stateone,
  \statetwo \in \ltsstates$ be two $\typtuplone$-states. 
  We say that $\relone$ is
  \emph{$\contsone$-closed with respect to $\stateone$ and
    $\statetwo$}, if for every $\contone$ and $\typthree$ such that
  $(\contone, \typtuplone, \typthree ) \in \contsone$, it holds that $
  (\contone, \stateone)\relone (\contone, \statetwo)$.
\end{varitemize}
\end{definition}
Please observe how any relation preserving $\osctnp$ and
being $\varepsilon$-bounding can be seen somehow as an
$\varepsilon$-bisimulation, but on tuple-context pairs.
The way we defined the lifting, however, makes it even
a \emph{stronger} notion, i.e., the ideal candidate for
an intermediate step towards Soundness.

Preservation by $\osctl{}{}{}$ leads us to preservation (in a weaker sense) by $\wtrcct{}{}$. That's the sense of the following lemma.
\begin{lemma}
Let $\relone$ be a reflexive and symmetric relation on $\ctsinfd$ preserved by $\osctl{}{}{}$. Let $\stateonetd, \statetwotd \in \ctsinfd$ be such that $\stateonetd { \relone} \statetwotd$. Let $\distrone$ be such that $\wtrcct\stateonetd \distrone$. Then there exists $\distrtwo$ such that $\wtrcct \statetwotd \distrtwo$, and $\distrone{ \relone} \distrtwo$.
\end{lemma}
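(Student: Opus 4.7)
My plan is to proceed by induction on the derivation of $\wtrcct{\stateonetd}{\distrone}$, following the three rules of Figure \ref{osctl}. For the empty case $\distrone = 0$, I would set $\distrtwo = 0$; the relation $0 \relone 0$ holds via the empty indexing. For the normal-form case where $\stateonetd \in \nforms{\ctsinfd}$ and $\distrone = \dirac{\stateonetd}$, I would first argue that $\statetwotd$ must also be in normal form: otherwise, any $\osctl{\statetwotd}{\actone}{\cdot}$ transition would, by the preservation hypothesis and the symmetry of $\relone$, force a matching $\osctl$-transition out of $\stateonetd$, contradicting its normal-form status. Hence $\wtrcct{\statetwotd}{\dirac{\statetwotd}}$ is derivable by the same rule, and $\dirac{\stateonetd} \relone \dirac{\statetwotd}$ follows from $\stateonetd \relone \statetwotd$ via the singleton-index lifting.

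For the inductive step, suppose the derivation concludes with $\osctl{\stateonetd}{\actone}{\distrone_0}$ together with sub-derivations $\wtrcct{y}{\distrtwo_y}$ for each $y \in \supp(\distrone_0)$, giving $\distrone = \sum_y \distrone_0(y) \cdot \distrtwo_y$. Applying preservation to $\stateonetd \relone \statetwotd$ and $\osctl{\stateonetd}{\actone}{\distrone_0}$ yields some $\distrthree_0$ with $\osctl{\statetwotd}{\actone}{\distrthree_0}$ and $\distrone_0 \relone \distrthree_0$. Unfolding the definition of the lifting provides a countable index set $I$, weights $(p_i)_{i \in I}$, and pairs $(y_i, z_i)_{i \in I}$ with $y_i \relone z_i$, such that $\distrone_0 = \sum_i p_i \dirac{y_i}$ and $\distrthree_0 = \sum_i p_i \dirac{z_i}$. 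For each $i$, the induction hypothesis applied to $y_i \relone z_i$ and $\wtrcct{y_i}{\distrtwo_{y_i}}$ produces some $\distrfour_i$ with $\wtrcct{z_i}{\distrfour_i}$ and $\distrtwo_{y_i} \relone \distrfour_i$. Taking $\distrtwo = \sum_i p_i \cdot \distrfour_i$, the relation $\distrone \relone \distrtwo$ then follows by concatenating the $I$-indexed couplings witnessing $\distrtwo_{y_i} \relone \distrfour_i$ into a single countable coupling of the two weighted sums.

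The step I expect to demand the most care is actually producing a valid derivation of $\wtrcct{\statetwotd}{\distrtwo}$, since the step rule in Figure \ref{osctl} indexes its sub-derivations by the support of $\distrthree_0$, whereas the coupling index $I$ may associate several $\distrfour_i$'s to a single $z \in \supp(\distrthree_0)$. I plan to resolve this by exploiting the freedom afforded by the countable indexing of the lifting: choose the coupling so that each $z$ has a canonical representative $i(z)$, use $\distrfour_{i(z)}$ as the sub-derivation at $z$ in the step rule, and then re-verify that the resulting $\distrtwo = \sum_{z \in \supp(\distrthree_0)} \distrthree_0(z) \cdot \distrfour_{i(z)}$ is still related to $\distrone$ by assembling a coupling from the marginal structure of $\distrone_0 \relone \distrthree_0$ and the per-$z$ couplings provided by the induction hypothesis. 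This bookkeeping around the indexing is the only non-routine part of the argument.
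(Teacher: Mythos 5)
Your overall strategy --- induction on the derivation of $\wtrcct{\stateonetd}{\distrone}$ --- is exactly the one the paper takes (the paper's own proof consists of little more than announcing this induction and dispatching the empty-derivation case, so your write-up is considerably more explicit). The base cases are handled correctly: in particular, your argument that $\statetwotd$ must itself be a normal form, via symmetry of $\relone$ together with preservation by $\osctl{}{}{}$, is the right one.

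The problem is the inductive step, at precisely the point you flagged, and your proposed repair does not close it. After merging equal pairs, the coupling witnessing $\distrone_0 \mathrel{\relone} \distrthree_0$ is a weight matrix $w(y,z)$ with marginals $\distrone_0$ and $\distrthree_0$; for each $(y,z)$ with $w(y,z)>0$ the induction hypothesis gives you some $\distrfour_{y,z}$ with $\wtrcct{z}{\distrfour_{y,z}}$ and $\distrtwo_y \mathrel{\relone} \distrfour_{y,z}$. If a single $z$ receives mass from two distinct $y,y'$, choosing a canonical representative forces you to use, say, $\distrfour_{y,z}$ at $z$, but then you need $\distrtwo_{y'} \mathrel{\relone} \distrfour_{y,z}$, which the induction hypothesis does not give you --- it only relates $\distrtwo_{y'}$ to its \emph{own} witness $\distrfour_{y',z}$, and $\relone$ is assumed reflexive and symmetric but \emph{not} transitive, so you cannot chain through $\distrtwo_y$. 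The distribution you would actually like to take, namely $\distrtwo=\sum_{y,z}w(y,z)\cdot\distrfour_{y,z}$, is related to $\distrone$ by concatenating couplings, but its restriction to a given $z$ is a proper convex combination of several derivable distributions, and the set $\{\distrthree\mid\wtrcct{z}{\distrthree}\}$ is directed but not convex (for $z$ a normal form it contains only $0$ and $\dirac{z}$), so $\distrtwo$ need not itself be derivable by the step rule. The statement can be salvaged --- and this is all the downstream Lemma on $\semct{\cdot}$ needs --- by weakening the conclusion to ``there exists $\distrtwo\leq\semct{\statetwotd}$ with $\distrone\mathrel{\relone}\distrtwo$'', using that $\semct{\statetwotd}$ decomposes as $\sum_z\distrthree_0(z)\cdot\semct{z}$ so that the convex combination above is bounded by it; but as you have set it up, the derivability of $\wtrcct{\statetwotd}{\distrtwo}$ is not established.
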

\begin{proof}
The proof is by induction on the derivation of $\wtrcct \stateonetd \distrone$.
\begin{varitemize}
\item if the derivation is $\AxiomC{}\UnaryInfC{$\wtrcct{\stateonetd}{\emptyset} $}\DisplayProof$, the result holds.
\end{varitemize}
\end{proof}
\begin{lemma}\label{prboundrel}
If $\stateonetd, \statetwotd \in \ltsstates$ are such that there
exists a reflexive and symmetric relation $\relone$ preserved by
$\osctl{}{}{}$, $\varepsilon$-bounding, and containing $(\stateonetd,
\statetwotd)$, then it holds that ${\lvert \sumdistr {\forget{\semct
      \stateonetd}} - \sumdistr{\forget{\semct \statetwotd}}\rvert
  \leq \varepsilon}$.
\end{lemma}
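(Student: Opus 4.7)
The plan is to exploit the preservation lemma just proved for $\wtrcct{}{}$, pushing it through the $\sup$ that defines $\semct{\cdot}$, and then using $\varepsilon$-boundedness on the normal-form approximations.

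First, I would observe that any $\distrone$ with $\wtrcct{\stateonetd}{\distrone}$ is a finite distribution supported on $\nforms{\ctsinfd}$. By the preceding lemma, since $\stateonetd \relone \statetwotd$ and $\relone$ is preserved by $\osctl{}{}{}$, there exists $\distrtwo$ with $\wtrcct{\statetwotd}{\distrtwo}$ and $\distrone \relone \distrtwo$ (in the lifted sense). Unfolding the lifting, we may write $\distrone = \sum_{i \in I} p_i \cdot \dirac{\stateonetd_i}$ and $\distrtwo = \sum_{i \in I} p_i \cdot \dirac{\statetwotd_i}$ with $\stateonetd_i \relone \statetwotd_i$ for every $i \in I$, and with each $\stateonetd_i, \statetwotd_i$ in normal form.

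Next I would bound $\lvert\sumdistr{\forget{\distrone}} - \sumdistr{\forget{\distrtwo}}\rvert$ using $\varepsilon$-boundedness of $\relone$. Since $\forget{\cdot}$ is linear on finite convex combinations, we get
\[
\sumdistr{\forget{\distrone}} - \sumdistr{\forget{\distrtwo}} = \sum_{i \in I} p_i \cdot \bigl(\sumdistr{\forget{\stateonetd_i}} - \sumdistr{\forget{\statetwotd_i}}\bigr),
\]
and the triangle inequality together with $\lvert\sumdistr{\forget{\stateonetd_i}} - \sumdistr{\forget{\statetwotd_i}}\rvert \leq \varepsilon$ gives $\lvert\sumdistr{\forget{\distrone}} - \sumdistr{\forget{\distrtwo}}\rvert \leq \sum_i p_i \cdot \varepsilon \leq \varepsilon$. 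In particular $\sumdistr{\forget{\distrone}} \leq \sumdistr{\forget{\distrtwo}} + \varepsilon \leq \sumdistr{\forget{\semct{\statetwotd}}} + \varepsilon$, since $\distrtwo \leq \semct{\statetwotd}$ and $\forget{}$, $\sumdistr{}$ are monotone.

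Finally I would pass to the supremum. Taking the sup over all $\distrone$ with $\wtrcct{\stateonetd}{\distrone}$, and invoking continuity of $\sumdistr{\forget{\cdot}}$ (a composition of continuous maps, by Lemma \ref{ctsem}), yields $\sumdistr{\forget{\semct{\stateonetd}}} \leq \sumdistr{\forget{\semct{\statetwotd}}} + \varepsilon$. By symmetry of $\relone$, the same argument with the roles of $\stateonetd$ and $\statetwotd$ swapped gives the reverse inequality, hence $\lvert \sumdistr{\forget{\semct{\stateonetd}}} - \sumdistr{\forget{\semct{\statetwotd}}} \rvert \leq \varepsilon$. The only delicate step is verifying that the matching $\distrtwo$ obtained from the previous lemma is itself supported on normal forms so that the $\varepsilon$-bounding hypothesis applies cleanly; this follows because preservation by $\osctl{}{}{}$ carries normal-form reachability through each step in the derivation of $\wtrcct{}{}$.
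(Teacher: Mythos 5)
Your proof is correct and follows exactly the route the paper sets up: it omits an explicit proof of this lemma, but the immediately preceding lemma (lifting preservation from $\osctl{}{}{}$ to $\wtrcct{}{}$) is introduced precisely so that one can match each approximant $\distrone$ of $\stateonetd$ with a related approximant $\distrtwo$ of $\statetwotd$, apply $\varepsilon$-bounding componentwise through the lifting, and pass to the directed supremum using the continuity of $\forget{\cdot}$ from Lemma~\ref{ctsem}. Your final worry is moot: by definition $\wtrcct{}{}$ already produces finite distributions over $\nforms{\ctsinfd}$, and in any case the $\varepsilon$-bounding hypothesis applies to \emph{all} $\relone$-related pairs, so no separate verification is needed there.
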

We use Lemma \ref{prboundrel} to show the following proposition.
\begin{proposition}\label{keypropsound}
  Let $\contsone $ be a set of tuple contexts, $\stateone, \statetwo$
  two $\typtuplone$-states and $\relone$ a reflexive and transitive
  relation preserved by $\osctl{}{}{}$, $\varepsilon$-bounding, and
  $\contsone$-closed with respect to $\stateone$ and $\statetwo$. Then
  it holds that $\appl {\mctxbang \contsone} \stateone \statetwo \leq
  \varepsilon$.
\end{proposition}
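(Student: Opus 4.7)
The plan is to unwind the definition of $\appl{\metrctx{\contsone}}{\stateone}{\statetwo}$ and reduce the statement to a direct application of Lemma \ref{prboundrel}, using the correspondence between $\semct{\cdot}$ and $\sem{\cdot}$ established in Lemma \ref{ctsem}. First, I would fix an arbitrary tuple context $(\contone, \typtuplone, \typthree) \in \contsone$, so that by definition of the metric it suffices to bound the quantity $\left\lvert \sumdistr{\sem{\forget{\contone, \stateone}}} - \sumdistr{\sem{\forget{\contone, \statetwo}}} \right\rvert$ by $\varepsilon$, and then take the supremum over all such contexts at the end.

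Next, I would invoke the assumption that $\relone$ is $\contsone$-closed with respect to $\stateone$ and $\statetwo$: this immediately yields the relation $(\contone, \stateone) \mathrel{\relone} (\contone, \statetwo)$ as elements of $\ctsinfd$. Since $\relone$ is also preserved by $\osctnp$ and is $\varepsilon$-bounding, the hypotheses of Lemma \ref{prboundrel} are satisfied for the pair $((\contone, \stateone), (\contone, \statetwo))$, and we conclude that
\[
\left\lvert \sumdistr{\forget{\semct{\contone, \stateone}}} - \sumdistr{\forget{\semct{\contone, \statetwo}}} \right\rvert \leq \varepsilon.
\]

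Finally, I would apply the third point of Lemma \ref{ctsem}, namely $\sem{\forget{\statethree}} = \forget{\semct{\statethree}}$ for every $\statethree \in \ctsinfd$, to rewrite the bound above as $\left\lvert \sumdistr{\sem{\forget{\contone, \stateone}}} - \sumdistr{\sem{\forget{\contone, \statetwo}}} \right\rvert \leq \varepsilon$. Since $(\contone, \typtuplone, \typthree)$ was an arbitrary element of $\contsone$, taking the supremum yields $\appl{\metrctx{\contsone}}{\stateone}{\statetwo} \leq \varepsilon$, which is what we wanted. There is no substantial obstacle here: all the technical work has already been encapsulated in Lemmas \ref{ctsem} and \ref{prboundrel}, and this proposition is essentially a packaging step that combines them with $\contsone$-closedness.
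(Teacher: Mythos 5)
Your proposal is correct and follows exactly the route the paper takes: the paper's proof of this proposition is simply the observation that it is a direct consequence of Lemma \ref{prboundrel}, and your write-up fills in precisely the intended details ($\contsone$-closedness to obtain the related pairs, Lemma \ref{prboundrel} for the bound on $\forget{\semct{\cdot}}$, and Lemma \ref{ctsem} to pass to $\sem{\forget{\cdot}}$).
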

\begin{proof}
It's a direct consequence of Lemma \ref{prboundrel}.
\end{proof}
Moreover, we see that the conditions from Definition~\ref{osctlpr} are enough
to guarantee that two terms are at context distance at most
$\varepsilon$.
\begin{proposition}\label{corsound}
  Let $\termone, \termtwo$ be two terms of type $\typone$. Suppose
  there exists a reflexive and symmetric relation $\relone$ on $\ctsinfd$, which
  is preserved by $\osctl{}{}{}$, $\varepsilon$-bounding, and
  $\contextst$-closed with respect to $\stateterm \typone \termone$
  and $\stateterm \typone \termtwo$. Then $\appl{\mctxbang \typone}
  \termone \termtwo \leq \varepsilon $.
\end{proposition}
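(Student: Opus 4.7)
The plan is to derive this corollary directly from Proposition \ref{keypropsound}, by instantiating it at the particular set of tuple contexts $\contsone = \contextst$ and at the two $\dot\typone$-states $\stateone = \stateterm{\typone}{\termone}$ and $\statetwo = \stateterm{\typone}{\termtwo}$. The hypotheses of Proposition~\ref{corsound} are tailored precisely for this instantiation: the relation $\relone$ is reflexive and symmetric, preserved by $\osctl{}{}{}$, $\varepsilon$-bounding, and $\contextst$-closed with respect to these two states. So all we need to do is (i) verify that $\stateterm{\typone}{\termone}$ and $\stateterm{\typone}{\termtwo}$ are indeed $\dot\typone$-states (which is immediate from the definition of $\stateterm{\typone}{\cdot}$ together with the type soundness Lemma for the approximation semantics), and (ii) conclude that $\appl{\metrctx{\contextst}}{\stateterm{\typone}{\termone}}{\stateterm{\typone}{\termtwo}} \leq \varepsilon$.

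To finish, I would invoke the earlier Proposition asserting that $\appl{\mctxbang{\typone}}{\termone}{\termtwo} = \appl{\metrctx{\contextst}}{\stateterm{\typone}{\termone}}{\stateterm{\typone}{\termtwo}}$, which expresses that the ordinary context distance between two closed terms of type $\typone$ coincides with the tuple-context distance between their associated $\dot\typone$-states taken over the full class $\contextst$ of tuple contexts. Chaining this equality with the bound from the previous paragraph yields $\appl{\mctxbang{\typone}}{\termone}{\termtwo} \leq \varepsilon$, as required.

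There is essentially no obstacle here: the whole technical content has already been absorbed into Proposition \ref{keypropsound} (which in turn rests on Lemma \ref{prboundrel} and, deeper down, on the agreement $\sem{\forget{\cdot}} = \forget{\semct{\cdot}}$ established in Lemma \ref{ctsem}). The only thing to check is the small bookkeeping that $\contextst$-closedness plus preservation by $\osctnp$ plus $\varepsilon$-boundedness are exactly the three assumptions of Proposition \ref{keypropsound}, so the application is formal. In other words, Proposition~\ref{corsound} is the packaged, term-level version of Proposition~\ref{keypropsound}: the proof is a single line reducing one to the other.
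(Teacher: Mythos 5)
Your proof is correct and follows exactly the route the paper intends: Proposition~\ref{corsound} is obtained by instantiating Proposition~\ref{keypropsound} at $\contsone=\contextst$ and the states $\stateterm{\typone}{\termone}$, $\stateterm{\typone}{\termtwo}$, and then invoking the earlier proposition identifying $\appl{\mctxbang{\typone}}{\termone}{\termtwo}$ with $\appl{\metrctx{\contextst}}{\stateterm{\typone}{\termone}}{\stateterm{\typone}{\termtwo}}$. The only discrepancy is the paper's statement of Proposition~\ref{keypropsound} asking for a ``reflexive and transitive'' relation where ``symmetric'' is clearly intended, which you rightly treat as immaterial.
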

\paragraph{$\varepsilon$-bisimulation and $\wtrcct{}{}$}:
What remains to be done, then, is to show that if two terms are
related by $\relone^\varepsilon$, then they themselves satisfy
Definition~\ref{osctlpr}. Compulsory to that is showing that
any $\varepsilon$-bisimulation can at least be turned into a relation
on $\ctsinfd$.

We need to do that, in particular, in a way guaranteeing the $\contsone$-closure 
of the resulting relation, and thus considering all possible
tuple contexts from $\contsone$:
\begin{definition}
Let $\relone$ be a reflexive and symmetric relation on $\ltsstates$. Let be
$\contsone$ a set of tuple contexts. We define its \emph{contextual lifting to
  $\ctsinfd$ with respect to $\contsone$} as the following binary
relation on $\ctsinfd$:
$$
{\widehat \relone^{\contsone}}_\typtuplone  = \bigcup_{(\contone,\typtuplone, \typone) \in \contsone} \{ ((\contone, \stateone), (\contone, \statetwo)) \mid \stateone, \statetwo \, \typtuplone\text{-states}, \,\stateone \relone \statetwo\}; \qquad
\widehat \relone^{\contsone}  = \bigcup_{\typtuplone \in \typtuples} {\widehat \relone^{\contsone}}_\typtuplone.
$$

\end{definition}
The following result tells us that, indeed, any
$\varepsilon$-bisimulation can be turned into a relation
satisfying Definition~\ref{osctlpr}: 
\begin{proposition}\label{proplifting}
Let $\relone$ be a $\varepsilon$-bisimulation. Then $\lifting
\relone^{\contextst}$ is preserved by $\osctl{}{}{}$ and
$\varepsilon$-bounding, and $\contextst$-closed with respect to every
$\stateone, \statetwo$ such that $\stateone \relone \statetwo$.
\end{proposition}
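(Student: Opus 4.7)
The plan is to verify the three properties separately, exploiting the fact that the contextual lifting preserves the context component and only varies the state component. Of these, $\contextst$-closure and $\varepsilon$-boundedness will be essentially immediate; the real work lies in preservation by $\osctnp$, which must be done by a case analysis on the rules defining $\osctl{}{}{}$.

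Closure is direct from the definition: for any tuple context $(\contone, \typtuplone, \typthree) \in \contextst$ and any pair of $\typtuplone$-states with $\stateone \relone \statetwo$, the pair $((\contone, \stateone), (\contone, \statetwo))$ belongs to $\lifting\relone^{\contextst}$ by construction. Boundedness follows because filling the holes of $\contone$ with tuples from the support of the distribution component of $\stateone$ does not affect the total mass: if $\stateone = (\distrone, \typtuplone)$, then $\sumdistr{\forget{(\contone, \stateone)}} = \sumdistr{\distrone} = \weight{\stateone}$, and similarly for $\statetwo$. Since $\stateone \relone \statetwo$, the second condition of $\varepsilon$-bisimulation gives $\lvert \weight{\stateone} - \weight{\statetwo} \rvert \leq \varepsilon$, hence the images through $\forget{}$ differ by at most $\varepsilon$ in total mass.

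For preservation, suppose $(\contone, \stateone) \lifting\relone^{\contextst} (\contone, \statetwo)$ and $\osctl{(\contone, \stateone)}{\actone}{\distrone}$, and proceed by case split on the derivation rule. For the $\tau$-labelled rules—linear and non-linear $\beta$-reduction and probabilistic choice—the reduction depends only on $\contone$, so the same rule instance applies to $(\contone, \statetwo)$, producing a distribution $\distrtwo$ with identical coefficients and derived contexts but with $\statetwo$ paired in place of $\stateone$. The atoms of $\distrone$ and $\distrtwo$ are then pointwise $\lifting\relone^{\contextst}$-related, since the state component is untouched and $\stateone \relone \statetwo$. For the rules labelled by $\evalbang j$, $\actbang j$, or $\actappltupl \judgone j$, the premise contains a transition $\ltsact{\stateone}{\actone}{\stateone'}$ in $\ltsbang$; the first condition of $\varepsilon$-bisimulation yields $\statetwo'$ with $\ltsact{\statetwo}{\actone}{\statetwo'}$ and $\stateone' \relone \statetwo'$. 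Since $\stateone$ and $\statetwo$ share the same tuple type $\typtuplone$, the post-reduction context $\contone'$—for instance $\fillc{\contevone}{\vartwo_{m+1}}$ in the Milner case, or the variable-removal rewrite $(\cdot)_{\remove{\istwo \cup \{j\}}}$ in the applicative case—is syntactically identical on both sides. Hence $\distrone = \dirac{(\contone', \stateone')}$ and $\distrtwo = \dirac{(\contone', \statetwo')}$ are Dirac distributions on $\lifting\relone^{\contextst}$-related pairs.

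The main obstacle is the bookkeeping in the non-$\tau$ cases: one must check, rule by rule, that every context rewriting triggered by a labelled action depends only on $\contone$ and the \emph{static} data carried by the action (the index $j$, the sub-judgment $\judgone$, and the tuple type $\typtuplone$), and never on the state $\stateone$ itself. Well-typedness of the resulting tuple context pair $(\contone', \stateone')$ in $\ctsinfd$ must also be checked; this follows from a standard substitution/weakening argument on the typing rules, using the fact that $\stateone$ and $\statetwo'$ have corresponding types. Once this invariant is established, the bisimulation property of $\relone$ transports the transition to $\statetwo$ and preserves relatedness of the resulting states, closing the argument.
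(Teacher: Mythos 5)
Your proof is correct and follows essentially the same route as the paper's: closure and $\varepsilon$-boundedness are immediate from the definitions, and preservation by $\osctl{}{}{}$ is established by observing that each rule's effect on the context component is determined by the context and the action's static data alone, while the state component evolves only through $\ltsbang$-transitions that the first clause of $\varepsilon$-bisimulation transports. (The paper phrases this as a trichotomy on contexts --- normal form, $\tau$-step, or a uniquely determined labelled step --- and its mention of an induction on $\wtrcct{}{}$ derivations is superfluous for this one-step property; your direct case analysis on the rules is the cleaner formulation of the same idea.)
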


\begin{proof}
Let $\relone$ be an $\varepsilon$-bisimulation.
We obtain that $\lifting \relone^{\contextst}$ is $\contextst$-closed as a direct consequence of the lifting definition. Similarly, we see that $\lifting \relone^{\contextst}$ is $\varepsilon$-bounding as a direct consequence of the definition of a $\varepsilon$-bisimulation.
Now, let us show that  $\lifting \relone^{\contextst}$ is preserved by $\osctl{}{}{}$.
Let be $\typtuplone$ such that $(\stateonetd,\statetwotd) \in {\widehat \relone^{\contextst}}_\typtuplone$.
The proof is by induction on the derivation of $\wtrcct \stateonetd \distrone$. It is based on the following idea: for every $\contone$ such that there exist $\typone$ for which $(\contone, \typtuplone, \typone)$ is a tuple context:
\begin{varitemize}
\item or for every $\typtuplone$-state $\stateone$, $(\contone, \stateone)$ is in normal form. 
\item or there exists a family $(\contone_i)_{1 \leq i \leq n}$ such that $(\contone_i, \typtuplone,\typone) \in \contextst$, and moreover for every $\typtuplone$-state $\stateone$, $\osctl{(\contone, \stateone)}{\tau}{\sum_{1 \leq i \leq n} \dirac{(\contone_i, \stateone)}}$
\item or there exists an action $\actone$ of $\ltsbang$, and $\conttwo,\typtupltwo$ such that $(\conttwo, \typtupltwo, \typone )$, such that  for every $\typtuplone$-state $\stateone$, $\osctl{(\contone, \stateone)}{\actone}{\dirac{(\conttwo, \statetwo)}}$,  and $\ltsact{\stateone}{\actone}{\statetwo}$.
\end{varitemize}

\end{proof}

We are finally ready to give a proof of soundness:
\begin{proof}[of Theorem~\ref{soundness}]
  Consider two terms $\termone$ and $\termtwo$ of type $\typone$. Let
  $\varepsilon$ be $\appl{\mtbangt \typone} \termone \termtwo$.  We
  take $\relone^\varepsilon$ (defined in Definition \ref{dista} as the
  largest $\varepsilon$-bisimulation), and we see that $\relate
  {\relone^{\varepsilon}}{\stateterm \typone \termone}{\stateterm
    \typone \termtwo}$.  Proposition \ref{proplifting} tells us that
  we can apply Proposition \ref{corsound} to $\termone$, $\termtwo$,
  and $\lifting{(\relone^\varepsilon)}^{\contextst}$. Doing so we
  obtain that $\appl{\mctxbang \typone} \termone \termtwo \leq
  \varepsilon $, which is the thesis.\qed
  \end{proof}
\subsubsection{Completeness}
We can actually show that
$\mtbangt{\typone}$ is also complete with respect to the contextual
distance: that is the aim of this section.
\begin{theorem}[Full Abstraction]
\label{completness}
For every $\typone$, ${\mctxbang \typone} = \mtbangt \typone$.
\end{theorem}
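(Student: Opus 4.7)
Soundness is already given by Theorem~\ref{soundness}, so the plan is to focus on the reverse inequality $\appl{\mtbangt\typone}\termone\termtwo \leq \appl{\mctxbang\typone}\termone\termtwo$ and argue contrapositively. Fix $\eta > 0$ and assume $\appl{\mtbangt\typone}\termone\termtwo \geq \varepsilon$; then $\stateterm\typone\termone$ and $\stateterm\typone\termtwo$ fail to be $(\varepsilon-\eta)$-bisimilar, so by Corollary~\ref{cor1} there is a trace $\traceone\in\traces$ with $\left|\probt\traceone{\stateterm\typone\termone} - \probt\traceone{\stateterm\typone\termtwo}\right| \geq \varepsilon - \eta$. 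Note that admissibility of a trace from $\stateterm\typone\termsix$ depends only on the type $\typone$, not on $\termsix$: every transition rule in Figure~\ref{fig:lmc} gates its applicability on type constraints, and those on value shape ($\valtwo_i = \bang\termtwo$) follow automatically from typing. Hence $\traceone$ is admissible from both states.

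The proof will then reduce to a \emph{definability} lemma, which is the real content: for every trace $\traceone$ admissible from $\stateterm\typone\termsix$ one can build a context $\contone_\traceone\in\ctxs{\typone}{\typtwo}$ at an observable type $\typtwo$ such that
\[
\obs{\fillc{\contone_\traceone}\termsix} = \probt\traceone{\stateterm\typone\termsix}
\]
uniformly in $\termsix$. I would prove this by induction on $|\traceone|$, maintaining the invariant that the current tuple type $\typtuplone = (\ms\typone,\ms\typtwo)$ along $\traceone$ is mirrored by a context whose exponential-bound variables and linear evaluation frames match $\ms\typone$ and $\ms\typtwo$. Each family of actions of $\lmcbang$ translates uniformly into a concrete context fragment: for $\evalbang i$, a non-linear abstraction $\abstrexp\varone\cdot$ opens the $i$-th linear slot (which must be a box, by type preservation) and promotes its content to the exponential region; for $\actbang i$, the context introduces a fresh use $\bang{\varone_i}$ of the $i$-th exponential component, evaluates it in an evaluation context awaiting the remainder of the trace, and uses dyadic weights $\psumindex{}{}{}$ together with suitable shape tests to isolate exactly the value prescribed by the successor state on $\traceone$; for $\actappltupl\judgone i$ with $\judgone = (\isone,\istwo,\ms\typone,\ms\typtwo,\valtwo,\typfour)$, the context applies the contents of the $i$-th slot to $\valtwo$, its free variables substituted by the appropriate current tuple components---well-typedness of this substitution is exactly the assumption $\judgone\in\vvalidjudgss$. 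When the trace ends, the context returns $I$ on values of the prescribed shape and $\Omega_!$ otherwise, so that $\obs{\fillc{\contone_\traceone}\termsix}$ coincides with $\probt\traceone{\stateterm\typone\termsix}$.

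The main obstacle will be the translation of Milner-style actions $\actbang i$: the LMC step implicitly branches over all possible values of the exponential term, yet the trace commits to one successor state, so the context must discriminate between values---and in the pure $\LBANG$ fragment, without parallel disjunction. Here I would leverage the recursive types of $\LBANG$ together with the boolean encoding already used in Section~\ref{sect:strictlypositive} to build shape tests; the fact that values in $\LBANG$ are either abstractions or boxes (both decidable by applying the putative value to well-chosen arguments) makes this possible at every type. A secondary subtlety is that the argument $\valtwo$ appearing in an applicative label may depend on previously exposed components of the tuple, so the inductive construction of $\contone_\traceone$ must carefully thread these dependencies while respecting the $\LBANG$ typing discipline; this is bookkeeping but requires care in managing linear variables. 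Once the definability lemma is established, completeness follows immediately: $\contone_\traceone$ witnesses $\appl{\mctxbang\typone}\termone\termtwo \geq \varepsilon - \eta$ for every $\eta>0$, yielding the reverse inequality and hence $\mctxbang\typone = \mtbangt\typone$.
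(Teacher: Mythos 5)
Your overall route is the same as the paper's: keep soundness from Theorem~\ref{soundness}, reduce the converse inequality to the trace characterization (Proposition~\ref{propcartr} / Corollary~\ref{cor1}), and prove a definability lemma building, by induction on the trace, a (tuple) context $\contone_\traceone$ with $\obs{\fillc{\contone_\traceone}{\termsix}} = \probt\traceone{\stateterm\typone\termsix}$. This is exactly Lemma~\ref{lemc1} plus Corollary~\ref{corcompletness} and Lemma~\ref{lemcompletness1} in the paper.

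There is, however, a genuine gap in your treatment of the Milner actions $\actbang i$ (and of trace termination), stemming from a misreading of what a trace step is. The traces live in the WLTS $\ltsbang$, whose states are \emph{distributions} over tuples and whose transitions are deterministic: after $\actbang i$ the successor state is the \emph{whole} distribution obtained by weighting every possible value $\valone$ by $\sem{\termone_i}(\valone)$, and the success probability of a trace is just the total mass $\weight{\cdot}$ of the final distribution. The trace never ``commits to one successor value,'' so there is nothing to discriminate. Your proposed gadget --- shape tests built from the boolean encoding, dyadic weights to ``isolate exactly the value prescribed by the successor state,'' and a terminal context returning $I$ on values of ``the prescribed shape'' and $\Omega_!$ otherwise --- is not only unnecessary but unrealizable: at a fixed type all values share the same top-level shape (arrow types are inhabited only by abstractions, bang types only by boxes), and a context of $\LBANG$ that separates two distinct closed abstractions of the same type with certainty cannot exist in general (otherwise the metric would trivialize, contradicting Example~\ref{exa:nontrivial}). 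The correct construction is much simpler: for $\actbang i$ one takes the context $\conttwo$ obtained inductively for the tail of the trace and substitutes the exponential variable $\varone_i$ for the fresh linear variable $\vartwo_{m+1}$, so that a fresh copy of the $i$-th exponential component is evaluated in place, with the probability weights arising from the operational semantics itself; and the empty trace is handled by a context $\abstr\varthree{\varthree\vartwo_1\ldots\vartwo_m}$ that consumes the linear part and converges with probability equal to the mass of the final distribution. With these corrections your induction goes through as in the paper; the $\evalbang i$ and applicative cases you sketch are essentially right.
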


\begin{proof}
  
    One inequality is given by Theorem \ref{soundness}.

    Please observe that Theorem \ref{soundness} gives us already half of Theorem \ref{completness}: indeed, it says that for every $\termone$ and $\termtwo$ of type $\typone$, it holds that: $\appl{\mctxbang \typone} \termone \termtwo \leq \appl{\mtbangt \typone} \termone \termtwo $. Our goal is now to show the other inequality, that is  $\appl{\mctxbang \typone} \termone \termtwo \geq \appl{\mtbangt \typone} \termone \termtwo $ .
    The proof of
    the other one is based on the trace characterization $\metrtr\ltsbang $ given in Section~\ref{tracecars}. More precisely, we use the fact that every trace can be simulated
    by a context: for every $\typtuplone \in \typtuples$, we construct a tuple context
$(\contone, \typtuplone,\typtwo)$, such that for every $\typtuplone$-state $\stateone$, it holds that $\probt\traceone \stateone = \obs{\forget {\contone, \stateone}}.$

  \begin{lemma}
    \label{lemc1}
    For every $\traceone \in \traces$, for every $\typtuplone \in \typtuples$ such that $\traceone$ is an admissible trace for $\typtuplone$, there exists a term $\contone $, and a type $\typthree$ such that:
    \begin{varitemize}
    \item $(\contone, \typtuplone, \typthree)$ is a tuple context.
    \item  for every $\stateone \in \ltsstates$ of the form $\stateone = (\distrone, \typtuplone) $, it holds that 
      $\sumdistr {\sem{\forget{ \contone,\stateone}}} =
      \probt  \traceone {\stateone}. $ 
    \end{varitemize}
  \end{lemma}
  \begin{proof}
    Let be $\typtuplone = (\ms \typone, \ms \typtwo)$ a type for tuple. Let be $(n,m)$ the arity of $\typtuplone$.
    We define a context $\contone$, and a type $\typthree$, by induction on the length of the admissible trace :
    \begin{varitemize}
    \item if $\traceone = \emptytr$: We want a context that terminates with probability $1$, whatever term we fill it with. We take: 
\begin{align*}      
\contone & = \abstr \varthree {\varthree \vartwo_1\ldots \vartwo_m}\\
\typthree & = \tarr{(\tarr{\tarr{\tarr{\ms \typtwo_1}{\ldots}}{\ms \typtwo_m}}{\vartypone} )}{\vartypone}
\end{align*}
    \item if $\traceone = \concat{\evalbang i }\tracetwo$. 
      Let be $\typtupltwo$ such that, whenever we do the action $\evalbang i$ from a $\typtuplone$-state, we obtain a $\typtupltwo$-state.
      Let
      $\conttwo$ and $\typfour$ the open term and type obtained by induction hypothesis applied to $\tracetwo$ and $\typtupltwo$. We take $\typthree = \typfour$, and:
      $$
        \contone  = (\abstrexp \varthree {\subst\conttwo{\varone_{n+1}}{\varthree}}_{\add{\{i\}}})\vartwo_i$$
    \item if $\traceone = \concat{\actbang i}\tracetwo$.
Let be $\typtupltwo$ such that, whenever we do the action $\actbang i$ from a $\typtuplone$-state, we obtain a $\typtupltwo$-state.
      Let
      $\conttwo$ and $\typfour$ be the open term and type obtained by induction hypothesis applied to $\tracetwo$ and $\typtupltwo$. We take $\typthree = \typfour$, and $\contone = \subst \conttwo {\varone_i}{\vartwo_{m+1}} $.
    \item if $\traceone = \concat{\actappltupl \judgone i}{\tracetwo}$. Since $\traceone$ is an admissible trace for $\typtuplone = (\ms \typone, \ms \typtwo)$, $\judgone$ should be of the form: 
      $\judgone = (\isone, \istwo, \ms \typone, \ms \typtwo, \termone, \typfive)$. Let
      $\typtupltwo$ be the corresponding tuple type obtained after the
      action $\actbang i $, and let be $\conttwo$ and $\typfour$ the
      term and the type given by the induction hypothesis
      and $\typtupltwo$. Then we take $\typthree = \typfour$:
      $$\contone = \subst {(\conttwo_{\add{\istwo \cup \{i \}}})}{\vartwo_i}{\vartwo_{n+1}\termone} $$
    \end{varitemize}
  \end{proof}
    When we combine Proposition \ref{lemc1} and Proposition \ref{propcartr}, we obtain the following corollary:
    \begin{corollary}
      \label{corcompletness}
     Let be $\stateone$ and $\statetwo$ two $\typtuplone$-states. 
      $$\appl {\metrtr\ltsbang} \stateone \statetwo \leq \sup \{\lvert \obs{\forget{\contone, \stateone}} - \obs{\forget{\contone, \statetwo}} \rvert \, \text{ s.t. } \exists \typone, \, (\contone,\typtuplone,\typone) \in \contextst \}  $$
    \end{corollary}
    \begin{proof}
      Let be $\stateone$ and $\statetwo$ two $\typtuplone$-states that are \emph{not} $\varepsilon$-bisimilar. Then there exists a trace $\traceone$ that shows it, that is verifying $\lvert {\probt \stateone \traceone} - {\probt \statetwo \traceone}\rvert > \varepsilon$. Now, we just have to take the  open term $\contone$ such that $(\contone,\typtuplone, \cdot )\in \contextst$ , corresponding to this trace $\traceone$. And we have $\lvert { \obs {\forget{ \contone ,\stateone}} - \obs{\forget{\contone, \statetwo}}}\rvert > \varepsilon$, and the result folds.
    \end{proof}
    
    We can now transform Corollary \ref{corcompletness} in a result on terms, which is exactly what we need to end the proof of theorem \ref{completness}.
    \begin{lemma}
      \label{lemcompletness1}
      Let be $\termone$, $\termtwo$ of type $\typone$. Then:
      $$\appl {\mtbangt \typone} \termone \termtwo \leq \sup \{\lvert \obs{\fillc \contone \termone} - \obs{\fillc \contone  \termtwo} \rvert \, \mid \exists \typtwo \text{ s.t. }\wfjt{\text{hole}:\typone}{\contone}\typtwo \}  $$
    \end{lemma}

    \begin{proof}
      Let us unfold the definition of  $\appl {\mtbangt \typone} \termone \termtwo$: please recall that we defined it as $\appl{\metrtr\ltsbang}{\stateterm \typone \termone}{\stateterm \typone \termtwo}$. Let be $\typtuplone$ such that $\stateterm \typone \cdot$ is an $\typtuplone$-term. Please recall that $\typtuplone = (\seq{},\seq{\typone}) $. We can see that for every $\contone, \typtwo$ such that $(\contone, \typtuplone, \typtwo)\in \contextst$, it holds that $\contone$ can in fact be see as a context for terms: more precisely, it holds that $\wfjt{\vartwo_1:\typone}{\contone}{\typtwo}$ and moreover,  it holds that
      $$\obs{\fillc{((\abstr {\vartwo_1} \contone) \text{hole})}{\termone}} = \obs{\forget{\contone, \stateterm \typone \termone}} \qquad \obs{\fillc{((\abstr {\vartwo_1} \contone)\text{hole})}{\termtwo}} = \obs{\forget{\contone, \stateterm \typone \termtwo}}.$$
      Since we can transform every tuple of context for $\typtuplone$ into a context on terms in that way, we obtain the result.
    \end{proof}
   
    Theorem \ref{completness} is a direct consequence of Lemma \ref{lemcompletness1}: indeed, please recall that we had to show that $\appl{\mctxbang \typone} \termone \termtwo \geq \appl{\mtbangt \typone} \termone \termtwo $. Since that's exactly what Lemma \ref{lemcompletness1} says, it ends the proof.

\end{proof}
\subsection{On an Up-to-Context Technique}
\label{subsect:Upto}
\subsubsection{Up-to $\varepsilon$-bisimulation}
As we have just shown, context distance can be characterized as a
coinductively defined metric, which turns out to be useful when
evaluating the distance between terms. In this section, we will go
even further, and show how an \emph{up-to-context}~\cite{Sangiorgi98}
notion of $\varepsilon$-bisimulation is precisely what we need to
handle our running example.

We first of all need to generalize our definition of a tuple: an
\emph{open tuple} is a pair $(\ms \termone, \ms \termtwo )$, where
$\ms \termone$ and $\ms \termtwo$ are sequences of (not necessarily
closed) typable terms.
\begin{definition}
 If $\tuplone = (\ms \termone, \ms \termtwo)$ is an open tuple, and
 $\typtuplone = (\ms \typthree, \ms \typfour)$ is a tuple type, we say
 that $(\ms \typone, \ms \typtwo, \tuplone, \typtuplone)$ is a
 \emph{substitution judgment} iff:
\begin{varitemize}
\item 
  $\wfjt{\bang{\ms \varone}: \bang {\ms \typone}}{{\ms
    \termone_i}}{{\ms \typthree}_i}$;
\item 
  if $n$ and $m$ are such that $\ms \typtwo$ is a $n$-sequence, and
  $\ms \termtwo$ a $m$-sequence, then there exists a partition $\{\isone_1, \ldots
  \isone_m\}$ of $\{1,\ldots, n \}$ such that
  $\wfjt{\ms{\vartwo}_{\isone_j}: \ms{\typtwo}_{\isone_j}}{\termtwo_j}{\typfour_j} $ 
  for every $j \in \{1, \ldots,m\}$.
\end{varitemize}
$\judgsubs$ is the set of all substitution judgments.
\end{definition}
If $\judgone = (\ms \typone, \ms \typtwo, \tuplone, \typtuplone) \in
\judgsubs$, and $\tupltwo \in \tuples$ is of type $(\ms \typone, \ms
\typtwo )$, then there is a natural way to form a tuple $\fillc \judgone
\tupltwo$, namely by substituting the free variables of $\tuplone$ by
the components of $\tupltwo$. In the following, we restrict
$\judgsubs$ to those judgments $\judgone$ such that for every
$\tupltwo$, terms in the linear part of $\fillc \judgone \tupltwo$ are
values. Observe that we always have $\wfjt{}{\fillc \judgone
  \tupltwo}{\typtuplone}$.  We extend the notation $\fillc \judgone
\tupltwo$ to distributions over $\tuples$: if $\distrone$ is a
distribution over tuples of type $(\ms \typone, \ms \typtwo)$, we note
$\fillc \judgone \distrone = \distrpar{\fillc \judgone
  \tupltwo}{\distrone(\tupltwo)}{\tupltwo \in \tuples}$, which is a
distribution over tuples of type $\typtuplone$.  Moreover, we want to
be able to apply our substitution judgments to the states of
$\ltsbang$. If $\stateone = (\distrone, (\ms \typone, \ms
\typtwo))\in\ltsstates$, and $\judgone = (\ms \typone, \ms \typtwo,
\tuplone, \typtuplone)$, the state of $\ltsbang$ defined by $( \fillc
\judgone \distrone, \typtuplone )$ will be often indicated as $\fillc
\judgone \stateone$.

\begin{example}\label{exsubs1}
We illustrate on a simple example the use of substitution judgments.
Let be $\typtwo$ any type.  Consider $\ms \typone = \tuplonea{\tarr
  \typtwo \typtwo}$, and $\ms \typtwo = \tuplonea{}$. Moreover, let
$\tuplone = (\tuplonea{\varone_1}, \tuplonea {I})$ and $\typtuplone =
(\tuplonea{\tarr \typtwo \typtwo}, \tuplonea{\tarr \typtwo
  \typtwo})$. Then $\judgone = (\ms \typone, \ms \typtwo, \tuplone,
\typtuplone)$ is a substitution judgment. We consider now a tuple of
type $(\ms \typone, \ms \typtwo)$. In fact, we take here a tuple that
will be useful in order to analyze our running example: $\tupltwo
=(\tuplonea{\psumindex {\Omega_!} \varepsilon I}, \tuplonea{}) $. By
substituting $\tupltwo$ in $\judgone$, we obtain $\fillc \judgone
\tupltwo = (\tuplonea{\psumindex {\Omega_!} \varepsilon I},
\tuplonea{I})$, and we can see easily that we obtain indeed a tuple of
type $\typtuplone$.
\end{example}

The main idea behind up-to context bisimulation is to allow for the
freedom of discarding any context when proving a relation to be a
bisimulation. This is captured by the following definition:
\begin{definition}
  Let $\relone$ be a relation on $\ltsstates$. $\relone$ is an
  \emph{$\varepsilon$-bisimulation up to context} if for every
  $\stateone$ and $\statetwo$ such that $\stateone \relone \statetwo$,
  the following holds:
  \begin{varitemize}
  \item 
    there exists $\typtuplthree \in \typtuples$ such that $\stateone =
    (\distrone, \typtuplthree) $, $\statetwo = (\distrtwo,
    \typtuplthree)$, and $\lvert \sumdistr \distrone - \sumdistr
    \distrtwo \rvert\leq \varepsilon$.
  \item 
    for any $\actone \in \labelsexp$, if $\ltsact \stateone \actone
    \statethree = (\distrone, \typtuplone)$ and $\ltsact \statetwo
    \actone \statefour = (\distrtwo, \typtuplone) $, then there exists
    a finite set $I \subseteq \NN$ such that:
    \begin{varitemize} 
    \item 
      there is a family of rationals $(p_i)_{i\in I}$ such that $\sum_{i \in I}
      p_i \leq 1$;
    \item 
      there are families $\ms \typone^i$, $\ms \typtwo^i$, and
      $\tuplone^i$, such that $\judgone_i = (\ms \typone^i, \ms
      \typtwo^i, \tuplone^i, \typtuplone)$ is a substitution judgment
      for every $i\in I$;
    \item 
      there are distributions over tuples $\distrone_i$, $\distrtwo_i$
      such that $(\distrone_i, \typtupltwo_i) \relone (\distrtwo_i,
      \typtupltwo_i)$;
    \end{varitemize}
    and moreover
      $\distrone = \sum\nolimits_{i \in I}p_i \cdot \fillc{\judgone_i}{\distrone_i} $, and 
      $\distrtwo = \sum\nolimits_{i \in I} p_i \cdot \fillc{\judgone_i}{\distrtwo_i}.$
  \end{varitemize}
\end{definition} 

The just introduced proof method is indeed quite useful when handling our running
example.
\begin{example}
We show that up-to bisimulations can handle our
running example. Please recall the definition of
$\termone_\varepsilon$ given in Example \ref{example1}.
First, we can see that, for every $a$, for every type $\typtwo$, $\stateterm {\bang{(\tarr
    \typtwo \typtwo)}} {\termone_{a}} = (\dirac{(\tuplonea{},
  \tuplonea{\bang{\psumindex {\Omega_!} {a} I}})},(\tuplonea{},
\tuplonea{\bang{\tarr \typtwo \typtwo}}) )$. We define a
relation $\relone$ on $\ltsstates$ containing $(\stateterm
{\bang{(\tarr \typtwo \typtwo)}} {\termone_{\varepsilon}},
\stateterm {\bang{(\tarr \typtwo \typtwo)}}
{\termone_{\mu}})$, and we show that it is an
$\gamma$-bisimulation up-to context for an appropriate
$\gamma$. In order to simplify the
notations, we define $\typtupltwo =({\tuplonea{\tarr\typtwo
    \typtwo}},\tuplonea{})$, and $\stateone_n, \statetwo_n \in \ltsstates$ as:
$$
\stateone_n =  (\diracpar{({\tuplonea{({\Omega_!} \oplus^{\varepsilon}I)}},\tuplonea{})} {({\varepsilon}^n)}, \typtupltwo), \qquad
\statetwo_n = (\diracpar{({\tuplonea{({\Omega_!} \oplus^{\mu}I)}},\tuplonea{})}{({\mu}^n )}, \typtupltwo). 
$$
Then, we define the relation $\relone$ as
$
\relone = \left\{\left(\stateterm \typone \termone, \stateterm \typone \termtwo \right) \right\} 
\cup \left\{(\stateone_n ,\statetwo_n) \mid n \in \NN \right\}.
$ 
One can check that $\relone$ is indeed a $\gamma$-bisimulation
up-to-context (where
$\gamma=\sup_{n\in\NN}\lvert\varepsilon^n-\mu^n \rvert$) by
carefully analysing every possible action. The proof is based on the following
observations:
\begin{varitemize}
\item 
  The only action starting from $\stateterm \typone\termone$ or
  $\stateterm \typone \termtwo$ is $\actone = \evalbang 1$, passing a
  term to the exponential part of the tuple, then we end up in
  $\stateone_0$ and $\statetwo_0$ respectively.
\item 
  If we start from $\stateone_n$ or $\statetwo_n$, the only relevant
  action is Milner's action $\actone = \actbang 1$,
  consisting in taking a copy of the term in the exponential part,
  evaluating it, and putting the result in the linear part. We can see
  (using the substitution judgment $\judgone$ defined in Example
  \ref{exsubs1}), that $\ltsact {\stateone_n}{\actone}{\fillc
    \judgone {\stateone_{n+1}}}$, and similarly $\ltsact
      {\statetwo_n}{\actone}{\fillc \judgone {\statetwo_{n+1}}}$, and
      the result follows.
\end{varitemize}
\end{example}
\begin{example}
  We give now an example illustrating a not-so-obvious constraint in
  the definition of a substitution judgment. Suppose we take $\tuplone
  = (\tuplonea{\varone_1}, \tuplonea {\varone_1})$, and $\typtuplone =
  (\tuplonea{\typone}, \tuplonea{\typone})$. Then $\judgone =
  (\tuplonea{ \typone}, \tuplonea{}, \tuplone, \typtuplone)$ is a
  not a substitution judgment, since $\varone_1$ appear in the linear
  part of $\tuplone$, and we need non-linear variables to type it. The
  constraint is there to avoid the situation in which,
  $(\dirac{(\tuplonea{\valone}, \tuplonea{})}, (\tuplonea{\typone},
  \tuplonea{}))$ and $(\dirac{(\tuplonea{\valtwo}, \tuplonea{})},
  (\tuplonea{\typone}, \tuplonea{}))$ would be up-to bisimilar
  for every $\valone$ and $\valtwo$ of type $\typone$.
\end{example}
\subsubsection{Soundness of up-to technique}
Bisimulations up-to contexts would be useless without a correctness
result like the following one:
\begin{theorem}
\label{uptos}
  If $\relone$ is an
  $\varepsilon$-bisimulation up-to context, then $\relone\subseteq\relone^\varepsilon$.
\end{theorem}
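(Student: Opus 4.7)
The strategy is the classical one for up-to techniques: construct a bona fide $\varepsilon$-bisimulation $\widehat{\relone}$ containing $\relone$, and then conclude by the maximality of $\relone^\varepsilon$ established in Lemma \ref{largestbisim}. I propose to define $(\stateone,\statetwo) \in \widehat{\relone}$ to hold exactly when there is a finite family $(p_i,\judgone_i,\stateone_i,\statetwo_i)_{i\in I}$ with $\sum_i p_i \leq 1$, each $\judgone_i$ a substitution judgment of common target type $\typtuplone$, and $(\stateone_i,\statetwo_i) \in \relone$ of matching source type, such that $\stateone$ and $\statetwo$ are the corresponding convex combinations of $\fillc{\judgone_i}{\stateone_i}$ and $\fillc{\judgone_i}{\statetwo_i}$, further closed by reflexivity and symmetry. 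Choosing the identity substitution judgment with $I=\{\ast\}$ and $p_\ast=1$ immediately yields $\relone \subseteq \widehat{\relone}$.

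The weight clause of the bisimulation condition is then straightforward, since substitution preserves total mass of tuple distributions:
\[
\bigl\lvert \sumdistr \distrone - \sumdistr \distrtwo \bigr\rvert \;\leq\; \sum_{i\in I} p_i \cdot \bigl\lvert \sumdistr \distrone_i - \sumdistr \distrtwo_i \bigr\rvert \;\leq\; \varepsilon.
\]

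The heart of the argument is the transfer clause. Since the transition function of $\ltsbang$ is linear in the distribution component of its source state (directly from the definitions in Figure \ref{wltsexp}), it suffices to treat a single summand $(\fillc{\judgone}{\stateone_0},\fillc{\judgone}{\statetwo_0})$ with $(\stateone_0,\statetwo_0) \in \relone$. Given a transition $\ltsact{\fillc{\judgone}{\stateone_0}}{\actone}{\statethree}$, I analyse cases on the interplay between $\actone$ and $\judgone$. If the redex targeted by $\actone$ lies entirely within the skeleton provided by $\judgone$ (for instance an $\evalbang i$ whose $i$-th linear component is a $\bang\termthree$ already present in $\judgone$, or an applicative action $\actappltupl{\judgtwo}{i}$ whose applied value comes from $\judgtwo$ alone), then $\stateone_0,\statetwo_0$ are left untouched and the transition amounts to rewriting $\judgone$ into a new substitution judgment $\judgone'$; the pair $(\fillc{\judgone'}{\stateone_0},\fillc{\judgone'}{\statetwo_0})$ witnesses the required match in $\widehat{\relone}$. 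If instead $\actone$ genuinely probes $\stateone_0$, it corresponds to an honest action $\actone'$ on $\stateone_0$ in $\ltsbang$; applying the up-to-context hypothesis to $\stateone_0 \relone \statetwo_0$ yields $\ltsact{\statetwo_0}{\actone'}{\statetwo_0'}$ together with a decomposition of the targets into $\relone$-related pieces sitting under substitution judgments $\judgthree_k$, which we then pre-compose with the residual of $\judgone$ to exhibit the corresponding decomposition of $\statefour$.

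The main technical obstacle is the bookkeeping in this second case: one must verify that the composition of two substitution judgments is again a substitution judgment, and in particular that the linearity side-condition survives the reshuffling of variables induced by the $\remove{}/\add{}$ discipline already visible in Figure \ref{osctl}. The verification is routine but delicate, and must cover every shape of action ($\evalbang i$, $\actbang i$, $\actappltupl{\judgtwo}{i}$) crossed with every possible origin of the targeted position inside $\fillc{\judgone}{\stateone_0}$. Once $\widehat{\relone}$ has been established as an $\varepsilon$-bisimulation, the chain $\relone \subseteq \widehat{\relone} \subseteq \relone^\varepsilon$ gives the thesis.
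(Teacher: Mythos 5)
Your overall strategy (close $\relone$ under convex combinations of substitution-judgment instances and show the closure is an $\varepsilon$-bisimulation) is the textbook route for up-to techniques, but it is not the one the paper takes, and the step you defer as ``routine but delicate bookkeeping'' is precisely where it breaks down. The problem is not the composition of substitution judgments but your dichotomy in the transfer clause. A single action of $\ltsbang$ on a composite state $\fillc{\judgone}{\stateone_0}$ is not ``either a rewriting of the skeleton or one honest action on $\stateone_0$'': both Milner's action $\actbang i$ and the applicative action $\actappltupl{\judgtwo}{i}$ invoke the big-step semantics $\sem{\cdot}$ of a compound term built from skeleton material \emph{and} components of $\stateone_0$ (see Figure \ref{fig:lmc}). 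Evaluating such a term interleaves context-only reduction with an unbounded number of interactions with the inner tuple (applications of its linear components, copies of its exponential components, and so on), so there is in general no single action $\actone'$ on $\stateone_0$ whose target can be pre-composed with a ``residual of $\judgone$'' to recover the target of $\actone$. Worse, after each such interaction the up-to hypothesis only returns a decomposition under \emph{fresh} substitution judgments, so the boundary between skeleton and inner state must be renegotiated at every step, and this renegotiation does not terminate after finitely many steps in general.

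This is why the paper does not argue inside $\ltsbang$ at all. It moves to the tuple-context pairs $\ctsinfd$, whose small-step semantics $\osctnp$ decomposes one big-step evaluation into single context steps (labelled $\tau$) and single LMC actions; it introduces a re-splitting relation $\upto{}{}$ and the non-deterministic reduction $\uptor{}{}$ that allows the skeleton/state boundary to be redrawn after every step; it proves that the semantics computed along any infinite $\uptor{}{}$-sequence still projects to $\sem{\forget{\cdot}}$ (Proposition \ref{keylem}); from this it bounds the \emph{context} distance by $\varepsilon$ (Propositions \ref{keypropsoundupto} and \ref{propliftingupto}); and only then, via the completeness half of full abstraction (Theorem \ref{completness}) together with Lemma \ref{largestbisim}, does it conclude $\relone\subseteq\relone^\varepsilon$. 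If you want a direct construction in $\ltsbang$ you would essentially have to rebuild this machinery; as stated, your proof has a genuine gap at the transfer clause.
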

  The remaining of this section
  consists in the proof of Theorem \ref{uptos}. The proof is an extension of that of Theorem \ref{soundness}
  (although technically more involved).
  
 We first define a relation $\upto{\distrone}{\distrtwo}$ between
 distribution over $\ctsinfd$, which expresses the fact that
 $\distrtwo$ is obtained from $\distrone$ by changing the way we split
 our term into external environment and inside tuple:

\begin{definition}
Let be $\distrone$ and $\distrtwo$ two distributions over $\ctsinfd$. We write $\upto \distrone \distrtwo$ if:
\begin{varitemize}
\item $\distrone = \sum_{i \in I} \alpha_i \dirac{(\contone_i, \stateone_i)}$, $I$ countable set.
\item $\stateone_i$ is a $\typtuplone_i$-state.
\item For every $i$, there exist an open tuple $\tuplone_i $, a tuple type $\typtupltwo_i = (\ms \typone^i, \ms \typtwo^i)$, and $\statetwo_i$ a $\typtupltwo_i$-state, such that:
\begin{varitemize}
\item $(\ms \typone^i,\ms \typtwo^i, \tuplone^i, \typtuplone_i )$ is a substitution judgment,
\item $\stateone_i = \fillc {\judgone_i}{\statetwo_i} $
\item $\distrtwo = \sum_{i \in I} \alpha_i \cdot \dirac{\forget{\contone_i, ( \tuplone_i, \typtuplone_i)}, \statetwo_i}$ 
\end{varitemize}

\end{varitemize}
\end{definition}
Please observe that there are many possible distributions $\distrtwo$
that verify $\upto \distrone \distrtwo$.
We can express the fact that $\upto{}{}$ is a congruence relation that doesn't modify the underlying program,  and that moreover, the relation $\upto{}{}$ is designed to preserve normal forms with respect to $\osct{}{}$, by the following lemma:
\begin{lemma}
Let be $\distrone$, $\distrtwo$ such that $\upto \distrone \distrtwo$.
Then:
\begin{varitemize}
 \item $\forget \distrone = \forget \distrtwo$.
\item If $\supp(\distrone) \subseteq \nforms \ctsinfd$, it holds that $\supp (\distrtwo) \subseteq \nforms \ctsinfd$.
\end{varitemize}
\end{lemma}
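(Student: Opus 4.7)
The plan is to prove each of the two properties by unfolding the definition of $\upto{}{}$ and reducing it to a standard syntactic fact: associativity of capture-avoiding substitution for the first, and stability of open values under substitution for the second. I would begin by naming the data witnessing $\upto{\distrone}{\distrtwo}$ — an index set $I$, coefficients $\alpha_i$, tuple-context pairs $(\contone_i, \stateone_i)$, substitution judgments $\judgone_i = (\ms{\typone}^i, \ms{\typtwo}^i, \tuplone_i, \typtuplone_i)$, and $\typtupltwo_i$-states $\statetwo_i = (\distrthree_i, \typtupltwo_i)$ with $\typtupltwo_i = (\ms{\typone}^i, \ms{\typtwo}^i)$ and $\stateone_i = \fillc{\judgone_i}{\statetwo_i}$. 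I would also make explicit the mild abuse of notation in which $(\tuplone_i, \typtuplone_i)$ stands for the state carrying the Dirac distribution $\dirac{\tuplone_i}$, so that $\forget{\contone_i, (\tuplone_i, \typtuplone_i)}$ denotes a single open term. By linearity of $\forget{\cdot}$ over convex combinations — and, for the support claim, by decomposition of $\supp(\distrtwo)$ along $i \in I$ — both claims reduce to a pointwise statement about a single summand.

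For the first property, I would expand $\forget{\contone_i, \stateone_i}$ using the defining equation of $\forget{\cdot}$, after writing $\stateone_i = (\fillc{\judgone_i}{\distrthree_i}, \typtuplone_i)$. This produces a convex combination indexed by $\tupltwo \in \supp(\distrthree_i)$ whose $\tupltwo$-th atom is the closed term obtained by substituting the components of $\fillc{\judgone_i}{\tupltwo}$ into $\contone_i$. On the $\distrtwo$ side, $\forget{\contone_i, (\tuplone_i, \typtuplone_i)}$ is the single open term obtained by substituting the components of $\tuplone_i$ into the holes $\ms{\varone}, \ms{\vartwo}$ of $\contone_i$; a second application of $\forget{\cdot}$ then instantiates the residual free variables with the components of each $\tupltwo$. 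The two resulting closed terms agree pointwise by the usual compositionality of capture-avoiding substitution over disjoint sets of variables — which is precisely the structural reason the substitution judgment $\judgone_i$ was formulated with two disjoint layers of variables in the first place.

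For the second property, I would invoke the characterization~\eqref{carnf}: a tuple-context pair lies in $\nforms{\ctsinfd}$ iff its first component is an open value. Assuming $\supp(\distrone) \subseteq \nforms{\ctsinfd}$, each $\contone_i$ is of one of the four syntactic shapes $\abstr{\varthree}{\termone}$, $\abstrexp{\varthree}{\termone}$, $\bang{\termone}$, or $\vartwo_j$. In the first three cases the head constructor is untouched by substitution, so $\forget{\contone_i, (\tuplone_i, \typtuplone_i)}$ retains the same shape and is again an open value. The only non-trivial case is $\contone_i = \vartwo_j$, in which $\forget{\cdot}$ replaces $\vartwo_j$ by the $j$-th component of the linear part of $\tuplone_i$; by the standing restriction on $\judgsubs$ that every substitution judgment produces values in its linear part for every instantiation, this component is itself syntactically a value, so the resulting open term is still an open value. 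Invoking~\eqref{carnf} once more then yields $(\forget{\contone_i, (\tuplone_i, \typtuplone_i)}, \statetwo_i) \in \nforms{\ctsinfd}$ for every $i$.

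The principal obstacle I expect is purely notational: keeping the three layers of variables straight — the context variables $\ms{\varone}, \ms{\vartwo}$ exposed by $\contone_i$, the free variables of $\tuplone_i$ that the substitution judgment $\judgone_i$ names, and the indexing of the tuple $\tupltwo$ ranged over by $\distrthree_i$ — and checking carefully that the two orders of substitution really do commute in the way the definition tacitly assumes. Once this bookkeeping is under control, no deeper operational or semantic content is needed beyond associativity of substitution and the four-case inspection of open-value shapes.
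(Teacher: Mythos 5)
Your proposal is correct and follows essentially the same route as the paper, which simply observes that the lemma is a direct consequence of unfolding the definition of $\upto{}{}$; you have merely filled in the details (compositionality of substitution for the first item, and the case analysis on open-value shapes together with the standing restriction on $\judgsubs$ for the second) that the paper leaves implicit.
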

\begin{proof}
It is a direct consequence of the unfolding of the definition of $\upto{}{}$.
\end{proof}

Now, we are ready to define
a relation $\uptor{\stateonetd}{\distrone}$, that intuitively
corresponds to do a step $\osct{}{}$, and then potentially change the
way we split programs between internal part and external part.

\begin{definition}
We define a relation $\uptor{\stateonetd}{\distrone}$, where $\stateonetd \in \ctsinfd$, and $\distrone$ a finite distribution over $\ctsinfd$, by the following rule:
$$ \AxiomC{$\osct \stateonetd \distrtwo $}
\AxiomC{$\upto \distrtwo  \distrone $} 
 \BinaryInfC{$\uptor \stateonetd \distrone $} \DisplayProof$$
\end{definition}

We define below the lifting of $\uptor{}{}$ to
distribution over $\ctsinfd$:
\begin{definition}
We define a one-step relation reduction $\uptor \distrone \distrtwo$ on (finite) distributions over $\ctsinfd$ as follows:
$$ 
\AxiomC{$\distrone = \valct \distrone + \sum_{i \in I} \alpha_i \cdot \dirac{\stateonetd_i} $}
\AxiomC{$\uptor {\stateonetd_i} {\distrtwo_i} $}
 \BinaryInfC{$\uptor \distrone {\valct \distrone + \sum_{i \in I} \alpha_i \cdot \distrtwo_i} $} \DisplayProof$$
\end{definition}
Since $\uptor{}{}$ is non-confluent, we cannot use
it to define a semantics for $\ctsinfd$ as we did
for $\osct{}{}$. We are going to define a notion of semantics for $\ctsinfd$ depending on what infinite sequance of reduction we are considering:
 for $\stateonetd \in \ctsinfd$, we call \emph{infinite reduction sequence starting from $\stateonetd$} a sequence $\seqone = (\distrone_n)_{n \in \NN}$ of distributions over $\ctsinfd$, where $\distrone_0 = \dirac{\stateonetd}$, and $\uptor {\distrone_n}{\distrone_{n+1}}$ for every $n \in \NN$.  The idea is that every reduction sequence gave a different semantics for $\stateonetd$, but that we obtain the same distribution over terms for all of them if we use $\forget{\cdot}$ on them. Let us formalize this idea. 
\begin{definition}
\label{semctu}
If $\stateonetd \in \ctsinfd$, and $\seqone = (\distrone_n)_{n \in \NN}$ a reduction sequence starting from $\stateonetd$. We call \emph{semantics of $\stateonetd$ with respect to $\seqone$}, and we note $\semctu \stateonetd \seqone$ the distribution over $\ctsinfd$ given by $\semctu \stateonetd \seqone = \sup\{\valct {\distrone_n} \mid n \in \NN \}$.
\end{definition}
Please observe that Definition \ref{semctu} is well-posed since the  $\valct {\distrone_n}$ are an increasing sequence of distributions over normal forms.

We state below the analogue of the third point of Lemma~\ref{ctsem} for up-to bisimulation:
\begin{proposition}
\label{keylem}
Let be $\stateonetd \in \ctsinfd$, and an infinite reduction sequence $\seqone = \uptor{\distrone_0 = \dirac{\stateonetd}}{\uptor{\distrone_1 \ldots}{\distrone_n \ldots}}\,$. Then $\forget{\semctu {\stateonetd}\seqone} = \sem{\forget {\stateonetd}} $.
\end{proposition}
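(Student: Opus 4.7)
The plan is to establish the two inequalities $\forget{\semctu{\stateonetd}{\seqone}} \leq \sem{\forget{\stateonetd}}$ and $\sem{\forget{\stateonetd}} \leq \forget{\semctu{\stateonetd}{\seqone}}$ separately, mirroring the structure of the proof of the third item of Lemma~\ref{ctsem}, but carefully threading the $\upto{}{}$-rewriting through the argument. The key observation that makes everything work is the preceding lemma: $\upto{}{}$ preserves the forgetful image and preserves normal forms. Consequently, on the term side, each step $\uptor{\distrone_n}{\distrone_{n+1}}$ does only the ``genuine'' work performed by its underlying $\osct{}{}$ component, while the up-to rewriting only shuffles the context/tuple boundary.

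For the first inequality, I would argue by induction on $n$ that $\forget{\distrone_n} \leq \sem{\forget{\stateonetd}}$. The base case $n=0$ is immediate. For the inductive step, decompose $\distrone_n$ into its normal-form part $\valct{\distrone_n}$ and the remaining reducible mass; on the reducible part, each step $\uptor{}{}$ factors as an $\osct{}{}$ step followed by an $\upto{}{}$-rewriting, so by Lemma~\ref{aux1} the image under $\sem{\forget{\cdot}}$ is preserved, and the $\upto{}{}$-rewriting leaves $\forget{\cdot}$ unchanged. Then $\forget{\valct{\distrone_n}} \leq \sem{\forget{\stateonetd}}$ for every $n$, and passing to the supremum and using continuity of $\forget{\cdot}$ from Lemma~\ref{ctsem} yields the desired bound.

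For the second inequality, I would mirror the strategy of Proposition~\ref{dir2}. The technical heart is an up-to variant of Lemma~\ref{lemauxplus}: for every $\stateonetd \in \unary$ such that $\forget{\stateonetd}$ is not a value, and every step $\uptor{\dirac{\stateonetd}}{\distrtwo}$, given any derivation $\ssspn{\forget{\stateonetd}}{\distrone}{n}$ with $n \geq 1$, one can exhibit a finite $\distrthree \lfin \unaryd{\distrtwo}$ together with a derivation $\ssspn{\forget{\distrthree}}{\distrone}{m}$ for some $m < n$. The proof proceeds by case analysis on the label $\actone$ of the underlying $\osct{}{}$ step, exactly as in Lemma~\ref{lemauxplus}, using Lemma~\ref{aux3} for the applicative and Milner cases; the extra $\upto{}{}$-rewriting layered on top is harmless because it preserves $\forget{\cdot}$. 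An induction on $n$ then shows that for every $\sssp{\forget{\stateonetd}}{\distrone}$ one can find $m$ with $\distrone \leq \forget{\valct{\distrone_m}}$; Lemma~\ref{aux2} generalises this from $\unary$ to all of $\ctsinfd$, and taking the supremum over all finite approximations gives $\sem{\forget{\stateonetd}} \leq \forget{\semctu{\stateonetd}{\seqone}}$.

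The main obstacle will be that $\uptor{}{}$ is non-confluent and the statement must hold for an \emph{arbitrary} infinite reduction sequence $\seqone$: we do not get to choose the sequence adaptively to match the term-side derivation we are trying to cover. The argument therefore must show that whatever $\upto{}{}$-rewriting is performed at step $n+1$, it still enables the same forward progress on the term side as the bare $\osct{}{}$ step would. This is exactly what the preservation lemma for $\upto{}{}$ provides, but care is required in the applicative case, where the substitution judgments introduced by the up-to rewriting might re-group terms that the original derivation in $\sssp{}{}$ reduces independently; here one must invoke Lemma~\ref{aux3} relative to the rewritten context, not the original one, to extract a compatible family of sub-derivations.
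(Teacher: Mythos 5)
Your overall architecture matches the paper's: the $\leq$ direction via an up-to analogue of Lemma~\ref{aux1} (one step of $\uptor{}{}$ on distributions preserves $\sem{\forget{\cdot}}$, since $\upto{}{}$ leaves $\forget{\cdot}$ unchanged), and the $\geq$ direction via an up-to analogue of Lemma~\ref{lemauxplus}. The first half is fine. The problem is the key lemma you propose for the second half: you assert that for \emph{every single} step $\uptor{\dirac{\stateonetd}}{\distrtwo}$ and every derivation $\ssspn{\forget{\stateonetd}}{\distrone}{n}$ there is a finite $\distrthree \lfin \unaryd{\distrtwo}$ with $\ssspn{\forget{\distrthree}}{\distrone}{m}$ for some $m<n$. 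That is false: a step whose underlying $\osct{}{}$ action is $\evalbang j$, or $\actbang j$ applied to an exponential component that is already a value, does not touch the underlying term at all, so $\forget{\distrtwo}=\forget{\dirac{\stateonetd}}$ and no strictly smaller derivation can exist. In the proof of Lemma~\ref{lemauxplus} this is handled by passing to the transitive closure $\osctplus{}{}$, i.e.\ by \emph{choosing} to take several steps until a redex of the underlying term is actually fired. Here you cannot choose --- the sequence $\seqone$ is given and arbitrary --- so the correct statement must instead assert the existence of an index $i$ (reached after finitely many steps \emph{of the given sequence}, which is guaranteed because only finitely many consecutive steps can avoid reducing the term) at which the strict decrease holds relative to $\distrone_i$. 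This is exactly how the paper states its variant of Lemma~\ref{lemauxplus}, and it is the point your proposal elides even though you correctly flag non-confluence as the main obstacle.

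A second, related omission: for the induction on the size of the derivation to go through, the extracted $\distrthree$ must come equipped with its own infinite reduction sequence $(\distrthree_j)_{j\geq i}$ dominated by the original one, i.e.\ $\unaryd{\distrthree_j}\leq\unaryd{\distrone_j}$ for all $j$, so that the recursive call continues along the residual of the \emph{same} $\seqone$ and the normal forms collected in the limit are genuinely below $\semctu{\stateonetd}{\seqone}$ rather than below the semantics induced by some other, more convenient sequence. Your sketch jumps from the (false) single-step lemma directly to ``an induction on $n$ then shows\dots'' without this bookkeeping, which is precisely the extra content the up-to setting demands over the plain soundness proof.
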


\begin{proof}
  The proof is similar to the one of Lemma~\ref{ctsem}, and is based on the following two lemmas, that are a variant of Lemma~\ref{aux1} and \ref{lemauxplus}:
the following first lemma is used to show that $\forget{\semctu{\stateonetd}\seqone} \leq \sem{\forget{\stateonetd}}$.

  \begin{lemma}
    Let be $\distrone$, $\distrtwo$ such that $\uptor \distrone \distrtwo$. Then $\sem{\forget  \distrone} = \sem{\forget \distrtwo}$.
    \end{lemma}
  \begin{equation}\label{diag:semanticsupto}
\includegraphics[scale=1]{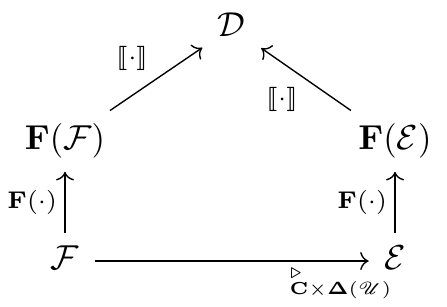}
  \end{equation}
That other lemma is used to show the other direction, namely that $\forget{\semctu{\stateonetd}\seqone} \geq \sem{\forget{\stateonetd}}$.
  \begin{lemma}
    Let be $\stateonetd \in \ctsinfd$ such that $\forget \stateonetd \not \in \valset$, and an infinite reduction sequence $\uptor{\distrone_0 = \dirac{\stateonetd}}{\uptor{\distrone_1 \ldots}{\distrone_m \ldots}}\,$. Then there exists $i \in \NN$, such that for every $\distrtwo$ and $n \in \NN$ with $\ssspn {\forget \stateonetd} \distrtwo n$, there exists a finite $\distrthree \in \distrs \unary$, and an infinite reduction sequence
    $\uptor{\distrthree_i = \distrthree}{\uptor{\distrthree_{i+1} \ldots}{\distrthree_m \ldots}}$, verifying $\forall i, \, \unaryd{\distrthree_i} \leq \unaryd {\distrone_i} $, and such that moreover $\ssspn {\forget \distrthree} \distrtwo m$, with $m <n$.
  \end{lemma}
  It can be expressed by the following diagram:
  \begin{equation}\label{diag:variantupto}
    \includegraphics{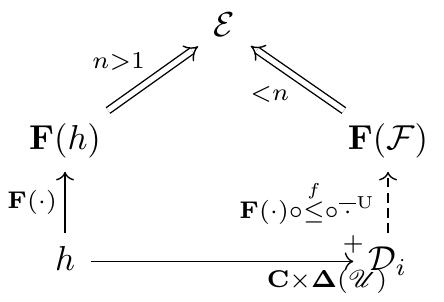}
\end{equation}
\end{proof}
Observe that, even if $\relone$ is an up-to bisimulation, it does not necessarily means that $\lifting \relone^{\contextst}$ preserves $\osctl{}{}{}$. We want to express the fact that it preserves $\uptor{}{}$, but we have to be more careful, since this reduction relation is non-deterministic. To deal with it, we introduce a more general preservation notion:
\begin{definition}
 We say that a relation $\relone$ over $\ctsinfd$ is \emph{weakly preserved by $\uptor{}{}$} if, for any $\stateonetd, \statetwotd \in \ctsinfd$ such that $\stateonetd \relone \statetwotd$, and $\stateonetd \not \in \nforms\ctsinfd$, then there exists $\distrone$ and $\distrtwo$ such that $\uptor \stateonetd \distrone$, $\uptor \statetwotd \distrtwo$, and $\distrone \relone \distrtwo$.
\end{definition}
Our notion of weak preservation is enough to state the following variant of Proposition~\ref{keypropsound}.
\begin{proposition}\label{keypropsoundupto}
  Let $\contsone $ be a set of tuple contexts, $\stateone, \statetwo$ two $\typtuplone$-states and $\relone$ a reflexive and symmetric relation  weakly preserved by $\uptor{}{}{}$, $\varepsilon$-bounding, and $\contsone$-closed with respect to $\stateone$ and $\statetwo$. Then it holds that $\appl {\mctxbang \contsone} \stateone \statetwo  \leq \varepsilon$. 
\end{proposition}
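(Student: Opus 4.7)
The plan is to mirror the proof of Proposition~\ref{keypropsound} step by step, replacing the deterministic $\osctl{}{}{}$ by the non-deterministic $\uptor{}{}{}$ and the semantics $\semct{}$ by the sequence-indexed variant $\semctu{}{}$ of Definition~\ref{semctu}. First I would establish the analog of Lemma~\ref{prboundrel}: whenever $\stateonetd \relone \statetwotd$ for a relation $\relone$ satisfying the hypotheses, there exist infinite reduction sequences $\seqone_1 = (\distrone_n)_{n}$ starting from $\stateonetd$ and $\seqone_2 = (\distrtwo_n)_{n}$ starting from $\statetwotd$ such that $\lvert\sumdistr{\forget{\semctu{\stateonetd}{\seqone_1}}} - \sumdistr{\forget{\semctu{\statetwotd}{\seqone_2}}}\rvert \leq \varepsilon$.

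The sequences $\seqone_1$ and $\seqone_2$ would be built simultaneously by induction on $n$, maintaining as invariant that the lifting $\lifting{\relone}$ of $\relone$ to distributions relates $\distrone_n$ with $\distrtwo_n$. Starting from $(\dirac{\stateonetd}, \dirac{\statetwotd})$, at step $n$ I would decompose each distribution along a coupling witnessing the lifting into its common normal-form part plus paired non-normal contributions $\sum_i \alpha_i \dirac{\stateonetd_i}$ and $\sum_i \alpha_i \dirac{\statetwotd_i}$; weak preservation then yields, for each $i$, reductions $\uptor{\stateonetd_i}{\distrthree_i}$ and $\uptor{\statetwotd_i}{\distrfour_i}$ with $\distrthree_i \lifting{\relone} \distrfour_i$, which recombine via the definition of $\uptor{}{}{}$ on distributions into the next pair. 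The $\varepsilon$-bounding property lifts to $\lifting{\relone}$, so $\lvert\sumdistr{\valct{\distrone_n}} - \sumdistr{\valct{\distrtwo_n}}\rvert \leq \varepsilon$ at every step, and the bound survives passage to the supremum.

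Having this, Proposition~\ref{keylem} gives $\forget{\semctu{\stateonetd}{\seqone_1}} = \sem{\forget{\stateonetd}}$ and the same on the other side, so in particular the total masses agree. Thus $\lvert\sumdistr{\sem{\forget{\stateonetd}}} - \sumdistr{\sem{\forget{\statetwotd}}}\rvert \leq \varepsilon$ for every $(\stateonetd, \statetwotd) \in \relone$. Specialising to pairs $((\contone, \stateone), (\contone, \statetwo))$ with $(\contone, \typtuplone, \typthree) \in \contsone$, available by $\contsone$-closure, and then taking the supremum over such contexts in Definition~\ref{def:metr_ctx_tupl} yields the conclusion $\appl{\metrctx{\contsone}}{\stateone}{\statetwo} \leq \varepsilon$.

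The main obstacle is the non-determinism of $\uptor{}{}{}$: unlike for $\osctl{}{}{}$, many distributions $\distrone$ may satisfy $\uptor{\stateonetd}{\distrone}$ because of the freedom in $\upto{}{}{}$ to reshuffle between context and inner tuple, so weak preservation only gives existence of compatible reducts. The parallel construction of $\seqone_1$ and $\seqone_2$ must therefore make consistent choices on both sides that also cohere with the accumulated value parts demanded by Definition~\ref{semctu}. A further delicate point is that the lifting of $\relone$ could in principle couple a non-normal state with a normal one, a case not directly covered by weak preservation; one must either argue that the iteration starting from atomic couplings produces only balanced couplings, or refine the construction so that normal sides are left stationary while the others continue to reduce.
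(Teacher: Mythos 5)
Your proposal is correct and follows essentially the route the paper intends: the paper gives no explicit proof of this statement, presenting it as a variant of Proposition~\ref{keypropsound} to be reassembled from Definition~\ref{semctu}, Proposition~\ref{keylem}, and the lemma stating that $\upto{}{}$ preserves $\forget{\cdot}$ and normal forms, and your parallel construction of the two reduction sequences is exactly that reassembly. Two points can be tightened. First, the loose end you flag (a coupling pairing a normal form with a non-normal one) cannot arise: if $\stateonetd \relone \statetwotd$ with $\stateonetd \not\in \nforms\ctsinfd$, weak preservation requires some $\distrtwo$ with $\uptor{\statetwotd}{\distrtwo}$, which is only possible when $\statetwotd \not\in \nforms\ctsinfd$ since $\uptor{}{}$ is built on top of $\osctnp$; the symmetry of $\relone$ gives the converse implication, so coupled states are always both normal or both non-normal, and the first of your two suggested fixes is the right one. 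Second, the per-stage quantity to bound is not $\lvert \sumdistr{\valct{\distrone_n}} - \sumdistr{\valct{\distrtwo_n}}\rvert$ (these two masses are in fact equal once couplings are balanced) but rather $\lvert \sumdistr{\forget{\valct{\distrone_n}}} - \sumdistr{\forget{\valct{\distrtwo_n}}}\rvert$, which is at most $\varepsilon$ by applying the $\varepsilon$-bounding hypothesis pairwise to the coupled normal forms; this bound is preserved in the limit because the value parts increase along the sequences, after which Proposition~\ref{keylem} and $\contsone$-closure conclude as you describe.
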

An $\varepsilon$ up-to bisimulation verifies the condistions of Proposition\ref{keypropsoundupto}, as stated by the analogue of Proposition \ref{proplifting} below.
\begin{proposition}\label{propliftingupto}
Let be $\relone$ an up-to bisimulation. Then $\lifting \relone^{\contextst}$ is  weakly preserved by $\uptor{}{}{}$ and $\varepsilon$-bounding, and $\contextst$-closed with respect to every $\stateone, \statetwo$ such that $\stateone \relone \statetwo$.
  \end{proposition}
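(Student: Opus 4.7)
The plan is to adapt the three-part argument used for Proposition \ref{proplifting}. The $\contextst$-closedness is immediate from the very definition of $\lifting \relone^{\contextst}$, and the $\varepsilon$-bounding property is inherited directly from the first clause of the up-to-context bisimulation condition: the weight of $\forget{(\contone, \stateone)}$ differs from that of $\forget{(\contone, \statetwo)}$ by exactly $\lvert \sumdistr{\distrone} - \sumdistr{\distrtwo}\rvert$ when $\stateone = (\distrone, \typtuplone)$ and $\statetwo = (\distrtwo, \typtuplone)$, which is bounded by $\varepsilon$ by hypothesis. Neither of these two points uses anything about how $\uptor{}{}$ moves, so they transfer from the plain case unchanged.

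For the weak preservation of $\lifting \relone^{\contextst}$ by $\uptor{}{}$, I would start from a pair $(\contone, \stateone), (\contone, \statetwo)$ with $\stateone \relone \statetwo$ and $(\contone, \stateone) \notin \nforms \ctsinfd$. By the characterisation \eqref{carnf}, $\contone$ is not an open value, so exactly one rule of Figure \ref{osctl} applies; the label $\actone$ it fires depends only on $\contone$ and on the type part of $\stateone$, hence is shared with the reduction from $(\contone, \statetwo)$. The case analysis mirrors that of Proposition \ref{proplifting}. The $\tau$-cases (pure-context $\beta$-steps and probabilistic sums) are immediate: the state component is untouched, the $\osctnp$-reducts already live in $\lifting \relone^{\contextst}$, and taking $\upto{}{}$ to be the identity relation yields the required $\uptor{}{}$ matching.

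The cases that actually exercise the up-to-context hypothesis are $\actbang j$, $\evalbang j$, and $\actappltupl{\judgtwo}{j}$. Each fires the corresponding $\ltsbang$-transition, taking $\stateone$ to some $\stateone' = (\distrone_\star, \typtupltwo)$ and $\statetwo$ to $\statetwo' = (\distrtwo_\star, \typtupltwo)$, and turning the tuple-context pair into $(\contone', \stateone')$ and $(\contone', \statetwo')$ for a common $\contone'$. Applying the up-to-context clause to $\stateone' \relone \statetwo'$ supplies a joint decomposition $\distrone_\star = \sum_i p_i \cdot \fillc{\judgone_i}{\distrthree_i}$ and $\distrtwo_\star = \sum_i p_i \cdot \fillc{\judgone_i}{\distrfour_i}$ with $(\distrthree_i, \typtupltwo_i) \relone (\distrfour_i, \typtupltwo_i)$. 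The role of $\upto{}{}$ is precisely to absorb each substitution judgment $\judgone_i$ into the context component, yielding the $\uptor{}{}$-reducts $\distrone'' = \sum_i p_i \cdot \dirac{(\forget{\contone', (\tuplone_i, \typtupltwo)}, (\distrthree_i, \typtupltwo_i))}$ and its $\statetwo$-side analogue $\distrtwo''$. By construction these share context components while their state components are $\relone$-related, giving $\distrone'' \,\lifting \relone^{\contextst}\, \distrtwo''$.

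I expect the main obstacle to lie in the bookkeeping around linearity and typing. I will need to check that composing $\contone'$ with each $\judgone_i$ really produces a tuple context in $\contextst$: the variables consumed by $\judgone_i$ must line up with the types of the holes available in $\contone'$, linear variables must not be duplicated across the absorption, and the typing judgments of Definition \ref{validcont} must remain derivable after the rewiring. The applicative case $\actappltupl{\judgtwo}{j}$ is the most delicate, since the reduction already removes certain linear components via the $\remove{\cdot}$ operation and introduces a fresh hole, so the re-absorption has to commute with that reshuffling. Once this combinatorial bookkeeping is in place, the argument closes in exactly the shape of the plain bisimulation proof.
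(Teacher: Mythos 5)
The paper never actually writes out a proof of this proposition: it is asserted as ``the analogue of Proposition~\ref{proplifting}'' and used directly in the proof of Theorem~\ref{uptos}, so your proposal is being measured against an intended rather than a written argument. Your route is the intended one, and it is the natural completion: $\contextst$-closedness and $\varepsilon$-bounding are immediate from the lifting and from the first clause of the up-to definition; the fired label depends only on the context and the tuple type, so both sides move with the same action; the $\tau$-cases close with the identity instance of $\upto{}{}$; and the substantive cases feed the decomposition $\sum_i p_i\cdot\fillc{\judgone_i}{\distrone_i}$ supplied by the up-to clause into the $\upto{}{}$ component of $\uptor{}{}$. This is exactly where the up-to hypothesis must be consumed.

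One point needs more than the bookkeeping caveat you give, because the absorption step does not literally typecheck against the definition of $\upto{}{}$. After the $\osctl{}{}{}$ step you hold the single Dirac $\dirac{(\contone', \stateone')}$ with $\stateone' = (\sum_i p_i\cdot\fillc{\judgone_i}{\distrone_i}, \typtuplone)$, whereas $\upto{}{}$ re-splits each component $(\contone_i,\stateone_i)$ of a distribution over $\ctsinfd$ using \emph{one} substitution judgment per component, demanding $\stateone_i = \fillc{\judgone_i}{\statetwo_i}$ exactly rather than as a convex combination. To reach $\sum_i p_i\cdot\dirac{(\forget{\contone',(\tuplone_i,\typtuplone)},(\distrone_i,\typtupltwo_i))}$ you must first replace $\dirac{(\contone',\stateone')}$ by $\sum_i p_i\cdot\dirac{(\contone',(\fillc{\judgone_i}{\distrone_i},\typtuplone))}$, and these are \emph{different} distributions over $\ctsinfd$ (they agree only after $\forget{\cdot}$, or after $\unaryd{(\cdot)}$ in the spirit of Lemma~\ref{aux2}). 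A complete proof has to say explicitly whether it reads $\upto{}{}$ as additionally permitting this splitting of a state's internal distribution, or routes the argument through the $\unaryd{(\cdot)}$ normalisation; this is arguably a defect of the paper's definitions rather than of your argument, but it cannot be left implicit. The remaining typing and linearity bookkeeping you defer (that $\forget{\contone',(\tuplone_i,\typtuplone)}$ is again a tuple context of the right type, including through the $\remove{\cdot}$ reindexing in the applicative case) is genuinely required but routine, and is equally elided in the paper's own treatment of Proposition~\ref{proplifting}.
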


We are now ready to do 
the proof of Theorem \ref{uptos}. Let be $\relone$ an $\varepsilon$ up-to bisimulation, ans $\stateone, \statetwo \in \ltsstates$ such that $\relate \relone \stateone \statetwo$. Using our completness result stated in Theorem~\ref{completness}, we see that it is enough to show that $\appl {\mctxbang \contextst} \stateone \statetwo  \leq \varepsilon$. Propositions~\ref{keypropsoundupto} and~\ref{propliftingupto} give us immediately the result.

\begin{example}\label{exa:nontrivial}
  We can exploit the soundness of up-to bisimulation to obtain the
  contextual distance for our running example, and conclude that
  $\appl {\mctxbang{\bang{(\tarr \typtwo \typtwo)}}}
  {\termone_{\varepsilon}}{\termone_{\mu}} =
  \sup_{n\in\NN} \lvert \varepsilon^n - \mu^n \rvert$.
  The context distance between $\termone_{\varepsilon}$ and
  $\termone_{\mu}$ is thus \emph{strictly} between $0$ and
  $1$ whenever $\lvert\varepsilon-\mu\rvert$
\end{example}

\section{Probabilistic $\lambda$-Calculi, in Perspective}\label{sect:nonlinear}
The calculus $\LBANGOR$ we analyzed in this paper is, at least
apparently, nonstandard, given the presence of parallel disjunction,
but also because of the linear refinement it is based on. In this
section, we will reconcile what we have done so far with calculi in
the literature, and in particular with untyped probabilistic
$\lambda$-calculi akin to those studied, e.g., in
\cite{DalLagoSangiorgiAlberti2014POPL,CrubilleDalLago2014ESOP}.

We consider a language $\LOPLUS$ defined by the following
grammar:
$$
\termone \in \LOPLUS \bnf \varone \midd \termone \termone  \midd
\abstr \varone \termone  \midd \psum \termone  \termone.
$$ 
\subsection{On Stable Fragments of $\lmcbang$.}
Our objective in this section is to characterize various notions
of context distance for $\LOPLUS$ by way of appropriate embeddings
into $\LBANG$, and thus by the LMC $\lmcbang$. It is quite
convenient, then, to understand when any fragment of
$\lmcbang$ is sufficiently \emph{robust} so as to be somehow
self-contained:
\begin{definition}
  We say that the pair $(\fstone, \flabels)$, where $\fstone \subseteq
  \statesexp$, and $\flabels \subseteq \typtuples \times \labelsexp$ is
  a \emph{stable fragment of $\lmcbang$} iff for every pair
  $(\typtuplone,\actone) \in \flabels$, for every $\typtuplone$-state
  $\stateone$, and for every $\statetwo \in
  \mcstates$ such that $\probexp(\stateone, \actone, \statetwo)>0$, it
  holds that $\statetwo \in \fstone$.
\end{definition}
Using a stable fragment of $\lmcbang$, we can restrict the WLTS
$\ltsbang$ in a meaningful way. The idea is that we only consider some
of the states of $\ltsbang$, and we are able to choose the
possible actions depending on the type of the state of $\ltsbang$ we
consider.
\begin{definition}
If $\flmone = (\fstone, \flabels)$ is a stable fragment of $\lmcbang$,
we define a WLTS by $\ltsf{\flmone} = (\ltsfst{\ltsf{\flmone}},
\ltsflabels{\ltsf{\flmone}}, \ltsftrans{\flmone},w_{{\flmone}})$, as
$$
\ltsfst{\ltsf{\flmone}} = \bigcup\nolimits_{\typtuplone \in \typtuples} \distrs{\{\tuplone \mid (\tuplone, \typtuplone)\in \fstone \}}\times \{ \typtuplone\}; \qquad
\ltsflabels{\ltsf{\flmone}} = \bigcup\nolimits_{(\typtuplone, \actone) \in\flabels}\{\actone\}
\cup\typtuples;$$
$$\ltsactf{}{\cdot}{}{\flmone} = \ltsact{}{\cdot}{}\cap {\{((\distrone, \typtuplone), \actone, \statetwo) \mid \supp(\distrone)\subseteq \fstone, \, (\typtuplone, \actone) \in \flabels  \}};
$$
and $w_{{\flmone}}$ is defined as expected.
\end{definition}
We want to be able to define a notion of distance on a \emph{fragment} of the original
language $\LBANG$, so that it verifies the soundness property for a
\emph{restricted} set of contexts. To do that, we need the restricted set of
contexts $\contsone$ to be preserved by the stable fragment:
\begin{definition}
  Let $\flmone = (\fstone, \flabels)$ be a stable fragment of
  $\lmcbang$. Let $\contsone$ be a set of tuple contexts. We say that
  $\contsone$ \emph{is preserved by $\flmone$}, if the following
  holds: for any $(\contone, \typtuplone,\typthree ) \in \contsone$
  that is not an open value and any $\typtuplone$-state
  $\stateone$ in $\ltsfst{\ltsf{\flmone}}$, there exists $\actone$ such that
  $(\typtuplone,\actone) \in \flabels \bigcup (\typtuples \times \{
  \tau \})$, $\osctl{(\contone, \stateone)}{\actone}{\distrtwo}$, and
  moreover:
  $$ 
  \supp(\distrtwo)\subseteq \bigcup\nolimits_{\typtupltwo \in
    \typtuples} \{(\conttwo,\statetwo) \mid \statetwo \text{ a
  }\typtupltwo \text{-state} \wedge \exists \typfour \text{
    s.t. }(\conttwo, \typtupltwo,\typfour) \in \contsone \} 
  $$
\end{definition}
We are now able to provide guarantees that the contextual distance
$\metrctx{\contsone}$ with respect to our restricted set of
contexts $\contsone $ is smaller that the distance defined on
the LTS $\ltsf{\flmone}$ induced by our stable fragment
$\flmone$. In the following, we assume to have fixed a
  stable fragment $\flmone = (\fstone, \flabels)$ of $\lmcbang$, and
  $\contsone$ a set of tuple contexts preserved by $\flmone$.

\begin{proposition}\label{propofrstclos}
Let be $\relone$ an $\varepsilon$-bisimulation on $\ltsf \flmone$. Then $\lifting \relone^{\contsone}$ is preserved by $\osctl{}{}{}$, $\varepsilon$-bounding, and $\contsone$-closed with respect to every $\stateone, \statetwo$ such that $\relate \relone \stateone \statetwo$. 
\end{proposition}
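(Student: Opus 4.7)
The plan is to adapt the proof of Proposition~\ref{proplifting} to the restricted setting, showing the three required properties in turn. The three conditions---$\contsone$-closure, $\varepsilon$-bounding, and preservation by $\osctl{}{}{}$---will be handled separately, with the first two being essentially routine and the third requiring careful case analysis.

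First I would observe that $\contsone$-closure follows directly from the definition of $\lifting{\relone}^{\contsone}$: whenever $\relate{\relone}{\stateone}{\statetwo}$ and $(\contone, \typtuplone, \typthree) \in \contsone$, the pairs $(\contone, \stateone)$ and $(\contone, \statetwo)$ are related by construction. For $\varepsilon$-bounding, the key observation is that for any state $(\distrone, \typtuplone) \in \ltsfst{\ltsf{\flmone}}$ and any context $\contone$, the total mass of $\forget{\contone, (\distrone, \typtuplone)}$ is $\sumdistr{\distrone}$ (since the forgetting operation preserves total mass). Hence, given that $\relone$ is an $\varepsilon$-bisimulation on $\ltsf{\flmone}$, the weight-bound condition for $\relone$ transfers directly to the lifted relation.

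The main work lies in preservation by $\osctl{}{}{}$, which I would establish by case analysis on the reduction step. Let $(\contone, \stateone) \lifting{\relone}^{\contsone} (\contone, \statetwo)$ and suppose $\osctl{(\contone, \stateone)}{\actone}{\distrone}$. Since $(\contone, \typtuplone, \typthree) \in \contsone$ and the configuration is not a normal form (so $\contone$ is not an open value), the assumption that $\contsone$ is preserved by $\flmone$ ensures that the unique action $\actone$ driving this reduction satisfies $(\typtuplone, \actone) \in \flabels \cup (\typtuples \times \{\tau\})$, and that every context in the support of $\distrone$ remains in $\contsone$. For $\tau$-actions (linear or exponential $\beta$-reduction, or probabilistic choice), the reduction does not consult the state at all, so the same reduction step applies to $(\contone, \statetwo)$, producing pairs in which the state-component is unchanged and hence still related by $\relone$; closure within $\contsone$ then yields the lifted relation. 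For non-$\tau$ actions ($\actbang j$, $\evalbang j$, or $\actappltupl{\judgone}{j}$), the reduction of $(\contone, \stateone)$ has as premise a transition $\ltsactf{\stateone}{\actone}{\statethree}{\flmone}$ in $\ltsf{\flmone}$; since $\relone$ is an $\varepsilon$-bisimulation on this very WLTS, it yields a matching transition $\ltsactf{\statetwo}{\actone}{\statefour}{\flmone}$ with $\statethree \relone \statefour$. The corresponding $\osctl{}{}{}$-step from $(\contone, \statetwo)$ then produces $(\conttwo, \statefour)$, with $\conttwo$ determined by $\contone$ and $\actone$ alone (independently of the underlying state) and hence coinciding with the context-component produced on the $\stateone$-side.

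The main obstacle is the simultaneous bookkeeping of the two restrictions: the contexts must remain inside $\contsone$ after each $\osctl{}{}{}$-step, and the triggering actions must belong to $\flabels$ so that the bisimulation hypothesis on $\ltsf{\flmone}$ can be invoked. Both constraints are precisely what the preservation hypothesis on $\contsone$ by $\flmone$ is designed to provide, so once this hypothesis is applied at the right point in each case of the analysis, the rest of the argument mirrors that of Proposition~\ref{proplifting}.
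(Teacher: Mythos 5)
Your proposal is correct and follows essentially the same route as the paper's (much terser) proof: $\contsone$-closure from the definition of the lifting, $\varepsilon$-bounding from the weight condition of the bisimulation together with the fact that $\forget{\cdot}$ preserves total mass, and preservation by $\osctl{}{}{}$ via the case split on $\tau$ versus non-$\tau$ actions, using that the action and resulting context are determined by $\contone$ alone and that the preservation of $\contsone$ by $\flmone$ keeps both the contexts in $\contsone$ and the actions in $\flabels$. No gaps.
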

\begin{proof}
\begin{varitemize}
\item The fact that $\lifting\relone^{\contsone}$ is $\contsone$-closed is a direct consequence of the definition of lifting.
\item The fact that it is $\varepsilon$-bounding is a direct consequence of the definition of a $\varepsilon$-bisimulation.
\item It is preserved by $\osctl{}{}{} $: it is indeed guarantee by our definition of preservation of $\contsone$ by $\flmone$.
\end{varitemize}
\end{proof}

\begin{proposition}\label{frtrs}
 Let $\flmone = (\fstone, \flabels)$ be a stable fragment of
 $\lmcbang$, $\contsone$ a set of tuple contexts preserved by $\flmone$,
and $\stateone, \statetwo \in \ltsfst{\ltsf{\flmone}}$.
Then $
  \appl{\metrctx{\contsone}}{\stateone}{\statetwo}\leq \appl {\metrtr
    {\ltsf{\flmone}}}{\stateone}{\statetwo}. $
\end{proposition}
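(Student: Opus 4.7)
The plan is to derive Proposition~\ref{frtrs} by combining Proposition~\ref{propofrstclos} with (an obvious variant of) Proposition~\ref{keypropsound}, the latter giving us a bound on $\metrctx{\contsone}$ out of any reflexive symmetric relation that is preserved by $\osctl{}{}{}$, is $\varepsilon$-bounding, and is $\contsone$-closed with respect to $\stateone, \statetwo$.

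More precisely, let $\varepsilon = \appl{\metrtr{\ltsf{\flmone}}}{\stateone}{\statetwo}$. By the analogue of Lemma~\ref{largestbisim} applied to the WLTS $\ltsf{\flmone}$, there exists an $\varepsilon$-bisimulation $\relone$ on $\ltsf{\flmone}$ such that $\relate{\relone}{\stateone}{\statetwo}$. Consider its contextual lifting $\lifting{\relone}^{\contsone}$. By Proposition~\ref{propofrstclos}, which relies on the fact that $\contsone$ is preserved by $\flmone$ (to ensure that reducing a tuple-context pair $(\contone, \statethree)$ with $(\contone, \typtuplone, \typfour) \in \contsone$ and $\statethree$ a $\typtuplone$-state in $\ltsfst{\ltsf{\flmone}}$ always produces again tuple-context pairs whose context lies in $\contsone$, and whose state lies in $\ltsfst{\ltsf{\flmone}}$), the relation $\lifting{\relone}^{\contsone}$ is preserved by $\osctl{}{}{}$, is $\varepsilon$-bounding, and is $\contsone$-closed with respect to $\stateone$ and $\statetwo$.

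Now I would invoke (the $\contsone$-restricted variant of) Proposition~\ref{keypropsound}: for any reflexive and symmetric relation on $\ctsinfd$ which is preserved by $\osctl{}{}{}$, is $\varepsilon$-bounding, and is $\contsone$-closed with respect to $\stateone$ and $\statetwo$, one has $\appl{\metrctx{\contsone}}{\stateone}{\statetwo} \leq \varepsilon$. The proof of this variant is the very same as that of Proposition~\ref{keypropsound}, via Lemma~\ref{prboundrel} and the correspondence $\sem{\forget{\cdot}} = \forget{\semct{\cdot}}$ of Lemma~\ref{ctsem}, which are purely about the operational machinery on $\ctsinfd$ and do not depend on whether the contexts considered range over all of $\contextst$ or only over a subset closed under $\osctl{}{}{}$-reduction. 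Applying it to $\lifting{\relone}^{\contsone}$ yields $\appl{\metrctx{\contsone}}{\stateone}{\statetwo} \leq \varepsilon$, which is the thesis.

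The only delicate point — and the reason the stability and preservation hypotheses were introduced — is making sure that $\lifting{\relone}^{\contsone}$ is really preserved by $\osctl{}{}{}$: the reducts of $(\contone, \statethree)$ with $\statethree$ a $\typtuplone$-state in $\fstone$ might in principle produce contexts outside $\contsone$ or states outside $\fstone$, in which case the relation between the residuals could not be established from $\relone$ (since $\relone$ only relates states of $\ltsf{\flmone}$, and only $\actone$-transitions with $(\typtuplone,\actone)\in\flabels$ or $\actone$ a type label are available). Preservation of $\contsone$ by $\flmone$ guarantees exactly that the only actions whose $\osctl{}{}{}$-reduct we need to mimic are the ones allowed by $\flabels$ (plus $\tau$), and that the residual pairs still lie in the same restricted universe. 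Once this is in place, the rest of the argument is a routine transcription of the proof of soundness to the restricted setting.
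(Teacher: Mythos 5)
Your proof is correct and follows essentially the same route as the paper's: take the largest $\varepsilon$-bisimulation on $\ltsf{\flmone}$ containing $(\stateone,\statetwo)$ via Lemma~\ref{largestbisim}, lift it contextually, and combine Propositions~\ref{propofrstclos} and~\ref{keypropsound}. The only remark is that Proposition~\ref{keypropsound} is already stated for an arbitrary set $\contsone$ of tuple contexts, so the ``$\contsone$-restricted variant'' you invoke is in fact the proposition itself, and no extra argument is needed there.
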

\begin{proof}
Let be $\varepsilon = \appl {\metrtr {\ltsf{\flmone}}}{\stateone}{\statetwo}$. Let be $\relone$ the biggest $\varepsilon$-bisimulation on $\ltsf \flmone$. By lemma \ref{largestbisim}, it holds that $\relate \relone \stateone \statetwo$. Using Proposition \ref{propofrstclos} and \ref{keypropsound}, we obtain that $\appl {\mctxbang \contsone} \stateone \statetwo  \leq \varepsilon$, which is exactly the result.   
\end{proof}

In the following, we make use of Proposition \ref{frtrs} on stable
fragments corresponding to embeddings of $\LOPLUS$ into $\LBANG$. We
will consider two different encodings depending on the underlying
notion of evaluation.
\subsection{Call-by-Name}
\begin{figure*}
  \begin{center}
  \fbox{\scriptsize
    \begin{minipage}{.95\textwidth}
      \begin{center}
        \scalebox{\condscale}{
$
\contevone \bnf \hole \midd \contevone \termone \qquad 
 $}
\scalebox{\condscale}{$
      \AxiomC{$\strut $}
      \UnaryInfC{$\redonestepr
        {\psum\termone \termtwo }{ \termone ,\termtwo} $} \DisplayProof
      \qquad
      \AxiomC{$\strut $}
      \UnaryInfC{$\redonestepr {(\abstr \varone
          \termone )\termtwo }{\subst\termone \varone \termtwo} $}
      \DisplayProof
      \qquad
      \AxiomC{\strut $\redonestepr \termone {\termtwo_1, \ldots, \termtwo_n}$}
      \UnaryInfC{$\onestepr {\fillc \contevone \termone }{{\fillc \contevone {\termtwo_1}}, \ldots {\fillc \contevone {\termtwo_n}}} $}
      \DisplayProof 
      $}
      \end{center}
  \end{minipage}}
  \end{center}
\caption{One-step CBN Semantics}\label{cbnsem}
\end{figure*}
$\LOPLUS$ can first of all be endowed with call-by-name semantics, as
in Figure \ref{cbnsem}. We use it to define an approximation semantics
exactly in the same way as in Figure \ref{apprsem}, and we take as
usual the semantics of a term to be the least upper bound of its approximated
semantics. Moreover, we denote by $\mctxcbn$ the context distance on
$\LOPLUS$, defined the natural way. We are going, in the remainder of
this section, to use our results about $\LBANG$ to obtain a
characterization of $\mctxcbn$.
\subsubsection{The Call-By-Name Embedding}
Girard's translation \cite{Girard87} gives us an embedding
$\embcbn{\cdot}:\LOPLUS \rightarrow \LBANG$,
defined as follows:
\begin{align*}
\embcbn{\varone} &= \varone &
 \embcbn{\abstr \varone \termone} &=\abstrexp \varone {\embcbn \termone} \\
 \embcbn{\termone \termtwo} &=\embcbn{\termone}\bang{\embcbn \termtwo} &
 \embcbn{\psum\termone\termtwo} &= \psum {\embcbn\termone}{\embcbn \termtwo}
\end{align*}
Please observe that $\embcbn {\cdot}$ respects typing,
in the sense that, when we define $\typecbn = \rectype
{\vartypone}{\tarr{\bang \vartypone} \vartypone}$ , it holds that for
every term $\termone$ of $\LOPLUS$ whose free variables are in $\{
\varone_1, \ldots, \varone_n\}$,  we can show that $\wfjt
       {\bang\varone_1: \bang\typecbn, \ldots, \bang\varone_n : \bang
         \typecbn}{\embcbn \termone} \typecbn. $

This definition allows us to have some useful properties:
\begin{lemma}\label{embcbnprops}
\begin{varitemize}
\item Let be $\termone, \termtwo \in \LOPLUS$, and $\varthree \in \vars$. Then $\embcbn{\subst \termone \varthree \termtwo}= \subst{\embcbn \termone}{\varthree}{\embcbn \termtwo}$;
\item Let be $\termone$ a closed term in $\LOPLUS$. Then $\embcbn{\sem \termone} = \sem{\embcbn \termone}$.
\item Let be $\termone \in {\embcbn \LOPLUS}$, such that there exists an evaluation context $\contevone$, and $\termtwo \in \LOPLUS$, verifying $\embcbn \termone = \fillc \contevone {\embcbn \termtwo}$. There for any term $\termthree \in \LOPLUS$, it holds that $\fillc \contevone{\embcbn \termthree} \in \embcbn \LOPLUS$. 
\end{varitemize}
\end{lemma}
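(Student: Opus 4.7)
My plan is to handle the three parts separately, relying heavily on the rigid shape of terms in the image of $\embcbn{\cdot}$.

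Part 1 is a standard substitution lemma that falls out by straightforward structural induction on $\termone$. Only the variable case and the abstraction case are of interest. In the variable case, $\embcbn{\subst{\varone}{\varthree}{\termtwo}}$ equals $\embcbn{\termtwo}$ when $\varone=\varthree$ and $\varone$ otherwise; this matches $\subst{\embcbn{\varone}}{\varthree}{\embcbn{\termtwo}}$ since $\embcbn{\varone}=\varone$. The abstraction case $\abstr{\varone}{\termone'}$ reduces to the inductive hypothesis up to the usual freshness assumption on $\varone$, because $\embcbn{\abstr{\varone}{\termone'}}=\abstrexp{\varone}{\embcbn{\termone'}}$. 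Application and probabilistic sum pass directly through the inductive hypothesis.

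Part 2 is proved in two steps. First, I would establish a one-step simulation: if $\redonestepr{\termone}{\termtwo_1,\ldots,\termtwo_k}$ in $\LOPLUS$, then $\redonestepr{\embcbn{\termone}}{\embcbn{\termtwo_1},\ldots,\embcbn{\termtwo_k}}$ in $\LBANG$, and conversely any one-step reduction from $\embcbn{\termone}$ arises uniquely from some one-step reduction in $\LOPLUS$. The forward direction uses Part 1 in the $\beta$-case: $\embcbn{(\abstr{\varone}{\termone'})\termtwo}=(\abstrexp{\varone}{\embcbn{\termone'}})\bang{\embcbn{\termtwo}}\hookrightarrow\subst{\embcbn{\termone'}}{\varone}{\embcbn{\termtwo}}=\embcbn{\subst{\termone'}{\varone}{\termtwo}}$, and the $\oplus$-case is immediate. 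The backward direction exploits the fact that every application in $\embcbn{\termone}$ has its argument of the form $\bang{\embcbn{\cdot}}$, so the only enabled redexes in $\LBANG$ are non-linear $\beta$-redexes and $\oplus$-redexes mirroring those of $\LOPLUS$. Combined with the observation that $\LBANG$-evaluation contexts decomposing an embedded term must themselves be of embedded shape (see Part~3), one gets a bijection between one-step reductions. Second, I would lift this to approximation semantics by induction on the derivation of $\sssp{\termone}{\distrone}$, yielding $\sssp{\embcbn{\termone}}{\embcbn{\distrone}}$ and conversely, and then take suprema using the fact that $\embcbn{\cdot}$ extends continuously to distributions over values (it acts pointwise on a countable set).

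Part 3 is the key structural observation, and is the main (though modest) obstacle because one has to rule out spurious ways of placing a hole inside an embedded term. The plan is to prove by induction on $\contevone$ that whenever $\embcbn{\termone}=\fillc{\contevone}{\embcbn{\termtwo}}$, the context $\contevone$ is necessarily generated by
\[
\contevone \bnf \hole \midd \contevone \, \bang{\embcbn{\termfour}}\qquad (\termfour \in \LOPLUS).
\]
Indeed, the image of $\embcbn{\cdot}$ contains no $\paror$-construct, no linear $\lambda$-abstraction, and no application whose argument is not of the form $\bang{\embcbn{\cdot}}$; moreover $\bang{}$-boxes and $\lambda_!$-abstractions are values, hence not accessible from the evaluation-context grammar. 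The alternative $\termone\,\contevone$ could only place the hole under a $\bang{}$ inside an embedded argument, which the grammar of evaluation contexts forbids, and the $\paror{}{}{}$ alternatives are vacuously ruled out. Consequently $\fillc{\contevone}{\embcbn{\termthree}}$ is of the form $\embcbn{\termthree}\,\bang{\embcbn{\termfour_1}}\cdots\bang{\embcbn{\termfour_k}}$, which is exactly $\embcbn{\termthree\,\termfour_1\,\cdots\,\termfour_k}$ and therefore lies in $\embcbn{\LOPLUS}$, as required.
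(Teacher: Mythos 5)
The paper states Lemma~\ref{embcbnprops} without any proof, so there is nothing to compare against; judged on its own, your argument is correct and supplies exactly the details one would expect the authors to have in mind. Part~1 is the routine induction; Part~2 correctly reduces to the two-way one-step simulation (using Part~1 in the $\beta$-case and the fact that embedded arguments are already boxes, hence values, so the $\contevone\valone$ clause of the $\LBANG$ evaluation contexts reproduces call-by-name order) and lifts it through the approximation semantics; Part~3 is the right structural invariant. One small imprecision in Part~3: in the case $\contevone=\termfour\contevone'$, besides the hole being \emph{under} a bang (forbidden by the evaluation-context grammar), you must also exclude $\contevone'=\hole$, where the hole sits \emph{at} the argument position; this is ruled out because filling it would force $\embcbn{\termtwo}$ to be a top-level box, which never occurs in the image of $\embcbn{\cdot}$ --- the same observation you already invoke, so the conclusion stands.
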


\subsubsection{Metrics for $\LOPLUS$}
It is very tempting to define a metric on $\LOPLUS$ just
as follows:
$ \appl {\mtcbn}{\termone}{\termtwo} = \appl
{\mtbangt {\bang\typecbn}}{\bang {\embcbn \termone}}{\bang{\embcbn
    \termtwo}}$.
We can easily see that it is sound with respect
to the context distance for $\LOPLUS$, since any context of this 
language can be seen, through $\embcbn{\cdot}$, as a context in $\LBANG$.
However, it is not complete, as shown by the following example: 
\begin{example}\label{excbnnc}
We consider $\termone = \psum \Omega {(\abstr \varone {\Omega})}$ and 
  $  \termtwo = {(\abstr \varone {\Omega})}$. We can see that $\appl{\mtbangt {\bang\typecbn}}{ \bang {\embcbn \termone}}{\bang {\embcbn \termtwo}} = 1 $: indeed, when we define a sequence of $\LBANG$-contexts by
$\contone_n = \abstrexp \varone\left((\abstr {\vartwo_1} {\ldots \abstr
  {\vartwo_n} {(\abstr \varthree \varthree \vartwo_1, \ldots
    \vartwo_n)}})\varone \ldots \varone \right)[] $,
 we see that $\obs {\bang{\embcbn \termone}} = 1/{2^n}$ while $\obs{\bang{\embcbn \termtwo}} = 1$. But those contexts $\contone_n$ have more expressive power that any context in $\embcbn{\LOPLUS}$, since they do
something that none of the context from $\LOPLUS$ can do: they evaluate
a copy of the term, and then shift their focus to \emph{another} copy of
the term. It can be seen in the embedding: a term in $\embcbn \LOPLUS$
has never several redexes in linear position. We are now going to explicit this idea, and show that
$\appl{\mctxcbn}{\termone}{\termtwo} = \frac 1 2 <
\appl{\mtcbn}{\termone}{\termtwo} $.
\end{example}

\begin{figure}
\begin{center}
\fbox{
\begin{minipage}{\condwidth}
\begin{center}
\scalebox{\condscale}{
$
\typtuplone = (\tuplonea{}, \tuplonea{\bang {\typecbn}})  \qquad 
\typtupltwo = (\tuplonea{\typecbn},\tuplonea{})\qquad
\typtuplthree_n = (\tuplonea{\typecbn},{\lbrack{\underbracket{\typecbn, \ldots, \typecbn}_n}\rbrack} )
$}
\end{center}
\begin{center}
\scalebox{\condscale}{
\begin{tikzpicture}
\node [draw, rectangle, rounded corners] (M1) at (-0.75,0)
{$\begin{array}{c}(( {\tuplonea{}}, \tuplonea{\bang {\embcbn{\termone}}})\\, \typtuplone) 
\end{array}
$};
\node [draw, rectangle, rounded corners] (N1) at (-0.75,2){$
\begin{array}{c}
(({\tuplonea{}}, \tuplonea{\bang {\embcbn{\termtwo}}})\\, \typtuplone )
\end{array} $};
\node [draw, rectangle, rounded corners] (M2) at (2.35,0)
{$\begin{array}{c}
(( {\tuplonea{\embcbn \termone}}, \tuplonea{})\\, \typtupltwo) 
\end{array}
$};
\node [draw, rectangle, rounded corners] (N2) at (2.35,2){$\begin{array}{c}( ({\tuplonea{\embcbn \termtwo}}, \tuplonea{})\\ , \typtupltwo)\end{array} $};
\node [draw, rectangle, rounded corners] (M3) at (5.5,0)
{$\begin{array}{c}
( (\tuplonea{\embcbn \termone},\\ \tuplonea{\abstrexp \varone {\Omega_!}}), \typtuplthree_1 ) \end{array} $};
\node [draw, rectangle, rounded corners] (N3) at (5.5,2){$\begin{array}{c}
( (\tuplonea{\embcbn \termtwo}, \\ \tuplonea{\abstrexp \varone {\Omega_!}}), \typtuplthree_1) \end{array} $};

\node [draw, rectangle, rounded corners] (M4) at (10,0){$
\begin{array}{c}
  (( \tuplonea{\embcbn \termone}, \\\lbrack{\underbracket{\abstrexp \varone {\Omega_!}, \ldots ,\abstrexp \varone {\Omega_!}}_n}\rbrack) , \typtuplthree_n) \end{array} $};

\node [draw, rectangle, rounded corners] (N4) at (10,2){$\begin{array}{c}
((\tuplonea{\embcbn \termtwo},\\ {\lbrack{\underbracket{\abstrexp \varone {\Omega_!}, \ldots, \abstrexp \varone {\Omega_!}}_n\rbrack}}), \typtuplthree_n) \end{array} $};

\node [] (M5) at (7.5,0){$\ldots $};
\node [] (N5) at (7.5,2){$\ldots $};

\draw[->](M1) to node[above]{$\evalbang 1 $} node[below]{$1$} (M2);
\draw[->](N1) to node[above]{$\evalbang 1 $} node[below]{$1$} (N2);
\draw[->](N2) to node[above]{$\actbang 1 $}node[below]{$1$}(N3);
\draw[->](M2) to node[above]{$\actbang 1 $} node[below]{$\frac 1 2$} (M3);
\draw[->](N3) to node[above]{$\actbang 1 $}node[below]{$1$}(N5);
\draw[->](M3) to node[above]{$\actbang 1 $} node[below]{$\frac 1 2$} (M5);
\end{tikzpicture}}
\end{center}
\end{minipage}}
\end{center}
\caption{A Fragment of $\lmcbang$}\label{examplecbn}
\end{figure}

The way out consists in using the notion of stable fragment to
refine the Markov Chain $\lmcbang$ by keeping only the states and actions we
are interested in.

\begin{definition}
We define a stable fragment $\flmccbn$ as specified in Figure \ref{fig:lmccbn}, and a distance $\metrcbn$ on $\LOPLUS$ as:
$$\appl {\metrcbn}{\termone}{\termtwo} = \appl{\metrtr{\ltsf{\flmccbn}}}
{\statetermcbn \termone}{\statetermcbn \termtwo} ,$$
where $\statetermcbn \termone = (\dirac{(\seq{\embcbn{\termone}}, \seq {})},\typzero )$. 
\end{definition}

\begin{figure*}
\begin{center}
\fbox{
\begin{minipage}{\condwidth}
\begin{center}
\scalebox{\condscale}{
$\typzero = {(\seq{{\typecbn}}, \seq{})} 
\qquad \typun = (\seq{{\typecbn}}, \seq{\typecbn})
\qquad
\tupletermcbn \termone = ((\seq{\embcbn{\termone}}, \seq {}),\typzero )$}
\scalebox{\condscale}{
$ \statesexpcbn = \left(
\left\{ \tupletermcbn \termone \mid \termone \in \LOPLUS\right\} \cup 
\left\{(\seq{\embcbn{ \termone}}, \seq{\embcbn {\valone}}),\typun \mid \termone \in \LOPLUS, \valone \in \LOPLUS \text{ a normal form }\right\} \right) \bigcap \statesexp  $
}

\scalebox{\condscale}{
$\vjcbn = \vvalidjudgss \cap \{(\isone, \istwo, \ms{\typone}, \ms{\typtwo}, \termone, \typthree )\, , \, \termone \in \embcbn{\LOPLUS} \}$}

\scalebox{\condscale}{
$ \labelsexpcbn = \{\typun \} \times \{\actappltupl \judgone 1 \mid \judgone \in \vjcbn \} \cup \{\typzero \} \times \labelsbang  $
}\end{center}
\end{minipage}
}
\end{center}
\caption{The Stable Fragment $\flmccbn = (\statesexpcbn,\labelsexpcbn )$.} \label{fig:lmccbn}
\end{figure*}

We need now to define a set of tuple contexts preserved by $\flmccbn$, the aim of applying Proposition \ref{frtrs}.
\begin{definition}
$\contsonecbn$ is the smallest set of tuple contexts such that:
\begin{varitemize}
\item If $\termone \in \LOPLUS$ with $\fv \termone \subseteq \{\varone_1 \}$, then $(\embcbn \termone, \typzero, \typecbn ) \in \contsonecbn$;
\item If $(\contone, \typzero, \typecbn ) \in \contsonecbn$, and $\contone = \fillc \contevone {\varone_1}$, it holds that $(\fillc \contevone {\vartwo_1} ,\typun, \typecbn) \in \contsonecbn$.
\end{varitemize}
\end{definition}
$ \contsonecbn$ is designed to allow us to link $\mctxcbn$ and $\metrctx{\contsonecbn}$.
\begin{lemma}\label{contextcbnmetr}
For $\termone, \termtwo \in \LOPLUS$ closed terms,
$\appl \mctxcbn \termone \termtwo = \appl {\metrctx{\contsonecbn}}{\statetermcbn \termone}{\statetermcbn \termtwo}.$
\end{lemma}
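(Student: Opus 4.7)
The plan is to exhibit a bijection between closed $\LOPLUS$-contexts and the $\typzero$-typed elements of $\contsonecbn$, under which the probability of convergence of a filled context matches on both sides. Since both $\appl{\mctxcbn}{\termone}{\termtwo}$ and $\appl{\metrctx{\contsonecbn}}{\statetermcbn{\termone}}{\statetermcbn{\termtwo}}$ are suprema of absolute differences of such probabilities, indexed respectively by these two sets, the equality of distances follows at once.

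First, in $\appl{\metrctx{\contsonecbn}}{\statetermcbn{\termone}}{\statetermcbn{\termtwo}}$ the supremum in Definition~\ref{def:metr_ctx_tupl} only ranges over tuple contexts whose tuple type matches that of the argument states, namely $\typzero$. Inspecting the inductive definition of $\contsonecbn$, its $\typzero$-typed triples all arise from the first clause: they are exactly the $(\embcbn{\termthree}, \typzero, \typecbn)$ with $\termthree \in \LOPLUS$ and $\fv{\termthree} \subseteq \{\varone_1\}$. The second clause produces only $\typun$-typed triples and thus contributes nothing here. Viewing such a $\termthree$ as the $\LOPLUS$-context obtained by treating $\varone_1$ as the hole gives the promised bijection, since conversely every closed $\LOPLUS$-context can be $\alpha$-renamed to a term with at most $\varone_1$ free.

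To check that observables agree, I would use the first two points of Lemma~\ref{embcbnprops} to compute
\[
\forget{\embcbn{\termthree}, \statetermcbn{\termone}}
= \dirac{\subst{\embcbn{\termthree}}{\varone_1}{\embcbn{\termone}}}
= \dirac{\embcbn{\subst{\termthree}{\varone_1}{\termone}}},
\]
and then
\[
\sem{\embcbn{\subst{\termthree}{\varone_1}{\termone}}} = \embcbn{\sem{\subst{\termthree}{\varone_1}{\termone}}},
\]
where $\embcbn{\cdot}$ is extended linearly to distributions. Since this extension preserves total mass, it follows that $\sumdistr{\sem{\forget{\embcbn{\termthree}, \statetermcbn{\termone}}}} = \obs{\fillc{\contone}{\termone}}$, and similarly for $\termtwo$. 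Taking absolute differences and suprema over the bijectively matched sets of contexts concludes.

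The argument is largely bookkeeping, and the inductive definition of $\contsonecbn$ has been tailored precisely so that its $\typzero$-typed members correspond bijectively to $\LOPLUS$-contexts. The only point deserving care is ruling out any contribution from the second-kind ($\typun$-typed) contexts: this is immediate from the typing constraint implicit in the supremum of Definition~\ref{def:metr_ctx_tupl}, since both $\statetermcbn{\termone}$ and $\statetermcbn{\termtwo}$ are $\typzero$-states.
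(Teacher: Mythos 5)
Your proof is correct and takes essentially the same route as the paper: the paper's own proof is a one-line appeal to Lemma~\ref{embcbnprops}, and your write-up supplies exactly the bookkeeping that reference stands for, namely identifying the $\typzero$-typed members of $\contsonecbn$ with the $\LOPLUS$-contexts relevant to two closed terms, and transferring observables through the substitution- and semantics-preservation properties of $\embcbn{\cdot}$. Your observation that the $\typun$-typed triples produced by the second clause are excluded by the typing constraint in Definition~\ref{def:metr_ctx_tupl} is the right justification for restricting to the first clause.
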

\begin{proof}
The proof is based on Lemma \ref{embcbnprops}.
\end{proof}

\begin{lemma}\label{preservcbn}
$\contsonecbn$ is preserved by the stable fragment $\flmccbn$.
\end{lemma}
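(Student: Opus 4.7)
The plan is to unfold the definition of preservation and argue by induction on the derivation of $(\contone, \typtuplone, \typthree) \in \contsonecbn$, treating the two defining clauses as two cases. In each case I will exhibit an action $\actone \in \labelsexpcbn \cup (\typtuples \times \{\tau\})$ and a reduction $\osctl{(\contone, \stateone)}{\actone}{\distrtwo}$ whose support lies in contexts that are themselves in $\contsonecbn$.

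For the base clause $(\embcbn \termone, \typzero, \typecbn)$ with $\fv \termone \subseteq \{\varone_1\}$, I would run a sub-induction on the structure of $\termone \in \LOPLUS$. If $\termone$ is an abstraction, $\embcbn \termone$ is an open value, so there is nothing to do. If $\termone = \varone_1$, the Milner's action $\actbang 1$ is in $\labelsbang$ and hence admissible at $\typzero$; its reduct is the linear variable $\vartwo_1$, placed in $\contsonecbn$ by the inductive clause with $\contevone = \hole$. If $\termone = \psum{\termone_1}{\termone_2}$, a $\tau$-step of the probabilistic sum rule yields the two embedded contexts $\embcbn{\termone_1}$ and $\embcbn{\termone_2}$, both in $\contsonecbn$ via the base clause. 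If $\termone = \termone_1 \termone_2$, I dispatch on $\termone_1$: (a) when $\termone_1$ is an abstraction, the term is a $\beta$-redex, a $\tau$-step applies, and Lemma \ref{embcbnprops}(1) shows that the reduct is again of the form $\embcbn{\cdot}$; (b) when $\termone_1 = \varone_1$, the decomposition $\fillc{(\hole \bang{\embcbn{\termone_2}})}{\varone_1}$ and Milner's action $\actbang 1$ produce $\vartwo_1 \bang{\embcbn{\termone_2}}$, placed in $\contsonecbn$ by the inductive clause; (c) otherwise $\termone_1$ is itself an application or a sum, and the inner-induction hypothesis on $\termone_1$ supplies an action that lifts through the surrounding evaluation context $\hole \bang{\embcbn{\termone_2}}$.

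For the inductive clause $(\fillc \contevone {\vartwo_1}, \typun, \typecbn)$, if $\contevone = \hole$ the context is the open value $\vartwo_1$. Otherwise, the shape of $\contevone$ is constrained by the fact that $\fillc \contevone {\varone_1}$ is an embedded $\LOPLUS$-term with $\varone_1$ in evaluation position (using Lemma \ref{embcbnprops}(3)), so $\contevone$ must be built out of nested wrappers of the form $\hole \bang{\embcbn{\cdot}}$. The outermost such wrapper exposes an applicative head-redex $\vartwo_1 \bang{\embcbn{\termtwo}}$, which I would reduce via $\actappltupl \judgone 1$---admissible at $\typun$ by the definition of $\labelsexpcbn$---with $\judgone$ chosen to match the $\bang$-wrapped argument already present. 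A bookkeeping check of the renaming-removal prescribed by the $\osctl{}{}{}$-rule shows that the residual context is again of the form $\fillc{\contevone'}{\vartwo_1}$, still certified as an element of $\contsonecbn$ via the inductive clause.

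The main obstacle I anticipate is the applicative step of the inductive case: one must carefully verify that the argument is compatible with the $\vjcbn$ restriction, and that after the renaming-and-remove operation in the $\osctl{}{}{}$-rule what remains is once more a context obtained from an embedded $\LOPLUS$-term by replacing $\varone_1$ with $\vartwo_1$ in evaluation position. The remaining verifications---the admissibility of $\tau$-steps at any type, and the propagation of actions through outer wrappers---are routine structural rewriting, supported throughout by Lemma \ref{embcbnprops}.
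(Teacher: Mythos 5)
Your proposal is correct and follows essentially the same route as the paper: a case analysis on the two defining clauses of $\contsonecbn$, firing $\actbang 1$ when $\varone_1$ is in evaluation position, $\tau$ when a redex is, and $\actappltupl \judgone 1$ for the $\typun$-case, with Lemma \ref{embcbnprops} closing the loop; your structural sub-induction on $\termone$ merely makes explicit the evaluation-position decomposition that the paper's proof takes directly. One cosmetic remark: the application wrapper that exposes the redex $\vartwo_1\bang{\embcbn{\termtwo}}$ is the \emph{innermost} one in the spine $\vartwo_1\bang{\embcbn{\termtwo}}\bang{\embcbn{\termfour_2}}\cdots$, not the outermost, though the redex you identify is the right one.
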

\begin{proof}
Let be $(\contone, \typtuplone, \typthree) \in \contsonecbn$, and $\stateone$ a $\typtuplone$-state in $\ltsf{\flmccbn}$. Using the definition of $\contsone$, we may see that $\typthree = \typecbn$, and moreover we are in one of the following cases:
\begin{varitemize}
\item Or $\typtuplone = \typzero$, $\contone = \embcbn \termone$ with $\fv{\termone}\subseteq\{\varone_1\} $ and $\stateone$ is of the form $(\distrone, \typzero )$, where $\distrone = \sum_{i \in \NN} p_i \cdot \dirac{(\seq{\embcbn \termtwo_i}, \seq{})}$.
Then we still have to consider separately two cases:
\begin{varitemize}
\item or $\embcbn \termone = \fillc \contevone{\varone_1}$. We take $\actone = \evalbang 1 $. We can easily check that $(\typzero, \evalbang 1 ) \in \labelsexpcbn \bigcup (\typtuples \times \{\tau \})$. Moreover, we see that there exists $\distrtwo$ such that $\osctl{(\contone, \stateone)}{\actone}{\distrtwo}$, and more precisely $\distrtwo = \dirac{(\fillc \contevone {\vartwo_1}, \statetwo)}$ where $\statetwo$ is a $\typun$-state.
Since $(\fillc \contevone {\vartwo_1}, \typun, \typecbn) \in \contsonecbn$ by construction, it holds that $\supp(\distrtwo)\subseteq \bigcup_{\typtupltwo \in \typtuples} \{(\conttwo,\statetwo) \mid \statetwo \text{ a }\typtupltwo \text{-state} \wedge \exists \typfour \text{ s.t. }(\conttwo, \typtupltwo,\typfour) \in \contsonecbn \} $, which shows the result.
\item or $\embcbn \termone = \fillc \contevone R$, where $R$ is a redex. We take $\actone = \tau$, and we obtain the result by a similar reasoning as the previous case.
\end{varitemize}
\item Or $\typtuplone = \typun$. Then $\stateone$ is of the form $(\distrone, \typun)$, where $\distrone = \sum_{i \in \NN} p_i \cdot (\seq{\embcbn \termtwo}, \seq{\embcbn \valone})$. Moreover, it implies that $\contone = \fillc \contevone {\vartwo_1}$, and there exists $\termthree$ such that $\fillc \contevone {\varone_1} = \embcbn \termthree$. Looking at the cbn encoding, we see that the only way to have $\fillc \contevone {\varone_1} \in \embcbn \LOPLUS$ is $\contevone = \fillc \contevtwo{\hole \bang{\embcbn \termfour}}$. It means that $\contone = \fillc \contevtwo{\vartwo_1 \bang {\embcbn \termfour} }$. We take $\actone = \actappltupl \judgone 1$ with $\judgone = (\{1\}, \{ \}, \seq \typecbn, \seq \typecbn, \bang {\embcbn \termfour}, \typecbn)$. We can easily check that $(\typun, \actone ) \in \labelsexpcbn \bigcup (\typtuples \times \{\tau \})$. 
 Moreover, we see that there exists $\distrtwo$ and $\statetwo$ a $\typun$-state such that $\osctl{(\contone, \stateone)}{\actone}{\distrtwo}$,
and $\distrtwo = \dirac{(\fillc \contevtwo {\vartwo_1}, \statetwo)}$. Using the third point of Lemma \ref{embcbnprops}, and that $\fillc \contevtwo{\varone_1 \bang {\embcbn \termfour}} \in \embcbn \LOPLUS$, we see that $\fillc \contevtwo {\varone_1} \in \embcbn \LOPLUS$. Looking at the definition of $\contsonecbn$, we see that it means that $(\fillc \contevtwo{\vartwo_1}, \typun, \typecbn) \in \contsonecbn$, and so we have the result.
\end{varitemize}
\end{proof}

\begin{theorem}[Full Abstraction for CBN]\label{th:facbn}
${\mctxcbn}$ and ${\metrcbn}$ coincide.
\end{theorem}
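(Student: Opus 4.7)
The plan is to establish the two inequalities $\appl{\mctxcbn}{\termone}{\termtwo} \leq \appl{\metrcbn}{\termone}{\termtwo}$ and $\appl{\metrcbn}{\termone}{\termtwo} \leq \appl{\mctxcbn}{\termone}{\termtwo}$ separately, reusing the machinery already developed for $\lmcbang$ and specialising it via the stable fragment $\flmccbn$.

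Soundness will follow essentially for free from the general theory of stable fragments. By Lemma~\ref{preservcbn} the set $\contsonecbn$ is preserved by $\flmccbn$, so Proposition~\ref{frtrs} applies and yields
$$
\appl{\metrctx{\contsonecbn}}{\statetermcbn{\termone}}{\statetermcbn{\termtwo}} \;\leq\; \appl{\metrtr{\ltsf{\flmccbn}}}{\statetermcbn{\termone}}{\statetermcbn{\termtwo}} \;=\; \appl{\metrcbn}{\termone}{\termtwo}.
$$
Applying Lemma~\ref{contextcbnmetr} to rewrite the left-hand side as $\appl{\mctxcbn}{\termone}{\termtwo}$ closes this direction.

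Completeness will be obtained by adapting the trace argument from the proof of Theorem~\ref{completness}. Since both $\statetermcbn{\termone}$ and $\statetermcbn{\termtwo}$ are $\typzero$-states, they admit exactly the same families of traces, and the direct analogue of Proposition~\ref{propcartr} applied to $\ltsf{\flmccbn}$ gives $\appl{\metrcbn}{\termone}{\termtwo} = \sup_{\traceone} \lvert \probt{\traceone}{\statetermcbn{\termone}} - \probt{\traceone}{\statetermcbn{\termtwo}} \rvert$. It therefore suffices to construct, for every trace $\traceone$ admissible from a $\typzero$-state, a $\LOPLUS$-context $\contone_\traceone$ such that $\obs{\fillc{\contone_\traceone}{\termthree}} = \probt{\traceone}{\statetermcbn{\termthree}}$ for every closed $\LOPLUS$-term $\termthree$ of the appropriate type; instantiating with $\termone$ and $\termtwo$ and taking the supremum then yields the desired inequality. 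The construction of $\contone_\traceone$ proceeds by induction on the length of $\traceone$, mimicking the case split of Lemma~\ref{lemc1} but restricted to the labels appearing in $\labelsexpcbn$. Each admissible step is realised by a natural CBN primitive: a Milner action $\actbang{1}$ out of a $\typzero$-state by a $\lambda!$-abstraction copying its argument, an applicative action $\actappltupl{\judgone}{1}$ with $\judgone \in \vjcbn$ by application of the $\embcbn$-image of the associated $\LOPLUS$-term (this is where the $\vjcbn$ constraint matters, since it guarantees that the argument already lives in $\embcbn{\LOPLUS}$), and an $\evalbang{1}$ action by the natural unboxing move. Lemma~\ref{embcbnprops} ensures that plugging further $\embcbn$-terms preserves the image, so each intermediate context remains inside $\contsonecbn$ and, up to $\lambda$-abstracting the hole, corresponds to a genuine $\LOPLUS$-context.

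The main obstacle is the bookkeeping in this inductive construction: one must carry along an invariant describing which position in the evolving tuple currently hosts the hole, and how the shape of the constructed context reflects the prefix of the trace processed so far. The restrictions built into $\labelsexpcbn$ — only $\actbang{1}$ out of $\typzero$-states, and only applicative actions with arguments in $\embcbn{\LOPLUS}$ out of $\typun$-states — are designed precisely so that this invariant stays inside $\embcbn{\LOPLUS}$. Once the invariant is stated correctly, each inductive case is a direct adaptation of the corresponding case of Lemma~\ref{lemc1}, with the simplification that the more exotic actions of $\ltsbang$ cannot occur.
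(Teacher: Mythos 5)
Your proposal is correct and follows essentially the same route as the paper: soundness via Lemma~\ref{preservcbn}, Proposition~\ref{frtrs} and Lemma~\ref{contextcbnmetr}, and completeness by realising every admissible trace of $\ltsf{\flmccbn}$ as a $\LOPLUS$-context, which is exactly what the paper's ``intrinsic'' characterization and the CIU-context formulation (Proposition~\ref{prop:ciu}) amount to. The only cosmetic slip is that your case analysis includes $\evalbang{1}$ actions, which never arise because $\labelsexpcbn$ contains only Milner's actions out of $\typzero$-states and applicative actions out of $\typun$-states; the contexts your induction produces are precisely the CIU contexts $(\abstr{\varone}{\varone\,\termone_1\ldots\termone_n})\hole$.
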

\begin{proof}
  We first show that $\metrcbn$ is at least as discriminating
  $\mctxcbn$. Let be $\termone, \termtwo \in \LOPLUS$.  By definition
  of $\ltsf{\flmccbn}$, we know that $\statetermcbn \termone,
  \statetermcbn \termtwo \in \ltsfst{\ltsf{\flmccbn}}$.  Moreover, we
  know by Lemma \ref{preservcbn} that $\contsonecbn$ is
  preserved by $\flmccbn$. So we can apply Proposition \ref{frtrs},
  and we see that $\appl{\metrctx{\contsonecbn}}{\statetermcbn
    \termone}{\statetermcbn \termtwo} \leq
  \appl{\metrcbn}{\termone}{\termtwo}$. Then, soundness follows using
    Lemma \ref{contextcbnmetr}.
  When proving completeness part, we rely on an ``intrinsic''
  characterization of $\metrcbn$.  The details can be found
  in the next section.\qed
\end{proof}

  \subsubsection{An intrinsic trace characterization of $\metrcbn$}
  Looking at the structure of $\flmccbn$, we see that we can in fact give an intrinsic definition of $\metrcbn$, without considering tuples and exponential constructs. In fact, every sequence of actions in $\ltsf{\flmccbn}$ can be see as an element of the set $\tracescbn$ defined by: 
$$ \traceone \in \tracescbn\bnf \emptytr \midd {\concat\termone\traceone} \text{ ,where }
  \termone \in \LOPLUS \text{ and } \fv \termone \subseteq  \{\varone
    \}.$$
  We can now talk about the probability for a term $\termone$ to effectuate a trace $\traceone$ defined by:
   $$\probtrcbn \termone {\termtwo_1 \ldots \termtwo_m \cdot \emptytr } =  \sumdistr{ \sem{\termone
      \left(\subst{\termtwo_1} \varone \termone\right)\ldots
      \left(\subst{\termtwo_m} \varone \termone\right) }}. $$
and we obtain a simple characterization of $\metrcbn$:
\begin{proposition}
Let be $\termone, \termtwo \in \LOPLUS$. Then:
$$\appl {\metrcbn} \termone \termtwo = \sup  \left\{ \mid{\probtrcbn \termone
  \traceone -  \probtrcbn \termtwo \traceone}\mid \text{ with }  \traceone
\in \tracescbn \right\}.$$
\end{proposition}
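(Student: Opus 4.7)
The plan is to derive the claim from the trace characterization of Proposition~\ref{propcartr}, which applies verbatim to the WLTS $\ltsf{\flmccbn}$ since its proof relies only on determinism of the underlying transition system. By definition $\appl{\metrcbn}{\termone}{\termtwo} = \appl{\metrtr{\ltsf{\flmccbn}}}{\statetermcbn \termone}{\statetermcbn \termtwo}$, so it suffices to set up a probability-preserving bijection between admissible traces from $\typzero$-states of $\ltsf{\flmccbn}$ and elements of $\tracescbn$.

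First I would analyse the shape of admissible traces in $\ltsf{\flmccbn}$. Inspecting $\labelsexpcbn$, from any $\typzero$-state the only non-trivial available action is Milner's $\actbang 1$ (because $\labelsbang$ is paired only with $\typzero$, while applicative labels and $\evalbang i$ labels are absent from $\labelsexpcbn$ at type $\typzero$), and from the resulting $\typun$-state the only available actions are the applicative ones $\actappltupl{\judgone}{1}$ with $\judgone \in \vjcbn$. Since such applicative actions leave the state type $\typun$ invariant, and type-label self-loops contribute nothing to the trace distance, every admissible trace from $\statetermcbn \termone$ has the form $\traceone^\star = \actbang 1 \cdot \actappltupl{\judgone_1}{1}\cdots \actappltupl{\judgone_m}{1}$ up to innocuous type self-loops. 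In particular the sets of admissible traces of $\statetermcbn \termone$ and $\statetermcbn \termtwo$ coincide, so we land in the first case of Proposition~\ref{propcartr}.

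Second, I would match every such $\traceone^\star$ with an intrinsic trace $\termtwo_1 \cdots \termtwo_m \cdot \emptytr \in \tracescbn$. The sole linear component of each reachable $\typun$-state has type $\typecbn = \tarr{\bang\typecbn}{\typecbn}$, so the argument supplied by each $\judgone_i \in \vjcbn$ must be a value of $\embcbn{\LOPLUS}$-shape and of type $\bang \typecbn$; hence $\judgone_i$ is of the form $(\{1\}, \emptyset, [\typecbn], [\typecbn], \bang{\embcbn{\termtwo_i}}, \typecbn)$ for a unique $\termtwo_i \in \LOPLUS$ with $\fv{\termtwo_i} \subseteq \{\varone\}$, which yields the desired bijection $\traceone^\star \leftrightarrow \termtwo_1 \cdots \termtwo_m \cdot \emptytr$.

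Finally, I would prove by induction on $m$ that $\probt{\traceone^\star}{\statetermcbn \termone} = \probtrcbn \termone {\termtwo_1 \cdots \termtwo_m \cdot \emptytr}$. The base case $m=0$ amounts to the fact that $\actbang 1$ evaluates the exponential copy of $\embcbn\termone$, so by part~(2) of Lemma~\ref{embcbnprops} the resulting $\typun$-distribution has total weight $\sumdistr{\sem{\embcbn\termone}} = \sumdistr{\sem\termone}$, matching $\probtrcbn \termone \emptytr$. The inductive step combines part~(1) of the same lemma (which gives $\subst{\embcbn{\termtwo_i}}{\varone}{\embcbn\termone} = \embcbn{\subst{\termtwo_i}{\varone}{\termone}}$) with part~(2) to identify each fragment-level applicative step with a genuine $\LOPLUS$-level application. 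The main obstacle is getting this substitution-and-evaluation bookkeeping to align exactly, so that $m$ fragment-level applicative steps produce precisely the nested $\LOPLUS$ applications in the definition of $\probtrcbn$; the stable-fragment restriction of $\vjcbn$ to arguments in $\embcbn{\LOPLUS}$ is what makes this bookkeeping go through.
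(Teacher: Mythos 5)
Your argument is correct and follows exactly the route the paper intends (the paper itself offers no proof beyond the remark that every action sequence of $\ltsf{\flmccbn}$ can be read as an element of $\tracescbn$): you instantiate Proposition~\ref{propcartr} on the deterministic WLTS $\ltsf{\flmccbn}$, identify the shape $\actbang 1 \cdot \actappltupl{\judgone_1}{1}\cdots\actappltupl{\judgone_m}{1}$ of admissible traces, and transfer the weights via Lemma~\ref{embcbnprops}. The only cosmetic slips are that the argument type in $\judgone_i$ should be $\bang{\typecbn}$ rather than $\typecbn$, and that $\isone$ may also be empty when $\termtwo_i$ is closed, so the correspondence is a probability-preserving surjection rather than a bijection---neither affects the equality of the suprema.
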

Please observe that we can alternatively see traces as contexts. As a consequence, if we define CIU-contexts in $\LOPLUS$ as $\contevone = \left(\abstr \varone \varone \termone_1 \ldots \termone_n \right)[]$. We may express $\metrcbn$ in a purely contextual way:
\begin{proposition}\label{prop:ciu}
Let be $\termone, \termtwo \in \LOPLUS$. Then
$\appl \metrcbn \termone \termtwo = \sup_{\contevone \text{ CIU context} }\lvert \sumdistr {\sem{\fillc\contevone \termone}} - \sumdistr {\sem{\fillc \contevone \termtwo}} \rvert $.
\end{proposition}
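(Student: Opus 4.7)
The plan is to establish a direct bijection between traces and CIU contexts, in which each side computes the same observable when plugged with a term. The previous proposition already gives the trace characterization $\appl\metrcbn\termone\termtwo = \sup_{\traceone\in\tracescbn}|\probtrcbn\termone\traceone - \probtrcbn\termtwo\traceone|$, so it suffices to show that the two suprema coincide term by term.

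First I would read the CIU syntax $\contevone = (\abstr\varone\varone\termone_1\ldots\termone_n)[\cdot]$ as $(\abstr\varone(\varone\termone_1\ldots\termone_n))[\cdot]$, so that the $\termone_i$ are allowed to contain $\varone$ free (this is the only reading that makes the left-hand and right-hand quantities non-trivially related; it also matches the trace definition, whose trace components range over terms with $\fv{\termone_i}\subseteq\{\varone\}$). To a trace $\traceone = \termtwo_1\cdots\termtwo_m\cdot\emptytr\in\tracescbn$ I associate the CIU context $\contevone_\traceone = (\abstr\varone(\varone\,\termtwo_1\cdots\termtwo_m))[\cdot]$, and conversely.

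The key computation is then a single CBN $\beta$-step: for every closed $\termone$, $\fillc{\contevone_\traceone}\termone = (\abstr\varone(\varone\,\termtwo_1\cdots\termtwo_m))\,\termone$ reduces deterministically to $\termone(\subst{\termtwo_1}\varone\termone)\cdots(\subst{\termtwo_m}\varone\termone)$. Since CBN reduction preserves the semantics (standard subject reduction for the approximation semantics introduced in Figure~\ref{cbnsem}), we get
\[
\sumdistr{\sem{\fillc{\contevone_\traceone}\termone}} = \sumdistr{\sem{\termone(\subst{\termtwo_1}\varone\termone)\cdots(\subst{\termtwo_m}\varone\termone)}} = \probtrcbn\termone\traceone,
\]
and symmetrically, from an arbitrary CIU context $\contevone = (\abstr\varone(\varone\,\termtwo_1\cdots\termtwo_n))[\cdot]$ one reads off the trace $\traceone_\contevone = \termtwo_1\cdots\termtwo_n\cdot\emptytr$ with $\sumdistr{\sem{\fillc{\contevone}\termone}} = \probtrcbn\termone{\traceone_\contevone}$.

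Putting these two equalities together, the indexing set over traces and the indexing set over CIU contexts yield \emph{identical} families of real numbers $|\sumdistr{\sem{\fillc{\contevone}\termone}} - \sumdistr{\sem{\fillc{\contevone}\termtwo}}|$, hence their suprema agree, which together with the trace characterization yields the claim. There is no real obstacle here beyond fixing the parsing convention of the CIU context and checking the single-step reduction; the proof is thus essentially a notational reconciliation between traces and CIU contexts.
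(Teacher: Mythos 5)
Your proposal is correct and is essentially the argument the paper intends: the paper derives Proposition~\ref{prop:ciu} directly from the trace characterization by observing that traces and CIU contexts are in bijection, with the single CBN $\beta$-step $(\abstr\varone{(\varone\,\termtwo_1\cdots\termtwo_m)})\termone \hookrightarrow \termone(\subst{\termtwo_1}\varone\termone)\cdots(\subst{\termtwo_m}\varone\termone)$ making the observables coincide. Your explicit fixing of the parsing convention (the $\termone_i$ under the scope of the $\lambda\varone$, possibly containing $\varone$ free) is exactly the reading required for the correspondence with $\probtrcbn{\cdot}{\cdot}$ to go through.
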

Please observe that Proposition \ref{prop:ciu} allows us to see easily that $\appl{\metrcbn}{\termone}{\termtwo} \leq \appl{\mctxcbn}{\termone}{\termtwo}$. As such, it allows us to do the completeness part of the proof Theorem \ref{th:facbn}.

\subsection{Call-by-Value}
In a similar way, we can endow $\LOPLUS$ with a call-by-value
semantics, and embedd it into $\LBANG$. We are then able to define a
suitable fragment of $\lmcbang$, a suitable set of tuple
contexts preserving it, and a caracterisation of a call-by-value context
distance for $\LOPLUS$ follows. While the construction of the stable
fragment (and the set of tuple contexts to consider) are more
involved than in the call-by-name case, we noticed that the caracterisation we
obtain seem to have some similarities with the way environmental
bisimulation for a CBV probabilistic $\lambda$-calculus was defined
in~\cite{SangiorgiV16}.
 In this section, we endow $\LOPLUS$ with a call-by-value
  semantics, as specified in Figure \ref{fig:cbvsem}.
\begin{figure*}
\begin{center}
\fbox{\scriptsize
\begin{minipage}{.95\textwidth}
\begin{center}
\scalebox{\condscale}{
$\valone = \abstr \varone
\termone
\qquad
\contevone \bnf \hole \midd \contevone \valone \midd \termone\contevone. \qquad$
}
\scalebox{\condscale}{
 \AxiomC{$\strut$}  \UnaryInfC{$\redonestepr
    {\psum\termone \termtwo }{ \termone, \termtwo} $} \DisplayProof 
\qquad 
\AxiomC{$\strut $}
 \UnaryInfC{$\redonestepr {(\abstr \varone
      \termone )\valone }{\subst\termone \varone \valone} $}
  \DisplayProof
\qquad
\AxiomC{\strut $\redonestepr \termone {\termtwo_1, \ldots, \termtwo_n}$}
  \UnaryInfC{$\onestepr {\fillc \contevone \termone }{{\fillc \contevone {\termtwo_1}}, \ldots {\fillc \contevone {\termtwo_n}}} $}
  \DisplayProof 
 }
\end{center}
\end{minipage}}
\end{center}
\caption{One-step CBV Semantics}\label{fig:cbvsem}
\end{figure*}
We denote $\mctxcbv$ the contextual metric on $\LOPLUS$ induced by this semantics. We define a new embedding: $\LOPLUS \rightarrow \LBANG$, which respects the CBV semantics, as:
\begin{align*}
\embcbv{\varone} &= \bang \varone &
\embcbv{\abstr \varone \termone}
&= \bang{\abstrexp \varone {\embcbv \termone}} \\ 
\embcbv{\termone\termtwo} &= (\abstrexp\varone \varone\embcbv{\termtwo}){\embcbv
  \termone} & 
\embcbv{\psum\termone \termtwo} &= \psum
       {\embcbv\termone}{\embcbv \termtwo}
\end{align*}
We define $\typecbv$ as the following type : $\typecbv =
\rectype \vartypone {\bang{(\tarr \vartypone \vartypone)}}$. We define $\typtwocbv = \tarr \typecbv \typecbv$: it verifies $\typecbv \equaltypes \bang \typtwocbv $. For any
term $\termone \in \LOPLUS$, if $\fv \termone = \{\varthree_1,
\ldots,\varthree_n\}$ it holds that $\wfjt{(\bang \varthree_i :
  \typecbv)_{1 \leq i\leq n}}{\embcbv \termone}{\typecbv}$.

We will use later the following nice property of the encoding:
\begin{lemma}\label{propembcbv1}
Let be $\termone$ and $\termtwo$ in $\LOPLUS$, and $\varthree$ a free variable of $\termone$. Then $\subst{\embcbv \termone}{\varthree}{\abstrexp \varthree {\embcbv \termtwo}} \in \embcbv \LOPLUS$.
\end{lemma}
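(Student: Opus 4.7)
The plan is to strengthen the statement into a substitution lemma from which membership in $\embcbv \LOPLUS$ follows immediately. Specifically, I would prove by structural induction on $\termone$ that
\[
\subst{\embcbv \termone}{\varthree}{\abstrexp \varthree {\embcbv \termtwo}} \;=\; \embcbv{\subst{\termone}{\varthree}{\abstr \varthree \termtwo}},
\]
which, since the right-hand side is manifestly in the image of $\embcbv{\cdot}$, yields the lemma. The key observation driving this identity is that the CBV embedding turns every free variable $\varthree$ in $\termone$ into $\bang \varthree$, and $\bang{\abstrexp \varthree {\embcbv \termtwo}}$ is precisely $\embcbv{\abstr \varthree \termtwo}$; therefore substituting $\abstrexp \varthree {\embcbv \termtwo}$ for $\varthree$ inside $\embcbv \termone$ has the effect of placing the embedding of $\abstr \varthree \termtwo$ at each occurrence of $\varthree$ in the source term.

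The base cases are immediate. If $\termone = \varthree$, then the left-hand side is $\subst{\bang \varthree}{\varthree}{\abstrexp \varthree {\embcbv \termtwo}} = \bang{\abstrexp \varthree {\embcbv \termtwo}}$, while the right-hand side is $\embcbv{\abstr \varthree \termtwo} = \bang{\abstrexp \varthree {\embcbv \termtwo}}$. If $\termone$ is a different variable, both sides reduce to $\bang{\termone}$ unchanged. For the inductive cases, application and probabilistic sum are handled component-wise once one observes that substitution in $\LBANG$ commutes with the defining clauses of $\embcbv{\cdot}$ (a standard Barendregt-style renaming of bound variables avoiding $\varthree$ and the free variables of $\termtwo$ keeps things well-defined). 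For $\termone = \abstr{\varone}{\termfour}$, assuming $\varone \neq \varthree$ and $\varone \notin \fv{\termtwo}$, the left-hand side becomes $\bang{\abstrexp{\varone}{\subst{\embcbv \termfour}{\varthree}{\abstrexp \varthree {\embcbv \termtwo}}}}$, which by the induction hypothesis equals $\bang{\abstrexp{\varone}{\embcbv{\subst \termfour \varthree {\abstr \varthree \termtwo}}}} = \embcbv{\abstr{\varone}{\subst{\termfour}{\varthree}{\abstr \varthree \termtwo}}}$, matching the right-hand side.

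There is no serious obstacle here; the only point requiring a modicum of care is the usual bookkeeping around capture-avoiding substitution, in particular choosing representatives so that the bound variables introduced by the embedding of applications (the fresh $\varone$ in $(\abstrexp \varone \varone \embcbv{\termtwo}) \embcbv{\termone}$) and those inside $\abstr \varthree \termtwo$ do not clash with $\varthree$ or with $\fv{\termtwo}$. Once that convention is fixed, the induction is purely mechanical and yields the desired inclusion $\subst{\embcbv \termone}{\varthree}{\abstrexp \varthree {\embcbv \termtwo}} \in \embcbv \LOPLUS$.
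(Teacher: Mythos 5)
Your proof is correct and follows essentially the same route as the paper, which simply performs induction on the structure of $\termone$ (the paper gives no further details). Your explicit strengthening to the identity $\subst{\embcbv \termone}{\varthree}{\abstrexp \varthree {\embcbv \termtwo}} = \embcbv{\subst{\termone}{\varthree}{\abstr \varthree \termtwo}}$ is the natural way to make that induction go through, and each case checks out against the defining clauses of $\embcbv{\cdot}$.
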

\begin{proof}
The proof is by induction on the form of $\termone$.
\end{proof}
\begin{lemma}
Let be $\termone \in \LOPLUS$. Then for any $\valone \in \supp ({{\sem {\embcbv \termone}}})$, it holds that $\valone \in \embcbv \LOPLUS$.
\end{lemma}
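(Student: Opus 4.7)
The goal is to show $\supp(\sem{\embcbv{\termone}})\subseteq\embcbv{\LOPLUS}$. Since $\sem{\embcbv{\termone}}$ is the supremum of the finite $\distrone$ with $\sssp{\embcbv{\termone}}{\distrone}$, it suffices to prove $\supp(\distrone)\subseteq\embcbv{\LOPLUS}$ for every such $\distrone$. A naive induction on the derivation fails because $\embcbv{\LOPLUS}$ is not preserved by the one-step reduction $\redonestep$ of $\LBANG$: once the argument $\embcbv{\termtwo}$ of $\embcbv{\termtwo\,\termthree}=(\abstrexp{\varone}{\varone\,\embcbv{\termthree}})\,\embcbv{\termtwo}$ has been evaluated to a boxed abstraction $\bang{\abstrexp{\varfour}{\embcbv{\termfour}}}$, the outer $\beta$-step produces $(\abstrexp{\varfour}{\embcbv{\termfour}})\,\embcbv{\termthree}$, which is not of the form $\embcbv{\cdot}$. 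So the invariant has to be enlarged.

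The plan is to strengthen the statement by introducing a set $\widetilde{\mathcal{L}}$ of closed $\LBANG$-terms, defined as the smallest set containing $\embcbv{\LOPLUS}$ and closed under the two clauses (ii) forming $(\abstrexp{\varone}{\varone\,\embcbv{\termfour}})\,\tsix$ for $\tsix\in\widetilde{\mathcal{L}}$ and $\termfour\in\LOPLUS$ closed, and (iii) forming $(\abstrexp{\varfour}{\embcbv{\termfour}})\,\tsix$ for $\tsix\in\widetilde{\mathcal{L}}$ and $\abstr{\varfour}{\termfour}$ a closed $\LOPLUS$-value. Clauses (ii) and (iii) are designed to record the only two shapes of outermost evaluation-context layers that can be generated while CBV-reducing an embedded application, namely before and after its first head $\beta$-step has fired.

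The core step would then be to establish, by case analysis on the active evaluation context, that $\widetilde{\mathcal{L}}$ is closed under $\redonestep$. Reductions strictly inside the argument $\tsix$ of a clause-(ii) or clause-(iii) layer are absorbed by induction on the membership derivation. Reductions inside a base $\embcbv{\cdot}$-subterm are handled by unfolding the embedding one step: the probabilistic choice rule keeps the reduct in $\embcbv{\LOPLUS}$, while exposing an embedded application's head $\beta$-redex produces a clause-(ii) shape. The only truly head-level case is the $\beta$-step from $(\abstrexp{\varfour}{\embcbv{\termfour}})\,\bang{\abstrexp{\varfive}{\embcbv{\termfive}}}$ to $\subst{\embcbv{\termfour}}{\varfour}{\abstrexp{\varfive}{\embcbv{\termfive}}}$, and here Lemma~\ref{propembcbv1} gives exactly that the substituted term lies back in $\embcbv{\LOPLUS}$. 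A head $\beta$-step from a clause-(ii) shape whose argument has just become a boxed abstraction produces precisely a clause-(iii) shape.

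A routine induction on the derivation of $\sssp{\cdot}{\cdot}$ now shows $\supp(\distrone)\subseteq\widetilde{\mathcal{L}}\cap\valset$ for every $\distrone$ derivable from a term of $\widetilde{\mathcal{L}}$; in particular this holds for $\embcbv{\termone}\in\embcbv{\LOPLUS}\subseteq\widetilde{\mathcal{L}}$, so the original statement reduces to the inclusion $\widetilde{\mathcal{L}}\cap\valset\subseteq\embcbv{\LOPLUS}$. That inclusion is immediate, since a closed $\LBANG$-value is either an abstraction or a box, and clauses (ii) and (iii) introduce only applications. The main obstacle is the calibration of $\widetilde{\mathcal{L}}$: it must be narrow enough for the value-characterization to be trivial yet broad enough to absorb every surface reduct reachable from $\embcbv{\LOPLUS}$. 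Clauses (ii) and (iii) are tuned exactly so that Lemma~\ref{propembcbv1} fires at the unique point where an apparent escape from $\embcbv{\LOPLUS}$ would otherwise occur.
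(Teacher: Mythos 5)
Your proposal is correct and matches the paper's (extremely terse) proof, which consists solely of the remark that the result follows from Lemma~\ref{propembcbv1}: your strengthened invariant $\widetilde{\mathcal{L}}$, closed under the two intermediate application shapes generated by surface reduction of an encoded application, is exactly the bookkeeping needed to make that remark into a proof, and Lemma~\ref{propembcbv1} fires at the unique head $\beta$-step where the reduct re-enters $\embcbv{\LOPLUS}$. The concluding observation that values in $\widetilde{\mathcal{L}}$ can only arise from the base case is also sound, so the argument is complete.
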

\begin{proof}
The proof uses Lemma \ref{propembcbv1}.
\end{proof}

Again, we are now going to define a stable fragment of $\lmcbang$.
However, contrary to the call-by-name case, it does not contain the full call-by-value
encoding $\LOPLUS$. It actually contains only the encoding of \emph{values}
in $\LOPLUS$, which allows us to have a more tractable
characterization. We are able to treat the general case simply by
noting that for any term $\termone$ and $\termtwo$, it holds that
$\appl \mctxcbv \termone \termtwo = \appl \mctxcbv {\abstr \varthree
  \termone}{\abstr \varthree \termtwo}$.

\begin{definition}
We define a fragment $\flmccbv = (\statesexpcbv, \labelscbv)$ as specified in Figure \ref{fig:lmcexpcbv}, and a metric $\metrcbv$ on $\LOPLUS$ specified by:
$$\appl\metrcbv \termone \termtwo =  \appl
{\metrtr{\ltsf \flmccbv}}{\stateterm \typecbv {\bang{\embcbv{\abstr \varone \termone}}}} {\stateterm \typecbv {\bang{\embcbv{\abstr \varone \termtwo}}}} .$$
\end{definition}
\begin{figure*}
\begin{center}
\fbox{
\begin{minipage}{0.95 \textwidth}

\begin{center}
\scalebox{\condscale}{
$ {\typtuplone}_{n,(m,p)}^{\interpone} = (\seq{{(\typtwocbv)}^n}, {\seq{{(\typtwocbv)}^m,{(\bang \typtwocbv)}^p })}_\interpone  
\qquad
\vjcbv = \vvalidjudgss \cap \{(\isone, \istwo, \ms{\typone}, \ms{\typtwo}, \termone, \typecbv )\, , \, \termone \in \embcbv{\LOPLUS} \} $}
\\
\hide{
\scalebox{\condscale}{
$ \statesexpcbv = \left\{(({\seq{\abstrexp \varone \ms{\termone}}}, \seq {\abstrexp \varone \ms{\termtwo}, \bang{(\abstrexp \varone \ms{\termthree})}}_\interpone),\typtuplone_{n,(m,p)}^\interpone); \, \ms \termone, \ms \termtwo, \ms \termthree \, n,m,p\text{-sequences in} \embcbv \LOPLUS, \, \forall i, \termtwo_i \in \ms{\termone}  \right\}  \bigcap \statesexp  $}
\\}
{
\scalebox{\condscale}{
$ \statesexpcbv = \left\{
\begin{array}{l}
(({\seq{\abstrexp \varone \ms{\termone}}}, \seq {\abstrexp \varone \ms{\termtwo}, \bang{(\abstrexp \varone \ms{\termthree})}}_\interpone),\typtuplone_{n,(m,p)}^\interpone) \\ \text{ where }\ms \termone, \ms \termtwo, \ms \termthree \text{ are }n,m,p\text{-sequences in} \embcbv \LOPLUS \text{, and }\forall i, \termtwo_i \in \ms{\termone}
\end{array}
  \right\}  \bigcap \statesexp 
$}\\
 }
\scalebox{\condscale}{
$\labelscbv = \typtuples \times (\left(\{ \actappltupl \judgone i \text{ with } \judgone \in \vjcbv \} \cap \labelsappl\right) \cup \labelsexp \cup \labelsbang)   $}

\end{center}

\end{minipage}
}
\end{center}
\caption{The Stable Fragment $\flmccbv = (\statesexpcbv,\labelscbv )$.} \label{fig:lmcexpcbv}
\end{figure*}
Please observe that contrary to the cbn-case, we do not need to
restrict action by considering the type of states. 
\begin{proposition}
$\flmccbv$ is a stable fragment of $\lmcbang$.
\end{proposition}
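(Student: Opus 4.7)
The plan is to verify the stability condition directly by case analysis on the action $\actone$ such that $(\typtuplone, \actone) \in \labelscbv$. First I would fix a generic state $\stateone = ((\ms{\termone}, \ms{\valtwo}), \typtuplone_{n,(m,p)}^\interpone) \in \statesexpcbv$ and make explicit its shape: each $\termone_k$ is of the form $\abstrexp{\varone}{\termone_k'}$ with $\termone_k' \in \embcbv\LOPLUS$; each linear component at a ``non-boxed'' position is of the form $\abstrexp{\varone}{\termtwo_i'}$ with $\termtwo_i' \in \embcbv\LOPLUS$ and $\termtwo_i' \in \{\termone_1',\ldots,\termone_n'\}$; each linear component at a ``boxed'' position is of the form $\bang{\abstrexp{\varone}{\termthree_\ell'}}$ with $\termthree_\ell' \in \embcbv\LOPLUS$. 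Then for each basic action I need to check that every target $\statetwo$ with $\probexp(\stateone,\actone,\statetwo)>0$ still fits this template.

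The dereliction action $\evalbang j$ is straightforward: the rule from Figure~\ref{fig:lmc} requires the $j$-th linear value to be of the form $\bang{\termtwo}$ with $\typtwo_j = \bang \typthree$; in our setting this forces $j$ to correspond to a boxed position $\bang{\abstrexp{\varone}{\termthree_\ell'}}$, so the term moved to the exponential part is $\abstrexp{\varone}{\termthree_\ell'}$, which has the required form. The new state has the exponential sequence extended by one $\embcbv\LOPLUS$-body abstraction and one boxed linear component removed, so the invariant is preserved (the constraint $\termtwo_i' \in \{\termone_1',\ldots\}$ only becomes easier, since $\ms{\termone}$ grows). Milner's action $\actbang j$ is equally easy: the $j$-th exponential component $\abstrexp{\varone}{\termone_j'}$ is already a value, so $\sem{\termone_j} = \dirac{\termone_j}$ and the unique successor state has a new linear component equal to $\abstrexp{\varone}{\termone_j'}$; its body $\termone_j'$ is in $\{\termone_1',\ldots,\termone_n'\}$ by construction, so the side condition on non-boxed linear entries still holds.

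The main difficulty lies in the applicative action $\actappltupl{\judgone}{j}$ with $\judgone = (\isone,\istwo,\ms{\typone}',\ms{\typtwo}',\termtwo,\typthree) \in \vjcbv$. I would first argue that $\typtwo_j$ must be an arrow type, so by the shape of our tuple type it must be $\typtwocbv = \tarr{\typecbv}{\typecbv}$, forcing $j$ to correspond to a non-boxed position; hence $\valtwo_j = \abstrexp{\varone}{\termtwo_{i_0}'}$ with $\termtwo_{i_0}' \in \embcbv\LOPLUS$. The crucial step is then to show that the substituted argument $\fillc{\termtwo}{(\ms{\termone},\ms{\valtwo})}$ lies in $\embcbv\LOPLUS$: since $\termtwo \in \embcbv\LOPLUS$ and the embedding uses each free variable only in the form $\bang{\varone}$, substituting an exponential component $\termone_k = \abstrexp{\varone}{\termone_k'}$ for $\varone_k$ turns each occurrence $\bang{\varone_k}$ into $\bang{\abstrexp{\varone}{\termone_k'}} = \embcbv{(\abstr{\varone}{\cdot})} \in \embcbv\LOPLUS$, and one then applies Lemma~\ref{propembcbv1} inductively on the structure of $\termtwo$. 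The application $\valtwo_j\bigl(\fillc{\termtwo}{\cdots}\bigr)$ reduces in one step to $\subst{\termtwo_{i_0}'}{\varone}{\fillc{\termtwo}{\cdots}}$, which is in $\embcbv\LOPLUS$ by another application of Lemma~\ref{propembcbv1}. The lemma asserting that $\supp(\sem{\embcbv\termone}) \subseteq \embcbv\LOPLUS$ then guarantees that the resulting value $\valone$ is itself in $\embcbv\LOPLUS$; together with the type constraint $\typfour = \typecbv = \bang{\typtwocbv}$, this forces $\valone$ to be of the form $\bang{\abstrexp{\varone}{\cdot}}$ with body in $\embcbv\LOPLUS$, so $\valone$ is added at a ``boxed'' position. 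Since all other non-boxed linear components are preserved (their bodies still lie in $\ms{\termone}'$, which is unchanged by the action), the side condition of the template survives, concluding that the successor state belongs to $\statesexpcbv$. The hard part will be bookkeeping the exact interplay between the substitution in $\termtwo$ and the $\embcbv$-encoding pattern; I would isolate this into a short substitution lemma before handling the three cases above.
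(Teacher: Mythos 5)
Your proposal follows essentially the same route as the paper's own proof: a case analysis on the three kinds of actions ($\evalbang{j}$, $\actbang{j}$, $\actappltupl{\judgone}{j}$), using the closure of $\embcbv{\LOPLUS}$ under substitution (Lemma~\ref{propembcbv1}) and under evaluation to show that every reachable value has the shape required by the template defining $\statesexpcbv$. Your handling of Milner's action (the exponential component is already a value, so the successor just copies it to a non-boxed linear position whose body already occurs in $\ms{\termone}$) is the reading consistent with Figure~\ref{fig:lmcexpcbv}, and the applicative case is argued exactly as in the paper, so the proposal is correct.
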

\begin{proof}
Let be $(\tuplone, \typtuplone) \in \statesexpcbv$. Since $(\tuplone, \typtuplone)$ is in $\statesexpcbv$, there exist $\interpone$, and $ \ms \termone, \ms \termtwo, \ms \termthree$ respectively $n,m,p$ term sequences, such that $\tuplone = ({{\abstrexp \varone \ms{\termone}}}, {\abstrexp \varone \ms{\termtwo}; \bang{(\abstrexp \varone \ms{\termthree})}}_\interpone)$ and $\typtuplone = \typtuplone_{n,(m,p)}^\interpone$. Let be $\actone $ such that $(\typtuplone, \actone) \in \labelscbv$. Let be $\statetwo = (\tupltwo, \typtupltwo)$ such that  $\probexp(\stateone, \actone, \statetwo)>0$. We want to show that $\statetwo \in  \statesexpcbv$. We distinguish cases depending on $\actone$. 
\begin{itemize}
\item Or $\actone = \actbang i$. It implies that $\tupltwo = (\abstrexp \varone {\ms \termone}, {\abstrexp \varone {\ms \termtwo}; \bang {(\abstrexp \varone \ms{\termthree})}}_\interpone ; \valone)$ with $\wfjt {} \valone \typecbv $ and $ \typtupltwo = (\seq{{(\typtwocbv)}^n}, {\seq{{(\typtwocbv)}^m,{(\bang \typtwocbv)}^p }}_\interpone, \typecbv)  $. Since $\valone$ is obtained by the evaluation of a term in $\embcbv \LOPLUS$, it is also in $\embcbv \LOPLUS$. It allows us to see that $\valone$ is of the form $\bang {\abstrexp \varone \termfour}$. So we see that we can construct a permutation $\interptwo $ from $\{1, \ldots,m+p+1 \}$ such that
$\statetwo = ((\abstrexp \varone {\ms \termone}, {\abstrexp \varone {\ms \termtwo}; \bang {(\abstrexp \varone (\ms{\termthree};\termfour))}}_\interptwo) )$, and $ \typtupltwo = (\seq{{(\typtwocbv)}^n}, {\seq{{(\typtwocbv)}^m,{(\bang \typtwocbv)}^{p+1} }}_\interptwo)  $. As a consequence, $\statetwo \in \statesexpcbv$.
\item Or $\actone = \evalbang i$. It implies that $\tupltwo = (\abstrexp \varone {(\ms \termone ; \termthree_j)}, {\abstrexp \varone {\ms \termtwo}; \bang {(\abstrexp \varone \ms{\termthree}_{\{1, \ldots,p \} \setminus \{j\}})}}_{\interpone_\remove{\{j\}}})$, when $\interpone(m + j) = i$, and $\typtupltwo = (\seq{{(\typtwocbv)}^{n+1}}, {\seq{{(\typtwocbv)}^m,{(\bang \typtwocbv)}^{p-1} }}_{\interpone_\remove{\{j\}}} )  $. So we can see that $\statetwo \in \statesexpcbv$. 
\item Or $\actone = \actappltupl \judgone i$ with $\judgone \in \vjcbv$. Then $\judgone = (\isone, \istwo, \ms{\typone}, \ms{\typtwo}, \termone, \typecbv )$ and  $\termone$ is an open value in $ \embcbv{\LOPLUS}  $. Since there exist $\statetwo$ such that $\probexp(\stateone, \actone, \statetwo)>0$, it means that $\isone \subseteq \{1, \ldots, n\}$, and $\istwo \subseteq \{1, \ldots m+p\}$, and moreover $(\ms \typone, \ms \typtwo)$ is one of the $\typtuplone^{m,n,p}_\interpone$. verifying $1 \leq \interpone(i) \leq m$ (that is, $\ms \typtwo_i = \tarr \typtwocbv \typtwocbv $). By a similar reasoning as previous cases, we can see that $\statetwo \in \statesexpcbv$.
\end{itemize} 
\end{proof}
Please observe that for any term $\termone \in
\LOPLUS$, the associated state $\stateterm \typecbv {\embcbv {\abstr
    \varthree \termone}}$ is in $\statesexpcbv$.

As in the previous section,
we define a suitable set $\contsonecbv$ of tuple contexts preserving $\flmccbv$, in order to apply Proposition \ref{frtrs}.
\begin{definition}
We define a set of tuple contexts $\contsonecbv$ as those of the form $( \contone_1 \ldots \contone_k \conttwo, (\ms \typone, \ms \typtwo), \typecbv)$ such that:
\begin{varitemize}
\item $ \contone_i \in \{ \vartwo_j \mid \typtwo_j = \tarr {\typecbv}{\typecbv} \} \cup \{\abstrexp \varthree {\varthree\embcbv{\termone}} \mid \varthree \not \in \fv \termone\}\cup \abstrexp \varthree {\embcbv{\LOPLUS}}  \cup\{\varone_j \}$
\item $\conttwo \in \embcbv{\LOPLUS} \cup  \{\vartwo_j \mid \typtwo_j = \bang{(\tarr \typecbv \typecbv)} \}\cup \{\varone_j \} $
\end{varitemize}
\end{definition}
\begin{proposition}
$\contsonecbv$ is preserved by $\flmccbv$.
\end{proposition}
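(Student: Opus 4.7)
The plan is to do a case analysis on the shape of $\contone = \contone_1 \cdots \contone_k \conttwo$, dual to the one used in Lemma~\ref{preservcbn} for $\contsonecbn$, exploiting the fact that $\contsonecbv$ is tailored so that every reduction or LMC action compatible with call-by-value evaluation on the encoding $\embcbv{\cdot}$ keeps one inside $\contsonecbv$. Since $\contone$ is assumed not to be an open value, either $\conttwo$ is itself reducible, or $k\geq 1$ and the outermost application $\contone_k\conttwo$ is the place where a meaningful action can be fired.

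First I would locate the active position. If $\conttwo \in \embcbv\LOPLUS$ and is not a value, then $\conttwo$ carries an internal $\beta$- or $\oplus$-redex, and I would fire it by a $\tau$-step (Figure~\ref{osctl}); closure of $\embcbv\LOPLUS$ under such internal reductions -- a substitution property in the spirit of Lemma~\ref{propembcbv1} -- keeps the result in $\contsonecbv$. Otherwise $\conttwo$ is a value and attention shifts to $\contone_k\conttwo$, where I would choose the action depending on the pair $(\contone_k,\conttwo)$. When $\contone_k = \abstrexp\varthree M$ and $\conttwo$ is syntactically a $\bang W$, I would fire the non-linear $\beta$-redex with $\tau$: in both admissible shapes of $M$ (namely $\varthree\,\embcbv{\termone}$ or $M\in\embcbv\LOPLUS$) the substitution either yields $W\,\embcbv{\termone}$, whose new head $W$ is of the form $\abstrexp\cdot\,\embcbv\cdot$ matching the third pattern for $\contone_i$, or stays in $\embcbv\LOPLUS$ by substitutivity of the embedding. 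When $\conttwo = \vartwo_j$ with $\typtwo_j = \bang\typtwocbv$, I would fire $\evalbang j$ instead, because the fresh $\bang\varone_{n+1}$ produced by this action equals $\embcbv{\varone_{n+1}}$ and hence the new $\conttwo'$ is still in $\embcbv\LOPLUS$. When $\contone_k = \vartwo_{j'}$ has function type $\typtwocbv$, the outer redex is $\vartwo_{j'}\conttwo$ and I would fire the applicative action $\actappltupl{\judgone}{j'}$ with $\judgone$ carrying $\conttwo$ as argument; the constraint $\judgone\in\vjcbv$ is met precisely because $\conttwo\in\embcbv\LOPLUS$. Finally, when $\contone_k = \varone_j$, I would fire Milner's action $\actbang j$; by the shape of states in $\statesexpcbv$ the exponential component at position $j$ is an $\abstrexp$-abstraction, whose evaluation inserts a fresh $\vartwo_{m+1}$ of function type $\typtwocbv$ in the linear part, again matching the first pattern for $\contone_i$.

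The main obstacle will be the bookkeeping: for each action one must verify that it is legal in $\flmccbv$ (the resulting tuple still lies in $\statesexpcbv$ with the proper permutation $\interpone$ of linear types, and every applicative judgment constructed belongs to $\vjcbv$) and that the rewritten term reassembles into the $\contsonecbv$ shape. Three invariants make the analysis go through: the identity $\embcbv\varone = \bang\varone$, which ensures that terms produced by $\evalbang$ stay inside $\embcbv\LOPLUS$; stability of $\embcbv\LOPLUS$ under substitution of $\embcbv$-values and under internal reduction; and the fact that $\vjcbv$ restricts applicative labels to arguments drawn from $\embcbv\LOPLUS$, which is exactly the admissible shape of $\conttwo$ whenever an applicative action is the one to fire.
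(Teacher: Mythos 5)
Your overall strategy---a case analysis on the decomposition $\contone_1\cdots\contone_k\conttwo$, choosing the unique fireable action from the shapes of $\contone_k$ and $\conttwo$, and checking the reduct against the grammar of $\contsonecbv$---is the same as the paper's, and the three invariants you isolate (variables encode as boxes, substitutivity via Lemma~\ref{propembcbv1}, and the $\embcbv{\LOPLUS}$-restriction built into $\vjcbv$) are exactly the ones the paper relies on. There is, however, a genuine gap in your first case, where $\conttwo\in\embcbv{\LOPLUS}$ is not an open value. You justify it by ``closure of $\embcbv{\LOPLUS}$ under internal reductions'', but $\embcbv{\LOPLUS}$ is \emph{not} closed under the reduction in question: when $\conttwo=\embcbv{\termone\termtwo}=(\abstrexp\varthree{\varthree\embcbv\termtwo})\,\embcbv\termone$, the redex to fire sits (possibly deeply) inside $\embcbv\termone$, and once it fires the term leaves $\embcbv{\LOPLUS}$---for instance $(\abstrexp\varthree{\varthree\embcbv\termtwo})\bang{\varone_j}$ reduces to $\varone_j\,\embcbv\termtwo$, which is not the encoding of any $\LOPLUS$ term. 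What saves the argument is precisely the larger grammar of $\contsonecbv$: the paper proceeds by induction on the source term of $\conttwo$, re-decomposing $\contone_1\cdots\contone_k\conttwo$ as $\contone_1\cdots\contone_k\contone_{k+1}\conttwo'$ with $\contone_{k+1}=\abstrexp\varthree{\varthree\embcbv\termtwo}$ (one of the admissible prefix shapes) and $\conttwo'=\embcbv\termone$, and recursing. Without this re-decomposition you can neither locate the action that actually fires nor argue that its reduct matches the grammar; as written, the step ``the result stays in $\embcbv{\LOPLUS}$'' would fail.

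Two smaller omissions in the case split: you never treat $\conttwo=\varone_j$, which the definition of $\contsonecbv$ explicitly allows and which the paper dispatches with Milner's action; and in the case $\contone_k=\abstrexp\varthree{(\varthree\embcbv\termone)}$ with $\conttwo=\bang{W}$, the unboxed head $W$ may also be an exponential variable $\varone_j$ (since $\embcbv{\varone}=\bang\varone$), landing in the pattern $\{\varone_j\}$ for $\contone_i$ rather than in $\abstrexp\varthree{\embcbv{\LOPLUS}}$. Neither breaks the proof, but both must appear in a complete analysis.
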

\begin{proof}
Let be $(\contone, \typtuplone, \typthree) \in \contsonecbv$. Let be $\actone$ such that $(\typtuplone, \actone) \in \labelscbv$. Let be $\stateone$ a $\typtuplone$-state $\stateone$ in $\ltsfst{\ltsf{\flmccbv}}$.Let be $\distrtwo$ such that $\osctl{(\contone, \stateone)}{\actone}{\distrtwo}$. Let be $\stateonetd \in \supp (\distrtwo)$. We note $\stateonetd = (\contthree, \statetwo)$, and $\typtupltwo$ the tuple type such that $\statetwo$ is a $\typtupltwo$-state. We have to show that $ (\contthree, \typtupltwo,\typecbv) \in \contsonecbv$.

 Please observe that there exists actually only one $\actone$ such that  $\osctl{(\contone, \stateone)}{\actone}{}$, and it depends only on $\contone$. Consequently, we do the proof by considering separately the different form of $\contone$. We know, by definition of $\contsonecbv$, that  $\contone$ is of the form $ \contone_1 \ldots \contone_k \conttwo$.
\begin{itemize}
\item If one of the $\contone_i$ is of the form $\varone_j$, we consider the smallest index $i$ for which it happens. Then $\actone = \actbang i$, and $\contthree = \contone_1, \ldots, \bang{\vartwo_{n+1}}, \ldots \contone_n\conttwo$.
\item Otherwise, $\contone_1 \ldots \contone_k \hole$ is an evaluation context. Then:
\begin{itemize}
\item If $\conttwo = \varone_i$. We have that $\actone = \actbang i $. It follows that $\osctl {\stateone}{\actone}{\statetwo}$, and $\contthree = \contone_1, \ldots, \contone_k \vartwo_{m+p+1} $. So we can see that the result holds.
\item If $\conttwo =  \vartwo_j$, with $\typtwo_j = \bang \typtwocbv$. Then it holds that $\actone = \evalbang i$., and $\contthree = \contone_1, \ldots, \contone_k \bang{\varone_{n+1}}$.
\item If $\conttwo$ is in $\embcbv \LOPLUS$. We show the result by induction on the term $\termthree$ such that $\conttwo = \embcbv{\termthree}$.
\begin{itemize}
\item Or $\conttwo = \embcbv{\psum \termone \termtwo}$. Then $\actone = \tau$, and $\contthree =  \contone_1, \ldots, \contone_k \contfour$, with $\contfour \in \{\embcbv \termone, \embcbv \termtwo \} \subseteq \embcbv \LOPLUS$  and so the result holds.
\item Or $\conttwo = \embcbv{\termone \termtwo}$. Then $\conttwo = (\abstrexp \varthree{\varthree {\embcbv \termtwo}}){\embcbv \termone}$. We can take $\contfour = \embcbv{\termone}$, $\contone_{n+1} = (\abstrexp \varthree{\varthree {\embcbv \termtwo}})$, and we can see that $\contone = \contone_1, \ldots , \contone_{n+1} \contfour$, and so we can apply the induction hypothesis to $\contfour = \embcbv{\termone}$.
\item If $\conttwo$ is an open value. It means that $\conttwo = \bang{\abstrexp \varthree {\embcbv \termone}}$, or $\conttwo = \bang{\varone_i}$. Please observe that $\contone_n$ is always an open value (by construction). It means that the redex $\contone_n \conttwo$ is going to be reduced.
\begin{itemize}
\item If $\contone_n = \vartwo_j$, with $\typtwo_j = \tarr \typecbv \typecbv$, then $\actone = \actappltupl \judgone j$, where $\judgone = (\isone, \istwo, \ms \typone, \ms \typtwo, \conttwo, \typecbv)$, where $\isone \subseteq \{1, \ldots,n\}$, and $\wfjt{\bang {{\ms \varone}_\isone} : \bang {\ms \typone_\isone}, {\ms \vartwo}_\istwo : \ms \typtwo_\istwo }{\conttwo}{\typecbv} $ and we see that indeed $\judgone \in \vjcbv $. Moreover, if we note $q$ the cardinality of $\isone$, we can see that $\contthree = \contone_1, \ldots, \contone_{n-1} \vartwo_{n+1-{q+1}}$. As a consequence, we have the result.
\item If $\contone_n = \abstrexp \varthree{(\varthree\embcbv{\termtwo})}$. Then $\actone = \tau$. Please observe that we can define $\contone_{n+1}$ such that $\bang \contone_{n+1} = \conttwo$. Now, please observe that we should have $\contthree = \contone_1, \ldots, \contone_{n-1} \contone_{n+1} \embcbv{\termtwo} $, and so the result holds.
\item If $\contone_n =  \abstrexp \varthree{(\embcbv{\termtwo})}$. Then if $\conttwo = \bang{\varone_i}$, we have that $\contthree = \contone_1, \ldots, \contone_{n-1} \embcbv{\subst \termone \varthree {\varone_i}} $, and we have the result. Similarly, if $\conttwo =  \bang{\abstrexp \varthree {\embcbv \termone}}$, it holds that $\contthree = \contone_1, \ldots, \contone_{n-1} \embcbv{\subst \termone \varthree {\abstrexp \varthree{\embcbv \termone}}} $. We use Lemma \ref{propembcbv1} to obtain the result. 
\end{itemize}
\end{itemize}
\end{itemize}
\end{itemize}
\end{proof}

\begin{definition}
We define a metric $\metrcbv$ on $\LOPLUS$ by:
$\appl\metrcbv \termone \termtwo =  \appl
{\metrtr{\ltsf \flmccbv}}{\stateterm \typecbv {\bang{\embcbv{\abstr \varone \termone}}}} {\stateterm \typecbv {\bang{\embcbv{\abstr \varone \termtwo}}}}. $
\end{definition}

\begin{theorem}[Full Abstraction for CBV]
 $ \mctxcbv$ and  $\mtcbv$ coincide.
\end{theorem}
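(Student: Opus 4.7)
The plan is to adapt the two-step argument used for the call-by-name case (Theorem~\ref{th:facbn}) to the CBV setting. All the underlying infrastructure is in place: we have a stable fragment $\flmccbv$ and a set $\contsonecbv$ of tuple contexts preserved by it, so Proposition~\ref{frtrs} directly yields $\appl{\metrctx{\contsonecbv}}{\stateone}{\statetwo} \leq \appl{\metrtr{\ltsf{\flmccbv}}}{\stateone}{\statetwo}$ for every $\typtuplone$-state in $\ltsfst{\ltsf{\flmccbv}}$. What remains is the analogue of Lemma~\ref{contextcbnmetr} linking $\mctxcbv$ on $\LOPLUS$ to $\metrctx{\contsonecbv}$ on the relevant $\typtuplone$-states, together with a completeness result in the opposite direction.

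For soundness, I would first observe that for any $\termone,\termtwo \in \LOPLUS$ we have $\appl{\mctxcbv}{\termone}{\termtwo} = \appl{\mctxcbv}{\abstr\varone\termone}{\abstr\varone\termtwo}$, which is why $\metrcbv$ is defined on the $\bang$-boxed encodings of $\abstr\varone\termone$: this ensures we only compare states that actually lie in $\statesexpcbv$. Then I would prove a CBV analogue of Lemma~\ref{contextcbnmetr}: for any CBV context $\contone$ of $\LOPLUS$, the term $\embcbv{\fillc{\contone}{\abstr\varone\termone}}$ can be presented, modulo the encoding, as $\forget{\conttwo,\stateterm{\typecbv}{\bang{\embcbv{\abstr\varone\termone}}}}$ for some $(\conttwo,\typtuplone,\typecbv)\in\contsonecbv$, and conversely every element of $\contsonecbv$ is of this form. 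Here the definition of $\contsonecbv$ is precisely engineered so that the admissible context heads $\contone_i \in \{\vartwo_j, \abstrexp\varthree{(\varthree\embcbv{\termone})}, \abstrexp\varthree{\embcbv{\LOPLUS}}, \varone_j\}$ exactly mirror the ways a CBV evaluation context $\contevone \bnf \hole \mid \contevone\,\valone \mid \termone\,\contevone$ from $\LOPLUS$ is translated: the $\abstrexp\varthree(\varthree\embcbv{\termtwo})$ heads correspond to the arrow $\termtwo\,\contevone$, while the heads $\abstrexp\varthree{\embcbv{\LOPLUS}}$ and the applicative actions $\actappltupl{\judgone}{j}$ with $\judgone\in\vjcbv$ correspond to $\contevone\,\valone$ once the argument has been fully evaluated.

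Completeness proceeds along the lines of Section~\ref{tracecars} combined with the CBN strategy. The idea is to observe that the trace characterization (Proposition~\ref{propcartr}) transported through the stable fragment $\flmccbv$ yields a notion of CBV trace, where each element corresponds either to a copying/unboxing step or to an applicative action with an argument drawn from $\vjcbv$, i.e.\ itself in the image of $\embcbv{(\cdot)}$. Translating such traces back via the inverse of $\embcbv{(\cdot)}$ gives a family of CBV contexts of $\LOPLUS$, so that whenever $\stateterm{\typecbv}{\bang{\embcbv{\abstr\varone\termone}}}$ and $\stateterm{\typecbv}{\bang{\embcbv{\abstr\varone\termtwo}}}$ fail to be $\varepsilon$-bisimilar, the separating trace provides a concrete CBV context $\contone$ with $\lvert\sumdistr{\sem{\fillc{\contone}{\termone}}}-\sumdistr{\sem{\fillc{\contone}{\termtwo}}}\rvert>\varepsilon$. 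This gives $\mctxcbv \geq \metrcbv$, completing full abstraction; as in the CBN case, the construction can alternatively be packaged as a CIU-style characterization, the CIU contexts being sequential compositions of CBV application frames.

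The main obstacle will be the explicit description of the trace/CIU class in CBV, together with the bookkeeping needed in the analogue of Lemma~\ref{lemc1}: since the encoding $\embcbv{\termone\termtwo}=(\abstrexp\varone\varone\embcbv{\termtwo})\embcbv{\termone}$ introduces an auxiliary $\abstrexp$ at every application, a single CBV application in $\LOPLUS$ corresponds to a composite pattern of tuple-LMC actions (an unboxing $\evalbang{}$ followed by a Milner copy $\actbang{}$ and an applicative $\actappltupl{\judgone}{\cdot}$ with $\judgone\in\vjcbv$). One therefore has to prove that \emph{every} admissible action sequence in $\ltsf{\flmccbv}$ starting from a state of the form $\stateterm{\typecbv}{\bang{\embcbv{\abstr\varone\termone}}}$ factors through such patterns, so that each trace really does correspond to a CBV context. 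Once this factorization lemma is established, the rest of the argument follows the CBN template essentially verbatim.
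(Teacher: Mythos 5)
Your proposal matches the paper's own (sketched) argument: soundness is obtained exactly as you describe, from Proposition~\ref{frtrs} applied to the stable fragment $\flmccbv$ and the preserved context set $\contsonecbv$, together with the reduction of $\mctxcbv$ to the value case via $\appl{\mctxcbv}{\termone}{\termtwo}=\appl{\mctxcbv}{\abstr\varone\termone}{\abstr\varone\termtwo}$; and completeness goes through ``yet another characterization of the context distance,'' namely the trace/CIU-style intrinsic characterization (packaged in the paper as the auxiliary LMC $\lmccbv$), which is precisely your trace-to-context translation. The factorization issue you flag --- that one $\LOPLUS$ application unfolds into a composite pattern of $\evalbang{}$, $\actbang{}$ and $\actappltupl{\judgone}{\cdot}$ actions --- is indeed the main bookkeeping burden, and it is handled by the case analysis in the preservation proof for $\contsonecbv$.
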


The proof can be found in~\cite{longversion}. Again, 
soundness is based on Proposition \ref{frtrs}, while completeness
relies on yet another characterization of the context distance.

\subsubsection{Intrinsic characterization}

We don't have an intrinsic characterization as simpler as in the cbn case. However, we can express the distance $\mctxcbv$ using a LMC
$\lmccbv$, designed to be a simplified and untyped version of $\lmcbang$, and that doesn't use the embedding.
\begin{definition}
We define a labelled Markov chain $\lmccbv = (\stcbv, \actcbv, \probcbv)$, where:
\begin{varitemize}
\item $\stcbv = \{(\ms \valone, n) \mid \ms \valone \text{ a }n\text{-sequence of values}\} $
\item $\actcbv = \{\actappltupl i \termone \mid \exists k \in \NN, \, \termone = x_k \text{ or } \termthree = \abstr \varthree \termone \} $
\item $\probcbv$ is defined in Figure \ref{fig:lmccbv}.
\end{varitemize}
\end{definition}
As for $\LBANG$, we transform the LMC $\lmccbv$ into a purely non-deterministic WLTS. We define 
$\ltscbv =
(\ltsstatescbv, \ltslabelscbv, \ltstrans,w)$, where:
\begin{varitemize}
\item
  The set of states is
  $\ltsstatescbv=\bigcup_{n \in \NN}\left( \distrs{\{\ms \valone \mid \ms \valone \text{ a }n\text{-sequence }\}}\times \{ n\}\right)$
\item
  The set of actions is $\ltslabelscbv = \actcbv\cup \NN$,
\item
  The transition function 
  $\ltsact{}{\cdot}{}$ is defined as follows:
  \begin{align*}
    &\ltsact {(\distrone, n)}{n}{(\distrone,
      n)}\\
    &\ltsact{(\distrone, n)}{\actone}{\sum_{\ms \valone}
    \distrone(\ms \valone)\cdot \probexp((\ms \valone, n))(\actone)} ,
  \end{align*}
\item
  The weight $w$ is such that $w(\distrone, n) = \sumdistr
  \distrone$.
\end{varitemize}
Please remember that we give in Section \ref{subsect:tr} a definition of the trace metric $\metrtr \ltsone$ for any WLTS. We can apply it to $\ltscbv$, and it gives us a characterization of $\metrcbv$. 
\begin{proposition}
Let be $\termone, \termtwo \in \LOPLUS$. Then:
$$\appl{\metrtr \ltsone}{(\seq{\abstr \varone \termone},1)}{(\seq{\abstr \varone \termtwo},1)} = \appl{\metrcbv}{\termone}{\termtwo}. $$ 
\end{proposition}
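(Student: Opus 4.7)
The plan is to derive this characterization by exhibiting a tight structural correspondence between the simplified WLTS $\ltscbv$ and the restriction of $\ltsf{\flmccbv}$ reachable from states of the form $\stateterm \typecbv {\bang{\embcbv{\abstr \varone \termone}}}$, and then invoke the trace characterization of the trace metric (Proposition \ref{propcartr}) on both sides. The guiding intuition is that $\ltscbv$ is obtained from $\ltsf{\flmccbv}$ by erasing typing information, collapsing the exponential and linear parts into a single sequence of $\LOPLUS$-values (which is sound because in the CBV encoding every value sits under a $\bang$), and packaging each fine-grained $\lmcbang$-interaction as a single high-level applicative action $\actappltupl i \termone$.

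First, I would define a translation $\phi$ sending a state $(\ms \valone, n)\in\ltsstatescbv$, where each $\valone_k = \abstr{\varthree}{\termthree_k}$ lives in $\LOPLUS$, to a canonical $\ltsf{\flmccbv}$-state whose exponential part lists the $\abstrexp \varthree {\embcbv{\termthree_k}}$ and whose linear part mirrors them as $\bang{\abstrexp \varthree {\embcbv{\termthree_k}}}$. The initial state $(\seq{\abstr \varone \termone},1)$ then corresponds to $\stateterm \typecbv {\bang{\embcbv{\abstr \varone \termone}}}$ after applying the forced preparatory actions ($\evalbang 1$ followed by a $\actbang 1$) that $\ltsf{\flmccbv}$ always performs to expose a value. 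These preparatory transitions are deterministic and success-probability preserving, so they do not affect the trace metric.

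Second, I would establish the operational correspondence by showing that one $\actcbv$-action in $\ltscbv$ is realised by a block of low-level $\labelsexp$-actions in $\ltsf{\flmccbv}$: performing $\actappltupl i \termone$ corresponds to (i) a Milner action $\actbang i$ on the $i$-th exponential component to produce a fresh copy, (ii) additional $\actbang$ / $\evalbang$ steps that realise every variable occurrence $\varone_k$ appearing in $\termone$ by drawing a copy out of the exponential part, and finally (iii) a single applicative action in $\labelsappl$ whose validity judgment belongs to $\vjcbv$ and whose argument is $\embcbv{\termone}$. Conversely, every admissible trace from $\phi(\ms \valone,n)$ in $\ltsf{\flmccbv}$ that ultimately reaches an interaction can be grouped into such blocks; actions that only reshuffle between exponential and linear parts without passing an argument are deterministic and preserve weights, so any block-respecting grouping can be normalised to the canonical form above. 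The main obstacle is precisely this decomposition-and-normalisation step: one must verify that, irrespective of the order in which low-level actions are interleaved, the overall success probability factors as a sum over $\ltscbv$-traces with exactly the matching coefficients, and that $\vjcbv$-restricted validity judgments are exactly those generated by terms of the shape $\varone_k$ or $\abstr \varthree \termtwo$ as required by $\actcbv$.

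Third, having the bijection between traces in $\ltscbv$ (from $(\seq{\abstr \varone \termone},1)$) and the corresponding normalised traces in $\ltsf{\flmccbv}$ (from $\stateterm \typecbv {\bang{\embcbv{\abstr \varone \termone}}}$) that preserves the success probability $\probt \traceone \cdot $ on both sides, and given that the weight $w$ of matched states is the same sum-of-distribution, I would apply Proposition \ref{propcartr} to both WLTSs. This turns each of the two metrics into a supremum of $|\probt \traceone {\cdot} - \probt \traceone {\cdot}|$ over admissible traces, and the two suprema coincide along the bijection. Because the admissibility sets also match (the bijection preserves whether a trace terminates successfully), the ``otherwise'' branch of Proposition \ref{propcartr} triggers on the same pairs on both sides, completing the proof of the equality $\appl{\metrtr{\ltscbv}}{(\seq{\abstr \varone \termone},1)}{(\seq{\abstr \varone \termtwo},1)} = \appl{\metrcbv}{\termone}{\termtwo}$.
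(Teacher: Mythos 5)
The paper states this proposition without proof, so there is no official argument to compare against; your strategy---a state translation between $\ltscbv$ and the reachable part of $\ltsf{\flmccbv}$, a block decomposition of each $\actcbv$-action into low-level $\labelsexp$-actions, and an appeal to the trace characterization of Proposition~\ref{propcartr} on both sides---is exactly what the surrounding construction of $\flmccbv$ is set up to support, and it is sound. Two remarks. First, step (ii) of your block decomposition is superfluous: in $\lmcbang$ an applicative action with judgment $\judgone=(\isone,\istwo,\ms\typone,\ms\typtwo,\termtwo,\typthree)$ substitutes the exponential components indexed by $\isone$ \emph{directly} into the non-linear free variables of the argument term (and $\embcbv{\varone}=\bang\varone$ puts every $\LOPLUS$-variable of the argument in exactly that position), so no preliminary $\actbang$/$\evalbang$ steps are needed to ``draw out'' copies; a block is just one Milner action $\actbang i$, one applicative action, and a deterministic $\evalbang$ to expose the resulting value for later use. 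Second, the normalisation step you flag as the main obstacle is indeed where all the content lies, and you should make explicit the reason it goes through: on $\statesexpcbv$ every exponential component is already a value of the form $\abstrexp\varone{\embcbv{\termthree_k}}$, so every $\actbang i$ succeeds with probability $1$ and every $\evalbang i$ is deterministic, whence inserting, deleting or permuting such bookkeeping actions changes neither the weight of the state reached nor the success probability of any continuation; consequently an arbitrary admissible trace can be reordered into block form with the same value of $\probt\traceone\cdot$, the two suprema of Proposition~\ref{propcartr} range over matching sets with matching values, and (since admissibility depends only on the arity and shape of states, which the two initial states share) the ``otherwise'' branch never fires. With those two points tightened, the argument is complete.
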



\begin{figure*}
\begin{center}
\fbox{
\begin{minipage}{0.95 \textwidth}
\scriptsize{
\begin{center}
\begin{tikzpicture}[auto]

\node [draw, rectangle, rounded corners] (Q) at (-0.4,-3) {$
\begin{array}{c}
(\abstr \varthree {\ms \termone},n) \\
\ms \termone \text{ a }n-\text{sequence}
\end{array}
$};
\node [draw, rectangle, rounded corners] (R) at (6.6,-3) {$
(\abstr \varthree {\ms \termone};\valtwo, n+1) 
$};
\draw[->](Q) to node[above]{$
\begin{array}{c}
\actappltupl \valone i\\
\fv \valone \subseteq \{\varone_1, \ldots, \varone_n \}
\end{array}
$ } node[below]{$ \sem{\subst {\termone_i}{\varthree}{\subst \valone {\ms \varone}{\abstr \varthree{\ms \termone}}}}(\valtwo)$} (R);

\end{tikzpicture}
\end{center}}
\end{minipage}}
\end{center}
\caption{Definition of $\lmccbv$}\label{fig:lmccbv}
\end{figure*}


\section{Related Work}
\newcommand{\PCF}{\ensuremath{\mathsf{PCF}}} This is definitely \emph{not}
the first work on metrics in the context of programming languages
semantics. A very nice introduction to the topic, together with a
comprehensive (although outdated) list of references can be found
in~\cite{vanBreugel2001}. One of the many uses of metrics is as an
alternative to order-theoretic semantics. This has also been applied
to higher-order languages, and to \emph{deterministic}
\PCF~\cite{Escardo1999}.

If one focuses on probabilistic programming languages, the first
attempts at using metrics as a way to measure ``how far'' two programs
are, algebraically or behaviourally, are due to Giacalone et
al.~\cite{Giacalone1990}, and Desharnais et
al.~\cite{DesharnaisCONCUR99,DesharnaisLICS02}, who both consider
process algebras in the style of Milner's CCS. Most of further work in
this direction has focused on concurrent specifications. Among the
recent advances in this direction (and without any hope of being
comprehensive), we can cite Gebler et al.'s work on uniform continuity
as a way to enforce compositionality in metric
reasoning~\cite{Gebler2015CONCUR,Gebler2015FoSSACS}.  

Finally, great inspiration for this work came from the many contributions on
metrics for labelled Markov chains and processes appeared in the last
twenty years (e.g. \cite{Worrell,DLT2008}). 
\section{Conclusions}
We have shown \emph{how} the context distance can be characterized so
as to simplify concrete proofs, and \emph{to which extent} this metric
trivializes.  All this has been done in a universal linear
$\lambda$-calculus for probabilistic computation. This clarifies to
which extent refining equivalences into metrics is worth in such a
scenario. The tuple-based techniques in Section \ref{subsect:Upto} are
potentially very interesting in view of possible applications to
cryptography, as hinted in \cite{DalLagoCappai}. This is indeed what
we are working on currently.

\bibliographystyle{abbrv}
\bibliography{biblio}

\begin{thebibliography}{10}

\bibitem{Barendregt84}
Hendrik~Pieter Barendregt.
\newblock {\em {The Lambda Calculus -- Its Syntax and Semantics}}, volume 103
  of {\em Studies in Logic and the Foundations of Mathematics}.
\newblock North-Holland, 1984.

\bibitem{positivetype}
Hendrik~Pieter Barendregt, Wil Dekkers, and Richard Statman.
\newblock {\em Lambda Calculus with Types}.
\newblock Perspectives in logic. Cambridge University Press, 2013.

\bibitem{BizjakBirkedal}
Ales Bizjak and Lars Birkedal.
\newblock Step-indexed logical relations for probability.
\newblock In {\em Proc. of FoSSaCS}, pages 279--294, 2015.

\bibitem{DalLagoCappai}
Alberto Cappai and Ugo Dal~Lago.
\newblock On equivalences, metrics, and polynomial time.
\newblock In {\em Proc. of FCT}, pages 311--323, 2015.

\bibitem{CrubilleDalLago2014ESOP}
Rapha{\"{e}}lle Crubill{\'{e}} and Ugo Dal~Lago.
\newblock On probabilistic applicative bisimulation and call-by-value
  {\(\lambda\)}-calculi.
\newblock In {\em Proc. of ESOP}, pages 209--228, 2014.

\bibitem{longversion}
Rapha{\"{e}}lle Crubill{\'{e}} and Ugo Dal~Lago.
\newblock Metric reasoning about $\lambda$-terms: The general case (long
  version).
\newblock Available at \url{http://eternal.cs.unibo.it/mrltgc.pdf}, 2016.

\bibitem{CrubilleDalLagoLICS2015}
Rapha{\"{e}}lle Crubill{\'{e}} and Ugo~Dal Lago.
\newblock Metric reasoning about {\(\lambda\)}-terms: The affine case.
\newblock In {\em Proc. of LICS}, pages 633--644, 2015.

\bibitem{VignudelliCrubilleDalLagoSangiorgi}
Rapha{\"{e}}lle Crubill{\'{e}}, Ugo~Dal Lago, Davide Sangiorgi, and Valeria
  Vignudelli.
\newblock On applicative similarity, sequentiality, and full abstraction.
\newblock In {\em Proc. of Correct System Design - Symposium in Honor of
  Ernst-R{\"{u}}diger Olderog on the Occasion of His 60th Birthday}, pages
  65--82, 2015.

\bibitem{DalLagoSangiorgiAlberti2014POPL}
Ugo Dal~Lago, Davide Sangiorgi, and Michele Alberti.
\newblock On coinductive equivalences for higher-order probabilistic functional
  programs.
\newblock In {\em Proc. of POPL}, pages 297--308, 2014.

\bibitem{DalLagoZorzi}
Ugo Dal~Lago and Margherita Zorzi.
\newblock Probabilistic operational semantics for the lambda calculus.
\newblock {\em {RAIRO} - Theor. Inf. and Applic.}, 46(3):413--450, 2012.

\bibitem{DesharnaisCONCUR99}
Josee Desharnais, Vineet Gupta, Radha Jagadeesan, and Prakash Panangaden.
\newblock Metrics for labeled markov systems.
\newblock In {\em Proc. of CONCUR}, 1999.

\bibitem{DesharnaisLICS02}
Josee Desharnais, Radha Jagadeesan, Vineet Gupta, and Prakash Panangaden.
\newblock The metric analogue of weak bisimulation for probabilistic processes.
\newblock In {\em Proc. of LICS}, pages 413--422, 2002.

\bibitem{DLT2008}
Jos{\'{e}}e Desharnais, Fran{\c{c}}ois Laviolette, and Mathieu Tracol.
\newblock Approximate analysis of probabilistic processes: Logic, simulation
  and games.
\newblock In {\em Proc. of QEST}, pages 264--273, 2008.

\bibitem{EhrhardTassonPagani2014POPL}
Thomas Ehrhard, Christine Tasson, and Michele Pagani.
\newblock Probabilistic coherence spaces are fully abstract for probabilistic
  {PCF}.
\newblock In {\em Proc. of POPL}, pages 309--320, 2014.

\bibitem{Escardo1999}
Martin Escardo.
\newblock A metric model of {PCF}.
\newblock Proceedings of the Workshop on Realizability Semantics and
  Applications. Available at
  \url{http://www.cs.bham.ac.uk/~mhe/papers/metricpcf.pdf}, 1999.

\bibitem{Gebler2015FoSSACS}
Daniel Gebler, Kim~Guldstrand Larsen, and Simone Tini.
\newblock Compositional metric reasoning with probabilistic process calculi.
\newblock In {\em Proc. of {FoSSaCS}}, pages 230--245, 2015.

\bibitem{Gebler2015CONCUR}
Daniel Gebler and Simone Tini.
\newblock {SOS} specifications of probabilistic systems by uniformly continuous
  operators.
\newblock In {\em Proc. of {CONCUR}}, pages 155--168, 2015.

\bibitem{Giacalone1990}
Alessandro Giacalone, Chi chang Jou, and Scott~A. Smolka.
\newblock Algebraic reasoning for probabilistic concurrent systems.
\newblock In {\em Proc. IFIP TC2}, pages 443--458. North-Holland, 1990.

\bibitem{Girard87}
Jean{-}Yves Girard.
\newblock Linear logic.
\newblock {\em Theor. Comput. Sci.}, 50:1--102, 1987.

\bibitem{GoldwasserMicali}
Shafi Goldwasser and Silvio Micali.
\newblock Probabilistic encryption.
\newblock {\em J. Comput. Syst. Sci.}, 28(2):270--299, 1984.

\bibitem{Church}
Noah~D. Goodman, Vikash~K. Mansinghka, Daniel~M. Roy, Keith Bonawitz, and
  Joshua~B. Tenenbaum.
\newblock Church: a language for generative models.
\newblock In {\em {UAI} 2008}, pages 220--229, 2008.

\bibitem{JonesPlotkin}
C.~Jones and Gordon~D. Plotkin.
\newblock A probabilistic powerdomain of evaluations.
\newblock In {\em Prof. of LICS}, pages 186--195, 1989.

\bibitem{JungTix1998}
Achim Jung and Regina Tix.
\newblock The troublesome probabilistic powerdomain.
\newblock {\em Electr. Notes Theor. Comput. Sci.}, 13:70--91, 1998.

\bibitem{DalLagoSangiorgiAlberti2014}
Ugo~Dal Lago, Davide Sangiorgi, and Michele Alberti.
\newblock On coinductive equivalences for higher-order probabilistic functional
  programs.
\newblock In {\em Proc. of POPL}, pages 297--308, 2014.

\bibitem{manning1999foundations}
Christopher~D Manning and Hinrich Sch{\"u}tze.
\newblock {\em Foundations of statistical natural language processing}, volume
  999.
\newblock MIT Press, 1999.

\bibitem{MardareDoctoral}
Radu Mardare.
\newblock Logical foundations of metric behavioural theory for markov
  processes.
\newblock Doctoral Thesis. In Preparation, 2016.

\bibitem{Pfenning}
Sungwoo Park, Frank Pfenning, and Sebastian Thrun.
\newblock A probabilistic language based on sampling functions.
\newblock {\em ACM Trans. Program. Lang. Syst.}, 31(1), 2008.

\bibitem{pearl1988probabilistic}
Judea Pearl.
\newblock {\em Probabilistic reasoning in intelligent systems: networks of
  plausible inference}.
\newblock Morgan Kaufmann, 1988.

\bibitem{Plotkin77}
Gordon~D. Plotkin.
\newblock {LCF} considered as a programming language.
\newblock {\em Theor. Comput. Sci.}, 5(3):223--255, 1977.

\bibitem{Ramsey}
Norman Ramsey and Avi Pfeffer.
\newblock Stochastic lambda calculus and monads of probability distributions.
\newblock In {\em Prof. of POPL}, pages 154--165, 2002.

\bibitem{SahebDjahromi}
N.~Saheb-Djahromi.
\newblock Probabilistic {LCF}.
\newblock In {\em Proc. of MFCS}, pages 442--451, 1978.

\bibitem{Sangiorgi98}
Davide Sangiorgi.
\newblock On the bisimulation proof method.
\newblock {\em Mathematical Structures in Computer Science}, 8:447--479, 1998.

\bibitem{SangiorgiV16}
Davide Sangiorgi and Valeria Vignudelli.
\newblock Environmental bisimulations for probabilistic higher-order languages.
\newblock In {\em Proceedings of the 43rd Annual {ACM} {SIGPLAN-SIGACT}
  Symposium on Principles of Programming Languages, {POPL} 2016, St.
  Petersburg, FL, USA, January 20 - 22, 2016}, pages 595--607, 2016.

\bibitem{Simpson2005}
Alex~K. Simpson.
\newblock Reduction in a linear lambda-calculus with applications to
  operational semantics.
\newblock In {\em Proc. of RTA}, pages 219--234, 2005.

\bibitem{thrun2002robotic}
Sebastian Thrun.
\newblock Robotic mapping: A survey.
\newblock {\em Exploring artificial intelligence in the new millennium}, pages
  1--35, 2002.

\bibitem{vanBreugel2001}
Franck van Breugel.
\newblock An introduction to metric semantics: operational and denotational
  models for programming and specification languages.
\newblock {\em Theor. Comput. Sci.}, 258(1-2):1--98, 2001.

\bibitem{Worrell}
Franck van Breugel and James Worrell.
\newblock A behavioural pseudometric for probabilistic transition systems.
\newblock {\em Theor. Comput. Sci.}, 331(1):115--142, 2005.

\end{thebibliography}
\end{document}